\let\labelindent\relax 
\newtheorem{theorem}{Theorem}[section]
\newtheorem{lemma}[theorem]{Lemma}
\theoremstyle{definition}
\newtheorem{notation}[theorem]{Notation}
\newcommand{\xor}{\oplus}
\newcommand{\band}{\odot}
\newcommand{\Order}{\mathcal{O}}
\newcommand{\BigO}[1]{\ensuremath{\operatorname{O}\left(#1\right)}}
\newcommand{\Z}[1]{\ensuremath{\mathbb{Z}}_{2^{#1}}}
\newcommand{\bitb}{\ensuremath{\mathsf{b}}} 
\newcommand{\arval}[1]{\ensuremath{#1^{\sf R}}} 
\newcommand{\nf}{\ensuremath{\mathsf{n}}} 
\newcommand{\vecd}{\ensuremath{\vec{\mathbf{d}}}}
\newcommand{\vece}{\ensuremath{\vec{\mathbf{e}}}}
\setlist[description]{style=unboxed,leftmargin=0cm}
\newenvironment{myitemize}{
	\begin{list}{{$\bullet$}}{
			\setlength\partopsep{0pt}
			\setlength\parskip{0pt}
			\setlength\parsep{0pt}
			\setlength\topsep{2pt}
			\setlength\itemsep{3pt}
			\setlength{\itemindent}{4pt}
			\setlength{\leftmargin}{2pt}
		}
	}{
	\end{list}
}
\newenvironment{newitemize}{
	\begin{list}{{$\bullet$}}{
			\setlength\partopsep{0pt}
			\setlength\parskip{0pt}
			\setlength\parsep{0pt}
			\setlength\topsep{2pt}
			\setlength\itemsep{4pt}
			\setlength{\itemindent}{1pt}
			\setlength{\leftmargin}{9pt}
		}
	}{
	\end{list}
}
\newenvironment{inneritemize}{
	\begin{list}{{$\bullet$}}{
			\setlength\partopsep{0pt}
			\setlength\parskip{0pt}
			\setlength\parsep{0pt}
			\setlength\topsep{2pt}
			\setlength\itemsep{2pt}
			\setlength{\itemindent}{2pt}
			\setlength{\leftmargin}{9pt}
		}
	}{
	\end{list}
}
\newenvironment{mylist}{
	\begin{list}{{$\bullet$}}{
			\setlength\partopsep{0pt}
			\setlength\parskip{0pt}
			\setlength\parsep{0pt}
			\setlength\topsep{2pt}
			\setlength\itemsep{1pt}
			\setlength{\itemindent}{0pt}
			\setlength{\leftmargin}{9pt}
		}
	}{
	\end{list}
}
\newlist{myenumlist}{enumerate}{2}
\setlist[myenumlist]{leftmargin=*,label={\arabic*)}}
\newcounter{itemcount}
\newcommand{\tabref}[1]{Table~\protect\ref{tab:#1}}
\newcommand{\figref}[1]{Fig.~\ref{fig:#1}}
\newcommand{\figlab}[1]{\label{fig:#1}}
\newenvironment{boxfig*}[2]{
	\begin{figure*}[h!]		
		\fontsize{5}{5}\selectfont
		\newcommand{\FigCaption}{#1}
		\newcommand{\FigLabel}{#2}
		\vspace{-.05cm}
		\begin{center}
			\begin{small}			 
				\begin{adjustbox}{max width=\textwidth}
					\begin{tabular}{@{}|@{~~}l@{~~}|@{}}
						\hline
						\rule[-1ex]{0pt}{1ex}\begin{minipage}[b]{.95\linewidth}
							\vspace{1ex}	
						}{%
						\end{minipage}\\
						\hline
					\end{tabular}	
				\end{adjustbox}		
			\end{small}
			\vspace{-0.1cm}
			\caption{\FigCaption}
			\figlab{\FigLabel}
		\end{center}
		\vspace{-.38cm}
	\end{figure*}
}
\newenvironment{myboxfig*}[2]{
	\begin{figure*}[!htb]		
		\fontsize{5}{5}\selectfont
		\newcommand{\FigCaption}{#1}
		\newcommand{\FigLabel}{#2}
		\vspace{-.10cm}
		\begin{center}
			\caption{\FigCaption}
			\begin{small}			 
				\begin{adjustbox}{max width=\textwidth}
					\begin{tabular}{@{}|@{~~}l@{~~}|@{}}
						\hline
						\rule[-1ex]{0pt}{1ex}\begin{minipage}[b]{.95\linewidth}
							\vspace{1ex}	
						}{%
						\end{minipage}\\
						\hline
					\end{tabular}	
				\end{adjustbox}		
			\end{small}
			\vspace{-0.25cm}
			\figlab{\FigLabel}
		\end{center}
		\vspace{-.38cm}
	\end{figure*}
}
\newcommand{\boxref}[1]{Fig. \ref{#1}}
\newenvironment{titlebox}[5]
{\mdfsetup{
		style=#2,
		innertopmargin=1.1\baselineskip,
		skipabove={\dimexpr0.2\baselineskip+\topskip\relax},
		skipbelow={1em},needspace=3\baselineskip,
		singleextra={\node[#3,right=10pt,overlay] at (P-|O){~{\sffamily\bfseries #1 }};},%
		firstextra={\node[#3,right=10pt,overlay] at (P-|O) {~{\sffamily\bfseries #1 }};},
		frametitleaboveskip=9em,
		innerrightmargin=5pt
	}
	\newcommand{\TitleCaption}{#4}
	\newcommand{\TitleLabel}{#5}
	\begin{mdframed}[font=\small]
		\setlist[itemize]{leftmargin=13pt}\setlist[enumerate]{leftmargin=13pt}\raggedright%
	}
	{\end{mdframed}
	\vspace{-1.8em}
	{\captionof{figure}{\small \TitleCaption}\label{\TitleLabel}}
	\medskip
}
\tikzstyle{normal} = [thick, fill=white, text=black, draw, rounded corners, rectangle, minimum height=.7cm, inner sep=3pt]
\tikzstyle{gray} = [thick, fill=gray!90, text=white, rounded corners, rectangle, minimum height=.7cm, inner sep=3pt]
\newenvironment{systembox}[3]
{\begin{titlebox}{Functionality \normalfont #1}{roundbox}{normal}{#2}{#3}}
	{\end{titlebox}}
\newenvironment{protocolbox}[3]
{\begin{titlebox}{Protocol \normalfont #1}{commonbox}{normal}{#2}{#3}}
	{\end{titlebox}}
\newenvironment{simulatorbox}[3]
{\begin{titlebox}{Simulator \normalfont #1}{commonbox}{normal}{#2}{#3}}
	{\end{titlebox}}
\newenvironment{splittitlebox}[5]
{\mdfsetup{
		style=#2,
		innertopmargin=1.1\baselineskip,
		skipabove={\dimexpr0.2\baselineskip+\topskip\relax},
		skipbelow={1em},needspace=3\baselineskip,
		singleextra={\node[#3,right=10pt,overlay] at (P-|O){~{\sffamily\bfseries #1 }};},%
		firstextra={\node[#3,right=10pt,overlay] at (P-|O) {~{\sffamily\bfseries #1 }};},
		frametitleaboveskip=9em,
		innerrightmargin=5pt
	}
	\newcommand{\TitleCaption}{#4}
	\newcommand{\TitleLabel}{#5}
	\begin{mdframed}[font=\small]
		\setlist[itemize]{leftmargin=13pt}\setlist[enumerate]{leftmargin=13pt}\raggedright%
	}
	{\end{mdframed}
	\vspace{-0.8em}
	{\captionof{figure}{\small \TitleCaption}\label{\TitleLabel}}
	\medskip
}
\newenvironment{simulatorsplitbox}[3]
{\begin{splittitlebox}{Simulator \normalfont #1}{commonsplitbox}{normal}{#2}{#3}}
	{\end{splittitlebox}}
\newenvironment{systembox*}[3]
{\begin{strip}
\vspace{\baselineskip}\begin{titlebox}{Functionality \normalfont #1}{roundbox}{normal}{#2}{#3}}
	{\end{titlebox}
\end{strip}}
\newenvironment{gsystembox*}[3]
{\begin{strip}
\vspace{\baselineskip}\begin{titlebox}{Global Functionality \normalfont #1}{roundbox}{normal}{#2}{#3}}
	{\end{titlebox}
\end{strip}}
\newenvironment{protocolbox*}[3]
{\begin{strip}
\begin{titlebox}{Protocol \normalfont #1}{commonbox}{normal}{#2}{#3}}
	{\end{titlebox}
\end{strip}}
\newenvironment{algobox*}[3]
{\begin{strip}
\begin{titlebox}{Algorithm \normalfont #1}{commonbox}{normal}{#2}{#3}}
	{\end{titlebox}
\end{strip}}
\newenvironment{reductionbox*}[3]
{\begin{strip}
\begin{titlebox}{Reduction \normalfont #1}{commonbox}{normal}{#2}{#3}}
	{\end{titlebox}
\end{strip}}
\newenvironment{gamebox*}[3]
{\begin{strip}
\begin{titlebox}{Game \normalfont #1}{commonbox}{gray}{#2}{#3}}
	{\end{titlebox}
\end{strip}}
\newenvironment{simulatorbox*}[3]
{\begin{strip}
\begin{titlebox}{Simulator \normalfont #1}{commonbox}{normal}{#2}{#3}}
	{\end{titlebox}
\end{strip}}
\newenvironment{titlebox*}[5]
{\mdfsetup{
		style=#2,
		innertopmargin=0.3\baselineskip,
		skipabove={0.4em},
		skipbelow={1em},needspace=3\baselineskip,
		frametitleaboveskip=5em,
		innerrightmargin=5pt
	}
	\newcommand{\TitleCaption}{#4}
	\newcommand{\TitleLabel}{#5}
	\begin{mdframed}[font=\small]
		\setlist[itemize]{leftmargin=13pt}\setlist[enumerate]{leftmargin=13pt}\raggedright%
	}
	{\end{mdframed}
	\vspace{-2em}
	{\captionof{figure}{\normalfont \TitleCaption}\label{\TitleLabel}}
	\vspace{-2mm}
}
\newenvironment{mysystembox*}[3]
{\begin{strip}
		\vspace{\baselineskip}\begin{titlebox*}{Functionality \normalfont #1}{myroundbox}{normal}{#2}{#3}}
		{\end{titlebox*}
\end{strip}}
\newenvironment{mygsystembox*}[3]
{\begin{strip}
		\vspace{\baselineskip}\begin{titlebox*}{Global Functionality \normalfont #1}{myroundbox}{normal}{#2}{#3}}
		{\end{titlebox*}
\end{strip}}
\newenvironment{myprotocolbox*}[3]
{\begin{strip}
		\begin{titlebox*}{Protocol \normalfont #1}{mycommonbox}{normal}{#2}{#3}}
		{\end{titlebox*}
\end{strip}}
\newenvironment{myalgobox*}[3]
{\begin{strip}
		\begin{titlebox*}{Algorithm \normalfont #1}{mycommonbox}{normal}{#2}{#3}}
		{\end{titlebox*}
\end{strip}}
\newenvironment{myreductionbox*}[3]
{\begin{strip}
		\begin{titlebox*}{Reduction \normalfont #1}{mycommonbox}{normal}{#2}{#3}}
		{\end{titlebox*}
\end{strip}}
\newenvironment{mygamebox*}[3]
{\begin{strip}
		\begin{titlebox*}{Game \normalfont #1}{mycommonbox}{gray}{#2}{#3}}
		{\end{titlebox*}
\end{strip}}
\newenvironment{mysimulatorbox*}[3]
{\begin{strip}
		\begin{titlebox*}{Simulator \normalfont #1}{mycommonbox}{normal}{#2}{#3}}
		{\end{titlebox*}
\end{strip}}
\newenvironment{mytbox}[5]
{\mdfsetup{
		style=#2,
		innertopmargin=1.0\baselineskip,
		skipabove={\dimexpr0.2\baselineskip+\topskip\relax},
		skipbelow={1em},needspace=3\baselineskip,
		singleextra={\node[#3,right=10pt,overlay] at (P-|O){~{\sffamily\bfseries #1 }};},%
		firstextra={\node[#3,right=10pt,overlay] at (P-|O) {~{\sffamily\bfseries #1 }};},
		frametitleaboveskip=9em,
		innerrightmargin=5pt
	}
	\newcommand{\TitleCaption}{#4}
	\newcommand{\TitleLabel}{#5}
	\begin{mdframed}[font=\small]
		\setlist[itemize]{leftmargin=13pt}\setlist[enumerate]{leftmargin=13pt}\raggedright%
	}
	{\end{mdframed}
	\vspace{-1.8em}
	{\captionof{figure}{\small \TitleCaption}\label{\TitleLabel}}
	\vspace{2mm}
}
\newenvironment{mytsplitbox}[5]
{\mdfsetup{
		style=#2,
		innertopmargin=1.8\baselineskip,
		skipabove={\dimexpr0.2\baselineskip+\topskip\relax},
		skipbelow={1em},needspace=3\baselineskip,
		singleextra={\node[#3,right=10pt,overlay] at (P-|O){~{\sffamily\bfseries #1 }};},%
		firstextra={\node[#3,right=10pt,overlay] at (P-|O) {~{\sffamily\bfseries #1 }};},
		frametitleaboveskip=9em,
		innerrightmargin=5pt
	}
	\newcommand{\TitleCaption}{#4}
	\newcommand{\TitleLabel}{#5}
	\begin{mdframed}[font=\small]
		\setlist[itemize]{leftmargin=13pt}\setlist[enumerate]{leftmargin=13pt}\raggedright%
	}
	{\end{mdframed}
	\vspace{-0.8em}
	{\captionof{figure}{\small \TitleCaption}\label{\TitleLabel}}
	\vspace{2mm}
}
\newenvironment{mypbox}[3]
{\begin{mytbox}{Protocol \normalfont #1}{commonbox}{normal}{#2}{#3}}
	{\end{mytbox}}
\newcommand{\algoHead}[1]{\vspace{0.2em} \underline{\textbf{#1}} \vspace{0.3em}}
\algnewcommand{\ExtendedState}[1]{\State
	\parbox[t]{\dimexpr\linewidth-\ALG@thistlm}{\hangindent=\algorithmicindent\strut\hangafter=3#1\strut}}
\algnewcommand\algorithmicinput{\textbf{Input:}}
\algnewcommand\Input{\item[\algorithmicinput]}
\algrenewcommand{\algorithmiccomment}[1]{{\color{gray}// #1}}
\newcommand{\ckt}{\ensuremath{\mathsf{ckt}}}
\newcommand{\wx}{\mathsf{x}}
\newcommand{\wy}{\mathsf{y}}
\newcommand{\wz}{\mathsf{z}}
\newcommand{\negl}{\ensuremath{\mathsf{negl}}}
\newcommand{\csec}{\kappa}
\newcommand{\abort}{\ensuremath{\mathtt{abort}}}
\newcommand{\flag}{\ensuremath{\mathsf{flag}}}
\newcommand{\Partyset}{\ensuremath{\mathcal{P}}}
\newcommand{\User}{\ensuremath{\mathsf{U}}}
\newcommand{\TTP}{\ensuremath{\mathsf{TTP}}}
\newcommand{\SOC}{\ensuremath{\mathsf{SOC}}}
\newcommand{\Adv}{\ensuremath{\mathcal{A}}}
\newcommand{\Sim}{\ensuremath{\mathcal{S}}}
\newcommand{\Hash}{\ensuremath{\mathsf{H}}}
\newcommand{\commit}{\ensuremath{\mathsf{Com}}}
\newcommand{\Commit}[1]{\ensuremath{\commit(#1)}}
\newcommand{\ttp}{\mathsf{ttp}}
\newcommand{\msg}{\mathsf{msg}}
\newcommand{\valid}{\mathsf{valid}}
\newcommand{\maxv}{\ensuremath{\mathsf{max}}}
\newcommand{\Key}[1]{\ensuremath{k_{#1}}}
\newcommand{\rtt}{\ensuremath{\mathsf{rtt}}}
\newcommand{\TP}{\ensuremath{\mathsf{TP}}}
\newcommand{\SELECT}{\ensuremath{\mathsf{Select}}}
\newcommand{\INPUT}{\ensuremath{\mathsf{Input}}}
\newcommand{\OUTPUT}{\ensuremath{\mathsf{Output}}}
\newcommand{\FSETUP}{\ensuremath{\mathcal{F}_{\mathsf{setup}}}} 
\newcommand{\vc}{\ensuremath{\mathsf{c}}}
\newcommand{\va}{\ensuremath{\mathsf{a}}}
\newcommand{\vk}{\ensuremath{\mathsf{k}}}
\newcommand{\vp}{\ensuremath{\mathsf{p}}}
\newcommand{\vq}{\ensuremath{\mathsf{q}}}
\newcommand{\vr}{\ensuremath{\mathsf{r}}}
\newcommand{\vu}{\ensuremath{\mathsf{u}}}
\newcommand{\vv}{\ensuremath{\mathsf{v}}}
\newcommand{\vx}{\ensuremath{\mathsf{x}}}
\newcommand{\vy}{\ensuremath{\mathsf{y}}}
\newcommand{\vz}{\ensuremath{\mathsf{z}}}
\newcommand{\vecP}{\ensuremath{\vec{\mathbf{p}}}}
\newcommand{\vecQ}{\ensuremath{\vec{\mathbf{q}}}}
\newcommand{\vecW}{\ensuremath{\vec{\mathbf{w}}}}
\newcommand{\vecX}{\ensuremath{\vec{\mathbf{x}}}}
\newcommand{\vecY}{\ensuremath{\vec{\mathbf{y}}}}
\newcommand{\Mat}[1]{\ensuremath{\mathbf{#1}}}
\newcommand{\trunc}[1]{\ensuremath{{#1}^{d}}}
\newcommand{\vrt}{\ensuremath{\mathsf{r}^d}}
\newcommand{\sA}{\ensuremath{\mathsf{A}}}
\newcommand{\sB}{\ensuremath{\mathsf{B}}}
\newcommand{\sqr}[1]{\ensuremath{\left[#1\right]}}  
\newcommand{\sqrA}[1]{\ensuremath{\left[#1\right]_{1}}}
\newcommand{\sqrB}[1]{\ensuremath{\left[#1\right]_{2}}}
\newcommand{\sqrV}[2]{\ensuremath{\left[#1\right]_{#2}}}
\newcommand{\sgr}[1]{\ensuremath{\langle #1 \rangle}}
\newcommand{\shr}[1]{\ensuremath{\llbracket #1 \rrbracket}}
\newcommand{\shrB}[1]{\ensuremath{\llbracket #1 \rrbracket}^{\bf B}} 
\newcommand{\shrd}{\ensuremath{\llbracket \cdot \rrbracket}}
\newcommand{\av}[1]{\ensuremath{\alpha_{#1}}}   
\newcommand{\bv}[1]{\ensuremath{\beta_{#1}}}    
\newcommand{\gv}[1]{\ensuremath{\gamma_{#1}}}   
\newcommand{\val}{\ensuremath{\mathsf{v}}}
\newcommand{\Sh}{\ensuremath{\mathsf{sh}}}
\newcommand{\JSh}{\ensuremath{\mathsf{jsh}}} 
\newcommand{\Rec}{\ensuremath{\mathsf{rec}}}
\newcommand{\Jmp}{\ensuremath{\mathsf{jmp}}}
\newcommand{\Mult}{\ensuremath{\mathsf{mult}}}
\newcommand{\MulPre}{\ensuremath{\mathsf{mulPre}}}
\newcommand{\BitExt}{\ensuremath{\mathsf{bitext}}}
\newcommand{\BitA}{\mathsf{bit2A}}
\newcommand{\BitInj}{\mathsf{BitInj}}
\newcommand{\DotP}{\ensuremath{\mathsf{dotp}}}
\newcommand{\Trunc}{\ensuremath{\mathsf{trgen}}}
\newcommand{\DotPTr}{\ensuremath{\mathsf{dotpt}}}
\newcommand{\DotPPre}{\ensuremath{\mathsf{dotpPre}}}
\newcommand{\ReLU}{\ensuremath{\mathsf{relu}}}
\newcommand{\Sig}{\ensuremath{\mathsf{sig}}}
\newcommand{\MSB}{\ensuremath{\mathsf{msb}}}
\newcommand{\piSh}{\ensuremath{\Pi_{\Sh}}}
\newcommand{\piShS}{\ensuremath{\Pi_{\Sh}^{\SOC}}}
\newcommand{\piJSh}{\ensuremath{\Pi_{\JSh}}} 
\newcommand{\piRec}{\ensuremath{\Pi_{\Rec}}}
\newcommand{\piRecS}{\ensuremath{\Pi_{\Rec}^{\SOC}}}
\newcommand{\piJmp}{\ensuremath{\Pi_{\Jmp}}} 
\newcommand{\piMult}{\ensuremath{\Pi_{\Mult}}}
\newcommand{\piMulPre}{\ensuremath{\Pi_{\MulPre}}}
\newcommand{\piBitExt}{\ensuremath{\Pi_{\BitExt}}}
\newcommand{\PiBitA}{\ensuremath{\Pi_{\BitA}}}
\newcommand{\PiBitInj}{\ensuremath{\Pi_{\BitInj}}}
\newcommand{\piDotP}{\ensuremath{\Pi_{\DotP}}}
\newcommand{\piTrunc}{\ensuremath{\Pi_{\Trunc}}}
\newcommand{\piDotPTr}{\ensuremath{\Pi_{\DotPTr}}}
\newcommand{\piDotPPre}{\ensuremath{\Pi_{\DotPPre}}}
\newcommand{\ESet}{\ensuremath{P_1, P_2}}		
\newcommand{\EInSet}{\ensuremath{ \{1, 2\}}}	
\newcommand{\PSet}{\ensuremath{P_0, P_1, P_2}}	
\newcommand{\md}{\ensuremath{\mathsf{d}}} 
\newcommand{\me}{\ensuremath{\mathsf{e}}} 
\newcommand{\mf}{\ensuremath{\mathsf{f}}} 
\newcommand{\jsend}{\ensuremath{\mathsf{jmp\mbox{-}send}}}
\newcommand{\Chi}{\ensuremath{\chi}}
\newcommand{\starbeta}[1]{\ensuremath{\mathsf{\beta}_{#1}^{\star}}}
\newcommand{\GammaV}[1]{\ensuremath{\Gamma_{#1}}}
\newcommand{\GammaA}[1]{\ensuremath{\sqrV{\Gamma_{#1}}{1}}}
\newcommand{\GammaB}[1]{\ensuremath{\sqrV{\Gamma_{#1}}{2}}}
\newcommand{\Gammaxy}{\ensuremath{\Gamma_{\wx \wy}}}
\newcommand{\sz}{\ensuremath{\zeta}} 
\newcommand{\nm}{\ensuremath{m}} 
\newcommand{\cc}{\ensuremath{c}} 
\newcommand{\cg}{\ensuremath{g}} 
\newcommand{\cG}{\ensuremath{G}} 
\newcommand{\nL}{\ensuremath{L}} 
\newcommand{\nM}{\ensuremath{M}} 
\newcommand{\ic}{\ensuremath{u}}
\newcommand{\rt}{\ensuremath{\theta}}
\newcommand{\re}{\ensuremath{\eta}}
\newcommand{\cv}{\ensuremath{\mathsf{CV}}} 
\newcommand{\Funct}{\ensuremath{\mathcal{F}_{\mathsf{3PC}}}}
\newcommand{\FJmp}{\ensuremath{\mathcal{F}_{\mathsf{jmp}}}}
\newcommand{\FuncBitA}{\ensuremath{\mathcal{F}_{\mathsf{bit2A}}}}
\newcommand{\FMulPre}{\ensuremath{\mathcal{F}_{\mathsf{MulPre}}}}
\newcommand{\FDotPPre}{\ensuremath{\mathcal{F}_{\mathsf{DotPPre}}}}
\newcommand{\A}{\mathsf{A}}
\newcommand{\B}{\mathsf{B}}
\newcommand{\R}{\mathsf{R}}
\newcommand{\ve}{\mathsf{e}}
\newcommand{\JmpF}{\ensuremath{\mathsf{jmp4}}}
\newcommand{\Gammaxdy}{\ensuremath{\Gamma_{\vecX \band \vecY}}}
\newcommand{\jsendf}{\ensuremath{\mathsf{jmp4\mbox{-}send}}}
\newcommand{\piJmpF}{\ensuremath{\Pi_{\mathsf{jmp4}}}}
\newcommand{\pifinput}{\ensuremath{\Pi_{\mathsf{sh4}}}}
\newcommand{\pifmul}{\ensuremath{\Pi_{\mathsf{mult4}}}}
\newcommand{\pifrec}{\ensuremath{\Pi_{\mathsf{rec4}}}}
\newcommand{\pifjsh}{\ensuremath{\Pi_{\mathsf{jsh4}}}}
\newcommand{\pifbitext}{\ensuremath{\Pi_{\mathsf{bitext4}}}}
\newcommand{\pifbita}{\ensuremath{\Pi_{\mathsf{bit2A4}}}}
\newcommand{\pifdotp}{\ensuremath{\Pi_{\mathsf{dotp4}}}}
\newcommand{\piftrgen}{\ensuremath{\Pi_{\mathsf{trgen4}}}}
\newcommand{\pifdotpt}{\ensuremath{\Pi_{\mathsf{dotpt4}}}}
\newcommand{\piBitInjF}{\ensuremath{\Pi_{\mathsf{bitinj4}}}}
\newcommand{\pifsgrsh}{\ensuremath{\Pi_{\mathsf{ash4}}}}
\newcommand{\Funcf}{\ensuremath{\mathcal{F}_{\mathsf{4PC}}}}
\newcommand{\FSETUPf}{\ensuremath{\mathcal{F}_{\mathsf{setup4}}}}
\newcommand{\FJmpF}{\ensuremath{\mathcal{F}_{\mathsf{jmp4}}}}
\newcommand{\FBitextF}{\ensuremath{\mathcal{F}_{\mathsf{bitext4}}}}
\newcommand{\FBitaF}{\ensuremath{\mathcal{F}_{\mathsf{bit2A4}}}}
\newcommand{\FTrgenF}{\ensuremath{\mathcal{F}_{\mathsf{trgen4}}}}
\newcommand{\FBitInjF}{\ensuremath{\mathcal{F}_{\mathsf{bitinj4}}}}
\definecolor{UniBlau}{cmyk}{1,0.7,0,0}
\definecolor{UniGruen}{cmyk}{0.6,0,1,0}
\definecolor{UniOrange}{cmyk}{0,0.3,1,0}
\definecolor{UniRot}{cmyk}{0.4,1,0,0}
\definecolor{darkred}{rgb}{.6,0,0}
\definecolor{darkgreen}{rgb}{0,.4,0}
\definecolor{darkblue}{rgb}{0,0,.6}
\newif\ifsubmission
	\newcommand{\commentA}[1] {}
	\newcommand{\commentAJ}[1]{}
	\newcommand{\commentM}[1] {} 
	\newcommand{\commentN}[1] {}
	\newcommand{\EXTRALINES}[1] {}
	\newcommand{\commentA}[1] {\textcolor{blue}  {{\sf (}{\sl{#1}} {\sf - Arpita)}}}
	\newcommand{\commentAJ}[1]{\textcolor{violet}{{\sf (}{\sl{#1}} {\sf - Ajith)}}}
	\newcommand{\commentM}[1] {\textcolor{olive} {{\sf (}{\sl{#1}} {\sf - Mahak)}}}
	\newcommand{\commentN}[1] {\textcolor{cyan} {{\sf (}{\sl{#1}} {\sf - Nishat)}}}
	\newcommand{\EXTRALINES}[1] {\textcolor{darkblue} {#1}}
\begin{document}
\title{\Large \bf SWIFT: Super-fast and Robust Privacy-Preserving Machine Learning\thanks{This article is the full and extended version of an article to appear in USENIX Security’21.}}
\author{
	{\rm Nishat Koti$^{\star}$, Mahak Pancholi$^{\star}$, Arpita Patra$^{\star}$, Ajith Suresh$^{\star}$}\\
$^{\star}$Department of Computer Science and Automation, Indian Institute of Science, Bangalore, India\\
\{kotis, mahakp, arpita, ajith\}@iisc.ac.in
} 

\maketitle

\begin{abstract}
Performing machine learning (ML) computation on private data while maintaining data privacy, aka Privacy-preserving Machine Learning~(PPML), is an emergent field of research. Recently, PPML has seen a visible shift towards the adoption of the Secure Outsourced Computation~(SOC) paradigm due to the heavy computation that it entails. In the SOC paradigm, computation is outsourced to a set of powerful and specially equipped servers that provide service on a pay-per-use basis. In this work, we propose SWIFT, a {\em robust} PPML framework for a range of ML algorithms in SOC setting, that guarantees output delivery to the users irrespective of any adversarial behaviour. Robustness, a highly desirable feature,   evokes user participation without the fear of denial of service.   

At the heart of our framework lies a highly-efficient, maliciously-secure, three-party computation (3PC) over rings that provides guaranteed output delivery (GOD) in the honest-majority setting.  To the best of our knowledge, SWIFT is the first robust and efficient  PPML framework in the 3PC setting. SWIFT is as fast as (and is strictly better in some cases than) the best-known 3PC framework BLAZE~(Patra et al. NDSS'20), which only achieves fairness. We extend our 3PC framework for four parties (4PC). In this regime,  SWIFT is as fast as the best known {\em fair} 4PC framework Trident~(Chaudhari et al. NDSS'20) and twice faster than the best-known {\em robust} 4PC framework FLASH~(Byali et al. PETS'20). 

We demonstrate our framework's practical relevance by benchmarking popular ML algorithms such as Logistic Regression and deep Neural Networks such as VGG16 and LeNet, both over a 64-bit ring in a WAN setting. For deep NN, our results testify to our claims that we provide improved security guarantee while incurring no additional overhead for 3PC and obtaining $2 \times$ improvement for 4PC.

\end{abstract}


\section{Introduction}
\label{sec:Intro}
Privacy Preserving Machine Learning (PPML), a booming field of research, allows Machine Learning (ML) computations over private data of users while ensuring the privacy of the data. PPML finds applications in sectors that deal with sensitive/confidential data, e.g.  healthcare, finance, and in cases where organisations are prohibited from sharing client information due to privacy laws such as CCPA and GDPR. However, PPML solutions make the already computationally heavy ML algorithms more compute-intensive. An average end-user who lacks the infrastructure required to run these tasks prefers to outsource the computation to a powerful set of specialized cloud servers and leverage their services on a pay-per-use basis. This is addressed by the Secure Outsourced Computation (SOC) paradigm, and thus is an apt fit for the need of the moment. 
Many recent works~\cite{MohasselZ17, MakriRSV19,RiaziWTS0K18, MR18, WaghGC19,ASTRA,FLASH,Trident,BLAZE} exploit Secure Multiparty Computation (MPC) techniques to realize PPML in the SOC setting where the servers enact the role of the parties. Informally, MPC enables $n$ mutually distrusting parties to compute a function over their private inputs, while ensuring the privacy of the same against an adversary controlling up to $t$ parties.  Both the training and prediction phases of PPML can be realized in the SOC setting. The common approach of outsourcing followed in the PPML literature, as well as by our work,  requires the users to secret-share\footnote{The threshold of the secret-sharing is decided based on the number of corrupt servers so that privacy is preserved.} their inputs between the set of hired (untrusted) servers, who jointly interact and compute the secret-shared output, and reconstruct it towards the users. 

In a bid to improve practical efficiency, many recent works~\cite{DPSZ12,SPDZ2,SPDZ3,KOS16,BaumDTZ16,DamgardOS18,CramerDESX18,KellerPR18,ASTRA,FLASH,Trident,BLAZE} cast their protocols into the preprocessing model wherein the input-independent (yet function-dependent) phase computationally heavy tasks are computed in advance, resulting in a fast online phase. This paradigm suits scenario analogous to PPML setting, where functions (ML algorithms) typically need to be evaluated a large number of times, and the function description is known beforehand. To further enhance practical efficiency by leveraging  CPU optimizations, recent works~\cite{BogdanovLW08,DamgardOS18,CramerFIK03,DSZ15,Damgard0FKSV19} propose MPC protocols that work over $32$ or $64$ bit rings. Lastly, solutions for a small number of parties have received a huge momentum due to the many cost-effective customizations that they permit, for instance, a cheaper realisation of multiplication through custom-made secret sharing schemes~\cite{AFLNO16,ABFLLNOWW17,ASTRA,BLAZE,Trident,FLASH}.

We now motivate the need for robustness aka guaranteed output delivery (GOD) over fairness\footnote{This ensures either all parties or none learn the output.}, or even abort security\footnote{This may allow the corrupt parties {\em alone} to learn the output.}, in the domain of PPML. Robustness provides the guarantee of output delivery to all protocol participants, no matter how the adversary misbehaves. Robustness is  crucial for real-world deployment and usage of PPML techniques. Consider the following scenario wherein an ML model owner wishes to provide inference service. The model owner shares the model parameters between the servers, while the end-users share their queries. A protocol that provides security with abort or fairness will not suffice as in both the cases a malicious adversary can lead to the protocol aborting, resulting in the user not obtaining the desired output. This leads to denial of service and heavy economic losses for the service provider. For data providers, as more training data leads to more accurate models, collaboratively building a model enables them to provide better ML services, and consequently, attract more clients. A robust framework encourages active involvement from multiple data providers. Hence, for the seamless adoption of PPML solutions in the real world, the robustness of the protocol is of utmost importance. 
Several works~\cite{ASTRA,MR18,WaghGC19,BLAZE,Trident} realizing PPML via MPC  settle for weaker guarantees such as abort and fairness. Achieving  the strongest notion of GOD without degrading performance is an interesting goal which is the focus of this work. The hall-mark result of~\cite{Cleve86} suggests that an honest-majority amongst the servers is necessary to achieve robustness. Consequent to the discussion above, we focus on the honest-majority setting with a small set of parties, especially 3 and 4 parties, both of which have drawn enormous attention recently~\cite{MRZ15,AFLNO16,FLNW17,ABFLLNOWW17,ByaliJPR18,NV18,ASTRA,BonehBCGI19,BLAZE,Trident,FLASH,BGIN19}. 

The $3/4$-party setting enables simpler, more efficient, and customized secure protocols compared to the $n$-party setting. Real-world MPC applications and frameworks such as the Danish sugar beet auction~\cite{auction} and Sharemind~\cite{BogdanovLW08}, have demonstrated the practicality of $3$-party protocols. Additionally, in an outsourced setting, $3/4$PC is useful and relevant even when there are more parties.   Specifically, here the entire computation is offloaded to $3/4$ hired servers, after initial sharing of inputs by the parties amongst the servers.  This is precisely what we (and some existing papers~\cite{BLAZE, FLASH, MazloomLRG20}) contemplate as the setting for providing ML-as-a-service.
Our protocols work over rings, are cast in the preprocessing paradigm, and achieve GOD. 

\paragraph{Related Work}
We restrict the relevant work to a small number of parties and honest-majority, focusing first on MPC, followed by PPML. MPC protocols for a small population can be cast into orthogonal domains of low latency protocols~\cite{PatraR18, ByaliJPR18, ByaliHPS19}, and high throughput protocols~\cite{BogdanovLW08, AFLNO16, ABFLLNOWW17, FLNW17, CGHIKLN18, ASTRA, NV18, BLAZE, BGIN19, AbspoelDEN19, EOP19}. In the 3PC setting,~\cite{AFLNO16, ASTRA} provide efficient semi-honest protocols wherein ASTRA~\cite{ASTRA} improved upon~\cite{AFLNO16} by casting the protocols in the preprocessing model and provided a fast online phase. ASTRA further provided security with fairness in the malicious setting with an improved online phase compared to~\cite{ABFLLNOWW17}. 
Later, a maliciously-secure 3PC protocol based on distributed zero-knowledge techniques was proposed by Boneh et al.~\cite{BonehBCGI19} providing abort security. Further, building on~\cite{BonehBCGI19} and enhancing the security to GOD, Boyle et al.~\cite{BGIN19} proposed a concretely efficient 3PC protocol with an amortized communication cost of $3$ field elements (can be extended to work over rings) per multiplication gate. Concurrently, BLAZE~\cite{BLAZE} provided a fair protocol in the preprocessing model, which required communicating $3$ ring elements in each phase. However, BLAZE eliminated the reliance on the computationally intensive distributed zero-knowledge system (whose efficiency kicks in for large circuit or many multiplication gates) from the online phase and pushed it to the preprocessing phase. This resulted in a faster online phase compared to~\cite{BGIN19}. 

In the regime of  4PC, Gordon et al.~\cite{GordonR018} presented protocols achieving abort security and GOD. However,~\cite{GordonR018} relied on expensive public-key primitives and broadcast channels to achieve GOD. Trident~\cite{Trident} improved over the abort protocol of~\cite{GordonR018}, providing a fast online phase achieving security with fairness, and presented a framework for mixed world computations~\cite{DSZ15}. A robust 4PC protocol was provided in FLASH~\cite{FLASH}, which requires communicating $6$ ring elements, each, in the preprocessing and online phases.

In the PPML domain, MPC has been used for various ML algorithms such as Decision Trees~\cite{LindellP02}, Linear Regression~\cite{DuA01,SanilKLR04}, k-means clustering~\cite{JagannathanW05, BunnO07}, SVM Classification~\cite{YuVJ06, VaidyaYJ08}, Logistic Regression~\cite{SlavkovicNT07}. In the 3PC SOC setting, the works of ABY3~\cite{MR18} and SecureNN~\cite{WaghGC19}, provide security with abort.  This was followed by ASTRA~\cite{ASTRA}, which improves upon ABY3 and achieves security with fairness. ASTRA presents primitives to build protocols for Linear Regression and Logistic Regression inference.  
Recently, BLAZE improves over the efficiency of ASTRA and additionally tackles training for the above ML tasks, which requires building additional  PPML building blocks, such as truncation and bit to arithmetic conversions. In the 4PC setting, the first robust framework for PPML was provided by FLASH~\cite{FLASH} which proposed efficient building blocks for ML such as dot product, truncation, MSB extraction, and bit conversion. The works of~\cite{MohasselZ17,MR18,WaghGC19,ASTRA,BLAZE,Trident,FLASH} work over rings to garner practical efficiency. In terms of efficiency, BLAZE and respectively FLASH and Trident are the closest competitors of this work in 3PC and 4PC settings. 

\subsection{Our Contributions}
We propose, {\bf SWIFT}, a robust maliciously-secure framework for PPML in the SOC setting, with a set of $3$ and $4$ servers having an honest-majority. At the heart of our framework lies highly-efficient, maliciously-secure, 3PC and 4PC over rings (both $\Z{\ell}$ and $\Z{1}$) that provide GOD in the honest-majority setting.  We cast our protocols in the preprocessing model, which helps obtain a fast online phase. As mentioned earlier, the input-independent (yet function-dependent) computations will be performed in the preprocessing phase. 

	\begin{table*}[htb!]
		\centering
		\resizebox{.94\textwidth}{!}{
			\begin{tabular}{r || r | r | r | r | r || r | r | r | r | r}
				\toprule
				\multirow{3}[3]{*}{\makecell{Building\\Blocks}}\tnote{1}
				& \multicolumn{5}{c||}{3PC} & \multicolumn{5}{c}{4PC} \\
				\cmidrule{2-11}
				& \multicolumn{1}{c|}{\multirow{2}[2]{*}{Ref.}} 
				& \multicolumn{1}{c|}{Pre.} & \multicolumn{2}{c|}{Online} 
				& \multicolumn{1}{c||}{\multirow{2}[2]{*}{Security}}  
				& \multicolumn{1}{c|}{\multirow{2}[2]{*}{Ref.}} 
				& \multicolumn{1}{c|}{Pre.} & \multicolumn{2}{c}{Online} 
				& \multicolumn{1}{|c}{\multirow{2}[2]{*}{Security}} \\
				\cmidrule{3-5}\cmidrule{8-10}
				& & Comm.~($\ell$) & Rounds & Comm.~($\ell$) &
				& & Comm.~($\ell$) & Rounds & Comm.~($\ell$)   \\
				\midrule 
				\multirow{4}{*}{Multiplication}     
				& \cite{BonehBCGI19} & $1$          & 1 & $2$    & Abort      
				&                    &              &            &         \\ 
				& \cite{BGIN19}      & -            & 3 & $3$    & GOD        
				& Trident            & $3$          & 1 & $3$    & Fair   \\ 
				&  BLAZE             & $3$          & 1 & $3$    & Fair      
				&  FLASH             & $6$          & 1 & $6$    & GOD     \\ 
				&  {\bf SWIFT}       & $\mathbf{3}$ & $\mathbf{1}$ & $\mathbf{3}$ & GOD     
				&  {\bf SWIFT}       & $\mathbf{3}$ & $\mathbf{1}$ & $\mathbf{3}$ & GOD  \\ 
				\midrule	
				\multirow{3}{*}{Dot Product} 
				& & & & &   
				& Trident     & $3$              & 1            & $3$          & Fair \\ 
				& BLAZE       & $3 \nf$          & 1            & $3 $         & Fair    
				& FLASH       & $6$              & 1            & $6$          & GOD   \\ 
				& {\bf SWIFT} & $\mathbf{3}$   & $\mathbf{1}$ & $\mathbf{3}$ & GOD      
				& {\bf SWIFT} & $\mathbf{3}$     & $\mathbf{1}$ & $\mathbf{3}$ & GOD   \\ 
				\midrule
				\multirow{3}{*}{\makecell[r]{Dot Product\\ with Truncation}}   
				& & & & &   
				& Trident     &  $6$                  & 1            & $3$          & Fair \\ 
				& BLAZE       & $ 3 \nf + 2$          & 1            & $ 3$         & Fair    
				& FLASH       & $8$                   & 1            & $6$          & GOD   \\ 
				& {\bf SWIFT} & $\mathbf{15}$  & $\mathbf{1}$ & $\mathbf{3}$ & GOD      
				& {\bf SWIFT} & $\mathbf{4}$          & $\mathbf{1}$ & $\mathbf{3}$ & GOD   \\ 
				\midrule
				\multirow{3}{*}{\makecell[r]{Bit\\Extraction}}  
				& & & & &    
				& Trident     & $\approx 8$                     & $\log \ell + 1$            & $\approx 7$               & Fair \\ 
				& BLAZE       & 9          & $1+\log \ell$           & 9         & Fair    
				& FLASH       & $ 14 $                   & $\log \ell$  & $ 14 $         & GOD   \\ 
				& {\bf SWIFT} & $\mathbf{9}$  & $\mathbf{1+\log \ell}$ & $\mathbf{9}$ & GOD      
				& {\bf SWIFT} & $\approx \mathbf{7}$ & $\mathbf{\log \ell}$ & $\approx \mathbf{7}$ & GOD   \\ 
				\midrule
				\multirow{3}{*}{\makecell[r]{Bit to\\Arithmetic}}    
				& & & & &    
				& Trident     & $\approx 3$  & 1            & $3$              & Fair \\ 
				& BLAZE       & $9$          & 1            & $4$              & Fair    
				& FLASH       & $6$          & 1            & $8$              & GOD   \\ 
				& {\bf SWIFT} & $\mathbf{9}$ & $\mathbf{1}$ & $\mathbf{4}$     & GOD      
				& {\bf SWIFT} & $\approx \mathbf{3}$ & $\mathbf{1}$ & $\mathbf{3}$     & GOD   \\ 
				\midrule
				\multirow{3}{*}{\makecell[r]{Bit\\Injection}}      
				& & & & &    
				& Trident     & $\approx 6$  & 1            & $3$              & Fair \\ 
				& BLAZE       & $12$         & 2            & $7$              & Fair    
				& FLASH       & $8$          & 2            & $10$              & GOD   \\ 
				& {\bf SWIFT} & $\mathbf{12}$& $\mathbf{2}$ & $\mathbf{7}$     & GOD      
				& {\bf SWIFT} & $\approx \mathbf{6}$ & $\mathbf{1}$ & $\mathbf{3}$     & GOD   \\ 
				\bottomrule
			\end{tabular}
		}
		{\footnotesize
			\begin{tablenotes}
				\item[1] -- Notations: $\ell$ - size of ring in bits, $\nf$ - size of vectors for dot product. 
			\end{tablenotes}
		}
		\vspace{-2mm}
		\caption{\small 
			3PC and 4PC: Comparison of SWIFT with its closest competitors in terms of Communication and Round Complexity\label{tab:comparison}}
	\end{table*}

To the best of our knowledge, SWIFT is the first robust and efficient PPML framework in the 3PC setting and is as fast as (and is strictly better in some cases than) the best known {\em fair} 3PC framework BLAZE~\cite{BLAZE}. We extend our 3PC framework for 4 servers.  In this regime, SWIFT is as fast as the best known {\em fair} 4PC framework Trident~\cite{Trident} and twice faster than best known {\em robust} 4PC framework FLASH~\cite{FLASH}. We detail our contributions next.

\vspace{-2mm}
\paragraph{Robust 3/4PC frameworks} 
The framework consists of a range of primitives realized in a privacy-preserving way which is ensured via running computation in a secret-shared fashion.  
We use secret-sharing  over both $\Z{\ell}$ and its special instantiation $\Z{1}$ and refer them  as {\em arithmetic} and  respectively {\em boolean} sharing. Our framework consists of realizations for all primitives needed for general MPC and PPML such as multiplication,  dot-product,  truncation, bit extraction (given arithmetic sharing of a value $\val$, this is used to generate boolean sharing of the most significant bit ($\MSB$) of the value), bit to arithmetic sharing conversion (converts the boolean sharing of a single bit value to its arithmetic sharing),  bit injection (computes the arithmetic sharing of $\bitb \cdot \val$, given the boolean sharing of a bit $\bitb$ and the arithmetic sharing of a ring element $\val$) and above all, input sharing and output reconstruction in the SOC setting. A highlight of our 3PC framework, which, to the best of our knowledge is achieved for the first time, is a robust dot-product protocol whose (amortized) communication cost is independent of the vector size, which we obtain by extending the techniques of~\cite{BonehBCGI19, BGIN19}.
The performance comparison in terms of concrete cost for communication and rounds, for PPML primitives in both 3PC and 4PC setting, appear in Table~\ref{tab:comparison}. As claimed, SWIFT is on par  with   BLAZE for most of the  primitives (while improving security from fair to GOD) and is strictly better than BLAZE in case of dot product and dot product with truncation.  For 4PC, SWIFT is on par with Trident in most cases (and is slightly better for dot product with truncation and bit injection), while it is doubly faster than FLASH. Since BLAZE outperforms the 3PC abort framework of ABY3~\cite{MR18} while Trident outperforms the known 4PC with abort~\cite{GordonR018}, SWIFT attains robustness with better cost than the know protocols with weaker guarantees. No performance loss coupled with the strongest security guarantee makes our robust framework an opt choice for practical applications including PPML.

\paragraph{Applications and Benchmarking} 
We demonstrate the practicality of our protocols by  benchmarking PPML, particularly, Logistic Regression (training and inference) and popular Neural Networks (inference) such as~\cite{MohasselZ17}, LeNet~\cite{lenet} and VGG16~\cite{vgg16} having millions of parameters. The NN training requires mixed-world conversions~\cite{DSZ15,MR18,Trident}, which we leave as future work. Our PPML blocks can be used to perform training and inference of Linear Regression, Support Vector Machines, and Binarized Neural Networks (as demonstrated in~\cite{ASTRA,BLAZE,Trident,FLASH}).

\paragraph{New techniques and Comparisons with Prior Works}
To begin with, we introduce a new primitive called Joint Message Passing~($\Jmp$) that allows two servers to relay a common message to the third server such that either the relay is successful or an honest server is identified. The identified honest party enacts the role of a trusted third party~($\TTP$) to take the computation to completion. $\Jmp$ is extremely efficient as for a message of $\ell$ elements it only incurs the minimal communication cost of $\ell$ elements  (in an amortized sense). Without any extra cost, it allows us to replace several pivotal private communications, that may lead to abort, either because the malicious sender does not send anything or  sends a wrong message.  All our primitives, either for a general 3PC or a PPML task, achieve GOD relying on $\Jmp$. 

Second, instead of using the multiplication of~\cite{BGIN19} (which has the same overall communication cost as that of our online phase), we build a new  protocol. This is because the former involves  distributed zero-knowledge protocols. The cost of this  heavy machinery  gets amortized only for large circuits having millions of gates, which is very unlikely for inference and moderately heavy training tasks in PPML. As in BLAZE~\cite{BLAZE}, we follow a similar structure for our multiplication protocol but differ considerably in techniques as our goal is to obtain GOD. Our approach is to manipulate and transform some of the protocol steps so that two other servers can locally compute the information required by a server in a round. However, this transformation is not straight forward since BLAZE was constructed with a focus towards providing only fairness (details appear in \S\ref{sec:3PC}). 
The multiplication protocol forms a technical basis for our dot product protocol and other PPML building blocks. We emphasise again that the (amortized) cost of our dot product protocol is independent of the vector size.

Third, extending to 4PC brings several performance improvements over 3PC. Most prominent of all is a conceptually simple $\Jmp$ instantiation, which forgoes the broadcast channel while retaining the same communication cost; and a dot product with cost independent of vector size sans the 3PC amortization technique.

Fourth, we  provide robust protocols for input sharing and output reconstruction phase in  the SOC setting, wherein a user shares its input with the servers, and the output is reconstructed towards a user. The need for robustness and communication efficiency together makes these tasks slightly non-trivial.
As a highlight, we introduce a super-fast  online phase for the reconstruction protocol, which gives $4\times$ improvement in terms of rounds (apart from improvement in the communication) compared to BLAZE. Although we aim for GOD, we ensure that an end-user is never part of broadcast which is relatively expensive than {\em atomic} point-to-point communication.

As a final remark, we note that the recent work of \cite{DEK20} proposes a variant of GOD in the 4PC setting, which is termed as private robustness. The authors of \cite{DEK20} state that private robustness is a variant of GOD which guarantees that the correct output is produced in the end, but without relying on an honest party learning the user’s private inputs. Thus, departing from the approach of employing a $\TTP$ to complete the computation when malicious behaviour is detected, \cite{DEK20} attains GOD by eliminating a potentially corrupt party, and {\em repeating} the secure computation with fewer number of parties which are deemed to be honest.  We point out a few concerns on this work.
 Firstly, as mentioned earlier, the goal of private robustness is to prevent an honest party learning the user's input, thereby preventing it from misusing this private information (user's input) in the future if it goes rogue. We note, however, that in the private robustness setting, although an honest party does not learn a user's input as a part of the protocol, nothing prevents an adversary from revealing its view to an honest party. In such a scenario, if the honest party goes rogue in the future, it can use its view together with the view received from the adversary to obtain a user's input. Hence, we believe that the attacks that can be launched in our variant for achieving GOD, can also be launched in the private robustness variant, making the two equivalent. Secondly and importantly, a formal treatment of private robustness is missing in \cite{DEK20} which makes it unclear what additional security is achieved on top of traditional GOD security. 
 Here, we additionally note that the notion of private robustness does not comply with the recently introduced notion of FaF security \cite{alon20}\footnote{Althought \cite{DEK20} states that the issue of private robustness was identified and treated formally in \cite{alon20}, it is not clear whether \cite{DEK20} achieves the FaF security of \cite{alon20}. 
 For a corruption threshold $t$ and an honest threshold $h^*$,  FaF security demands that the view of any $t$ corrupt parties and {\em separately} the view of any $h^*$ honest parties must be simulatable.  The latter part which is a new addition compared to traditional security definition requires the presence of a (semi-honest) simulator that can simulate the view of any subset of $h^*$  honest parties, given the input and output of those honest parties. This notion is shown to be achievable if and only if $2t + h^* < n$, where $n$ is the total number of parties. With $n=4$, $t = 1$ and $h^* = 1$,  the results of \cite{DEK20} does not satisfy FaF security since an honest party's view may include the inputs of the other honest parties  when a corrupt party, deviating from the protocol steps,  sends its view to it. 
 A formal analysis of protocols in \cite{DEK20} satisfying the FaF security notion of \cite{alon20} is missing.}.  
 Lastly, we emphasize that the approach of eliminating a potentially corrupt party, and re-running the computation results in doubling or tripling the communication cost, thereby undermining the efficiency gains.

\subsection{Organisation of the paper}
The rest of the paper is organized as follows. In \S\ref{sec:Prelims} we describe the system model, preliminaries and notations used. \S\ref{sec:3PC} and \S\ref{sec:4PC} detail our constructs in the 3PC and 4PC setting respectively. These are followed by the applications and benchmarking in \S\ref{sec:Implementation}. \S\ref{app:Prelims} elaborates on additional preliminaries while the security proofs for our constructions are provided in \S\ref{app:proofs}.


\section{Preliminaries}
\label{sec:Prelims}
We consider a set of three servers $\Partyset = \{ P_0, P_1, P_2 \}$ that are connected by pair-wise private and authentic channels in a synchronous network, and a static, malicious adversary that can corrupt at most one server. We use a broadcast channel for 3PC alone, which is inevitable~\cite{CHOR18}.
For ML training, several data-owners who wish to jointly train a model, secret share (using the sharing semantics that will appear later) their data among the servers. For ML inference, a model-owner and client secret share the model and the query, respectively, among the servers. Once the inputs are available in the shared format, the servers perform computations and obtain the output in the shared form. In the case of training, the output model is reconstructed towards the data-owners, whereas for inference, the prediction result is reconstructed towards the client. We assume that an arbitrary number of data-owners may collude with a corrupt server for training, whereas for the case of prediction, we assume that either the model-owner or the client can collude with a corrupt server.  We prove the security of our protocols using a standard real-world / ideal-world paradigm.  We also explore the above model for the four server setting with  $\Partyset = \{ P_0, P_1, P_2, P_3\}$. The aforementioned setting has been explored extensively \cite{MohasselZ17,MR18,ASTRA,Trident,FLASH,BLAZE}. 

Our constructions achieve the strongest security guarantee of GOD. A protocol is said to be {\em robust} or achieve GOD if all parties obtain the output of the protocol regardless of how the adversary behaves. In our model, this translates to all the data owners obtaining the trained model for the case of ML training, while the client obtaining the query output for ML inference.
All our protocols are cast into: {\em input-independent} preprocessing phase and {\em input-dependent} online phase.

For 3/4PC, the function to be computed is expressed as a circuit $\ckt$, whose topology is public, and is evaluated over an arithmetic ring $\Z{\ell}$ or boolean ring $\Z{1}$.  For PPML, we consider computation over the same algebraic structure.  To deal with floating-point values, we use Fixed-Point Arithmetic (FPA)~\cite{MohasselZ17,MR18,ASTRA,FLASH,Trident,BLAZE} representation in which a decimal value is represented as an $\ell$-bit integer in signed 2's complement representation. The most significant bit (MSB) represents the sign bit, and $x$ least significant bits are reserved for the fractional part. The $\ell$-bit integer is then treated as an element of $\Z{\ell}$, and operations are performed modulo $2^{\ell}$. We set $\ell = 64, x = 13$, leaving $\ell - x - 1$ bits for the integer part.

The servers use a one-time key setup, modelled as a functionality $\FSETUP$ (\boxref{fig:FSETUP}), to establish pre-shared random keys for pseudo-random functions (PRF) between them. A similar setup is used in \cite{RiaziWTS0K18, FLNW17, ABFLLNOWW17, MR18, ASTRA, BLAZE, BGIN19} for three server case and in \cite{FLASH, Trident} for four server setting. The key-setup can be instantiated using any standard MPC protocol in the respective setting. Further, our protocols make use of a {\em collision-resistant} hash function, denoted by $\Hash()$, and a commitment scheme, denoted by $\commit()$. The formal details of key setup, hash function, and commitment scheme are deferred to \S\ref{app:Prelims}.   

\begin{notation}
	The $i^{th}$ element of a vector $\vecX$ is denoted as $\vx_i$. The dot product of two $\nf$ length vectors, $\vecX$ and $\vecY$, is computed as $\vecX \band \vecY = \sum_{i = 1}^{\nf} \vx_i \vy_i$. For two matrices  $\Mat{X}, \Mat{Y}$, the operation $\Mat{X} \circ \Mat{Y}$ denotes the matrix multiplication. The bit in the $i^{th}$ position of an $\ell$-bit value $\val$ is denoted by $\val[i]$.
\end{notation}

\begin{notation}\label{boolequivring}
	For a bit $\bitb \in \{0,1\}$, we use $\arval{\bitb}$ to denote the equivalent value of $\bitb$ over the ring $\Z{\ell}$. $\arval{\bitb}$ will have its least significant bit  set to $\bitb$, while all other bits will be set to zero. 
\end{notation}

\section{Robust 3PC and PPML}
\label{sec:3PC}
In this section, we first introduce the sharing semantics for three servers. Then, we introduce our new  Joint Message Passing (\Jmp) primitive, which plays a crucial role in obtaining the strongest security guarantee of GOD, followed by our protocols in the three server setting. 

\paragraph{Secret Sharing Semantics}
We use the following secret-sharing semantics.
\begin{newitemize}
	%
	\item[$\circ$] {\em $\sqr{\cdot}$-sharing:}
	A value $\val \in \Z{\ell}$ is $\sqr{\cdot}$-shared among $P_1,P_2$, if 
	$P_s$ for $s \in \{1,2\}$ holds $\sqr{\val}_s \in \Z{\ell}$ such that $\val = \sqrA{\val} + \sqrB{\val} $.
	\item[$\circ$] {\em $\sgr{\cdot}$-sharing:}
	A value $\val \in \Z{\ell}$ is $\sgr{\cdot}$-shared among $\Partyset$, if
	\begin{mylist}
		\item[--] there exists $\val_0,\val_1,\val_2 \in \Z{\ell}$ such that $\val = \val_0+\val_1+\val_2$.
		\item[--] $P_s$ holds $(\val_s, \val_{(s+1)\%3})$ for $s \in \{0,1,2\}$.
	\end{mylist}
	\item[$\circ$] {\em $\shr{\cdot}$-sharing:}
	A value $\val \in \Z{\ell}$ is $\shrd$-shared among $\Partyset$, if 
	\begin{mylist}
		\item[--] there exists $\av{\val} \in \Z{\ell}$ that is $\sqr{\cdot}$-shared among $\ESet$.
		\item[--] there exists $\bv{\val}, \gv{\val} \in \Z{\ell}$ such that $\bv{\val} = \val + \av{\val}$ and $P_0$ holds $(\sqrA{\av{\val}}, \sqrB{\av{\val}},\bv{\val}+\gv{\val})$ while $P_s$ for $s \in \EInSet$ holds $( \sqr{\av{\val}}_s, \bv{\val}, \gv{\val})$.
	\end{mylist}
\end{newitemize}

\paragraph{Arithmetic and Boolean Sharing} 
{\em Arithmetic} sharing refers to sharing over $\Z{\ell}$ while {\em boolean} sharing, denoted as $\shrB{\cdot}$, refers to sharing over $\Z{1}$.

\paragraph{Linearity of the Secret Sharing Scheme}
Given $\sqr{\cdot}$-shares of $\val_1,\val_2$, and public constants $c_1, c_2$, servers can locally compute $\sqr{\cdot}$-share of $c_1\val_1+c_2\val_2$ as $c_1\sqr{\val_1}+c_2\sqr{\val_2}$. It is trivial to see that linearity property is satisfied by $\sgr{\cdot}$ and $\shr{\cdot}$ sharings. 

\subsection{Joint Message Passing  primitive}
\label{subsec:jmp}
The $\Jmp$ primitive allows two servers to relay a common message to the third server such that either the relay is successful or an honest server (or a conflicting pair) is identified. The striking feature of $\Jmp$  is that it offers a rate-$1$ communication i.e. for a message of $\ell$ elements, it only incurs a communication of $\ell$ elements (in an amortized sense). 
The task of $\Jmp$ is  captured in an ideal functionality (\boxref{fig:JmpFunc}) and the protocol for the same appears in \boxref{fig:piJmp}. Next, we give an overview.

Given two servers $P_i, P_j$ possessing a common value $\val \in \Z{\ell}$, protocol $\piJmp$ proceeds as follows. First, $P_i$ sends $\val$ to $P_k$ while $P_j$ sends a hash of $\val$ to $P_k$. The communication of the hash is done once and for all from $P_j$ to $P_k$. In the simplest case,  $P_k$ receives a consistent (value, hash) pair, and the protocol terminates. In all other cases,  a  $\TTP$ is identified as follows without having to communicate $\val$ again.  Importantly, the following part can be run once and for all instances of $\piJmp$  with $P_i,P_j,P_k$ in the same roles, invoked in the final 3PC protocol. Consequently, the  cost relevant to this part vanishes in an amortized sense, making the construction rate-1.

\begin{systembox}{$\FJmp$}{3PC: Ideal functionality for $\Jmp$ primitive}{fig:JmpFunc}
	\justify
	$\FJmp$ interacts with the servers in $\Partyset$ and the adversary $\Sim$. 
	\begin{myitemize}
		\item[\bf Step 1:] $\FJmp$ receives $(\INPUT,\val_s)$ from $P_s$ for $s \in \{i,j\}$, while it receives $(\SELECT,\ttp)$ from $\Sim$. Here $\ttp$ denotes the server that $\Sim$ wants to choose as the $\TTP$.  Let $P^{\star} \in \Partyset$ denote the server corrupted by $\Sim$.
		\item[\bf Step 2:] If $\val_i = \val_j$ and $\ttp =\bot$, then set $\msg_i = \msg_j = \bot, \msg_k = \val_i$ and go to {\bf Step 5}.
		\item[\bf Step 3:] If $\ttp \in \Partyset\setminus\{P^{\star}\}$, then set $\msg_i = \msg_j = \msg_k = \ttp$.
		\item[\bf Step 4:] Else, $\TTP$ is set to be the honest server with smallest index. Set $\msg_i = \msg_j = \msg_k = \TTP$
		\item[\bf Step 5:] Send $(\OUTPUT, \msg_s)$ to $P_s$ for $s \in \{0,1,2\}$.
	\end{myitemize}
\end{systembox}

Each $P_s$ for $s \in \{i,j,k\}$ maintains a bit $\bitb_s$ initialized to $0$, as an indicator for inconsistency. When $P_k$  receives an inconsistent (value, hash) pair, it sets $\bitb_k = 1$ and sends the bit to both $P_i,P_j$, who cross-check with each other by exchanging the bit and turn on their inconsistency bit  if the bit received from either $P_k$ or its fellow sender is turned on.  A server  broadcasts a hash of its value when its  inconsistency bit is on;\footnote{hash can be computed on a combined message across many calls of $\Jmp$.} $P_k$'s value is the one it receives from $P_i$. At this stage, there are a bunch of possible cases and a detailed analysis determines an eligible $\TTP$ in each case.  

When $P_k$ is silent, the protocol is understood to be complete. This is fine irrespective of the status of $P_k$-- an honest $P_k$ never skips this broadcast with inconsistency bit on, and a corrupt $P_k$ implies honest senders. If either $P_i$ or  $P_j$ is silent, then $P_k$ is picked as $\TTP$ which is surely honest. A corrupt $P_k$ could not make one of $\{P_i, P_j\}$ speak, as the senders (honest in this case)  are in agreement on their inconsistency bit (due to their mutual exchange of inconsistency bit). When all of them speak and (i) the senders' hashes do not match, $P_k$  is picked as $\TTP$;  (ii) one of the senders conflicts with $P_k$, the other sender is picked as $\TTP$;  and lastly  (iii) if there is no conflict, $P_i$ is picked as $\TTP$. The first two cases are self-explanatory.  In the last case, either $P_j$ or $P_k$ is corrupt.  If not, a corrupt $P_i$ can have honest $P_k$ speak (and hence turn on its inconsistency bit), by sending a $\val'$ whose hash is not  same as that of $\val$ and so inevitably, the hashes of honest $P_j$ and $P_k$ will conflict, contradicting (iii).  
As a final touch, we ensure that,  in each step,  a server raises a public alarm (via broadcast) accusing a server which is silent when it is not supposed to be,  and the protocol terminates immediately by labelling the server as $\TTP$ who is neither the complainer nor the accused. 
\begin{notation} \label{jmp-send}
	We say that $P_i,P_j$ $\jsend$ $\val$ to $P_k$ when they invoke $\piJmp(P_i, P_j, P_k, \val)$. 
\end{notation}

\begin{mypbox}{$\piJmp(P_i, P_j, P_k, \val)$}{3PC: Joint Message Passing Protocol}{fig:piJmp}
	Each server $P_s$ for $s \in \{i,j,k\}$ initializes bit $\bitb_s = 0$.
	\justify
	{\em Send Phase:} $P_i$ sends $\val$ to $P_k$.
	
	\noindent {\em Verify Phase:} $P_j$ sends $\Hash(\val)$ to $P_k$. 
	\begin{myitemize}
		\item[--]   $P_k$ broadcasts \texttt{"(accuse,$P_i$)"}, if $P_i$ is silent and $\TTP$ = $P_j$. Analogously for $P_j$. If $P_k$ accuses both $P_i,P_j$, then $\TTP$ = $P_i$. Otherwise,  $P_k$ receives some $\tilde \val$ and  either sets $\bitb_k = 0$ when the value and the hash are consistent or  sets $\bitb_k = 1$. $P_k$ then sends $\bitb_k$ to $P_i,P_j$ and terminates if $\bitb_k = 0$.
		
		\item[--] If $P_i$ does not receive a bit from $P_k$, it broadcasts \texttt{"(accuse,$P_k$)"} and $\TTP$ = $P_j$. Analogously for $P_j$. If both $P_i,P_j$ accuse $P_k$, then $\TTP$ = $P_i$. Otherwise,  $P_s$ for $s \in \{i,j\}$  sets $\bitb_s = \bitb_k$.
		\item[--] $P_i,P_j$ exchange their bits to each other.  If $P_i$ does  not receive $\bitb_j$ from $P_j$, it broadcasts \texttt{"(accuse,$P_j$)"} and  $\TTP$ = $P_k$. Analogously for $P_j$. Otherwise, $P_i$ resets its bit to $\bitb_i \vee \bitb_j$ and likewise $P_j$ resets its bit to $\bitb_j \vee \bitb_i$.
		\item[--] $P_s$ for $s \in \{i,j,k\}$ broadcasts $\Hash_s = \Hash(\val^*)$ if $b_s = 1$, where $\val^* = \val$ for $s \in \{i,j\}$ and $\val^* = \tilde \val$ otherwise.  If $P_k$ does not broadcast, terminate.  If either  $P_i$ or $P_j$ does not broadcast, then  $\TTP$ = $P_k$. Otherwise,
		\begin{inneritemize}
			\item If $\Hash_i \neq \Hash_j$: $\TTP$ = $P_k$.
			\item Else if $\Hash_i \neq \Hash_k$: $\TTP$ = $P_j$.
			\item Else if $\Hash_i = \Hash_j = \Hash_k$: $\TTP$ = $P_i$.
		\end{inneritemize}	
	\end{myitemize}
\end{mypbox}

{\em Using $\Jmp$ in protocols.} As mentioned in the introduction, the $\Jmp$ protocol needs to be viewed as consisting of two phases ({\em send, verify}), where {\em send} phase consists of $P_i$ sending $\val$ to $P_k$ and the rest  goes to  {\em verify} phase. Looking ahead,  most of our protocols use $\Jmp$, and consequently, our final construction, either of general MPC or any PPML task, will have several calls to $\Jmp$. To leverage amortization,  the {\em send} phase will be executed in all protocols invoking $\Jmp$ on the flow, while the {\em verify} for a fixed ordered pair of senders will be executed once and for all in the end. The {\em verify} phase will determine if all the sends were correct. If not, a $\TTP$ is identified, as explained, and the computation completes with the help of $\TTP$, just as in the ideal-world.  

\begin{lemma}[Communication]
	\label{app:piJmp}
	Protocol $\piJmp$ (\boxref{fig:piJmp}) requires $1$ round and an amortized communication of $\ell$ bits overall.
\end{lemma}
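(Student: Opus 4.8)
The plan is to analyze protocol $\piJmp$ phase by phase, separating the one-time costs from the per-invocation costs, and to argue that in the amortized regime only the per-invocation communication survives. The key observation is that $\piJmp$ decomposes into a \emph{send} phase and a \emph{verify} phase, and that the verify phase — for a fixed ordered triple $(P_i, P_j, P_k)$ — is executed exactly once regardless of how many values are relayed with these servers in these roles. So if $N$ values are relayed, the total communication is (cost of $N$ send phases) $+$ (cost of a single verify phase), and dividing by $N$ and letting $N$ grow makes the verify-phase contribution vanish.

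First I would bound the send phase: $P_i$ sends $\val \in \Z{\ell}$ to $P_k$, which is exactly $\ell$ bits, and this is the only per-invocation communication. This happens in one round (it can be pipelined with the rest of the surrounding protocol). Next I would bound the verify phase and argue each of its components is a one-time cost. The message $P_j \to P_k$ is $\Hash(\val^*)$ where $\val^*$ can be taken to be the concatenation of all relayed values across all calls with these roles, so it is a single hash digest of fixed length $\csec$ — a constant independent of $N$. The bit exchanges ($P_k \to \{P_i,P_j\}$ carrying $\bitb_k$; $P_i \leftrightarrow P_j$ carrying $\bitb_i, \bitb_j$) each carry a single bit, again once and for all. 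The potential broadcasts — the \texttt{accuse} messages and the hashes $\Hash_s$ — are likewise triggered at most once over the batch (the inconsistency flag, once raised, stays raised, and a server accuses at most once before a $\TTP$ is fixed), and each is either $O(1)$ bits or one digest of length $\csec$. Summing, the verify phase costs $O(\csec)$ bits plus a constant number of broadcasts of $O(\csec)$-bit messages, all independent of $N$.

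Putting these together: over $N$ invocations with a fixed ordered triple, total communication $= N\ell + O(\csec)$ bits (treating broadcast of a $\csec$-bit message as $O(\csec)$ cost, which is the accounting convention used throughout the paper), so the amortized cost per invocation is $\ell + O(\csec)/N \to \ell$ bits. For the round count, the send phase is one round; the verify phase adds only a constant number of additional rounds, and since it is run once at the very end of the overall protocol, it does not inflate the per-invocation round complexity — each $\piJmp$ call contributes its single send round on the fly.

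I expect the only subtlety — not really an obstacle, but the point that needs care — is the bookkeeping that the verify phase genuinely collapses to a one-time cost: one must note that hashes are computed on the concatenation of all messages batched for a given ordered sender-pair, that each inconsistency bit and each \texttt{accuse}/hash broadcast is sent at most once over the whole batch, and that once a $\TTP$ is identified the protocol terminates (so no further verify communication occurs). Given the paper's stated convention of charging broadcast as ordinary communication and amortizing over many parallel calls, the claimed bound of $1$ round and amortized $\ell$ bits then follows immediately.
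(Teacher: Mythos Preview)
Your proposal is correct and follows essentially the same approach as the paper: identify the send phase ($P_i$ transmitting $\val$ to $P_k$, costing $\ell$ bits and one round) as the only per-invocation cost, and observe that everything in the verify phase (the hash from $P_j$, the inconsistency-bit exchanges, and the broadcast of hashes) is executed once for a fixed ordered triple and hence amortizes away. Your write-up is a bit more explicit about the $N\ell + O(\csec)$ bookkeeping and the hash-of-concatenation trick than the paper's terse proof, but the underlying argument is the same.
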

\begin{proof}
	Server $P_i$ sends value $\val$ to $P_k$ while $P_j$ sends hash of the same to $P_k$. This accounts for one round and communication of $\ell$ bits. $P_k$ then sends back its inconsistency bit to $P_i,P_j$, who then exchange it; this takes another two rounds. This is followed by servers broadcasting hashes on their values and selecting a $\TTP$ based on it, which takes one more round. All except the first round can be combined for several instances of $\piJmp$ protocol and hence the cost gets amortized.
\end{proof}

\subsection{3PC Protocols}
\label{subsec:3pc_protocols}
We now describe the protocols for 3 parties/servers and defer the security proofs to \S\ref{app:sec3pc}. 

\paragraph{Sharing Protocol} 
Protocol $\piSh$ (\boxref{fig:piIn}) allows a server $P_i$ to generate $\shr{\cdot}$-shares of a value $\val \in \Z{\ell}$. 
In the preprocessing phase, $P_0, P_j$ for $j \in \{1,2\}$ along with $P_i$ sample a random $\sqr{\av{\val}}_j \in \Z{\ell}$, while $P_1, P_2, P_i$ sample random $\gv{\val} \in \Z{\ell}$. This allows $P_i$ to know both $\av{\val}$ and $\gv{\val}$ in clear. 
During the online phase, if $P_i = P_0$, then $P_0$ sends $\bv{\val} = \val + \av{\val}$ to $P_1$. $P_0, P_1$ then $\jsend$ $\bv{\val}$ to $P_2$ to complete the secret sharing. If $P_i = P_1$, $P_1$ sends $\bv{\val} = \val + \av{\val}$ to $P_2$. Then $P_1, P_2$ $\jsend$ $\bv{\val} + \gv{\val}$ to $P_0$. The case for $P_i = P_2$ proceeds similar to that of $P_1$.
The correctness of the shares held by each server is assured by the guarantees of $\piJmp$.  

\begin{protocolbox}{$\piSh(P_i,\val)$}{3PC: Generating $\shr{\val}$-shares by server $P_i$}{fig:piIn}
	\justify
	\algoHead{Preprocessing:}
	\begin{myitemize}
		\item[--] If $P_i = P_0$ : $P_0,P_j$, for $j \in \EInSet$, together sample random $\sqr{\av{\val}}_j \in \Z{\ell}$, while $\Partyset$ together sample random $\gv{\val} \in \Z{\ell}$.
		\item[--] If $P_i = P_1$ : $P_0,P_1$ together sample random $\sqr{\av{\val}}_1 \in \Z{\ell}$, while $\Partyset$ together sample a random $\sqr{\av{\val}}_2 \in \Z{\ell}$. Also, $\ESet$ together sample random $\gv{\val} \in \Z{\ell}$.
		\item[--] If $P_i = P_2$: Symmetric to the case when $P_i = P_1$.
	\end{myitemize}
	\justify\vspace{-3mm}
	\algoHead{Online:}
	\begin{myitemize}
		\item[--] If $P_i = P_0$ : $P_0$ computes $\bv{\val} = \val+\av{\val}$ and sends $\bv{\val}$ to $P_1$. $P_1,P_0$ $\jsend$ $\bv{\val}$ to $P_2$.
		\item[--] If $P_i = P_j$, for $j \in\EInSet$ : $P_j$ computes $\bv{\val} = \val+\av{\val}$, sends $\bv{\val}$ to $P_{3-j}$. $P_1,P_2$ $\jsend$  $\bv{\val}+\gv{\val}$ to $P_0$.
	\end{myitemize}
\end{protocolbox}

\begin{lemma}[Communication]
	\label{app:piSh}
	Protocol $\piSh$ (\boxref{fig:piIn}) is non-interactive in the preprocessing phase and requires $2$ rounds and an amortized communication of $2 \ell$ bits in the online phase.
\end{lemma}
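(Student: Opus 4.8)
The plan is to treat the preprocessing and online phases separately and, within the online phase, to handle the three cases $P_i = P_0$, $P_i = P_1$, $P_i = P_2$ in turn, charging each $\jsend$ using Lemma~\ref{app:piJmp}.

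For the preprocessing phase, I would observe that every instruction in $\piSh$ is of the form ``some fixed subset of servers jointly samples a random element of $\Z{\ell}$''. Since the servers hold the pre-shared PRF keys established by $\FSETUP$, each such sample is produced locally by evaluating the relevant shared-key PRF at a fresh common counter; no messages are exchanged. Hence the preprocessing phase is non-interactive. The only thing to verify is that, in each of the three cases, the indicated subset of servers indeed possesses a common PRF key, which is immediate from the key-setup functionality.

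For the online phase with $P_i = P_0$: in round one $P_0$ locally forms $\bv{\val} = \val + \av{\val}$ and sends it to $P_1$, costing one round and $\ell$ bits; then $P_0, P_1$ $\jsend$ $\bv{\val}$ to $P_2$, which by Lemma~\ref{app:piJmp} costs one round and, amortized, $\ell$ bits. These steps are sequential --- $P_1$ can act as a $\Jmp$-sender of $\bv{\val}$ only after receiving $\bv{\val}$ from $P_0$ --- so the online phase takes $2$ rounds and $2\ell$ amortized bits. The case $P_i = P_1$ is structurally identical: $P_1$ sends $\bv{\val}$ to $P_2$ ($1$ round, $\ell$ bits), then $P_1, P_2$ $\jsend$ $\bv{\val} + \gv{\val}$ to $P_0$ ($1$ round, $\ell$ amortized bits), again sequentially; and $P_i = P_2$ is symmetric. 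In all cases we obtain $2$ rounds and $2\ell$ amortized bits, as claimed.

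The one point that deserves care is the amortization. A single isolated run of $\piSh$ would additionally pay for the $\Jmp$ verify phase (a couple of extra rounds and the hash traffic), so the ``$2$ rounds / $2\ell$ bits'' statement is inherently amortized; I would state explicitly that this is precisely the amortization already granted by Lemma~\ref{app:piJmp} together with the ``using $\Jmp$ in protocols'' convention, whereby the send phase of each $\jsend$ is counted on the flow while the verify phases for a fixed ordered triple of servers are deferred and merged into one batched check whose cost does not grow with the number of $\piSh$ (and other) invocations. This bookkeeping, rather than any calculation, is the only mildly subtle part of the argument.
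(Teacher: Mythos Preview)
Your proposal is correct and follows essentially the same approach as the paper's proof: both argue that preprocessing is non-interactive via the shared-key setup, then handle the three cases $P_i \in \{P_0, P_1, P_2\}$ by counting one round and $\ell$ bits for the direct send followed by one amortized round and $\ell$ bits for the $\jsend$ via Lemma~\ref{app:piJmp}. Your treatment is slightly more explicit about why the two steps must be sequential and about the amortization convention, but the decomposition and accounting are identical.
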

\begin{proof}
	During the preprocessing phase, servers non-interactively sample the $\sqr{\cdot}$-shares of $\av{\val}$ and $\gv{\val}$ values using the shared key setup. In the online phase, when $P_i = P_0$, it computes $\bv{\val}$ and sends it to $P_1$, resulting in one round and $\ell$ bits communicated. They then $\jsend$ $\bv{\val}$ to $P_2$, which requires additional one round in an amortized sense, and $\ell$ bits to be communicated. For the case when $P_i = P_1$, it sends $\bv{\val}$ to $P_2$, resulting in one round and a communication of $\ell$ bits. Then, $P_1, P_2$ $\jsend$ $\bv{\val}+\gv{\val}$ to $P_0$. This again requires an additional one round and $\ell$ bits. The analysis is similar in the case of $P_i = P_2$.
\end{proof}

\paragraph{Joint Sharing Protocol}
Protocol $\piJSh$ (\boxref{fig:piJSh})  allows two servers $P_i, P_j$ to jointly generate a $\shrd$-sharing of a value $\val \in \Z{\ell}$ that is known to both. Towards this, servers execute the preprocessing of $\piSh$ (\boxref{fig:piIn}) to generate $\sqr{\av{\val}}$ and $\gv{\val}$. If $(P_i, P_j )= (P_1, P_0)$, then $P_1,P_0$ $\jsend$ $\bv{\val} = \val + \av{\val}$ to $P_2$. The case when $(P_i, P_j )= (P_2, P_0)$ proceeds similarly. The case for $(P_i, P_j )= (P_1, P_2)$ is optimized further as follows: servers locally set $\sqrA{\av{\val}} = \sqrB{\av{\val}} = 0$. $P_1, P_2$ together sample random $\gv{\val} \in \Z{\ell}$, set $\bv{\val} = \val$ and $\jsend$ $\bv{\val} + \gv{\val}$ to $P_0$. 

\begin{protocolbox}{$\piJSh(P_i, P_j, \val)$}{3PC: $\shrd$-sharing of a value $\val \in \Z{\ell}$ jointly by $P_i, P_j$}{fig:piJSh}
	\justify
	\algoHead{Preprocessing:}
	\begin{myitemize}
		\item[--] If $(P_i, P_j) = (P_1, P_0)$: Servers execute the preprocessing of $\piSh(P_1,\val)$ and then locally set $\gv{\val} = 0$. 
		\item[--] If $(P_i, P_j) = (P_2, P_0)$: Similar to the case above.
		\item[--] If $(P_i, P_j) = (P_1, P_2)$: $P_1, P_2$ together sample random $\gv{\val} \in \Z{\ell}$. Servers locally set $\sqrA{\av{\val}} = \sqrB{\av{\val}} = 0$. 
	\end{myitemize}
	\justify\vspace{-3mm}    
	\algoHead{Online:}
	\begin{myitemize}
		\item[--] If $(P_i, P_j) = (P_1, P_0)$: $P_0, P_1$ compute $\bv{\val} = \val + \sqrA{\av{\val}} + \sqrB{\av{\val}}$. $P_0,P_1$ $\jsend$ $\bv{\val}$ to $P_2$.	
		\item[--] If $(P_i, P_j) = (P_2, P_0)$: Similar to the case above.
		\item[--] If $(P_i, P_j) = (P_1, P_2)$: $P_1,P_2$ locally set $\bv{\val} = \val$. $P_1,P_2$ $\jsend$ $\bv{\val}+\gv{\val}$ to $P_0$.	
	\end{myitemize}
\end{protocolbox}

When the value $\val$ is available to both $P_i, P_j$ in the preprocessing phase, protocol $\piJSh$ can be made non-interactive in the following way: $\Partyset$ sample a random $\vr \in \Z{\ell}$ and locally set their share according to  \tabref{SharesAssign}.

\begin{table}[htb!]
	\centering
	\resizebox{.98\textwidth}{!}{
		\begin{tabular}{c | c | c | c }
			\toprule
			& $(P_1, P_2)$ & $(P_1, P_0)$   & $(P_2, P_0)$\\
			\midrule
			& $\begin{aligned} \sqrA{\av{\val}} = 0, &~\sqrB{\av{\val}} = 0\\ \bv{\val} = \val, &~\gv{\val} = \vr - \val \end{aligned}$ 
			& $\begin{aligned} \sqrA{\av{\val}} = -\val, &~\sqrB{\av{\val}} = 0\\ \bv{\val} = 0, &~\gv{\val} = \vr \end{aligned}$ 
			& $\begin{aligned} \sqrA{\av{\val}} = 0, &~\sqrB{\av{\val}} = -\val\\ \bv{\val} = 0, &~\gv{\val} = \vr  \end{aligned}$ \\
			\midrule
			
			$\begin{aligned} P_0 \\ P_1 \\ P_2 \end{aligned}$ 
			& $\begin{aligned} (0, ~0, ~\vr~~~~~) \\ (0, ~\val, ~\vr - \val) \\ (0, ~\val, ~\vr - \val) \end{aligned}$ 
			& $\begin{aligned} (-\val, ~0, ~\vr)  \\ (-\val, ~0, ~\vr) \\ (~~0, ~0, ~\vr)           \end{aligned}$ 
			& $\begin{aligned} (0, ~-\val, ~\vr)  \\ (0,~~~~0, ~\vr)  \\ (0, ~-\val, ~\vr)       \end{aligned}$  \\
			\bottomrule
		\end{tabular}
	}
	\caption{\small The columns depict the three distinct possibility of input contributing pairs. The first row shows the assignment to various components of the sharing. The last row, along with three sub-rows, specify the  shares held by the three servers.\label{tab:SharesAssign}}
\end{table}

\begin{lemma}[Communication]
	\label{app:piJSh}
	Protocol $\piJSh$ (\boxref{fig:piJSh})  is non-interactive in the preprocessing phase and requires $1$ round and an amortized communication of $\ell$ bits in the online phase.
\end{lemma}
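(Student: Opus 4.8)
The plan is to analyze the protocol $\piJSh$ case by case, following the three possible ordered sender pairs, and to account separately for the preprocessing and online phases.

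\textbf{Preprocessing.} First I would observe that in every case the preprocessing of $\piJSh$ consists only of invoking the preprocessing of $\piSh$ (plus purely local assignments such as $\gv{\val}=0$ or $\sqrA{\av{\val}}=\sqrB{\av{\val}}=0$, or common sampling of $\gv{\val}$ by $P_1,P_2$). By Lemma~\ref{app:piSh}, the preprocessing of $\piSh$ is non-interactive, being carried out via the shared-key setup $\FSETUP$; the extra steps in $\piJSh$ involve no communication. Hence the preprocessing of $\piJSh$ is non-interactive.

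\textbf{Online.} For the online phase I would go through the three cases. When $(P_i,P_j)=(P_1,P_0)$: $P_0,P_1$ already hold $\bv{\val}=\val+\sqrA{\av{\val}}+\sqrB{\av{\val}}$ locally (both know $\val$, and after the preprocessing both know both summands of $\av{\val}$), so the only communication is $P_0,P_1$ $\jsend$ $\bv{\val}$ to $P_2$. The case $(P_i,P_j)=(P_2,P_0)$ is symmetric. When $(P_i,P_j)=(P_1,P_2)$: $P_1,P_2$ locally set $\bv{\val}=\val$ and $\jsend$ $\bv{\val}+\gv{\val}$ to $P_0$. In all three cases the sole interaction is a single $\jsend$ of one ring element, so I would then invoke Lemma~\ref{app:piJmp}: a $\jsend$ costs $1$ round and an amortized communication of $\ell$ bits (the verify-phase overhead being amortized away over many $\Jmp$ calls with the same ordered sender pair). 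This yields $1$ round and amortized $\ell$ bits for the online phase.

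\textbf{Main obstacle.} There is no real obstacle here — the proof is essentially bookkeeping. The one point that needs care is justifying that $\bv{\val}$ (and $\bv{\val}+\gv{\val}$) is genuinely computable locally by the designated senders after the preprocessing, so that the $\jsend$ really is the \emph{only} online communication; this follows because both senders know $\val$ by assumption and, by the design of $\piSh$'s preprocessing, hold (between them, and in the $(P_1,P_0)$/$(P_2,P_0)$ cases even individually) all the components of $\av{\val}$ and $\gv{\val}$ they need. The other subtlety worth a sentence is that the claimed $\ell$-bit cost is \emph{amortized}: a single standalone $\piJSh$ would pay the $\Jmp$ verify overhead, but across the many invocations in the full protocol that cost vanishes, exactly as argued for Lemma~\ref{app:piJmp}.
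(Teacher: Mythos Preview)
Your proposal is correct and follows essentially the same approach as the paper: the online phase reduces to a single invocation of $\piJmp$, so the round and communication costs are inherited directly from Lemma~\ref{app:piJmp}. The paper's own proof is a two-sentence version of exactly this argument; your case analysis and the check that the senders can compute $\bv{\val}$ (respectively $\bv{\val}+\gv{\val}$) locally are more thorough than what the paper spells out, but add no new idea.
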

\begin{proof}
	In this protocol, servers execute $\piJmp$ protocol once. Hence the overall cost follows from that of an instance of the $\piJmp$ protocol (Lemma~\ref{app:piJmp}).
\end{proof}

\paragraph{Addition Protocol} 
Given $\shr{\cdot}$-shares on input wires $\wx, \wy$, servers can use linearity property of the sharing scheme  to locally compute $\shr{\cdot}$-shares of the output of addition gate, $\wz = \wx + \wy$ as $\shr{\wz} = \shr{\wx}+\shr{\wy}$.

\paragraph{Multiplication Protocol}
Protocol $\piMult(\Partyset, \shr{\wx}, \shr{\wy})$  (\boxref{fig:piMult}) enables the servers in $\Partyset$ to compute $\shr{\cdot}$-sharing of $\wz = \wx  \wy$, given the $\shrd$-sharing of $\wx$ and $\wy$. We build on the protocol of  BLAZE~\cite{BLAZE}  and discuss along the way the differences and resemblances.
We begin with a protocol for the semi-honest setting, which is also the starting point of BLAZE. During the preprocessing phase, $P_0, P_j$ for $j \in \EInSet$ sample random $\sqr{\av{\wz}}_{j} \in \Z{\ell}$, while $\ESet$ sample random $\gv{\wz} \in \Z{\ell}$. In addition, $P_0$ locally computes $\Gammaxy = \av{\wx}\av{\wy}$ and generates $\sqr{\cdot}$-sharing of the same between $\ESet$. Since, 

\medskip
\resizebox{.94\linewidth}{!}{
	\begin{minipage}{\linewidth}
		\begin{align}
			\bv{\wz} &= \wz + \av{\wz} = \wx \wy + \av{\wz} = (\bv{\wx} - \av{\wx})(\bv{\wy} - \av{\wy}) + \av{\wz} \nonumber\\
			&= \bv{\wx}\bv{\wy} - \bv{x}\av{\wy} - \bv{y}\av{\wx} + \Gammaxy + \av{\wz} \label{eq:betaz}
		\end{align}
\end{minipage}}
 
\medskip
servers $\ESet$ locally compute $\sqr{\bv{\wz}}_j = (j-1)\bv{\wx}\bv{\wy} - \bv{x}\sqr{\av{\wy}}_{j} - \bv{y}\sqr{\av{\wx}}_{j} + \sqr{\Gammaxy}_j + \sqr{\av{\wz}}_{j}$ during the online phase and  mutually exchange their shares to reconstruct $\bv{\wz}$. $P_1$ then sends $\bv{\wz} + \gv{\wz}$ to $P_0$, completing the semi-honest protocol. The  correctness that asserts $\wz = \wx \wy$ or in other words $\bv{\wz} - \av{\wz}= \wx \wy$ holds due to Eq.~\ref{eq:betaz}. 

The following issues arise in the above protocol when a malicious adversary is considered: \vspace{-2mm}
\begin{myenumlist}
	\item When $P_0$ is corrupt, the $\sqr{\cdot}$-sharing of $\Gammaxy$ performed by $P_0$ 
	might not be correct, i.e.\ $\Gammaxy \neq \av{\wx}\av{\wy}$.
	\item When $P_1$ (or $P_2$) is corrupt, $\sqr{\cdot}$-share of $\bv{\wz}$ handed over to the fellow honest evaluator during the online phase might not be correct, causing reconstruction of an incorrect $\bv{\wz}$. 
	\item When $P_1$ is corrupt, the value $\bv{\wz} + \gv{\wz}$ that is sent to $P_0$ during the online phase may not be correct.
\end{myenumlist}

\smallskip
All the three issues are common with BLAZE (copied verbatim), but we differ from BLAZE in handling them. We begin with solving the last issue first. We simply make $P_1, P_2$ $\jsend$ $\bv{\wz}+\gv{\wz}$ to $P_0$ (after $\bv{\wz}$ is computed). This either leads to success or a $\TTP$ selection.  Due to $\Jmp$'s rate-1 communication, $P_1$ alone sending the value to $P_0$ remains as costly as  using $\Jmp$ in amortized sense. Whereas in BLAZE, the malicious version simply  makes $P_2$ to send a hash of $\bv{\wz} + \gv{\wz}$ to $P_0$ (in addition to $P_1$'s communication  of  $\bv{\wz} + \gv{\wz}$ to $P_0$), who $\abort$s if the received values are inconsistent. 

For the remaining two issues, similar to BLAZE, we reduce both to  a multiplication (on values unrelated to inputs) in the preprocessing phase. However, our method leads to either success or  $\TTP$ selection, with no additional cost. 

We start with the second issue. To solve it, where a corrupt $P_1$ (or $P_2$) sends an incorrect $\sqr{\cdot}$-share of $\bv{\wz}$, BLAZE makes use of server $P_0$ to compute a version of $\bv{\wz}$ for verification, based on $\bv{\wx}$ and $\bv{\wy}$, as follows. Using $ \bv{\wx} + \gv{\wx}$,  $ \bv{\wy} + \gv{\wy}$, $\av{\wx}$, $\av{\wy}$, $\av{\wz}$ and $\Gammaxy$, $P_0$ computes: 

\medskip 
\resizebox{.94\linewidth}{!}{
	\begin{minipage}{\linewidth}
		\begin{align*}
			\starbeta{\wz} &= -(\bv{\wx} + \gv{\wx})\av{\wy} - (\bv{\wy} + \gv{\wy})\av{\wx} + 2\Gammaxy + \av{\wz}\\
			&= (-\bv{\wx}\av{\wy} - \bv{\wy}\av{\wx} + \Gammaxy + \av{\wz}) - (\gv{\wx}\av{\wy} + \gv{\wy}\av{\wx} - \Gammaxy)\\
			&= (\bv{\wz} - \bv{\wx}\bv{\wy}) - (\gv{\wx}\av{\wy} + \gv{\wy}\av{\wx} - \Gammaxy )  \hspace*{23pt}[\text{by Eq. \ref{eq:betaz}}]\\
			&= (\bv{\wz} - \bv{\wx}\bv{\wy} ) - \Chi   \hspace*{13pt}[\text{where } \Chi = \gv{\wx}\av{\wy} + \gv{\wy}\av{\wx} - \Gammaxy ]
		\end{align*}
\end{minipage}}
\medskip 

Now if $\Chi$ can be made available to $P_0$, it can send $\starbeta{\wz} + \Chi$ to $P_1$ and $P_2$ who using  the knowledge of $\bv{\wx}, \bv{\wy}$, can verify the correctness of $\bv{\wz}$ by computing $\bv{\wz} - \bv{\wx}\bv{\wy}$ and checking against the value $\starbeta{\wz} + \Chi$ received from $P_0$.  However, disclosing $\Chi$ on clear to $P_0$ will cause a privacy issue when $P_0$ is corrupt, because one degree of freedom on the pair $(\gv{\wx},\gv{\wy})$ is lost and the same impact percolates down to   $(\bv{\wx},\bv{\wy})$ and further to the actual values $(\val_\wx,\val_\wy)$ on the wires $\wx,\wy$. This is resolved through a random value $\psi \in \Z{\ell}$, sampled together by $P_1$ and $P_2$. Now, $\Chi$ and  $\starbeta{\wz}$ are set to $ \gv{\wx}\av{\wy} + \gv{\wy}\av{\wx} - \Gammaxy + \psi$,  $  (\bv{\wz} - \bv{\wx}\bv{\wy} + \psi) - \Chi$, respectively and the check by $P_1,P_2$ involves computing $\bv{\wz} - \bv{\wx}\bv{\wy} + \psi$. The rest of the logic in BLAZE goes on to discuss how to enforce  $P_0$-- (a) to compute a correct $\Chi$ (when honest), and (b) to share correct $\Gammaxy$ (when corrupt). Tying the ends together, they identify the precise shared multiplication triple and map its components to $\Chi$ and $\Gammaxy$ so that these values are correct by virtue of the correctness of the product relation. This reduces ensuring the correctness of these values to doing a single multiplication of two values in the preprocessing phase.

\smallskip
\begin{protocolbox}{$\piMult(\Partyset, \shr{\wx}, \shr{\wy})$}{3PC: Multiplication Protocol ($\wz = \wx \cdot \wy$)}{fig:piMult}
	\justify
	\algoHead{Preprocessing:}
	\begin{myitemize}
		\item[--] $P_0, P_j$ for $j \in \EInSet$ together sample random $\sqr{\av{\wz}}_{j} \in \Z{\ell}$, while $\ESet$ sample random $\gv{\wz} \in \Z{\ell}$. 
		\item[--] Servers in $\Partyset$ locally compute $\sgr{\cdot}$-sharing of $\md = \gv{\wx}+\av{\wx}$ and $\me = \gv{\wy}+\av{\wy}$ by setting the shares as follows~(ref. \tabref{shareZK}):
		\begin{equation*}
			(\md_0 \text{=} \sqr{\av{\wx}}_{2},
			\md_1 \text{=} \sqr{\av{\wx}}_{1}, 
			\md_2 \text{=} \gv{\wx}),~~
			(\me_0 \text{=} \sqr{\av{\wy}}_{2}, 
			\me_1 \text{=} \sqr{\av{\wy}}_{1}, 
			\me_2 \text{=} \gv{\wy})
		\end{equation*}
		\item[--] Servers in $\Partyset$ execute $\piMulPre(\Partyset, \md, \me)$ to generate $\sgr{\mf} = \sgr{\md \me}$.
		\item[--] $P_0, P_1$ locally set $\sqr{\Chi}_1 = \mf_1$, while $P_0, P_2$ locally set $\sqr{\Chi}_2 = \mf_0$. $P_1, P_2$ locally compute $\psi = \mf_2- \gv{\wx}\gv{\wy}$.
	\end{myitemize} 
	\justify\vspace{-3mm}
	\algoHead{Online:}
	\begin{myitemize}
		\item[--] $P_0,P_j$, for $j \in \EInSet$, compute $\sqr{\starbeta{\wz}}_j = - (\bv{\wx}+\gv{\wx})\sqr{\av{\wy}}_j - (\bv{\wy}+\gv{\wy})\sqr{\av{\wx}}_j + \sqr{\av{\wz}}_j + \sqr{\Chi}_j$.
		\item[--] $P_0,P_1$ $\jsend$ $\sqr{\starbeta{\wz}}_1$ to $P_2$ and  $P_0,P_2$ $\jsend$ $\sqr{\starbeta{\wz}}_2$ to $P_1$.
		\item[--] $P_1, P_2$ compute $\starbeta{\wz} = \sqr{\starbeta{\wz}}_1 +\sqr{\starbeta{\wz}}_2$ and set $\bv{\wz} = \starbeta{\wz} +  \bv{\wx}\bv{\wy} + \psi$.
		\item[--] $P_1,P_2$ $\jsend$ $\bv{\wz}+\gv{\wz}$ to $P_0$.
	\end{myitemize}
\end{protocolbox}
\smallskip 

We differ from BLAZE in several ways. First, we do not simply rely on $P_0$ for the verification information $\starbeta{\wz} + \Chi$, as this may inevitably  lead to abort when $P_0$ is corrupt.  Instead, we find  (a slightly different)  $\starbeta{\wz}$ that, instead of entirely available to $P_0$, will be  available  in $\sqr{\cdot}$-shared form between the two teams $\{P_0,P_1\},\{P_0,P_2\} $, with both servers in $\{P_0,P_i\}$ holding $i$th share $\sqrV{\starbeta{\wz}}{i}$. With this edit,  the $i$th team can $\jsend$ the $i$th share of  $\starbeta{\wz}$ to the third server which computes $\starbeta{\wz}$. Due to the presence of one honest server in each team,  this $\starbeta{\wz}$ is correct and  $P_1,P_2$ directly use it to compute $\bv{\wz}$, with the knowledge of $\psi,\bv{\wx},\bv{\wy}$.  The outcome of our approach is a win-win situation  i.e. either success or $\TTP$ selection. Our approach of computing $\bv{\wz}$ from $\starbeta{\wz}$ is a departure from BLAZE, where the latter suggests computing $\bv{\wz}$ from  the  exchange $P_1,P_2$'s respective share of $\bv{\wz}$ (as in the semi-honest construction)  and use $\starbeta{\wz}$ for verification. Our new $\starbeta{\wz}$ and $\Chi$ are:

\bigskip 
\resizebox{0.94\linewidth}{!}{
	\begin{minipage}{\linewidth}
		\begin{align*} \label{eq:3}
			\Chi &= \gv{\wx}\av{\wy} + \gv{\wy}\av{\wx} + \GammaV{\wx\wy} - \psi \hspace{3mm}\text{and}\\
			\starbeta{\wz} &= -(\bv{\wx}+\gv{\wx})\av{\wy}-(\bv{\wy}+\gv{\wy})\av{\wx}+\av{\wz}+\Chi\\
			&= (-\bv{\wx}\av{\wy} - \bv{\wy}\av{\wx} + \Gammaxy + \av{\wz}) - \psi
			= \bv{\wz} - \bv{\wx}\bv{\wy} - \psi
		\end{align*}
\end{minipage}}
\bigskip 

Clearly, both $P_0$ and $P_i$ can compute $ \sqrV{\starbeta{\wz}}{i} = -(\bv{\wx}+\gv{\wx})\sqr{\av{\wy}}_i-(\bv{\wy}+\gv{\wy})\sqr{\av{\wx}}_i+\sqr{\av{\wz}}_i+\sqr{\Chi}_i$ given $\sqr{\Chi}_i$.   The rest of our discussion explains  how (a)  $i$th share of $\sqr{\Chi}$ can be made available to $\{P_0,P_i\}$ and  (b) $\psi$ can be derived by $P_1,P_2$,  from a multiplication triple. 
Similar to BLAZE,  yet for a different triple, we observe that   $(\md, \me, \mf)$ is a multiplication triple, where $\md = (\gv{\wx}+\av{\wx}), \me =  (\gv{\wy}+\av{\wy}), \mf =(\gv{\wx}\gv{\wy}+\psi)+\Chi$ if and only if $\Chi$ and $\GammaV{\wx\wy}$ are correct. Indeed,

\vspace{-1mm}
\begin{align*}
	\md\me &= (\gv{\wx}+\av{\wx})(\gv{\wy}+\av{\wy}) = \gv{\wx}\gv{\wy} + \gv{\wx}\av{\wy} + \gv{\wy}\av{\wx} + \GammaV{\wx\wy} \\
	&= (\gv{\wx}\gv{\wy} + \psi) + (\gv{\wx}\av{\wy} + \gv{\wy}\av{\wx} + \GammaV{\wx\wy} - \psi) \\
	&= (\gv{\wx}\gv{\wy} + \psi) +\Chi = \mf
\end{align*}
Based on this observation, we compute the above multiplication triple using a multiplication protocol and extract out the values for $ \psi$ and $\Chi$ from the shares of $\mf$ which are bound to be correct. This  can be executed entirely in the preprocessing phase. Specifically, the servers (a) locally obtain $\sgr{\cdot}$-shares of $\md, \me$ as in  \tabref{shareZK}, (b) compute  $\sgr{\cdot}$-shares of $\mf (= \md \me)$, say denoted by $\mf_0,\mf_1,\mf_2$, using an efficient, robust 3-party multiplication protocol, say $\piMulPre$~(abstracted in a functionality \boxref{fig:FMulPre}) and finally (c)  extract out the required preprocessing data {\em locally} as in Eq.~\ref{eq:premap}. We switch to $\sgr{\cdot}$-sharing in this part to be able to use the best robust multiplication protocol of \cite{BGIN19} that supports this form of secret sharing and requires communication of just $3$ elements. Fortunately, the switch does not cost anything, as both the step (a) and (c) (as above) involve local computation and the cost simply reduces to a single run of a multiplication protocol. 

\begin{table}[htb!]
	\centering
	\begin{tabular}{c | c | c | c}
		& $P_0$ & $P_1$ & $P_2$ \\ 
		\small{$\sgr{\val}$} & \small{$(\val_0,\val_1)$} & \small{$(\val_1,\val_2)$} & \small{$(\val_2,\val_0)$}\\
		\toprule
		$\sgr{\md}$	 & $(\sqr{\av{\wx}}_{2},\sqr{\av{\wx}}_{1})$ &$(\sqr{\av{\wx}}_{1},\gv{\wx})  $& $(\gv{\wx}, \sqr{\av{\wx}}_{2})  $\\ 
		\midrule
		$\sgr{\me}$	 & $(\sqr{\av{\wy}}_{2},\sqr{\av{\wy}}_{1})$ &$(\sqr{\av{\wy}}_{1},\gv{\wy})  $& $(\gv{\wy},\sqr{\av{\wy}}_{2})  $\\ 
		\bottomrule
	\end{tabular}
	\vspace{-2mm}
	\caption{\small The $\sgr{\cdot}$-sharing of values $\md$ and $\me$ \label{tab:shareZK}}
\end{table}

\begin{equation}\label{eq:premap}
	\sqr{\Chi}_2 \leftarrow  \mf_0, 
	~~  \sqr{\Chi}_1 \leftarrow \mf_1, 
	~~  \gv{\wx}\gv{\wy} +\psi  \leftarrow \mf_2.
\end{equation}

According to $\sgr{\cdot}$-sharing, both $P_0$ and $P_1$ obtain $\mf_1$ and hence obtain $\sqr{\Chi}_1$. Similarly, $P_0, P_2$ obtain $\mf_0$ and hence $\sqr{\Chi}_2$. Finally, $P_1, P_2$ obtain $\mf_2$ from which they compute $\psi = \mf_2 - \gv{\wx}\gv{\wy}$. This completes the informal discussion.  

To leverage amortization, the {\em send} phase of $\jsend$ alone is executed on the fly and {\em verify} is performed once for multiple instances of $\jsend$. Further, observe that $P_1,P_2$ possess the required shares in the online phase to compute the entire circuit. Hence, $P_0$ can come in only during {\em verify} of $\jsend$ towards $P_1, P_2$, which can be deferred towards the end. Hence, the $\jsend$ of $\bv{\vz} + \gv{\vz}$ to $P_0$ (enabling computation of the verification information) can be performed once, towards the end, thereby requiring a single round for sending $\bv{\vz} + \gv{\vz}$ to $P_0$ for multiple instances. Following this, the {\em verify} of $\jsend$ towards $P_0$ is performed first, followed by  performing the {\em verify} of $\jsend$ towards $P_1, P_2$ in parallel. 

We note that to facilitate a fast online phase for multiplication, our  preprocessing phase leverages a robust multiplication protocol \cite{BGIN19}  in a black-box manner to derive the necessary preprocessing information.  A similar black-box approach is also taken for the dot product protocol in the preprocessing phase. This leaves room for further improvements in the communication cost, which can be obtained by instantiating the black-box with an efficient, robust protocol coupled with the fast online phase.

\smallskip
\begin{systembox}{$\FMulPre$}{3PC: Ideal functionality for $\piMulPre$ protocol}{fig:FMulPre}
	\justify
	$\FMulPre$ interacts with the servers in $\Partyset$ and the adversary $\Sim$. $\FMulPre$ receives $\sgr{\cdot}$-shares of $\md, \me$ from the servers where $P_s$, for $s \in \{0,1,2\}$, holds $\sgr{\md}_s = (\md_s, \md_{(s+1)\%3})$ and $\sgr{\me}_s = (\me_s, \me_{(s+1)\%3})$ such that $\md = \md_0 + \md_1 + \md_2$ and $\me = \me_0 + \me_1 + \me_2$. 
	Let $P_{i}$ denotes the server corrupted by $\Sim$. $\FMulPre$ receives $\sgr{\mf}_i = (\mf_i, \mf_{(i+1)\%3})$ from $\Sim$ where $\mf = \md \me$.
	$\FMulPre$ proceeds as follows:
	\begin{myitemize}
		\item[--] Reconstructs $\md, \me$ using the shares received from honest servers and compute $\mf = \md \me$.
		\item[--] Compute $\mf_{(i+2)\%3} = \mf - \mf_i -\mf_{(i+1)\%3}$ and set the output shares as $\sgr{\mf}_0 = (\mf_0, \mf_1), \sgr{\mf}_1 = (\mf_1, \mf_2),  \sgr{\mf}_2 = (\mf_2, \mf_0)$.
		\item[--] Send $(\OUTPUT, \sgr{\mf}_s)$ to server $P_s \in \Partyset$.
	\end{myitemize}
\end{systembox}
\smallskip

\begin{lemma}[Communication]
	\label{app:piMult}
	Protocol $\piMult$ (\boxref{fig:piMult}) requires an amortized cost of $3\ell$ bits in the preprocessing phase, and $1$ round and amortized cost of $3\ell$ bits in the online phase.
\end{lemma}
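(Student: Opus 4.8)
The plan is to charge the cost of $\piMult$ (\boxref{fig:piMult}) step by step, separating the purely local operations from the interactive ones, and to price every $\jsend$ using Lemma~\ref{app:piJmp}, which already accounts for the amortization of the $\Jmp$ {\em verify} phase. Concretely, I would go through the preprocessing phase, observe that all of it except the single invocation of $\piMulPre$ is non-interactive, and then do the same for the online phase, whose only communication consists of three $\jsend$ calls.

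For the preprocessing phase: sampling $\sqr{\av{\wz}}_j$ by $P_0,P_j$ and $\gv{\wz}$ by $\ESet$ is done non-interactively from the pre-shared PRF keys of $\FSETUP$; forming the $\sgr{\cdot}$-sharings of $\md$ and $\me$ from already-held quantities (\tabref{shareZK}) is a local relabelling; and extracting $\sqr{\Chi}_1,\sqr{\Chi}_2$ and $\psi$ from the shares of $\mf$ via Eq.~\ref{eq:premap} is again local. The sole interactive step is $\piMulPre(\Partyset,\md,\me)$; instantiating this black box with the robust 3-party multiplication of~\cite{BGIN19}, whose amortized cost is $3$ ring elements per multiplication, yields the claimed $3\ell$ bits. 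For the online phase: computing $\sqr{\starbeta{\wz}}_j$ by $P_0$ and $P_j$ is local, since the $\sqr{\cdot}$-shares of $\av{\wx},\av{\wy},\av{\wz},\Chi$ are all preprocessed and $\bv{\wx}+\gv{\wx},\bv{\wy}+\gv{\wy}$ are available to both; hence the communication is exactly the three $\jsend$ calls --- $P_0,P_1$ to $P_2$ carrying $\sqr{\starbeta{\wz}}_1$, $P_0,P_2$ to $P_1$ carrying $\sqr{\starbeta{\wz}}_2$, and $P_1,P_2$ to $P_0$ carrying $\bv{\wz}+\gv{\wz}$. By Lemma~\ref{app:piJmp} each costs an amortized $\ell$ bits, giving $3\ell$ bits total; the first two $\jsend$ calls run in parallel and constitute the single online round, while the $\jsend$ to $P_0$ --- needed only so that $P_0$ can later assemble the verification messages --- is postponed and merged with the corresponding calls of all other gates, and all $\piJmp$ {\em verify} phases are likewise run once at the very end, so that these contribute only a constant number of rounds to the whole protocol rather than one per gate.

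The step I expect to be the main obstacle --- really the only non-mechanical point --- is the amortization/scheduling argument. One must argue (i) via Lemma~\ref{app:piJmp} that the non-{\em send} traffic of each $\piJmp$ call (hashes, inconsistency bits, and the broadcast-based $\TTP$ selection) is incurred once per ordered triple of servers and is therefore negligible per instance, and (ii) that deferring the $\jsend$ of $\bv{\wz}+\gv{\wz}$ to $P_0$ and all {\em verify} phases to the end does not alter functionality, which holds because $P_1$ and $P_2$ jointly hold every share needed to evaluate the entire circuit without $P_0$'s online participation, so $P_0$'s only online task (collecting, at the end, the values $\bv{\wz}+\gv{\wz}$ for all multiplication gates) can be batched into a single round. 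Granting these two points, the stated preprocessing cost of $3\ell$ bits and online cost of $1$ round and $3\ell$ bits follow directly by inspecting \boxref{fig:piMult}.
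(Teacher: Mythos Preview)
Your proposal is correct and follows essentially the same approach as the paper: identify that the only interactive step in preprocessing is the single call to $\piMulPre$ (costing $3\ell$ bits amortized), and that the online communication consists of exactly three $\jsend$ calls (costing $\ell$ bits each by Lemma~\ref{app:piJmp}), with the two $\jsend$s toward $P_1,P_2$ forming the single online round while the $\jsend$ of $\bv{\wz}+\gv{\wz}$ to $P_0$ is deferred and amortized across all gates. Your write-up is in fact more careful than the paper's own proof in explicitly justifying why the deferral is sound (namely that $P_1,P_2$ hold everything needed to proceed without $P_0$), but the decomposition and accounting are the same.
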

\begin{proof}
	In the preprocessing phase, generation of $\av{\wz}$ and $\gv{\wz}$ are non-interactive. This is followed by one execution of $\piMulPre$, which requires an amortized communication cost of $3\ell$ bits. 
	During the online phase, $P_0,P_1$ $\jsend$ $\sqr{\starbeta{\wz}}_1$ to $P_2$, while $P_0,P_2$ $\jsend$ $\sqr{\starbeta{\wz}}_2$ to $P_1$. This requires one round and a communication of $2\ell$ bits. Following this, $P_1,P_2$ $\jsend$ $\bv{\wz}+\gv{\wz}$ to $P_0$, which requires one round and a communication of $\ell$ bits. However, $\jsend$ of $\bv{\wz}+\gv{\wz}$ can be delayed till the end of the protocol, and will require only one round for the entire circuit and can be amortized.
\end{proof}

\paragraph{Reconstruction Protocol}
Protocol $\piRec$ (\boxref{fig:piRec}) allows servers to robustly reconstruct value $\val \in \Z{\ell}$ from its $\shr{\cdot}$-shares. Note that each server misses one share of $\val$ which is held by the other two servers. Consider the case of $P_0$ who requires $\gv{\val}$ to compute $\val$. During the preprocessing, $\ESet$ compute a commitment of $\gv{\val}$, denoted by $\Commit{\gv{\val}}$ and $\jsend$ the same to $P_0$. Similar steps are performed for the values $\sqr{\av{\val}}_2$ and $\sqr{\av{\val}}_1$ that are required by servers $P_1$ and $P_2$ respectively. During the online phase, servers open their commitments to the intended server who accepts the opening that is consistent with the agreed upon commitment. 

\smallskip
\begin{protocolbox}{$\piRec(\Partyset, \shr{\val})$}{3PC: Reconstruction of $\val$ among the servers}{fig:piRec}
	\justify
	\algoHead{Preprocessing:}
	\begin{myitemize}
		\item[--] $P_0,P_j$, for $j \in \EInSet$, compute  $\Commit{\sqr{\av{\val}}_j}$, while $\ESet$ compute  $\Commit{\gv{\val}}$.   
		\item[--] $P_1,P_2$ $\jsend$ $\Commit{\gv{\val}}$ to $P_0$, while $P_0,P_1$ $\jsend$ $\Commit{\sqr{\av{\val}}_1}$ to $P_2$, and $P_0,P_2$ $\jsend$ $\Commit{\sqr{\av{\val}}_2}$ to $P_1$,
	\end{myitemize}
	\justify\vspace{-3mm}
	\algoHead{Online:}
	\begin{myitemize}
		\item[--] $P_0,P_1$ open $\Commit{\sqr{\av{\val}}_1}$ to $P_2$.
		$P_0,P_2$ open  $\Commit{\sqr{\av{\val}}_2}$ to $P_1$.
		$P_1,P_2$ open $\Commit{\gv{\val}}$ to $P_0$.
		\item[--] Each server accepts the opening that is consistent with the agreed upon commitment. $\ESet$ compute $\val = \bv{\val} - \sqr{\av{\val}}_1 - \sqr{\av{\val}}_2$, while $P_0$ computes $\val = (\bv{\val} + \gv{\val}) - \sqr{\av{\val}}_1 - \sqr{\av{\val}}_2 - \gv{\val}$.
	\end{myitemize}
\end{protocolbox}

\begin{lemma}[Communication]
	\label{app:piRec}
	Protocol $\piRec$ (\boxref{fig:piRec}) requires $1$ round and a communication of $6\ell$ bits in the online phase. 
\end{lemma}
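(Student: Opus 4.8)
The plan is to argue directly from the description of $\piRec$ in \boxref{fig:piRec}, since the statement is a pure bookkeeping claim about the online phase. First I would observe that, because $\val$ is $\shr{\cdot}$-shared, each server is missing exactly one of the three summands needed to recover $\val$: $P_0$ lacks $\gv{\val}$, $P_1$ lacks $\sqr{\av{\val}}_2$, and $P_2$ lacks $\sqr{\av{\val}}_1$ (each server already holds $\bv{\val}$ or $\bv{\val}+\gv{\val}$ together with the remaining $\av{\cdot}$-share, so no other data needs to move). Hence the online phase consists of exactly three reconstructions, one directed at each server, and these are the only source of communication.

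Next I would count the communication. For each of the three missing summands, the two servers holding it open the corresponding commitment toward the receiver; the commitments themselves were computed locally and delivered (via $\jsend$) during the preprocessing phase, so nothing from that step is charged to the online cost. Charging an opening as the $\ell$-bit committed ring element (consistent with the cost model used for $\piSh$ and $\piMult$, the decommitment randomness being fixed once in the $\FSETUP$ phase), each of the two openers sends $\ell$ bits, so each reconstruction costs $2\ell$ bits. Summing over the three reconstructions toward $P_0$, $P_1$, $P_2$ gives $3 \cdot 2\ell = 6\ell$ bits.

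For the round count I would note that the three openings are mutually independent and are all sent in a single simultaneous message flow; selecting the opening that is consistent with the agreed-upon commitment and then computing $\val$ from the recovered shares is purely local. Hence the online phase takes $1$ round.

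The only point requiring care — and it is minor — is the accounting convention for an opening: I would make explicit that it is charged as the $\ell$-bit committed value only, with the decommitment material established during key setup and the commitment generation and $\jsend$-based delivery of commitments belonging entirely to the preprocessing phase, so that they are correctly excluded here. With that convention in place, both claims are immediate from \boxref{fig:piRec}.
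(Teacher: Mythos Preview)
Your proposal is correct and follows essentially the same approach as the paper: each of the three servers receives an opening of its missing share from the two servers holding it, yielding $3\cdot 2\ell = 6\ell$ bits in a single parallel round. Your version is just more explicit than the paper's about which server lacks which share and about the accounting convention for an opening.
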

\begin{proof}
	The preprocessing phase consists of communication of commitment values using the $\piJmp$ protocol. The hash-based commitment scheme allows generation of a single commitment for several values and hence the cost gets amortised away for multiple instances.
	During the online phase, each server receives an opening for the commitment from other two servers, which requires one round and an overall communication of $6\ell$ bits.
\end{proof}

\paragraph{The Complete 3PC} 
\label{para:3PCmain} 
For the sake of completeness and to demonstrate how GOD is achieved, we show how to compile the above primitives for a general 3PC. A similar approach will be taken for 4PC and each PPML task, and we will avoid repetition. 
In order to compute an arithmetic circuit over $\Z{\ell}$, we  first invoke the key-setup functionality $\FSETUP$ (\boxref{fig:FSETUP}) for key distribution and preprocessing of $\piSh$, $\piMult$ and $\piRec$, as per the given circuit.  During the online phase,  $P_i \in \Partyset$ shares its input $\wx_i$ by executing online steps of $\piSh$ (\boxref{fig:piIn}). This is followed by the circuit evaluation phase, where severs evaluate the gates in the circuit in the topological order, with addition gates (and multiplication-by-a-constant gates)  being computed locally, and multiplication gates being  computed via online of  $\piMult$ (\boxref{fig:piMult}). Finally, servers run the online steps of  $\piRec$ (\boxref{fig:piRec}) on the output wires to reconstruct the function output. 
To leverage amortization, only {\em send} phases of all the $\Jmp$ are run on the flow. At the end of preprocessing, 
the  {\em verify} phase for all possible ordered pair of senders are run. We carry on computation in the online phase only when the {\em verify} phases in the preprocessing are successful. Otherwise, the servers simply send their inputs to the elected $\TTP$, who computes the function and returns the result to all the servers.  Similarly, depending on the output of the {\em verify} at the end of the online phase, either the reconstruction is carried out or a $\TTP$ is identified. In the latter case, computation completes as mentioned before. 

\smallskip
\begin{protocolbox}{3PC}{Complete 3PC}{fig:3pc}
	\justify
	Let $\ckt$ denote the circuit representation of the function to be evaluated. Servers invoke $\FSETUP$ to establish common keys among themselves.
	\begin{myitemize}  
	\item{\bf Preprocessing:}
	Preprocessing phase of $\piSh, \piMult, \piRec$ is executed depending on $\ckt$ topology. For every pair of servers, $P_i, P_j \in \Partyset$, the verification for $\Jmp$ is executed once, for all the $\Jmp$ instances with $P_i, P_j$ as the senders. 
	\item{\bf Online phase:} If $\TTP$ is identified in the preprocessing phase, servers send their inputs to the $\TTP$, which evaluates the function on the clear inputs, and sends the result back to the servers. Else, 
	\begin{inneritemize}
		\item[--] {\em Input sharing: } The online phase of $\piSh$ is executed.
		\item[--]  {\em Evaluation phase: }
		\begin{inneritemize}
			\item[-] Linear gates such as addition and multiplication-by-a-constant are evaluated locally.
			\item[-] For multiplication gates, the online phase of $\piMult$ is executed with only the {\em send} phase of $\Jmp$ being executed for $\Jmp$ towards $P_1, P_2$. All communication towards $P_0$ is carried out in the verification phase. 
		\end{inneritemize}		
		\item[--] {\em Verification phase:} With respect to the multiplication gates, the communication towards $P_0$ is carried out by invoking $\Jmp$ (both {\em send} and {\em verify} phase). This is followed by the {\em verify} phase of $\Jmp$ for all $\Jmp$ instances with $P_1, P_2$ as the receivers. 		
		\item[--] If $\TTP$ is identified at any point during verification, servers send their inputs to the $\TTP$, which evaluates the function on the clear inputs, and sends the result back to the servers.  
		\item[--] {\em Output reconstruction:} If verification succeeds without $\TTP$ identification, servers execute online phase of $\piRec$.
		\end{inneritemize}
	\end{myitemize}
\end{protocolbox}

%
{\em On the security of our framework:} 
We emphasize that we follow the standard traditional (real-world / ideal-world based) security definition of MPC, according to which, in the 4-party setting with 1 corruption, exactly 1 party is assumed to be corrupt, and rest are {\em honest}. As per this definition, disclosing the honest parties's inputs to a selected {\em honest} party is {\em not} a breach of security. Indeed, in our framework, the data sharing and the computation on the shared data is done in a way that any malicious behaviour leads to establishment of a $\TTP$ who is enabled to receive all the inputs and compute the output on the clear. There has been a recent study on the additional requirement of hiding the inputs from a quorum of honest parties (treating them as semi-honest), termed as  Friends-and-Foes (FaF) security notion~\cite{alon20}. This is a stronger security goal than the standard one and it has been shown that one cannot obtain FaF-secure robust 3PC. We leave FaF-secure 4PC for future exploration. 
%

\subsection{Building Blocks for PPML using 3PC}
\label{sec:PPML}
This section provides details on robust  realizations of the following building blocks for PPML in 3-server setting-- i) Dot Product, ii) Truncation, iii) Dot Product with Truncation, iv) Secure Comparison, and v) Non-linear Activation functions-- Sigmoid and ReLU. We provide the security proofs in \S\ref{app:sec3pc}. 
We begin by providing details of input sharing and reconstruction in the SOC setting. 

\smallskip
\begin{protocolbox}{$\piShS(\User,\val)$ and $\piRecS(\User, \shr{\val})$}{3PC: Input Sharing and Output Reconstruction}{fig:piSOC}
	\justify
	\algoHead{Input Sharing:}
	\begin{myitemize}
		\item[--] $P_0,P_s$, for $s \in \EInSet$, together sample random $\sqr{\av{\val}}_s \in \Z{\ell}$, while $\ESet$ together sample random $\gv{\val} \in \Z{\ell}$.
		\item[--] $P_0,P_1$ $\jsend$ $\Commit{\sqr{\av{\val}}_1}$ to $P_2$, while $P_0,P_2$ $\jsend$ $\Commit{\sqr{\av{\val}}_2}$ to $P_1$, and $P_1,P_2$ $\jsend$ $\Commit{\gv{\val}}$ to $P_0$.
		\item[--] Each server sends $(\Commit{\sqr{\av{\val}}_1}, \Commit{\sqr{\av{\val}}_2}, \Commit{\gv{\val}})$ to $\User$ who accepts the values that form majority. Also, $P_0, P_s$, for $s \in \{1,2\}$, open $\sqr{\av{\val}}_s$ towards $\User$ while $\ESet$ open $\gv{\val}$ towards $\User$.
		\item[--] $\User$ accepts the consistent opening, recovers $\sqrA{\av{\val}}, \sqrB{\av{\val}}, \gv{\val}$, computes $\bv{\val} = \val + \sqr{\av{\val}}_1 + \sqr{\av{\val}}_2$, and sends $\bv{\val} + \gv{\val}$ to all three servers.
		\item[--] Servers broadcast the received value and accept the majority value if it exists, and a default value, otherwise. $\ESet$ locally compute $\bv{\val}$ from $\bv{\val} +\gv{\val}$ using $\gv{\val}$ to complete the sharing of $\val$. 
	\end{myitemize}
	\justify\vspace{-3mm}
	\algoHead{Output Reconstruction:}
	\begin{myitemize}
		\item[--] Servers execute the preprocessing of $\piRec(\Partyset, \shr{\val})$ to agree upon commitments of $\sqr{\av{\val}}_1, \sqr{\av{\val}}_2$ and $\gv{\val}$.
		\item[--] Each server sends $\bv{\val}+\gv{\val}$ and commitments on $\sqr{\av{\val}}_1, \sqr{\av{\val}}_2$ and $\gv{\val}$ to $\User$, who accepts the values forming majority. 
		\item[--] $P_0, P_i$ for $i \in \EInSet$ open $\sqr{\av{\val}}_i$ to $\User$, while $P_1, P_2$ open $\gv{\val}$ to $\User$.  
		\item[--] $\User$ accepts the consistent opening and computes $\val = (\bv{\val} + \gv{\val}) - \sqr{\av{\val}}_1 - \sqr{\av{\val}}_2 - \gv{\val}$. 
	\end{myitemize}
\end{protocolbox}

\paragraph{Input Sharing and Output Reconstruction in the SOC Setting} 
Protocol $\piShS$ (\boxref{fig:piSOC}) extends input sharing to the SOC setting and allows a user $\User$ to generate the $\shr{\cdot}$-shares of its input $\val$ among the three servers. Note that the necessary commitments to facilitate the sharing are generated in the preprocessing phase by the servers which are then communicated to $\User$, along with the opening, in the online phase. $\User$ selects the commitment forming the majority (for each share) owing to the presence of an honest majority among the servers, and accepts the corresponding shares.
Analogously, protocol $\piRecS$ (\boxref{fig:piSOC}) allows the servers to reconstruct a value $\val$ towards user $\User$. 
In either of the protocols, if at any point, a $\TTP$ is identified, then servers signal the $\TTP$'s identity to $\User$. $\User$ selects the $\TTP$ as the one forming a majority and sends its input in the clear to the $\TTP$, who computes the function output and sends it back to $\User$.

\paragraph{MSB Extraction}
Protocol $\piBitExt$ allows servers to compute the {\em boolean} sharing of the most significant bit ($\MSB$) of a value $\val$ given its arithmetic sharing $\shr{\val}$. To compute the $\MSB$, we use the optimized 2-input Parallel Prefix Adder (PPA) boolean circuit proposed by ABY3~\cite{MR18}. The PPA circuit consists of $2\ell -2$ AND gates and has a multiplicative depth of $\log \ell$. 
%

\begin{table}[htb!]
	\centering
	\resizebox{.95\textwidth}{!}{
		\begin{tabular}{r | c | c | c }
			\toprule
			& $P_0$ & $P_1$ & $P_2$ \\ 
			\midrule
			$\shrB{\val_0[i]}$ & $(0,0,0)$ & $(0,\val_0[i],\val_0[i])$ & $(0, \val_0[i],\val_0[i])$\\ 
			$\shrB{\val_1[i]}$ & $(\val_1[i],0,0)$ & $(\val_1[i],0, 0)$ & $(0,0,0)$\\ 
			$\shrB{\val_2[i]}$ & $(0,\val_2[i],0)$ & $(0,0,0)$ & $(0, \val_2[i],0)$\\ 
			\bottomrule
		\end{tabular}
	}
	\vspace{-2mm}
	\caption{\small The $\shrB{\cdot}$-sharing corresponding to $i^{th}$ bit of $\val_0 = \bv{\val}, \val_1 = -\sqr{\av{\val}}_1$ and $\val_2 = -\sqr{\av{\val}}_2$. Here $i \in \{0,\ldots,\ell-1\}$.\label{tab:sharesBitExt}}
\end{table}

Let $\val_0 = \bv{\val}, \val_1 = -\sqr{\av{\val}}_1$ and $\val_2 = -\sqr{\av{\val}}_2$. Then $\val = \val_0 + \val_1 + \val_2$. Servers first locally compute the boolean shares corresponding to each bit of the values $\val_0, \val_1$ and $\val_2$ according to \tabref{sharesBitExt}.
It has been shown in ABY3 that $\val = \val_0 + \val_1 + \val_2$ can also be expressed as $\val = 2c + s$ where {\sf FA}$(\val_0[i], \val_1[i], \val_2[i]) \rightarrow (c[i],s[i])$ for $i \in \{0,\ldots,\ell-1\}$. Here {\sf FA} denotes a Full Adder circuit while $s$ and $c$ denote the sum and carry bits respectively.
To summarize, servers execute $\ell$ instances of {\sf FA} in parallel to compute $\shrB{c}$ and $\shrB{s}$. The {\sf FA}'s are executed independently and require one round of communication. The final result is then computed as $\MSB(2\shrB{c}+\shrB{s})$ using the optimized PPA circuit.

\begin{lemma}[Communication]
	\label{app:piBitExt2}
	Protocol $\piBitExt$ requires a communication cost of $9\ell - 6$ bits in the preprocessing phase and require $\log \ell + 1$ rounds and an amortized communication of $9\ell - 6$ bits in the online phase.
\end{lemma}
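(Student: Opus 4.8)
The plan is to reduce the cost accounting of $\piBitExt$ entirely to counting the AND (boolean multiplication) gates in the underlying circuit, and then to charge each such gate using the per-gate cost of $\piMult$ instantiated over $\Z{1}$, as given by Lemma~\ref{app:piMult}. The round count will follow from the multiplicative depth of the circuit, and the ``amortized'' qualifier from the fact that the $\Jmp$ verification is batched.

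First I would recall the structure of $\piBitExt$: (i) the local bit-decomposition of $\val_0 = \bv{\val}$, $\val_1 = -\sqr{\av{\val}}_1$, $\val_2 = -\sqr{\av{\val}}_2$ into the $\shrB{\cdot}$-shares of Table~\ref{tab:sharesBitExt}, which is non-interactive; (ii) a layer of $\ell$ full adders evaluated in parallel, mapping $(\val_0[i],\val_1[i],\val_2[i])\mapsto(c[i],s[i])$; and (iii) the optimized ABY3 PPA applied to $2\shrB{c}+\shrB{s}$ to read off $\MSB$. The sum bit $s[i]=\val_0[i]\oplus\val_1[i]\oplus\val_2[i]$ is purely linear, hence free; the only non-linear work in the full-adder layer is the carry bit $c[i]$, the majority of the three inputs, which costs exactly one AND gate per $i$. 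Thus stage (ii) contributes $\ell$ AND gates at multiplicative depth $1$. By the stated parameters of the ABY3 PPA, stage (iii) contributes $2\ell-2$ AND gates at multiplicative depth $\log\ell$, and the final extraction of the most significant bit is again free. In total the circuit has $\ell+(2\ell-2)=3\ell-2$ AND gates and multiplicative depth $1+\log\ell$, with all remaining gates (XOR, the multiplication-by-$2$ shift, the MSB readout) incurring no communication, by linearity of $\shrB{\cdot}$.

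Next I would charge each AND gate to one invocation of $\piMult$ over $\Z{1}$. By Lemma~\ref{app:piMult} (taking $\ell=1$ there), each such invocation needs an amortized $3$ bits in the preprocessing phase (the single call to the robust $\piMulPre$) and, in the online phase, one round and an amortized $3$ bits. Summing over the $3\ell-2$ gates, and using that the preprocessing of all gates — including the internal $\piMulPre$ executions — can be batched and run in parallel, the preprocessing communication is $3(3\ell-2)=9\ell-6$ bits. In the online phase, the $\jsend$ calls inside $\piMult$ are run in send-only mode on the flow while the $\Jmp$ \emph{verify} sub-phases (and the deferred $\jsend$ towards $P_0$) are postponed to a single amortized batch at the end, exactly as in the complete protocol; hence each of the $1+\log\ell$ circuit layers costs one online round, giving $\log\ell+1$ rounds overall, and the online communication is again $3(3\ell-2)=9\ell-6$ bits.

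The main thing to get right is the bookkeeping that simultaneously keeps the round count at $\log\ell+1$ and the communication ``amortized'': one must argue that across the $3\ell-2$ multiplications the $\Jmp$ verification (exchanging inconsistency bits, broadcasting hashes, $\TTP$ selection) is performed once rather than per gate, so it contributes only a constant, amortizable overhead and neither extra rounds nor extra bits per gate; and likewise that the many $\piMulPre$ runs in the preprocessing are parallel with per-triple cost equal to the amortized $3$ bits. The only other point needing care is confirming that, beyond the $\ell$ carry computations, nothing in the full-adder layer and nothing in mapping the PPA output to $\MSB$ requires interaction — which is immediate from linearity.
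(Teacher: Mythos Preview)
Your proposal is correct and follows essentially the same approach as the paper: count the AND gates as $\ell$ from the full-adder layer plus $2\ell-2$ from the ABY3 PPA for a total of $3\ell-2$, charge each at $3$ bits via Lemma~\ref{app:piMult} over $\Z{1}$ to get $9\ell-6$ bits in each phase, and read off $\log\ell+1$ online rounds from the circuit depth. Your write-up is in fact more detailed than the paper's own proof, which states the gate count and depth and then simply appeals to the multiplication cost without spelling out the amortization bookkeeping.
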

\begin{proof}
	In $\piBitExt$, first round comprises of $\ell$ Full Adder~({\sf FA}) circuits executing in parallel, each comprising of single AND gate. This is followed by the execution of the optimized PPA circuit of ABY3~\cite{MR18}, which comprises of $2\ell-2$ AND gates and has a multiplicative depth of $\log \ell$. Hence the communication cost follows from the multiplication for $3\ell-2$ AND gates.
\end{proof}

\paragraph{Bit to Arithmetic Conversion}
Given the boolean sharing of a bit $\bitb$, denoted as $\shrB{\bitb}$, protocol $\PiBitA$ (\boxref{fig:piBit2A}) allows servers to compute the arithmetic sharing $\shr{\arval{\bitb}}$. Here $\arval{\bitb}$ denotes the equivalent value of $\bitb$ over ring $\Z{\ell}$ (see Notation~\ref{boolequivring}).  
As pointed out in BLAZE, $\arval{\bitb} = \arval{(\bv{\bitb} \xor \av{\bitb})} = \arval{\bv{\bitb}} + \arval{\av{\bitb}} - 2 \arval{\bv{\bitb}} \arval{\av{\bitb}}$. Also $\arval{\av{\bitb}} = \arval{(\sqr{\av{\bitb}}_{1} \xor \sqr{\av{\bitb}}_{2})} = \arval{\sqr{\av{\bitb}}_{1}}+ \arval{\sqr{\av{\bitb}}_{2}} - 2 \arval{\sqr{\av{\bitb}}_{1}} \arval{\sqr{\av{\bitb}}_{2}}$. During the preprocessing phase, $P_0, P_j$ for $j \in \EInSet$ execute $\piJSh$ on $\arval{\sqr{\av{b}}_{j}}$ to generate $\shr{\arval{\sqr{\av{b}}_{j}}}$. Servers then execute $\piMult$ on $\shr{\arval{\sqr{\av{b}}_{1}}}$ and $\shr{\arval{\sqr{\av{b}}_{2}}}$ to generate $\shr{\arval{\sqr{\av{b}}_{1}} \arval{\sqr{\av{b}}_{2}}}$ followed by locally computing $\shr{\arval{\av{\bitb}}}$. During the online phase, $\ESet$ execute $\piJSh$ on $\arval{\bv{\bitb}}$ to jointly generate $\shr{\arval{\bv{\bitb}}}$. Servers then execute $\piMult$ protocol on $\shr{\arval{\bv{\bitb}}}$ and $\shr{\arval{\av{\bitb}}}$ to compute $\shr{\arval{\bv{\bitb}}\arval{\av{\bitb}}}$ followed by locally computing $\shr{\arval{\bitb}}$.

\begin{lemma}[Communication]
	\label{app:piBitA}
	Protocol $\PiBitA$ (\boxref{fig:piBit2A}) requires an amortized communication cost of $9 \ell$ bits in the preprocessing phase and requires $1$ round and an amortized communication of $4 \ell$ bits in the online phase.
\end{lemma}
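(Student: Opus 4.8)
The plan is to charge every bit communicated by $\PiBitA$ (\boxref{fig:piBit2A}) to one of its sub-protocol invocations --- $\piJSh$ and $\piMult$ --- and to invoke the communication lemmas already established for them (Lemma~\ref{app:piJSh}, Lemma~\ref{app:piMult}); every remaining step of $\PiBitA$ is either $\FSETUP$-seeded local sampling or a linear combination of $\shrd$-shares (via the identities $\arval{\bitb}=\arval{\bv{\bitb}}+\arval{\av{\bitb}}-2\arval{\bv{\bitb}}\arval{\av{\bitb}}$ and $\arval{\av{\bitb}}=\arval{\sqr{\av{\bitb}}_1}+\arval{\sqr{\av{\bitb}}_2}-2\arval{\sqr{\av{\bitb}}_1}\arval{\sqr{\av{\bitb}}_2}$ once the cross terms are $\shrd$-shared), hence is free.

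For the preprocessing phase I would tally three contributions. First, the two calls $\piJSh(P_0,P_j,\arval{\sqr{\av{\bitb}}_j})$ for $j\in\{1,2\}$: since $\sqr{\av{\bitb}}_1$ (resp.\ $\sqr{\av{\bitb}}_2$) is jointly sampled by $P_0,P_1$ (resp.\ $P_0,P_2$) during key-setup, the value being shared is already in both senders' hands in the preprocessing phase, so the non-interactive instantiation of $\piJSh$ from \tabref{SharesAssign} applies and these cost $0$ bits. Second, the single call to $\piMult$ producing $\shr{\arval{\sqr{\av{\bitb}}_1}\arval{\sqr{\av{\bitb}}_2}}$: its inputs are input-independent, hence both its $3\ell$-bit preprocessing and its $3\ell$-bit online round are carried out here, for $6\ell$ bits. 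Third, the $3\ell$-bit preprocessing (one $\piMulPre$) of the $\piMult$ that will be evaluated in the online phase on $\shr{\arval{\bv{\bitb}}}$ and $\shr{\arval{\av{\bitb}}}$. Summing, $0+6\ell+3\ell=9\ell$ bits, with no interaction beyond what those sub-protocols already amortize.

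For the online phase the two live sub-calls are the $\piJSh$ run by $\ESet$ that yields $\shr{\arval{\bv{\bitb}}}$ ($\ell$ bits, one round by Lemma~\ref{app:piJSh}) and the online of $\piMult$ on $\shr{\arval{\bv{\bitb}}}$, $\shr{\arval{\av{\bitb}}}$ ($3\ell$ bits), for $4\ell$ bits in total. The one point that needs care is that these collapse to a single round rather than stacking into two. Writing the multiplicands as $\wx=\arval{\bv{\bitb}}$, $\wy=\arval{\av{\bitb}}$, the $(P_1,P_2)$ branch of $\piJSh$ sets $\sqr{\av{\wx}}_1=\sqr{\av{\wx}}_2=0$ and hands $\bv{\wx}+\gv{\wx}$ to both $P_1$ and $P_2$, so the term $(\bv{\wy}+\gv{\wy})\sqr{\av{\wx}}_j$ in $\sqr{\starbeta{\wz}}_j$ vanishes and $P_j$, $j\in\{1,2\}$, can compute $\sqr{\starbeta{\wz}}_j$ from preprocessed material alone. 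Hence, exactly as in the Complete-3PC compilation (\boxref{fig:3pc}), round $1$ carries in parallel the live sends $P_1\to P_0$ of $\bv{\wx}+\gv{\wx}$, $P_1\to P_2$ of $\sqr{\starbeta{\wz}}_1$ and $P_2\to P_1$ of $\sqr{\starbeta{\wz}}_2$; $P_1,P_2$ then reconstruct $\starbeta{\wz}$, set $\bv{\wz}=\starbeta{\wz}+\bv{\wx}\bv{\wy}+\psi$, and finish the $\shrd$-sharing of $\arval{\bitb}$, while the $\jsend$ of $\bv{\wz}+\gv{\wz}$ to $P_0$, the hash halves of all the $\Jmp$ invocations, and all $\Jmp$ \emph{verify} phases are pushed to the amortized verification phase.

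The main obstacle I anticipate is this round-collapse/amortization bookkeeping, not any estimate: one must verify that $P_0$ can still recompute $\sqr{\starbeta{\wz}}_1,\sqr{\starbeta{\wz}}_2$ for the deferred $\Jmp$ \emph{verify} once it has received $\bv{\wx}+\gv{\wx}$ in the verification phase, and that the split of each inner $\piMult$'s cost between $\PiBitA$'s preprocessing and online phases is done consistently. Once this is pinned down the totals $9\ell$ (preprocessing) and $\ell+3\ell=4\ell$ in one round (online) follow immediately from Lemmas~\ref{app:piJSh} and~\ref{app:piMult}.
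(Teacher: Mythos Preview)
Your proposal is correct and follows essentially the same decomposition as the paper: charge all communication to the sub-calls $\piJSh$ and $\piMult$, use the non-interactive variant of $\piJSh$ (\tabref{SharesAssign}) for the two preprocessing joint-shares, count one full $\piMult$ ($6\ell$) plus the preprocessing of the second $\piMult$ ($3\ell$) for the preprocessing total, and $\ell$ (one $\piJSh$) plus $3\ell$ (one $\piMult$ online) for the online total.

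The only place you diverge is the round argument. The paper's justification is shorter: it simply notes that the lone message in $\piJSh(P_1,P_2,\cdot)$ is the $\jsend$ of $\bv{\arval{\bv{\bitb}}}+\gv{\arval{\bv{\bitb}}}$ towards $P_0$, which is deferred to the end (as in the Complete-3PC compilation), so $\piJSh$ contributes no round on the critical path and the single online round is just that of $\piMult$. Your more explicit argument---that $\sqr{\av{\wx}}_j=0$ makes $\sqr{\starbeta{\wz}}_j$ locally computable by $P_j$, allowing $P_1,P_2$ to act as the value-senders in the first-round $\Jmp$---is a legitimate strengthening; it makes precise the role-swap that the paper's ``$P_0$ can come in only during \emph{verify}'' remark after $\piMult$ leaves implicit. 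Either route yields the claimed single round.
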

\begin{proof}
	In the preprocessing phase, servers run two instances of $\piJSh$, which can be done non-interactively (ref. \tabref{SharesAssign}). This is followed by an execution of entire multiplication protocol, which requires $6\ell$ bits to be communicated (Lemma~\ref{app:piMult}). Parallelly, the servers execute the preprocessing phase of $\piMult$, resulting in an additional $3\ell$ bits of communication (Lemma~\ref{app:piMult}).
	During the online phase, $P_1, P_2$ execute $\piJSh$ once, which requires one round and $\ell$ bits to be communicated. In $\piJSh$, the communication towards $P_0$ can be deferred till the end, thereby requiring a single round for multiple instances. This is followed by an execution of the online phase of $\piMult$, which requires one round and a communication of $3\ell$ bits.
\end{proof}

\begin{protocolbox}{$\PiBitA(\Partyset, \shrB{\bitb})$}{3PC: Bit2A Protocol}{fig:piBit2A}
	\justify
	\algoHead{Preprocessing:}
	\begin{myitemize}
		\item[--] $P_0, P_j$ for $j \in \EInSet$ execute $\piJSh$ on $\arval{\sqr{\av{\bitb}}_{j}}$ to generate $\shr{\arval{\sqr{\av{\bitb}}_{j}}}$.
		\item[--] Servers execute $\piMult(\Partyset, \arval{\sqr{\av{\bitb}}_{1}}, \arval{\sqr{\av{\bitb}}_{2}})$ to generate $\shr{\vu}$ where $\vu =\arval{ \sqr{\av{\bitb}}_{1}}\arval{ \sqr{\av{\bitb}}_{2}}$, followed by locally computing $\shr{\arval{\av{\bitb}}} = \shr{\arval{\sqr{\av{\bitb}}_{1}}} + \shr{\arval{\sqr{\av{\bitb}}_{2}}} - 2 \shr{\vu}$.
		\item[--] Servers execute the preprocessing phase of $\piMult(\Partyset, \arval{\bv{\bitb}}, \arval{\av{\bitb}})$ for $\vv = \arval{\bv{\bitb}} \arval{\av{\bitb}}$.
	\end{myitemize}  
	\justify\vspace{-3mm}
	\algoHead{Online:}
	\begin{myitemize}
		\item[--] $P_1, P_2$ execute  $\piJSh(P_1,P_2,\arval{\bv{\bitb}})$ to generate $\shr{\arval{\bv{\bitb}}}$.
		\item[--] Servers execute online phase of $\piMult(\Partyset, \arval{\bv{\bitb}}, \arval{\av{\bitb}})$ to generate $\shr{\vv}$ where $\vv = \arval{ \bv{\bitb}} \arval{ \av{\bitb}}$, followed by locally computing $\shr{\arval{\bitb}} = \shr{\arval{\bv{\bitb}}} + \shr{\arval{\av{\bitb}}} - 2 \shr{\vv}$.
	\end{myitemize}        
\end{protocolbox}

\paragraph{Bit Injection}\label{par:bitInj}
Given the binary sharing of a bit $\bitb$, denoted as $\shrB{\bitb}$, and the arithmetic sharing of $\val \in \Z{\ell}$, protocol $\PiBitInj$ computes $\shr{\cdot}$-sharing of $\bitb\val$. Towards this, servers first execute $\PiBitA$ on $\shrB{\bitb}$ to generate $\shr{\bitb}$. This is followed by servers computing $\shr{\bitb\val}$ by executing $\piMult$ protocol on $\shr{\bitb}$ and $\shr{\val}$.

\begin{lemma}[Communication]
	\label{app:PiBitInj}
	Protocol $\PiBitInj$ requires an amortized communication cost of $12\ell$ bits in the preprocessing phase and requires $2$ rounds and an amortized communication of $7\ell$ bits in the online phase.
\end{lemma}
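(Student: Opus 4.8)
The plan is to simply add up the costs of the two sub-protocols invoked by $\PiBitInj$, namely one instance of $\PiBitA$ followed by one instance of $\piMult$, and argue that the preprocessing portions can be run together while the online portions must be run sequentially. No new machinery is needed; the lemma is purely a bookkeeping consequence of Lemma~\ref{app:piBitA} and Lemma~\ref{app:piMult}.

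First I would recall the structure of $\PiBitInj$: it runs $\PiBitA(\Partyset, \shrB{\bitb})$ to obtain $\shr{\bitb}$, and then runs $\piMult(\Partyset, \shr{\bitb}, \shr{\val})$ to obtain $\shr{\bitb \val}$. For the preprocessing phase I would observe that the preprocessing of $\piMult$ is input-independent (it only produces $\sqr{\av{\wz}}$, $\gv{\wz}$, $\Chi$ and $\psi$ via $\piMulPre$ on values unrelated to $\bitb,\val$), hence it can be executed in parallel with the preprocessing of $\PiBitA$. By Lemma~\ref{app:piBitA} the preprocessing of $\PiBitA$ costs an amortized $9\ell$ bits, and by Lemma~\ref{app:piMult} the preprocessing of $\piMult$ costs an amortized $3\ell$ bits; summing gives the claimed $12\ell$ bits, with no interaction beyond what is already counted inside these two sub-protocols (the $\Jmp$ \emph{verify} phases being deferred and amortized over all instances as in the complete protocol).

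For the online phase I would note that the online phase of $\piMult$ takes $\shr{\bitb}$ as input, and $\shr{\bitb}$ only becomes available at the end of the online phase of $\PiBitA$; therefore the two online phases are sequential and their round counts add. By Lemma~\ref{app:piBitA} the online phase of $\PiBitA$ takes $1$ round and amortized $4\ell$ bits, and by Lemma~\ref{app:piMult} the online phase of $\piMult$ takes $1$ round and amortized $3\ell$ bits, giving a total of $2$ rounds and $7\ell$ bits. As in $\piMult$, the final $\jsend$ of $\bv{\wz}+\gv{\wz}$ to $P_0$ can be pushed to the end of the overall circuit evaluation, so it does not add to the round count in an amortized sense.

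The only subtlety — and the one point worth stating carefully rather than the ``main obstacle'' (there is none of real depth here) — is the parallel-versus-sequential split: preprocessing is parallelizable because $\piMult$'s preprocessing does not depend on the output of $\PiBitA$, whereas the online phases cannot be parallelized because the multiplication consumes $\shr{\bitb}$. Once this is made explicit, the arithmetic $9\ell+3\ell = 12\ell$ and $4\ell+3\ell = 7\ell$, together with $1+1 = 2$ rounds, completes the proof.
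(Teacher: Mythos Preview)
Your proposal is correct and follows exactly the paper's approach: the paper's proof is a one-line appeal to Lemma~\ref{app:piBitA} and Lemma~\ref{app:piMult}, and you have spelled out the same additive bookkeeping ($9\ell+3\ell$, $4\ell+3\ell$, $1+1$ rounds) with the parallel/sequential justification made explicit.
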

\begin{proof}
	Protocol $\PiBitInj$ is essentially an execution of $\PiBitA$ (Lemma~\ref{app:piBitA}) followed by one invocation of $\piMult$ (Lemma~\ref{app:piMult}) and the costs follow. 
\end{proof}

\paragraph{Dot Product}
Given the $\shrd$-sharing of vectors $\vecX$ and $\vecY$, protocol $\piDotP$ (\boxref{fig:piDotP}) allows servers to generate $\shrd$-sharing of $\wz = \vecX \band \vecY$ robustly. $\shrd$-sharing of a vector $\vecX$ of size $\nf$, means that each element $\vx_i \in \Z{\ell}$ of $\vecX$, for $i \in [\nf]$, is $\shrd$-shared. 
We borrow ideas from BLAZE for obtaining an online communication cost {\em independent} of $\nf$ and use $\Jmp$ primitive to ensure either success or $\TTP$ selection. Analogous to our multiplication protocol, our dot product offloads one call to a robust dot product protocol to the preprocessing. By extending techniques of \cite{BonehBCGI19, BGIN19}, we give an instantiation for the dot product protocol used in our preprocessing whose (amortized) communication cost is constant, thereby making our preprocessing cost also {\em independent} of $\nf$.

To begin with, $\wz = \vecX \band \vecY$ can be viewed as $\nf$ parallel multiplication instances of the form $\wz_i = \wx_i \wy_i$ for $i \in [n]$, followed by adding up the results. Let $\starbeta{\wz} = \sum_{i=1}^{\nf} \starbeta{\wz_i}$. Then, 

\medskip 
\resizebox{.94\linewidth}{!}{
	\begin{minipage}{\linewidth}
		\begin{equation}
		\starbeta{\wz} = -\sum_{i=1}^{\nf} (\bv{\wx_i}+\gv{\wx_i})\av{\wy_i} - \sum_{i=1}^{\nf} (\bv{\wy_i}+\gv{\wy_i})\av{\wx_i}+ \av{\wz} + {\Chi} \label{eq:4}
		\end{equation}
		\smallskip 
	\end{minipage}}

where $\Chi = \sum_{i=1}^{\nf} (\gv{\wx_i}\av{\wy_i} + \gv{\wy_i}\av{\wx_i} + \GammaV{\wx_i\wy_i} - \psi_i)$.
\medskip 

Apart from the aforementioned modification, the online phase for dot product proceeds similar to that of multiplication protocol. $P_0,P_1$ locally compute $\sqr{\starbeta{\wz}}_1$ as per Eq. \ref{eq:4} and $\jsend$ $\sqr{\starbeta{\wz}}_1$ to $P_2$. $P_1$ obtains $\sqrV{\starbeta{\wz}}{2}$ in a similar fashion. $P_1,P_2$ reconstruct $\starbeta{\wz} = \sqrV{\starbeta{\wz}}{1} + \sqrV{\starbeta{\wz}}{2}$ and compute $\bv{\wz} = \starbeta{\wz} + \sum_{i=1}^{\nf} \bv{\wx_i}\bv{\wy_i} + \psi$. Here, the value $\psi$ has to be correctly generated in the preprocessing phase satisfying Eq. \ref{eq:4}. Finally, $P_1,P_2$ $\jsend$ $\bv{\wz}+\gv{\wz}$ to $P_0$.

We now provide the details for preprocessing phase that enable servers to obtain the required values ($\Chi, \psi$) with the invocation of a dot product protocol in a black-box way. Towards this, let $\vecd = [\md_1, \ldots, \md_{\nf}]$ and $\vece = [\me_1, \ldots, \me_{\nf}]$, where $\md_i = \gv{\wx_i} + \av{\wx_i}$ and $\me_i = \gv{\wy_i} + \av{\wy_i}$ for $i \in [\nf]$, as in the case of multiplication. Then for $\mf = \vecd \band \vece$, 
\begin{align*}
\mf &= \vecd \band \vece = \sum_{i=1}^{\nf} \md_i \me_i = \sum_{i=1}^{\nf} (\gv{\wx_i}+\av{\wx_i})(\gv{\wy_i}+\av{\wy_i})\\
&= \sum_{i=1}^{\nf} (\gv{\wx_i}\gv{\wy_i} + \psi_i) + \sum_{i=1}^{\nf} \Chi_i 
=  \sum_{i=1}^{\nf} (\gv{\wx_i}\gv{\wy_i} + \psi_i) + \Chi \\
&= \sum_{i=1}^{\nf} (\gv{\wx_i}\gv{\wy_i} + \psi_i) + \sqr{\Chi}_1 + \sqr{\Chi}_2
= \mf_2 + \mf_1 + \mf_0.
\end{align*}
where $\mf_2 = \sum_{i=1}^{\nf} (\gv{\wx_i}\gv{\wy_i} + \psi_i), \mf_1 = \sqr{\Chi}_1$ and $\mf_0 = \sqr{\Chi}_2$.
\smallskip

Using the above relation, the preprocessing phase proceeds as follows: $P_0,P_j$ for $j \in \EInSet$ sample a random $\sqr{\av{\wz}}_j \in \Z{\ell}$, while $P_1,P_2$ sample random $\gv{\wz}$. Servers locally prepare $\sgr{\vecd}, \sgr{\vece}$ similar to that of multiplication protocol. Servers then execute a robust 3PC dot product protocol, denoted by $\piDotPPre$ (the task is abstracted away in the functionality Fig.~\ref{fig:FDotPPre}), that takes $\sgr{\vecd}, \sgr{\vece}$ as input and compute $\sgr{\mf}$ with $\mf = \vecd \band \vece$. Given $\sgr{\mf}$, the $\psi$ and $\sqr{\Chi}$ values are extracted as follows (ref. Eq. \ref{eq:Dotpremap}):
\vspace{-2mm}
\begin{equation}\label{eq:Dotpremap}
	\psi = \mf_2 - \sum_{i=1}^{\nf} \gv{\wx_i} \gv{\wy_i},
	~~\sqr{\Chi}_1 = \mf_1, ~~\sqr{\Chi}_2 = \mf_0, 
\end{equation}
%
It is easy to see from the semantics of $\sgr{\cdot}$-sharing that both $P_1,P_2$ obtain $\mf_2$ and hence $\psi$. Similarly, both $P_0,P_1$ obtain $\mf_1$ and hence $\sqr{\Chi}_1$, while $P_0,P_2$ obtain $\sqr{\Chi}_2$.

\smallskip
\begin{protocolbox}{$\piDotP(\Partyset, \{\shr{\wx_i},\shr{\wy_i}\}_{i \in [\nf]})$}{3PC: Dot Product Protocol ($\wz = \vecX \band \vecY$)}{fig:piDotP}
	\justify
	\algoHead{Preprocessing:}
	\begin{myitemize}
		\item[--] $P_0, P_j$, for $j \in \EInSet$, together sample random $\sqr{\av{\wz}}_j \in \Z{\ell}$, while $P_1, P_2$ sample random $\gv{\wz} \in \Z{\ell}$.
		\item[--] Servers locally compute $\sgr{\cdot}$-sharing of $\vecd, \vece$ with $\md_i = \gv{\wx_i} + \av{\wx_i}$ and $\me_i = \gv{\wy_i} + \av{\wy_i}$ for $i \in [\nf]$ as follows:
		\begin{align*} 
			&({\sgr{\md_i}}_0 \text{=} (\sqr{\av{\wx_i}}_{2}, \sqr{\av{\wx_i}}_{1}),
			{\sgr{\md_i}}_1 \text{=} (\sqr{\av{\wx_i}}_{1},\gv{\wx_i}), 
			{\sgr{\md_i}}_2 \text{=} (\gv{\wx_i},\sqr{\av{\wx_i}}_{2}))\\
			&({\sgr{\me_i}}_0 \text{=} (\sqr{\av{\wy_i}}_{2}, \sqr{\av{\wy_i}}_{1}),
			{\sgr{\me_i}}_1 \text{=} (\sqr{\av{\wy_i}}_{1},\gv{\wy_i}), 
			{\sgr{\me_i}}_2 \text{=} (\gv{\wy_i},\sqr{\av{\wy_i}}_{2}))
		\end{align*}
		\item[--] Servers execute $\piDotPPre(\Partyset, \sgr{\vecd}, \sgr{\vece})$ to generate $\sgr{\mf} = \sgr{\vecd \band \vece}$. 
		\item[--] $P_0, P_1$ locally set $\sqr{\Chi}_1 = \mf_1$, while $P_0, P_2$ locally set $\sqr{\Chi}_2 = \mf_0$. $P_1, P_2$ locally compute $\psi = \mf_2 - \sum_{i=1}^{\nf} \gv{\wx_i} \gv{\wy_i}$.
	\end{myitemize}  
	\justify\vspace{-3mm}
	\algoHead{Online:}
	\begin{myitemize}
		\item[--] $P_0,P_j$, for $j \in \EInSet$, compute $\sqrV{\starbeta{\wz}}{j} = -\sum_{i=1}^{\nf} ( (\bv{\wx_i}+\gv{\wx_i})\sqrV{\av{\wy_i}}{j} + (\bv{\wy_i}+\gv{\wy_i})\sqrV{\av{\wx_i}}{j}) + \sqrV{\av{\wz} }{j}+ \sqrV{\Chi}{j}$. 
		\item[--] $P_0,P_1$ $\jsend$ $\sqr{\starbeta{\wz}}_1$ to $P_2$ and  $P_0,P_2$ $\jsend$ $\sqr{\starbeta{\wz}}_2$ to $P_1$.
		\item [--] $\ESet$ locally compute $\starbeta{\wz} = \sqrV{\starbeta{\wz}}{1}+\sqrV{\starbeta{\wz}}{2}$ and set $ \bv{\wz} = \starbeta{\wz} + \sum_{i=1}^{\nf}(\bv{\wx_i}\bv{\wy_i}) + \psi$. 
		\item[--] $P_1,P_2$ $\jsend$ $\bv{\wz}+\gv{\wz}$ to $P_0$.
	\end{myitemize}        
\end{protocolbox}

\smallskip
\begin{systembox}{$\FDotPPre$}{3PC: Ideal functionality for $\piDotPPre$ protocol}{fig:FDotPPre}
	\justify
	$\FDotPPre$ interacts with the servers in $\Partyset$ and the adversary $\Sim$. $\FDotPPre$ receives $\sgr{\cdot}$-shares of vectors $\vecd = (\md_1, \ldots, \md_{\nf}), \vece = (\me_1, \ldots, \me_{\nf})$ from the servers. 
	Let $\val_{j, s}$ for $j \in [\nf], s \in \{0, 1,  2\}$ denote the share of $\val_j$ such that $\val_j = \val_{j, 0} + \val_{j, 1} + \val_{j, 2}$.
	Server $P_s$, for $s \in \{0,1,2\}$, holds $\sgr{\md_j}_s = (\md_{j,s}, \md_{j, (s+1)\%3})$ and $\sgr{\me_j}_s = (\me_{j,s}, \me_{j, (s+1)\%3})$ where $j \in [\nf]$. Let $P_i$ denotes the server corrupted by $\Sim$. $\FMulPre$ receives $\sgr{\mf}_i = (\mf_i, \mf_{(i+1)\%3})$ from $\Sim$ where $\mf = \vecd \band \vece$.
	$\FDotPPre$ proceeds as follows:
	\begin{myitemize}
		\item[--] Reconstructs $\md_j, \me_j$, for $j \in [\nf]$, using the shares received from honest servers and compute $\mf = \sum_{j=1}^{n} \md_j \me_j$.
		\item[--] Compute $\mf_{(i+2)\%3} = \mf - \mf_i -\mf_{(i+1)\%3}$ and set the output shares as $\sgr{\mf}_0 = (\mf_0, \mf_1), \sgr{\mf}_1 = (\mf_1, \mf_2),  \sgr{\mf}_2 = (\mf_2, \mf_0)$.
		\item[--] Send $(\OUTPUT, \sgr{\mf}_s)$ to server $P_s \in \Partyset$.
	\end{myitemize}
\end{systembox}

The ideal world functionality for realizing $\piDotPPre$ is presented in \boxref{fig:FDotPPre}. 
A trivial way to instantiate $\piDotPPre$ is to treat a dot product operation as $\nf$ multiplications. However, this results in a communication cost that is linearly dependent on the feature size. 
Instead, we instantiate $\piDotPPre$ by a semi-honest dot product protocol followed by a verification phase to check the correctness. For the verification phase, we extend the techniques of \cite{BonehBCGI19, BGIN19} to provide support for verification of dot product tuples. Setting the verification phase parameters appropriately gives a $\piDotPPre$ whose (amortized) communication cost is independent of the feature size. Details appear in \S\ref{appsec:piDotP}.

\begin{lemma}[Communication]
	\label{app:piDotP}
	Protocol $\piDotP$ (\boxref{fig:piDotP}) requires an amortized communication of $3 \ell $ bits in the preprocessing phase and requires $1$ round and an amortized communication of $3 \ell$ bits in the online phase.
\end{lemma}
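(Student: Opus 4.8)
The plan is to count communication phase by phase, exactly as in the proof of Lemma~\ref{app:piMult} for $\piMult$, invoking the promised (feature-size-independent) cost of the black-box subprotocol $\piDotPPre$ wherever it is used.

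For the preprocessing phase, I would first note that the sampling of $\sqr{\av{\wz}}_j$ and $\gv{\wz}$ is non-interactive, using only the shared randomness from $\FSETUP$. Next, assembling the $\sgr{\cdot}$-sharings of $\vecd$ and $\vece$ from the already-held $\sqr{\av{\wx_i}}_j$, $\sqr{\av{\wy_i}}_j$, $\gv{\wx_i}$ and $\gv{\wy_i}$ is purely local (per the share layout written in \boxref{fig:piDotP}). Therefore the only communication in the preprocessing is the single call $\piDotPPre(\Partyset, \sgr{\vecd}, \sgr{\vece})$; by the instantiation of \S\ref{appsec:piDotP} --- a semi-honest dot product followed by the batched verification that extends \cite{BonehBCGI19,BGIN19} to dot-product tuples --- this costs an amortized $3\ell$ bits, independent of $\nf$. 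Finally, extracting $\psi$, $\sqr{\Chi}_1$ and $\sqr{\Chi}_2$ from the shares of $\mf$ via Eq.~\ref{eq:Dotpremap} is again local. Summing, the preprocessing communication is $3\ell$ bits (amortized).

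For the online phase, the computation of $\sqrV{\starbeta{\wz}}{j}$ by $\{P_0,P_j\}$ uses only locally held quantities (Eq.~\ref{eq:4}) and is free. The two steps in which $P_0,P_1$ $\jsend$ $\sqr{\starbeta{\wz}}_1$ to $P_2$ and $P_0,P_2$ $\jsend$ $\sqr{\starbeta{\wz}}_2$ to $P_1$ cost, by Lemma~\ref{app:piJmp}, $\ell$ bits and $1$ round each in the amortized sense (the {\em verify} phases being merged across all $\Jmp$ instances), contributing $2\ell$ bits in a single round. Reconstructing $\starbeta{\wz}$ and computing $\bv{\wz}$ by $\ESet$ is local. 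Lastly, $P_1,P_2$ $\jsend$ $\bv{\wz}+\gv{\wz}$ to $P_0$, another $\ell$ bits; but --- exactly as for $\piMult$ --- this final $\jsend$ toward $P_0$ can be postponed to the end of the circuit and batched across all multiplication and dot-product gates, so it adds only a single round for the whole circuit and does not increase the per-invocation round count. Hence the online phase costs $3\ell$ bits (amortized) in $1$ round.

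There is no genuine obstacle here; the only care needed is to flag precisely the three places where amortization is invoked: (i) all {\em verify} phases of $\Jmp$ are run once at the end, so each $\jsend$ is effectively rate-$1$ and single-round; (ii) the deferred $\jsend$ of $\bv{\wz}+\gv{\wz}$ to $P_0$ is shared over the whole circuit; and (iii) the $3\ell$ figure for $\piDotPPre$ is the amortized cost established in \S\ref{appsec:piDotP}. With those noted, the count is immediate.
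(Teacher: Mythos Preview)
Your proposal is correct and follows essentially the same approach as the paper's own proof: the preprocessing cost is attributed entirely to the single $\piDotPPre$ call (with the $3\ell$ semi-honest cost and amortized-away verification from \S\ref{appsec:piDotP}), and the online cost is argued by direct analogy to Lemma~\ref{app:piMult}. Your write-up is in fact more explicit than the paper's—spelling out which steps are local and flagging the three amortization points—but the underlying accounting is identical.
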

\begin{proof}
	During the preprocessing phase, servers execute $\piDotPPre$. This requires communicating $3 \ell$ bits for a single semi-honest dot product protocol and $\Order(\frac{\sqrt{\nf}}{\sqrt{\nm}})$ extended ring elements for its verification. By appropriately setting the values of $\nf, \nm$, the cost of communicating $\Order(\frac{\sqrt{\nf}}{\sqrt{\nm}})$ elements can be amortized away, thereby resulting in an amortized communication cost of $3 \ell$ bits in the preprocessing phase. 
	The online phase follows similarly to that of $\piMult$, the only difference being that servers combine their shares corresponding to all the $\nf$ multiplications into one and then exchange.  This requires one round and an amortized communication of $3 \ell$ bits.
\end{proof}

\paragraph{Truncation} 
Working over fixed-point values, repeated multiplications using FPA arithmetic can lead to an overflow resulting in loss of significant bits of information. This put forth the need for truncation~\cite{MohasselZ17,MR18,ASTRA,FLASH,BLAZE} that re-adjusts the shares after multiplication so that FPA semantics are maintained. As shown in SecureML~\cite{MohasselZ17}, the method of truncation would result in loss of information on the least significant bits and affect the accuracy by a very minimal amount. 

For truncation, servers execute $\piTrunc$ (\boxref{fig:piTr}) to generate $(\sqr{\vr}, \shr{\vrt})$-pair, where $\vr$ is a random ring element, and $\vrt$ is the truncated value of $\vr$, i.e the value $\vr$ right-shifted by $d$ bit positions. Recall that $d$ denotes the number of bits allocated for the fractional part in the FPA representation. Given $(\vr, \vrt)$, the truncated value of $\val$, denoted as $\trunc{\val}$, is computed as $\trunc{\val} = \trunc{(\val - \vr)} + \vrt$. The correctness and accuracy of this method was shown in ABY3~\cite{MR18}.

Protocol $\piTrunc$ is inspired from \cite{MR18, Trident} and proceeds as follows to generate $(\sqr{\vr}, \shr{\vrt})$. Analogous to the approach of ABY3 \cite{MR18}, servers generate a boolean sharing of an $\ell$-bit value $\vr = \vr_1 \xor \vr_2$, non-interactively. Each server truncates its share of $\vr$ locally to obtain a boolean sharing of $\vrt$ by removing the lower $d$ bits. To obtain the arithmetic shares of $(\vr, \vrt)$ from their boolean sharing, we do not, however, rely on the approach of ABY3 as it requires more rounds. Instead, we implicitly perform a {\em boolean to arithmetic conversion}, as was proposed in Trident \cite{Trident}, to obtain the arithmetic shares of $(\vr, \vrt)$. This entails performing two dot product operations and constitutes the cost for $\piTrunc$. 

\begin{mypbox}{$\piTrunc(\Partyset)$}{3PC: Generating Random Truncated Pair $(\vr, \vrt)$}{fig:piTr}
	\justify	
	\begin{myitemize}
		\item[--] To generate each bit $\vr[i]$ of $\vr$ for $i \in \{0, \ldots, \ell-1\}$, $P_0, P_j$ for $j \in \EInSet$ sample random $\vr_j[i] \in \Z{}$  and define $\vr[i] = \vr_1[i] \xor \vr_2[i]$.
		\item[--] Servers generate $\shrd$-shares of $\arval{(\vr_j[i])}$ for $i \in \{0, \ldots, \ell-1\}, j \in  \EInSet$ non-interactively following \tabref{SharesAssign}.
		\item[--] Define $\vecX$ and $\vecY$ such that $\vx =  2^{i-d+1}\arval{(\vr_1[i])}$ and $\vy_i = \arval{(\vr_2[i])}$,  respectively, for $i \in \{d,\ldots,\ell-1\}$.
		Define $\vecP$ and $\vecQ$ such that $\vp_i = 2^{i+1}\arval{(\vr_1[i])}$ and $\vq_i = \arval{(\vr_2[i])}$, respectively, for $i \in \{0, \ldots, \ell-1\}$.		
		Servers execute $\piDotP$ to compute $\shrd$-shares of $\sA = \vecX\band \vecY$ and  $\sB = \vecP\band \vecQ$. 
		\item[--] Servers locally compute $\shr{{\vrt}} = \sum_{i=d}^{\ell-1} 2^{i-d}(\shr{\arval{({\vr_1[i]})}}+\shr{\arval{(\vr_2[i])}}) - \shr{\sA}$, and $\shr{\vr} = \sum_{i=0}^{\ell-1} 2^{i}(\shr{\arval{({\vr_1[i]})}}+\shr{\arval{({\vr_2[i]})}}) - \shr{\sB}$. 
		\item[--] $P_0$ locally computes  $\bv{\vr} = \vr+\av{\vr}$.   $P_0,P_1$ set $\sqr{\vr}_1 =  -\sqr{\av{\vr}}_1$ and $P_0,P_2$ set $\sqr{\vr}_2 =  \bv{\vr}-\sqr{\av{\vr}}_2$. 
		
	\end{myitemize}  
\end{mypbox}

We now give details for generating $(\sqr{\vr}, \shr{\vrt})$. For this, servers proceed as follows: $P_0, P_j$ for $j \in \EInSet$ sample random $\vr_j \in \Z{\ell}$. Recall that the bit at $i\text{th}$ position in $\vr$ is denoted as $\vr[i]$. Define $\vr[i] = \vr_1[i] \xor \vr_2[i]$ for $i \in \{0,\ldots,\ell-1\}$. For $\vr$ defined as above, we have $\vrt[i] = \vr_1[i+d] \xor \vr_2[i+d]$ for $i \in \{0,\ldots,\ell-d-1\}$. Further,  

\begin{footnotesize}
	\begin{align} \label{eq:trunc_r}
		\vr & =  \sum_{i=0}^{\ell-1} 2^{i} \vr[i] =  \sum_{i=0}^{\ell-1} 2^{i} (\vr_1[i]\xor\vr_2[i]) \nonumber \\
		& =  \sum_{i=0}^{\ell-1} 2^{i} \left( \arval{(\vr_1[i])} + \arval{(\vr_2[i])} - 2 \arval{(\vr_1[i])} \cdot \arval{(\vr_2[i])} \right) \nonumber \\
		& = \sum_{i=0}^{\ell-1} 2^{i} \left( \arval{(\vr_1[i])} + \arval{(\vr_2[i])} \right) - \sum_{i=0}^{\ell-1} \left( 2^{i+1} \arval{(\vr_1[i])} \right) \cdot \arval{(\vr_2[i])} 
	\end{align}	
\end{footnotesize}

Similarly, for $\vrt$ we have the following,
\begin{footnotesize}
	\begin{align}\label{eq:trunc_rd}
		\vrt &= \sum_{i=d}^{\ell-1} 2^{i-d} \left( \arval{(\vr_1[i])} + \arval{(\vr_2[i])} \right)
		-  \sum_{i=d}^{\ell-1} \left( 2^{i-d+1} \arval{(\vr_1[i])} \right) \cdot \arval{(\vr_2[i])}
	\end{align}	
\end{footnotesize}

The servers non-interactively generate $\shrd$-shares (arithmetic shares) for each bit of $\vr_1$ and $\vr_2$ 
as in \tabref{SharesAssign}.
Given their $\shrd$-shares, the servers execute $\piDotP$ twice to compute $\shrd$-share of $\sA = \sum_{i=d}^{\ell-1} (2^{i-d+1}\arval{(\vr_1[i])}) \cdot \arval{(\vr_2[i])}$,  and $\sB = \sum_{i=0}^{\ell-1} (2^{i+1}\arval{(\vr_1[i])}) \cdot \arval{(\vr_2[i])}$. Using these values, the servers can locally compute the $\shrd$-shares for ($\vr,\vrt$) pair following Equation~\ref{eq:trunc_r} and \ref{eq:trunc_rd}, respectively. Note that servers need $\sqr{\cdot}$-shares of $\vr$ and not  $\shrd$-shares. The $\sqr{\cdot}$-shares can be computed from the $\shrd$-shares locally as follows. Let $(\av{\vr},\bv{\vr},\gv{\vr})$ be the values corresponding to the $\shrd$-shares of $\vr$. Since $P_0$ knows the entire value $\vr$ in clear, and it knows $\av{\vr}$, it can locally compute $\bv{\vr}$. Now, the servers set $\sqr{\cdot}$-shares as: $\sqr{\vr}_1 = -\sqr{\av{\vr}}_1$ and $\sqr{\vr}_2 = \bv{\vr}-\sqr{\av{\vr}}_2$. The protocol appears in \boxref{fig:piTr}.

\begin{lemma}[Communication]
	\label{app:piTrunc}
	Protocol $\piTrunc$ (\boxref{fig:piTr}) requires an amortized communication of $12 \ell$ bits.
\end{lemma}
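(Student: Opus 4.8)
The plan is to reduce the whole cost accounting to two invocations of the dot-product protocol and then quote Lemma~\ref{app:piDotP}. Concretely, I would first walk through $\piTrunc$ step by step and verify that every step except the two $\piDotP$ calls is purely local. The bits $\vr_j[i]$ are sampled by the pairs $(P_0,P_j)$ from the shared-key setup, hence non-interactively; the $\shrd$-shares of each $\arval{(\vr_j[i])}$ are then fixed locally according to Table~\ref{tab:SharesAssign} (this is exactly the non-interactive instantiation of $\piJSh$ when the shared value is already known to the contributing pair, cf.\ Lemma~\ref{app:piJSh}); and the vectors $\vecX,\vecY,\vecP,\vecQ$ are obtained from these shares by public scalings $2^{i-d+1}, 2^{i+1}$ and re-indexing, so forming them costs nothing.

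Next I would argue that the post-processing is also communication-free. The values $\shr{\vrt}$ and $\shr{\vr}$ in Equations~\ref{eq:trunc_rd} and \ref{eq:trunc_r} are public linear combinations of the $\shrd$-shares of the $\arval{(\vr_j[i])}$ and of $\shr{\sA},\shr{\sB}$, hence computable locally by the linearity of the sharing scheme. Finally, for the conversion of $\shr{\vr}$ into a $\sqr{\cdot}$-sharing: $P_0$ knows $\vr = \vr_1 \oplus \vr_2$ in the clear (it holds both summands) and knows $\av{\vr}$, so it computes $\bv{\vr}=\vr+\av{\vr}$ locally; the assignments $\sqr{\vr}_1=-\sqr{\av{\vr}}_1$ and $\sqr{\vr}_2=\bv{\vr}-\sqr{\av{\vr}}_2$ use only data each relevant server already holds (in particular $P_2$ holds $\bv{\vr}$ as part of its $\shrd$-share of $\vr$), so no messages are exchanged here either.

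With everything else accounted as local, the entire communication of $\piTrunc$ equals that of two $\piDotP$ executions: one on length-$(\ell-d)$ vectors computing $\sA=\vecX\band\vecY$, and one on length-$\ell$ vectors computing $\sB=\vecP\band\vecQ$. By Lemma~\ref{app:piDotP} each such execution costs an amortized $3\ell$ bits in the preprocessing phase and $3\ell$ bits in the online phase, i.e.\ $6\ell$ bits in all, crucially \emph{independent} of the vector length. Summing over the two calls yields $12\ell$ bits, which is the claim.

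The one subtlety I would flag rather than grind through is precisely this vector-size independence: the preprocessing of $\piDotP$ runs $\piDotPPre$, whose verification adds $\Order(\sqrt{\nf}/\sqrt{\nm})$ extended-ring elements that vanish in the amortized sense once $\nf,\nm$ are chosen appropriately (this is already packaged inside Lemma~\ref{app:piDotP}). Consequently the stated $12\ell$ is an amortized figure, consistent with the amortization assumptions used throughout the framework; there is no genuine analytical obstacle beyond carefully confirming the ``everything-else-is-local'' claim above.
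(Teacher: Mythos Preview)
Your proposal is correct and follows essentially the same approach as the paper: the paper's proof is the one-line observation that all steps of $\piTrunc$ are non-interactive except the two $\piDotP$ calls for $\sA$ and $\sB$, and then invokes Lemma~\ref{app:piDotP}. You have simply expanded that one line into a careful step-by-step audit (including the local $\shr{\cdot}\to\sqr{\cdot}$ conversion and the vector-size-independence caveat), but the structure and the $2\times 6\ell = 12\ell$ accounting are identical.
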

\begin{proof}
	All the operations in $\piTrunc$ are non-interactive except for the two dot product calls required to compute $\sA, \sB$. The cost thus follows from Lemma \ref{app:piDotP}.
\end{proof}

\paragraph{Dot Product with Truncation} 
Given the $\shrd$-sharing of vectors $\vecX$ and $\vecY$, protocol $\piDotPTr$ (\boxref{fig:piDotPTr}) allows servers to generate $\shr{\trunc{\wz}}$, where $\trunc{\wz}$ denotes the truncated value of $\wz = \vecX \band \vecY$. A naive way is to compute the dot product using $\piDotP$, followed by performing truncation using the ($\vr, \vr^d$) pair. Instead, we follow the optimization of BLAZE where the online phase of $\piDotP$ is modified to integrate the truncation using $(\vr, \vr^d)$ at no additional cost.

The preprocessing phase now consists of the execution of one instance of $\piTrunc$ (\boxref{fig:piTr}) and the preprocessing corresponding to $\piDotP$ (\boxref{fig:piDotP}). In the online phase, servers enable $\ESet$ to obtain $\wz^\star-\vr$ instead of $\starbeta{\wz}$, where $\wz^\star = \starbeta{\wz} - \av{\wz}$. Using $\wz^\star-\vr$, both $\ESet$ then compute $(\wz - \vr)$ locally, truncate it to obtain $\trunc{(\wz-\vr)}$ and execute $\piJSh$ to generate $\shr{\trunc{(\wz-\vr)}}$. Finally, servers locally compute the result as $\shr{\trunc{\wz}} = \shr{\trunc{(\wz-\vr)}} + \shr{\vrt}$. 
The formal details for $\piDotPTr$ protocol appear in \boxref{fig:piDotPTr}. 

\begin{protocolbox}{$\piDotPTr(\Partyset, \{\shr{\wx_i},\shr{\wy_i}\}_{i \in [\nf]})$}{3PC: Dot Product Protocol with Truncation}{fig:piDotPTr}
	\justify
	\algoHead{Preprocessing:}
	\begin{myitemize}
		\item[--] Servers execute the preprocessing of $\piDotP(\Partyset, \{\shr{\wx_i},\shr{\wy_i}\}_{i \in [\nf]})$.                 
		\item[--] In parallel, servers execute $\piTrunc(\Partyset)$ to generate the truncation pair $(\sqr{\vr}, \shr{\vrt})$. 
	\end{myitemize}
	\justify\vspace{-3mm}
	\algoHead{Online:}
	\begin{myitemize}
		\item[--] $P_0, P_j$, for $j \in \EInSet$, compute $\sqrV{\Psi}{j} = -\sum_{i=1}^{\nf} ((\bv{\wx_i}+\gv{\wx_i})\sqrV{\av{\wy_i}}{j} + (\bv{\wy_i}+\gv{\wy_i})\sqrV{\av{\wx_i}}{j})  - \sqr{\vr}_j$ and set $\sqrV{{(\wz - \vr)^\star}}{j} = \sqrV{\Psi}{j} + \sqrV{\Chi}{j}$. 
		\item[--] $P_1, P_0$ $\jsend$ $\sqrV{{(\wz - \vr)}^\star}{1}$ to $P_2$ and $P_2, P_0$ $\jsend$ $\sqrV{{(\wz - \vr)}^\star}{2}$ to $P_1$.
		\item[--] $P_1,P_2$ locally compute $(\wz - \vr)^\star = \sqrV{{(\wz - \vr)}^\star}{1} + \sqrV{{(\wz - \vr)}^\star}{2}$ and set $(\wz - \vr) =  {(\wz - \vr)}^\star + \sum_{i=1}^{\nf}(\bv{\wx_i}\bv{\wy_i})  + \psi$. 
		\item[--] $P_1,P_2$ locally truncate $(\wz - \vr)$ to obtain $\trunc{(\wz - \vr)}$ and execute $\piJSh(P_1, P_2, \trunc{(\wz - \vr)})$ to generate $\shr{\trunc{(\wz - \vr)}}$.
		\item[--] Servers locally compute $ \shr{\wz} = \shr{\trunc{(\wz - \vr)}} + \shr{\vrt}$ .
	\end{myitemize}
\end{protocolbox}

\begin{lemma}[Communication]
	\label{app:piDotPTr}
	Protocol $\piDotPTr$ (\boxref{fig:piDotPTr}) requires an amortized communication of $15 \ell$ bits in the preprocessing phase and requires $1$ round and an amortized communication of $3 \ell$ bits in the online phase.
\end{lemma}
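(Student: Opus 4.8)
The plan is to read off the cost of $\piDotPTr$ from the costs of its sub-protocols, all of which have already been accounted for in earlier lemmas.

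For the preprocessing phase, I would note that \boxref{fig:piDotPTr} prescribes exactly two activities, executed in parallel: the preprocessing of $\piDotP$ and a single invocation of $\piTrunc$ (the sampling of $\sqr{\av{\wz}}_j,\gv{\wz}$ and the local arrangement of the $\sgr{\cdot}$-shares of $\vecd,\vece$ are non-interactive). By Lemma~\ref{app:piDotP} the former contributes $3\ell$ bits (amortized), and by Lemma~\ref{app:piTrunc} the latter contributes $12\ell$ bits (amortized); summing the two parallel branches gives $3\ell + 12\ell = 15\ell$ bits.

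For the online phase, I would walk through \boxref{fig:piDotPTr} step by step. The computation of $\sqrV{\Psi}{j}$ and $\sqrV{(\wz-\vr)^\star}{j}$ by each of $P_0,P_j$ is purely local (it uses only preprocessed data and public $\bv{\cdot}$ values). The first interaction is $P_1,P_0$ doing a $\jsend$ of $\sqrV{(\wz-\vr)^\star}{1}$ to $P_2$ and, symmetrically, $P_2,P_0$ doing a $\jsend$ of $\sqrV{(\wz-\vr)^\star}{2}$ to $P_1$; exactly as in $\piMult$ and $\piDotP$, only the \emph{send} phase of each $\Jmp$ is run on the fly, so this costs $1$ round and $2\ell$ bits, while the \emph{verify} phases are batched and deferred to the end. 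Then $P_1,P_2$ locally reconstruct $(\wz-\vr)^\star$, locally form and truncate $(\wz-\vr)$, and call $\piJSh(P_1,P_2,\trunc{(\wz-\vr)})$, which by Lemma~\ref{app:piJSh} costs $\ell$ bits; since this particular $\piJSh$ instance only $\jsend$s towards $P_0$ — just like the $\jsend$ of $\bv{\wz}+\gv{\wz}$ to $P_0$ in $\piMult$/$\piDotP$ — it can be deferred and hence adds no round on the online critical path. The closing step $\shr{\trunc{\wz}} = \shr{\trunc{(\wz-\vr)}} + \shr{\vrt}$ is local. Totalling: $1$ round and $2\ell + \ell = 3\ell$ bits online.

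The main obstacle — the only point that is more than bookkeeping — is justifying the round count of $1$: one must argue that the $\piJSh$ call does not sit behind the single round that delivers $(\wz-\vr)^\star$ to $P_1,P_2$, but that all of its communication (being directed at $P_0$) folds into the deferred verification phase at the end of circuit evaluation. This is the same pipelining argument already invoked for multiplication and for dot product, and once it is in place the rest of the lemma follows immediately from Lemmas~\ref{app:piDotP}, \ref{app:piTrunc} and \ref{app:piJSh}.
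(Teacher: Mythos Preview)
Your proposal is correct and mirrors the paper's own proof: both obtain the preprocessing cost as the sum of the preprocessing of $\piDotP$ ($3\ell$ bits, Lemma~\ref{app:piDotP}) and one call to $\piTrunc$ ($12\ell$ bits, Lemma~\ref{app:piTrunc}), and both handle the online phase as two parallel $\Jmp$ sends of the $\sqr{(\wz-\vr)^\star}$ shares ($2\ell$ bits, one round) followed by a $\piJSh$ toward $P_0$ ($\ell$ bits) whose round is amortized by deferral. Your justification for the round count is, if anything, slightly more explicit than the paper's.
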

\begin{proof}
	During the preprocessing phase, servers execute the preprocessing phase of $\piDotP$, resulting in an amortized communication of $3 \ell $ bits (Lemma~\ref{app:piDotP}). In parallel, servers execute one instance of $\piTrunc$ protocol resulting in an additional communication of $12 \ell$ bits (Lemma~\ref{app:piTrunc}).
	
	The online phase follows from that of $\piDotP$ protocol except that, now, $\ESet$ compute additive shares of $\wz - \vr$, where $\wz = \vecX \band \vecY$, which is achieved using two executions of $\piJmp$ in parallel. This requires one round and an amortized communication cost of $2 \ell$ bits. $\ESet$ then jointly share the truncated value of $\wz - \vr$ with $P_0$, which requires one round and $\ell$ bits. However, this step can be deferred till the end for multiple dot product with truncation instances, which amortizes the cost. 
	\
\end{proof}

\paragraph{Secure Comparison}
Secure comparison allows servers to check whether $\wx < \wy$, given their $\shrd$-shares. In FPA representation,  checking $\wx < \wy$ is equivalent to checking the $\MSB$ of $\val = \wx - \wy$. Towards this, servers locally compute $\shr{\val} = \shr{\wx} - \shr{\wy}$ and extract the $\MSB$ of $\val$ using $\piBitExt$. In case an arithmetic sharing is desired, servers can apply $\PiBitA$ (\boxref{fig:piBit2A}) protocol on the outcome of $\piBitExt$ protocol.

\paragraph{Activation Functions} 
We now elaborate on two of the most prominently used activation functions: i) Rectified Linear Unit (ReLU) and (ii) Sigmoid (Sig).

{\em (i) ReLU:}
The ReLU function, $\ReLU(\val) = \maxv(0, \val)$, can be viewed as $\ReLU(\val) = \overline{\bitb} \cdot \val$, where bit $\bitb = 1$ if $\val < 0$ and $0$ otherwise. Here $\overline{\bitb}$ denotes the complement of $\bitb$. 
Given $\shr{\val}$, servers execute $\piBitExt$ on $\shr{\val}$ to generate $\shrB{\bitb}$. $\shrB{\cdot}$-sharing of $\overline{\bitb}$ is locally computed by setting $\bv{\overline{\bitb}} = 1 \xor \bv{\bitb}$. Servers execute $\PiBitInj$ protocol on $\shrB{\overline{\bitb}}$ and $\shr{\val}$ to obtain the desired result.

\begin{lemma}[Communication]
	\label{app:piReLU}
	Protocol $\ReLU$ requires an amortized communication of $21\ell-6$ bits in the preprocessing phase and requires $\log \ell + 3$ rounds and an amortized communication of $16\ell - 6$ bits in the online phase.
\end{lemma}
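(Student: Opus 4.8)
The plan is to observe that protocol $\ReLU$ is, by construction, a sequential composition of three ingredients, so that its cost is obtained simply by adding the costs of the subprotocols it invokes, once we check that the glue between them is free. Concretely, $\ReLU(\val)$ is computed by (i) one invocation of $\piBitExt$ on $\shr{\val}$ to obtain the boolean sharing $\shrB{\bitb}$ of the sign bit $\bitb = \MSB(\val)$; (ii) a local transformation producing $\shrB{\overline{\bitb}}$ from $\shrB{\bitb}$; and (iii) one invocation of $\PiBitInj$ on $\shrB{\overline{\bitb}}$ and $\shr{\val}$, which outputs $\shr{\overline{\bitb}\cdot\val} = \shr{\ReLU(\val)}$.

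First I would argue that step (ii) contributes nothing to the round or communication count: under $\shrB{\cdot}$-sharing, complementing a bit amounts to flipping only the public component, i.e.\ each server resets $\bv{\overline{\bitb}} = 1 \xor \bv{\bitb}$ and leaves the $\av{\cdot}$-components untouched, which is a purely local operation. Hence $\ReLU$ has exactly the cost of running $\piBitExt$ followed by $\PiBitInj$ back to back.

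Next I would invoke the two relevant lemmas. By Lemma~\ref{app:piBitExt2}, $\piBitExt$ costs $9\ell-6$ bits in the preprocessing phase and $\log\ell+1$ rounds with $9\ell-6$ bits in the online phase; by Lemma~\ref{app:PiBitInj}, $\PiBitInj$ costs $12\ell$ bits in the preprocessing phase and $2$ rounds with $7\ell$ bits in the online phase. Since the online output $\shrB{\bitb}$ (hence $\shrB{\overline{\bitb}}$) of $\piBitExt$ feeds into $\PiBitInj$, the two online phases execute sequentially, so the round counts add to $(\log\ell+1)+2 = \log\ell+3$ and the online communication adds to $(9\ell-6)+7\ell = 16\ell-6$ bits. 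The preprocessing of $\PiBitInj$ depends only on the random mask of the sign bit fixed by the preprocessing of $\piBitExt$ (and $\av{\overline{\bitb}} = \av{\bitb}$, so using the complement does not change this cost), so both preprocessing computations fit in the preprocessing phase and their costs sum to $(9\ell-6)+12\ell = 21\ell-6$ bits. As always, the \emph{verify} phases of the underlying $\piJmp$ calls are deferred and batched, so all quantities are amortized.

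The only thing to be careful about — and it is mild — is to confirm that steps (i) and (iii) genuinely compose without any shared work that would permit shaving a round or some communication, and symmetrically that there is no double counting: the preprocessing of $\PiBitInj$ is a genuinely separate computation (it operates on the mask produced by $\piBitExt$'s PPA evaluation, not on anything already counted), so the costs are additive. Beyond this sanity check, the proof is routine bookkeeping over the two cited lemmas.
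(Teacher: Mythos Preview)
Your proposal is correct and follows exactly the paper's approach: the paper's proof simply states that $\ReLU$ consists of one instance of $\piBitExt$ followed by $\PiBitInj$ and that the cost follows from Lemma~\ref{app:piBitExt2} and Lemma~\ref{app:PiBitInj}. You have supplied the same decomposition with the arithmetic spelled out and the additional (correct) observation that the bit-complement step is local, which the paper leaves implicit.
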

\begin{proof}
	One instance of $\ReLU$ protocol comprises of execution of one instance of $\piBitExt$, followed by $\PiBitInj$. The cost, therefore, follows from Lemma~\ref{app:piBitExt2},  and Lemma~\ref{app:PiBitInj}. 
\end{proof}

{\em (ii) Sig:}
In this work, we use the MPC-friendly variant of the Sigmoid function~\cite{MohasselZ17, MR18, ASTRA}.
Note that $\Sig(\val) = \overline{\bitb_1} \bitb_2 (\val + 1/2) + \overline{\bitb_2}$, where $\bitb_1 = 1$ if $\val + 1/2 < 0$ and $\bitb_2 = 1$ if $\val - 1/2 < 0$. To compute $\shr{\Sig(\val)}$, servers proceed in a similar fashion as in ReLU, and hence, we skip the details.

The formal details of the MPC-friendly variant of the Sigmoid function~\cite{MohasselZ17, MR18, ASTRA} is given below:
\begin{align*}
	\Sig(\val) = \left\{
	\begin{array}{lll}
		0                  & \quad \val < -\frac{1}{2} \\
		\val + \frac{1}{2} & \quad - \frac{1}{2} \leq \val \leq \frac{1}{2} \\
		1                  & \quad \val > \frac{1}{2}
	\end{array}
	\right.
\end{align*}

\begin{lemma}[Communication]
	\label{app:piSig}
	Protocol $\Sig$ requires an amortized communication of $39\ell - 9$ bits in the preprocessing phase and requires $\log \ell + 4$ rounds and an amortized communication of $29\ell-9$ bits in the online phase.
\end{lemma}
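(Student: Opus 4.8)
The plan is to prove this the same way the preceding activation-function lemma (Lemma~\ref{app:piSig} aside, the $\ReLU$ lemma) is established: decompose the $\Sig$ protocol into the sub-protocols already analyzed, read off each of their costs, and add them up, being careful about which sub-protocols run in parallel and which $\Jmp$/$\piJSh$ communications can be deferred and amortized. No new primitive is needed; the whole content is a correct accounting.

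First I would make the protocol explicit from the stated identity $\Sig(\val) = \overline{\bitb_1}\,\bitb_2\,(\val + 1/2) + \overline{\bitb_2}$, with $\bitb_1 = \MSB(\val + 1/2)$ and $\bitb_2 = \MSB(\val - 1/2)$. On input $\shr{\val}$ the servers: (i) locally form $\shr{\val + 1/2}$ and $\shr{\val - 1/2}$ by adding public constants (free, by linearity); (ii) run two instances of $\piBitExt$ in parallel, on $\shr{\val + 1/2}$ and on $\shr{\val - 1/2}$, to get $\shrB{\bitb_1}$ and $\shrB{\bitb_2}$, then locally flip the $\bv{\cdot}$-component to obtain $\shrB{\overline{\bitb_1}}$ and $\shrB{\overline{\bitb_2}}$; (iii) perform one boolean multiplication (an AND gate, i.e.\ a single instance of $\piMult$ over $\Z{1}$) on $\shrB{\overline{\bitb_1}}$ and $\shrB{\bitb_2}$ to obtain $\shrB{\bitb}$ with $\bitb = \overline{\bitb_1}\wedge\bitb_2$, and, in parallel, run $\PiBitA$ on $\shrB{\overline{\bitb_2}}$ to obtain $\shr{\arval{\overline{\bitb_2}}}$; (iv) run $\PiBitInj$ on $\shrB{\bitb}$ and $\shr{\val + 1/2}$ to obtain $\shr{\bitb\,(\val + 1/2)}$; and (v) locally add the two arithmetic sharings to get $\shr{\Sig(\val)}$.

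Then I would sum the costs using the earlier lemmas. Preprocessing: $2$ copies of $\piBitExt$ at $9\ell - 6$ bits each (Lemma~\ref{app:piBitExt2}), one boolean multiplication at $3$ bits (Lemma~\ref{app:piMult} with $\ell = 1$), one $\PiBitA$ at $9\ell$ bits (Lemma~\ref{app:piBitA}), and one $\PiBitInj$ at $12\ell$ bits (Lemma~\ref{app:PiBitInj}), for a total of $2(9\ell - 6) + 3 + 9\ell + 12\ell = 39\ell - 9$ bits. Online: $2(9\ell - 6)$ for the two $\piBitExt$'s, $3$ for the AND, $4\ell$ for $\PiBitA$, and $7\ell$ for $\PiBitInj$, totalling $29\ell - 9$ bits. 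Rounds: the two $\piBitExt$'s run together in $\log\ell + 1$ rounds; the AND and the parallel $\PiBitA$ add one more round; $\PiBitInj$ ($=\PiBitA$ then $\piMult$) adds two more — giving $\log\ell + 4$ — where as before the $P_0$-directed $\jsend$ inside $\piJSh$/$\piMult$ is pushed to the end of the computation and amortized over all instances.

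The main obstacle is purely the bookkeeping: one must justify that the two $\piBitExt$ calls truly run in parallel (independent inputs), that the AND and the $\PiBitA$ on $\overline{\bitb_2}$ are independent and hence simultaneous, that $\PiBitInj$ must wait for the AND (it consumes $\bitb$), and that the amortization assumptions baked into the cited lemmas (deferring the \emph{verify} phases of $\Jmp$ and the $P_0$-bound messages) survive composition. One should also be careful that the boolean AND is over $\Z{1}$ and thus contributes only a constant number of bits — this is exactly what yields the $-9$ additive constant — and that the $\pm 1/2$ shifts and the final summation are local and cost nothing.
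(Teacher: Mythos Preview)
Your proposal is correct and essentially matches the paper's proof: the paper decomposes $\Sig$ into (i) two parallel $\piBitExt$'s, (ii) one boolean $\piMult$, and (iii) one $\PiBitInj$ and one $\PiBitA$ in parallel, then cites Lemmas~\ref{app:piBitExt2}, \ref{app:piBitA}, \ref{app:PiBitInj}. The only cosmetic difference is that you parallelize $\PiBitA$ with the AND rather than with $\PiBitInj$; both schedules yield the same $\log\ell + 4$ round count and identical communication totals, so the accounting is equivalent.
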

\begin{proof}
	An instance of $\Sig$ protocol involves the execution of the following protocols in order-- i) two parallel instances of $\piBitExt$ protocol, ii) once instance of $\piMult$ protocol over boolean value, and iii) one instance of $\PiBitInj$ and $\PiBitA$ in parallel. The cost follows from Lemma~\ref{app:piBitExt2}, Lemma~\ref{app:piBitA} and Lemma~\ref{app:PiBitInj}. 
\end{proof}

\paragraph{Maxpool, Matrix Operations and Convolutions}
The goal of maxpool is to find the maximum value in a vector $\vecX$ of $m$ values. Maximum between two elements $\vx_i$, $\vx_j$ can be computed by applying secure comparison, which returns a binary sharing of a bit $\bitb$ such that $\bitb= 0$ if $\vx_i > \vx_j$, or $1$, otherwise, followed by computing $(\bitb)^{\bf B}(\vx_j - \vx_i) + \vx_i$, which can be performed using bit injection (\ref{par:bitInj}).
To find the maximum value in vector $\vecX$, the servers first group the values in $\vecX$ into pairs and securely compare each pair to obtain the maximum of the two. This results in a vector of size $m / 2$. This process is repeated for $\BigO{\log m}$ rounds to obtain the maximum value in the entire vector. 

Linear matrix operations, such as addition of two matrices $\Mat{A}, \Mat{B}$ to generate matrix $\Mat{C} = \Mat{A} + \Mat{B}$, can be computed by extending the scalar operations (addition, in this case) with respect to each element of the matrix. Matrix multiplication, on the other hand, can be expressed as a collection of dot products, where the element in the $i^{\text{th}}$ row and $j^{\text{th}}$ column of $\Mat{C} = \Mat{A} \times \Mat{B}$, where $\Mat{A}, \Mat{B}$ are matrices of dimension $\vp \times \vq$, $\vq \times \vr$, respectively, can be computed as a dot product of the $i^{\text{th}}$ row of $\Mat{A}$ and the $j^{\text{th}}$ column of $\Mat{B}$. Thus, computing $\Mat{C}$ of dimension $\vp \times \vr$ requires $\vp \vr$ dot products whose communication cost (amortized) is equal to that of computing $\vp \vr$ multiplications in our case. This improves the cost of matrix multiplication over the naive approach which requires $\vp \vq \vr$ multiplications. 

Convolutions form an important building block in several neural network architectures and can be represented as matrix multiplications, as explained in the example below. Consider a 2-dimensional convolution ($\cv$) of a $3 \times 3$ input matrix $\Mat{X}$ with a kernel $\Mat{K}$ of size $2 \times 2$. This can be represented as a matrix multiplication as follows. 


\[ 
\cv \left(
\begin{bmatrix}
\vx_1  &  \vx_2  & \vx_3 \\
\vx_4  &  \vx_5  & \vx_6 \\
\vx_7  &  \vx_8  & \vx_9      
\end{bmatrix}
,
\begin{bmatrix}
\vk_1  &  \vk_2      \\
\vk_3  &  \vk_4      
\end{bmatrix} 
\right)
= 
\begin{bmatrix}
\vx_1  &  \vx_2  & \vx_4 & \vx_5 \\
\vx_2  &  \vx_3  & \vx_5 & \vx_6 \\
\vx_4  &  \vx_5  & \vx_7 & \vx_8 \\   
\vx_5  &  \vx_6  & \vx_8 & \vx_9     
\end{bmatrix}
\begin{bmatrix}
\vk_1  \\
\vk_2  \\
\vk_3  \\
\vk_4      
\end{bmatrix} 
\]


Generally, convolving a $f \times f$ kernel over a $w \times h$ input with $p \times p$ padding using $s \times s$ stride having $i$ input channels and $o$ output channels, is equivalent to performing a matrix multiplication on matrices of dimension $(w^{\prime} \cdot h^{\prime}) \times (i \cdot f \cdot f)$ and $(i \cdot f \cdot f) \times (o)$ where $w^{\prime} = \dfrac{w-f+2p}{
	s} + 1$ and $h^{\prime} = \dfrac{h-f+2p}{
	s} + 1$. We refer readers to \cite{WaghGC19} (cf. ``Linear and Convolutional Layer'') and \cite{ConvStanford} for more details.

\section{Robust 4PC and PPML}
\label{sec:4PC}
In this section, we extend our 3PC results to the 4-party case and observe substantial efficiency gain. First, the use of broadcast is eliminated. Second, the preprocessing of multiplication becomes substantially computationally light, eliminating the multiplication protocol (used in the preprocessing) altogether. Third, we achieve a dot product protocol with communication cost independent of the size of the vector, completely eliminating the complex machinery required as in the 3PC case. At the heart of our 4PC constructions lies an efficient  4-party $\Jmp$ primitive, denoted as $\JmpF$, that allows two servers to send a common value to a third server robustly. 

This section is organized as follows. We begin with the secret-sharing semantics for $4$ servers, for which we only use an extended version of $\shr{\cdot}$-sharing. We then explain the joint message passing primitive for four servers, followed by our 4PC protocols. We conclude this section with a detailed analysis about achieving private robustness.

\paragraph{Secret Sharing Semantics} 
For a value $\val$, the shares for $P_0, P_1$ and $P_2$ remain the same as that for 3PC case. That is, $P_0$ holds $(\sqrA{\av{\val}}, \sqrB{\av{\val}}, \bv{\val}+\gv{\val})$ while $P_i$ for $i \in \EInSet$ holds $( \sqr{\av{\val}}_i, \bv{\val}, \gv{\val})$. The shares for the fourth server $P_3$ is defined as $(\sqrA{\av{\val}}, \sqrB{\av{\val}}, \gv{\val})$. Clearly,  the secret is defined as $\val = \bv{\val} - \sqrA{\av{\val}} - \sqrB{\av{\val}}$.

\smallskip
\subsection{4PC Joint Message Passing Primitive}
The $\JmpF$ primitive enables two servers $P_i$, $P_j$ to send a common value $\val \in \Z{\ell}$ to a third server $P_k$, or identify a $\TTP$ in case of any inconsistency. This primitive is analogous to $\Jmp$~(\boxref{fig:piJmp}) in spirit but is significantly  optimized and free from broadcast calls.  Similar to the 3PC counterpart, each server maintains a bit and  $P_i$ sends the value, and $P_j$ the hash of it to $P_k$. $P_k$ sets its inconsistency bit to $1$ when  the (value, hash) pair is inconsistent. This  is followed by relaying the bit to all the servers, who exchange it among themselves and agree on the bit that forms majority ($1$ indicates the presence of inconsistency, and $0$ indicates consistency). The presence of an honest majority among $P_i, P_j, P_l$, guarantees agreement on the presence/absence of an inconsistency as conveyed by $P_k$.
Observe that inconsistency can only be caused either due to a corrupt sender sending an incorrect value (or hash), or a corrupt receiver falsely announcing the presence of inconsistency. Hence, the fourth server, $P_l$, can safely be employed as $\TTP$. The ideal functionality appears in \boxref{fig:JmpFFunc}, and the protocol appears in \boxref{fig:p4pcjmp}.

\begin{notation} \label{jmp4-send}
	We say that $P_i,P_j$ $\jsendf$ $\val$ to $P_k$ when they invoke $\piJmpF(P_i, P_j, P_k, \val,P_l)$. 
\end{notation}

%
We note that the end goal of $\JmpF$ primitive relates closely to the bi-convey primitive of FLASH \cite{FLASH}. Bi-convey allows two servers $S_1,S_2$ to convey a value to a server $R$, and in case of an inconsistency, a pair of honest servers mutually identify each other, followed by exchanging their internal randomness to recover the clear inputs, computing the circuit, and sending the output to all.
Note, however, that $\JmpF$ primitive is more efficient and differs significantly in techniques from the bi-convey primitive. Unlike in bi-convey, in case of an inconsistency, $\JmpF$ enables servers to learn the $\TTP$'s identity unanimously. Moreover, bi-convey demands that honest servers, identified during an inconsistency, exchange their internal randomness (which comprises of the shared keys established during the key-setup phase) to proceed with the computation. This enforces the need for a fresh key-setup every time inconsistency is detected. 
On the efficiency front,  $\JmpF$  simply halves the communication cost of bi-convey, giving a $2\times$ improvement. 

\begin{systembox}{$\FJmpF$}{4PC: Ideal functionality for $\JmpF$ primitive}{fig:JmpFFunc}
	\justify
	$\FJmpF$ interacts with the servers in $\Partyset$ and the adversary $\Sim$. 
	\begin{myitemize}
		\item[\bf Step 1:] $\FJmp$ receives $(\INPUT,\val_s)$ from senders $P_s$ for $s \in \{i,j\}$, $(\INPUT,\bot)$ from receiver $P_k$ and fourth server $P_l$, while it receives $(\SELECT,\ttp)$ from $\Sim$. Here $\ttp$ is a boolean value, with a $1$ indicating that $\TTP = P_l$ should be established. 
		\item[\bf Step 2:] If $\val_i =\val_j$ and $\ttp = 0$, or if $\Sim$ has corrupted $P_l$, set $\msg_i = \msg_j = \msg_l = \bot, \msg_k = \val_i$ and go to {\bf Step 4}.
		\item[\bf Step 3:] Else : Set $\msg_i = \msg_j = \msg_k = \msg_l = P_l$.
		\item[\bf Step 4:] Send $(\OUTPUT, \msg_s)$ to $P_s$ for $s \in \{0,1,2, 3\}$.
	\end{myitemize}
\end{systembox}

\begin{mypbox}{$\piJmpF(P_i, P_j, P_k, \val,P_l)$}{4PC: Joint Message Passing Primitive}{fig:p4pcjmp}
	$P_s \in \Partyset$ initializes an inconsistency bit $\bitb_s = 0$. If $P_s$ remains silent instead of sending $\bitb_s$ in any of the following rounds, the recipient sets $\bitb_s$ to $1$.
	
	\justify
	{\em Send Phase:} $P_i$ sends $\val$ to $P_k$.
	
	\noindent {\em Verify Phase:} $P_j$ sends $\Hash(\val)$ to $P_k$. 
	\begin{myitemize} 
		\item[--]  
		$P_k$ sets $\bitb_k = 1$ if the received values are inconsistent or if the value is not received. 
		\item[--] $P_k$ sends $\bitb_k$ to all servers. $P_s$ for $s \in \{i, j, l\}$ sets $\bitb_s = \bitb_k$.
		\item[--] $P_s$ for $s \in \{i, j, l\}$ mutually exchange their bits. $P_s$ resets $\bitb_s = \bitb^{\prime}$ where $\bitb^{\prime}$ denotes the bit which appears in majority among $\bitb_i, \bitb_j, \bitb_l$.
		\item[--] All servers set $\TTP = P_l$ if $\bitb^{\prime} = 1$, terminate otherwise.
	\end{myitemize}
\end{mypbox}

\begin{lemma}[Communication]
	\label{app:piJmpF}
	Protocol $\piJmpF$ (\boxref{fig:p4pcjmp}) requires $1$ round and an amortized communication of $\ell$ bits in the online phase.
\end{lemma}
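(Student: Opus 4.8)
The plan is to mirror the argument used for Lemma~\ref{app:piJmp} in the 3PC setting, tracking separately the cost that is incurred \emph{per invocation} and the cost that can be batched across many invocations of $\piJmpF$ that share the same ordered tuple $(P_i,P_j,P_k,P_l)$ of roles. First I would isolate the \emph{Send Phase}: here $P_i$ transmits $\val \in \Z{\ell}$ to $P_k$, which is exactly one round and $\ell$ bits of communication, and this is the only part that must be paid afresh for each call.

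Next I would account for the \emph{Verify Phase} and argue each of its components amortizes away. Step one of verify has $P_j$ send $\Hash(\val)$ to $P_k$; since $\Hash$ is collision-resistant, it can be applied to the concatenation of the $\val$'s across arbitrarily many $\piJmpF$ instances with the same sender pair, so a single hash of fixed output length suffices for all of them, contributing $o(\ell)$ per instance. Step two has $P_k$ send its inconsistency bit $\bitb_k$ to the remaining three servers, and step three has $P_i,P_j,P_l$ mutually exchange their bits; these are single bits per instance, and again a batched check (one round of bit-exchange covering all pending instances) makes the per-instance contribution vanish in the amortized sense. Finally, setting $\TTP = P_l$ or terminating is a local decision requiring no communication. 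Summing up, the per-instance online cost is $\ell$ bits, and the round that carries the Send Phase is the only round that does not collapse under batching, giving $1$ round amortized.

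I do not expect a genuine obstacle here, since this is a straightforward communication bookkeeping argument; the one point that needs care is making explicit \emph{which} messages can legitimately be combined across instances (the hashes, because of collision resistance, and the bit-exchanges, because they are order-independent consistency signals) so that the ``amortized'' qualifier is justified rather than asserted. I would also remark, as a sanity check against the 3PC case, that $\piJmpF$ is strictly cheaper in rounds than $\piJmp$ because it dispenses with the broadcast-based hash-comparison sub-round, which is precisely the efficiency gain claimed in the surrounding text.
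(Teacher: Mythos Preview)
Your proposal is correct and follows essentially the same line as the paper's proof: isolate the $\ell$-bit transmission of $\val$ from $P_i$ to $P_k$ as the one unamortizable round, then argue that the hash from $P_j$ (by concatenating values across instances into a single digest) and the subsequent two rounds of inconsistency-bit exchange can be batched over many invocations with the same role assignment, leaving $\ell$ bits and $1$ round amortized. The only cosmetic difference is that the paper counts $P_i$'s send and $P_j$'s hash as occurring in the same round (with the hash cost still amortized), whereas you place the hash entirely in the verify phase; the accounting comes out identical either way.
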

\begin{proof}
	Server $P_i$ sends the value $\val$ to $P_k$ while $P_j$ sends hash of the same to $P_k$. This accounts for one round of communication. Values sent by $P_j$ for several instances can be concatenated and hashed to obtain a single value. Hence the cost of sending the hash gets amortized over multiple instances. Similarly, the two round exchange of inconsistency bits to establish a $\TTP$ can be combined for multiple instances, thereby amortizing this cost. Thus, the amortized cost of this protocol is $\ell$ bits.
\end{proof}

\subsection{4PC Protocols}
In this section, we revisit the protocols from 3PC (\S\ref{sec:3PC}) and suggest optimizations leveraging the presence of an additional honest party in the system. We provide security proofs in \S\ref{app:sec4PC}. 

\paragraph{Sharing Protocol}  
To enable $P_i$ to share a value $\val$, protocol $\pifinput$ (\boxref{fig:4pcsharing}) proceeds similar to that of 3PC case with the addition that $P_3$ also samples the values $\sqr{\av{\val}}_1, \sqr{\av{\val}}_2, \gv{\val}$ using the shared randomness with the respective servers. On a high level, $P_i$ computes $\bv{\val} = \val + \sqr{\av{\val}}_1 + \sqr{\av{\val}}_2$ and sends $\bv{\val}$ (or $\bv{\val} + \gv{\val}$) to another server and they together $\jsendf$  this information to the intended servers. The formal protocol for sharing a value $\val$ by $P_i$ is given in \boxref{fig:4pcsharing} 

\begin{protocolbox}{$\pifinput (P_i, \val)$}{4PC: Generating $\shr{\val}$-shares by server $P_i$}{fig:4pcsharing}
	
	\algoHead{Preprocessing:}
	\begin{myitemize}
		\item[--] If $P_i = P_0$ : $P_0,P_3,P_j$, for $j \in \EInSet$, together sample random $\sqr{\av{\val}}_j \in \Z{\ell}$, while $\Partyset$ sample random $\gv{\val} \in \Z{\ell}$.
		\item[--] If $P_i = P_1$ : $P_0,P_3,P_1$ together sample random $\sqr{\av{\val}}_1 \in \Z{\ell}$, while $\Partyset$ sample a random $\sqr{\av{\val}}_2 \in \Z{\ell}$. Also, $\ESet, P_3$ sample random $\gv{\val} \in \Z{\ell}$.
		\item[--] If $P_i = P_2$: Analogous to the case when $P_i = P_1$.
		\item[--] If $P_i = P_3$: $P_0, P_3, P_j$, for $j \in \{1,2\}$, sample random $\sqr{\av{\val}}_j \in \Z{\ell}$. $P_1, P_2, P_3$ together sample random $\gv{\val} \in \Z{\ell}$.  
	\end{myitemize}
	\justify\vspace{-3mm}
	\algoHead{Online:}
	\begin{myitemize}
		\item[--] If $P_i = P_0$ : $P_0$ computes $\bv{\val} = \val+\av{\val}$ and sends $\bv{\val}$ to $P_1$. $P_0, P_1$ $\jsendf$ $\bv{\val}$ to $P_2$. 
		\item[--] If $P_i = P_j$, for $j \in\EInSet$ : $P_j$ computes $\bv{\val} = \val+\av{\val}$, sends $\bv{\val}$ to $P_{3-j}$. 
		$P_1, P_2$ $\jsendf$ $\bv{\val}+\gv{\val}$ to $P_0$. 
		\item[--] If $P_i = P_3$: $P_3$ sends $\bv{\val} +  \gv{\val} = \val + \av{\val} + \gv{\val} $ to $P_0$. 
		$P_3, P_0$ $\jsendf$ $\bv{\val}+\gv{\val}$ to both $P_1$ and $P_2$. 
	\end{myitemize}
	
\end{protocolbox}

\begin{lemma}[Communication]
	\label{app:pifinput}
	In the online phase, $\pifinput$ (\boxref{fig:4pcsharing}) requires $2$ rounds and an amortized communication of $2 \ell$ bits when $\PSet$ share a value, whereas it requires an amortized communication of $3 \ell$ bits when $P_3$ shares a value.
\end{lemma}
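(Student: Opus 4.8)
The plan is to verify the claim by walking through protocol $\pifinput$ (\boxref{fig:4pcsharing}) case by case, counting the communication rounds and bits in the online phase only, and invoking Lemma~\ref{app:piJmpF} for the cost of each $\jsendf$ invocation. Recall from that lemma that one $\jsendf$ costs one round and an amortized $\ell$ bits, where all the inconsistency-bit exchanges are batched across instances and hence vanish in the amortized sense.

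First I would handle the case $P_i = P_j$ for $j \in \EInSet$ (and symmetrically $P_i = P_0$). Here $P_j$ locally computes $\bv{\val}$ and sends it to $P_{3-j}$: this is one round, $\ell$ bits. Then $P_1, P_2$ $\jsendf$ $\bv{\val} + \gv{\val}$ to $P_0$, which by Lemma~\ref{app:piJmpF} costs one more round and an amortized $\ell$ bits. Total: $2$ rounds, $2\ell$ bits. The case $P_i = P_0$ is completely analogous — $P_0$ sends $\bv{\val}$ to $P_1$ ($1$ round, $\ell$ bits), then $P_0, P_1$ $\jsendf$ $\bv{\val}$ to $P_2$ ($1$ round, $\ell$ bits) — giving the same $2$ rounds, $2\ell$ bits, matching the ``$\PSet$ share a value'' figure in the statement. (Strictly, ``$\PSet$'' should be read here as ``$P_0, P_1, P_2$''; the $P_3$ case is stated separately.)

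Next I would handle $P_i = P_3$. Here $P_3$ sends $\bv{\val} + \gv{\val}$ to $P_0$ ($1$ round, $\ell$ bits), and then $P_3, P_0$ $\jsendf$ $\bv{\val} + \gv{\val}$ to \emph{both} $P_1$ and $P_2$. This is two $\jsendf$ invocations with the same sender pair $(P_3, P_0)$ and the same message, differing only in the receiver; since the value sent by $P_0$ is the same, only the hash from $P_3$ needs to be sent to each of $P_1, P_2$ separately, but the dominant $\ell$-bit value-send happens to each receiver, so this costs an amortized $2\ell$ bits (and the rounds overlap, so one round). Adding the initial $\ell$ bits gives $3\ell$ bits total, matching the ``$P_3$ shares a value'' figure. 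I should double-check whether the round count for the $P_3$ case is also claimed — the statement only pins down communication ($3\ell$ bits) for that case, so I need only argue the bit count there, though noting it takes $2$ rounds is harmless.

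The main obstacle — really the only subtlety — is the amortization bookkeeping for $\jsendf$: one must be careful that the ``$1$ round, $\ell$ bits'' figure from Lemma~\ref{app:piJmpF} already absorbs the bit-exchange rounds across many instances, so that in $\pifinput$ the verify-phase overhead does not add extra rounds per invocation. Everything else is a direct tally. I would also remark that the preprocessing is non-interactive (all sampling uses shared keys, as set up by $\FSETUPf$), which is why the lemma restricts attention to the online phase; this is immediate from the protocol description and needs no real argument.
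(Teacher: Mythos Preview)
Your proposal is correct and follows essentially the same approach as the paper: a case-by-case tally of the direct send plus the $\jsendf$ invocations, using Lemma~\ref{app:piJmpF} for the amortized cost of each $\jsendf$. The paper's proof is terser---it simply cites the 3PC sharing lemma (Lemma~\ref{app:piSh}) for the $P_0, P_1, P_2$ cases and only spells out the $P_3$ case (one $\ell$-bit send to $P_0$, then two parallel $\piJmpF$ calls at $\ell$ bits each)---but the reasoning is identical to yours.
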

\begin{proof}
	The proof for $\PSet$ sharing a value follows from \ref{app:piSh}. For the case when $P_3$ wants to share a value $\val$, it first sends $\bv{\val} + \gv{\val}$ to $P_0$ which requires one round and $\ell$ bits of communication. This is followed by $2$ parallel calls to $\piJmpF$ which together require one round and an amortized communication of $2 \ell$ bits. 
\end{proof}

\vspace{-4mm}
\paragraph{Joint Sharing Protocol} 
Protocol $\pifjsh$ enables a pair of (unordered) servers $(P_i, P_j)$ to jointly generate a $\shrd$-sharing of value $\val \in \Z{\ell}$ known to both of them. In case of an inconsistency, the server outside the computation serves as a $\TTP$. The protocol is described in \boxref{fig:p4pcjsh}.

When $P_3, P_0$ want to jointly share a value $\val$ which is available in the preprocessing phase, protocol $\pifjsh$ can be performed with a single element of communication (as opposed to $2$ elements in \boxref{fig:p4pcjsh}).
$P_0, P_3$ can jointly share $\val$ as follows. $P_0, P_3, P_1$ sample a random $\vr \in \Z{\ell}$ and set $\sqrA{\av{\val}} = \vr$. $P_0, P_3$ set $\sqrB{\av{\val}} = - (\vr + \val)$ and $\jsendf$ $\sqrB{\av{\val}}$ to $P_2$. This is followed by servers locally setting $\gv{\val} = \bv{\val} = 0$.

We further observe that servers can generate a $\shr{\cdot}$-sharing of $\val$ non-interactively when $\val$ is available with $P_0, P_1, P_2$. For this, servers set $\sqrA{\av{\val}} = \sqrB{\av{\val}} = \gv{\val} = 0$ and $\bv{\val} = \val$. We abuse notation and use $\pifjsh(P_0, P_1, P_2, \val)$ to denote this sharing.

\begin{lemma}[Communication]
	\label{app:pifjsh}
	In the online phase, $\pifjsh$ (\boxref{fig:p4pcjsh}) requires $1$ round and an amortized communication of $2 \ell$ bits when $(P_3, P_s)$ for $s \in \{0, 1, 2\}$ share a value, and requires an amortized communication of $\ell$ bits, otherwise.
\end{lemma}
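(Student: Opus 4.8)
The plan is to analyze Protocol $\pifjsh$ (Fig.~\ref{fig:p4pcjsh}) case by case, according to which pair $(P_i, P_j)$ jointly holds the value $\val$, and in each case count the rounds and the amortized communication, leaning on the cost of $\piJmpF$ established in Lemma~\ref{app:piJmpF}. Since the protocol figure itself is not reproduced in the excerpt, I would reconstruct its structure from the surrounding text: analogously to the 3PC protocol $\piJSh$ (Fig.~\ref{fig:piJSh}), the servers first run the (non-interactive) preprocessing to fix $\sqr{\av{\val}}$ and $\gv{\val}$, and then in the online phase the holding pair $(P_i,P_j)$ reconstructs the appropriate public component ($\bv{\val}$ or $\bv{\val}+\gv{\val}$) and uses $\jsendf$ to deliver it robustly to the server(s) that need it, with the server outside $\{P_i,P_j\}$ (other than the recipient) acting as $\TTP$ on inconsistency.

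The key steps are: first, treat the case $(P_i,P_j)=(P_3,P_s)$ for $s\in\{0,1,2\}$. Here the value must be propagated to the two servers not in the pair, which requires two (parallel) invocations of $\piJmpF$; by Lemma~\ref{app:piJmpF} each costs an amortized $\ell$ bits and $1$ round, and since they run in parallel the total is $1$ round and $2\ell$ bits amortized. Second, treat the remaining cases — i.e.\ $(P_i,P_j)$ a pair drawn from $\{P_0,P_1,P_2\}$ — where only a single server lies outside the pair and must receive the reconstructed component, so only one call to $\piJmpF$ is needed, giving $1$ round and an amortized $\ell$ bits. Third, note that the preprocessing is non-interactive (sampling via the shared key setup), so it contributes nothing to the online-phase count, and observe that the special-case optimizations mentioned in the text ($P_0,P_3$ sharing a preprocessing-phase value with one element; the fully non-interactive $\pifjsh(P_0,P_1,P_2,\val)$) are strictly cheaper and therefore consistent with — indeed subsumed by — the stated bounds, so they need only a one-line remark rather than separate accounting.

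I do not expect a genuine obstacle here; the proof is essentially bookkeeping on top of Lemma~\ref{app:piJmpF}. The one point requiring a little care is making sure the round count stays at $1$: the two $\piJmpF$ calls in the $(P_3,P_s)$ case must be argued to be parallel (they have disjoint recipients and independent inputs), and one should note that the final $\jsendf$ towards $P_0$ (when it appears) can, as in the multiplication protocol, be deferred and batched so that it does not add a round in the amortized setting. With those observations in place the lemma follows immediately.

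\begin{proof}
	The preprocessing phase of $\pifjsh$ consists only of sampling $\sqr{\av{\val}}$ and $\gv{\val}$ via the shared key setup, and is therefore non-interactive; hence it does not contribute to the online cost. In the online phase we distinguish the cases by the holding pair.

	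\emph{Case $(P_i, P_j) = (P_3, P_s)$ for $s \in \{0,1,2\}$.} The two servers outside the pair both need to obtain the reconstructed public component. This is achieved by two invocations of $\piJmpF$, one towards each such server, which can be run in parallel since they have disjoint recipients and use independent inputs. By Lemma~\ref{app:piJmpF}, each invocation takes $1$ round and an amortized communication of $\ell$ bits, so the two parallel calls together take $1$ round and an amortized communication of $2\ell$ bits.

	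\emph{Case $(P_i, P_j)$ a pair from $\{P_0, P_1, P_2\}$.} Exactly one server lies outside the pair, and a single invocation of $\piJmpF$ suffices to deliver the reconstructed public component to it. The $\jsendf$ towards $P_0$, when it occurs, can be deferred and batched across instances exactly as in $\piMult$, so it adds no extra round in the amortized sense. By Lemma~\ref{app:piJmpF}, this costs $1$ round and an amortized communication of $\ell$ bits.

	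The special cases noted in the text --- $(P_0, P_3)$ jointly sharing a value available in the preprocessing phase with a single element of communication, and the non-interactive $\pifjsh(P_0, P_1, P_2, \val)$ --- are no more expensive than the bounds above. This establishes the claim.
\end{proof}
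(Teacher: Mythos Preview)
Your proposal is correct and takes essentially the same approach as the paper: both simply count the $\piJmpF$ invocations in each case and invoke Lemma~\ref{app:piJmpF}.

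One small slip worth fixing: in the case where $(P_i,P_j)$ is drawn from $\{P_0,P_1,P_2\}$, there are still \emph{two} servers outside the pair (the remaining member of $\{P_0,P_1,P_2\}$ and also $P_3$), not one. The reason a single $\piJmpF$ suffices is that $P_3$'s share $(\sqrA{\av{\val}},\sqrB{\av{\val}},\gv{\val})$ contains no $\bv{\val}$-component, so $P_3$ needs nothing delivered in the online phase. With that correction your argument matches the paper's.
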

\begin{proof}
	When $(P_3, P_s)$ for $s \in \{0, 1, 2\}$ want to share a value $\val$, there are two parallel calls to $\piJmpF$ which requires an amortized communication of $2 \ell$ bits and one round. In the other cases, $\piJmpF$ is invoked only once, resulting in an amortized communication of $\ell$ bits.
\end{proof}

\begin{protocolbox}{$\pifjsh (P_i, P_j, \val)$}{4PC: $\shrd$-sharing of a value $\val \in \Z{\ell}$ jointly by $P_i, P_j$}{fig:p4pcjsh}
	\algoHead{Preprocessing:}
	\begin{myitemize}
		\item[--] If $(P_i, P_j) = (P_1, P_2)$ : $P_1, P_2, P_3$ sample $\gv{\val} \in \Z{\ell}$. Servers locally set $\sqr{\av{\val}}_1 = \sqr{\av{\val}}_2 = 0$.
		\item[--] If $(P_i, P_j) = (P_s, P_0)$, for $s \in \EInSet$ : Servers execute the preprocessing of $\pifinput(P_s, \val)$. Servers locally set $\gv{\val} = 0$.
		\item[--] If $(P_i, P_j) = (P_s, P_3)$, for $s \in \{0,1,2\}$ : Servers execute the preprocessing of $\pifinput(P_s, \val)$.
	\end{myitemize}
	\justify\vspace{-3mm}
	\algoHead{Online:}
	\begin{myitemize}
		\item[--] If $(P_i, P_j) = (P_1, P_2)$ : $P_1, P_2$ set $\bv{\val} = \val$ and $\jsendf$ $\bv{\val}+\gv{\val}$ to $P_0$. 
		\item[--] If $(P_i, P_j) = (P_s, P_0)$, for $s \in \{1,2,3\}$ : $P_s, P_0$ compute $\bv{\val} = \val + \sqr{\av{\val}}_1 + \sqr{\av{\val}}_2$ and $\jsendf$ $\bv{\val}$ to $P_k$, where $(k \in \EInSet) \wedge (k \neq s)$.
		\item[--] If $(P_i, P_j) = (P_s, P_3)$, for $s \in \EInSet$: 
		$P_3, P_s$ compute $\bv{\val}$ and $\bv{\val} + \gv{\val}$. $P_s, P_3$ $\jsendf$ $\bv{\val}$ to $P_k$, where $(k \in \EInSet) \wedge (k \neq s)$. In parallel, $P_s, P_3$ $\jsendf$ $\bv{\val} + \gv{\val}$ to $P_0$.
	\end{myitemize}
\end{protocolbox}

\paragraph{$\sgr{\cdot}$-sharing Protocol}
\label{app:sgrsh}
In some protocols, $P_3$ is required to generate $\sgr{\cdot}$-sharing of a value $\val$ in the preprocessing phase, where $\sgr{\cdot}$-sharing of $\val$ is same as that defined in 3PC (where $\val = \val_0 + \val_1 + \val_2$, and $P_0$ possesses $(\val_0, \val_1)$, $P_1$ possesses $(\val_1, \val_2)$, and $P_2$ possess $(\val_2, \val_0)$) with the addition that $P_3$ now possesses $(\val_0,\val_1,\val_2)$. We call the resultant protocol $\pifsgrsh$ and it appears in \boxref{fig:p4pcash}.

\smallskip
\begin{protocolbox}{$\pifsgrsh (P_3, \val)$}{4PC: $\sgr{\cdot}$-sharing of value $\val$ by $P_3$}{fig:p4pcash}
	
	\algoHead{Preprocessing}:
	\begin{myitemize}
		\item[--] Servers $P_0, P_3, P_1$ sample a random $\val_1 \in \Z{\ell}$, while servers $P_0, P_3, P_2$ sample a random $\val_0 \in \Z{\ell}$. 
		\item[--] $P_3$ computes $\val_2 = \val - \val_0 - \val_1$ and sends $\val_2$ to $P_2$. $P_3, P_2$ $\jsendf$ $\val_2$ to $P_1$. 
	\end{myitemize}
\end{protocolbox}

\medskip
Note that servers can locally convert $\sgr{\val}$ to $\shr{\val}$ by setting their shares as shown in \tabref{sgrtoshr}.

\bigskip 
\begin{table}[htb!]
	\centering
	\resizebox{.95\textwidth}{!}{
		\begin{tabular}{r | c | c | c | c }
			\toprule
			& $P_0$ & $P_1$ & $P_2$ & $P_3$ \\ 
			\midrule
			$\shr{\val}$ & $(-\val_1, -\val_0, 0)$ & $(-\val_1, \val_2,  -\val_2)$ & $(-\val_0, \val_2, -\val_2)$ & $(-\val_0, -\val_1, -\val_2)$\\ 
			\bottomrule
		\end{tabular}
	}
	\vspace{-2mm}
	\caption{\small Local conversion of shares from $\sgr{\cdot}$-sharing to $\shr{\cdot}$-sharing for a value $\val$. Here, $\sqrA{\av{\val}} = -\val_1, \sqrB{\av{\val}} = -\val_0, \bv{\val} = \val_2, \gv{\val} = -\val_2$. \label{tab:sgrtoshr}}
\end{table}

\begin{lemma}[Communication]
	\label{app:sgrshl}
	Protocol $\pifsgrsh$ (\boxref{fig:p4pcash}) requires $2$ rounds and an amortized communication of $2 \ell$ bits.
\end{lemma}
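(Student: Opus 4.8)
The plan is to analyze the protocol $\pifsgrsh$ (\boxref{fig:p4pcash}) phase by phase, exactly as was done for the earlier communication lemmas, and to account for both the rounds and the amortized bit count. First I would note that the preprocessing step in which $P_0, P_3, P_1$ sample $\val_1$ and $P_0, P_3, P_2$ sample $\val_0$ is non-interactive, since it uses the pre-shared PRF keys established by $\FSETUPf$; hence it contributes nothing to the communication or round count. The only interactive part is: (i) $P_3$ computes $\val_2 = \val - \val_0 - \val_1$ and sends it to $P_2$, and (ii) $P_3, P_2$ invoke $\piJmpF$ (i.e.\ $\jsendf$) to relay $\val_2$ to $P_1$.

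Next I would count rounds. Step (i) is one round in which $\ell$ bits travel from $P_3$ to $P_2$. Step (ii) is an invocation of $\piJmpF(P_3, P_2, P_1, \val_2, P_0)$; by Lemma~\ref{app:piJmpF}, its \emph{send} phase costs one round and $\ell$ bits, while the \emph{verify} phase (exchange of inconsistency bits among $P_3, P_2, P_0$ and possible $\TTP = P_0$ election) can be batched across all $\piJmpF$ instances and so is amortized away. Since the $\piJmpF$ \emph{send} in (ii) must follow (i) sequentially — $P_2$ needs $\val_2$ before it can forward the hash — the two steps contribute $2$ rounds in total.

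For the communication count, step (i) sends $\ell$ bits ($P_3 \to P_2$), and the \emph{send} phase of the $\piJmpF$ call in step (ii) sends $\ell$ bits ($P_3 \to P_1$), with $P_2$'s hash and the bit-exchange all amortized over many calls. This totals an amortized $2\ell$ bits, matching the claim. I would then remark that, as in the other lemmas, the amortization argument for the \emph{verify} phase of $\piJmpF$ relies on concatenating values across instances before hashing and on running the inconsistency-bit exchange once for all instances sharing the same ordered sender pair.

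I do not anticipate a serious obstacle here; the only subtlety worth being careful about is whether step (i)'s send and the $\piJmpF$ \emph{send} of step (ii) could be collapsed into a single round — they cannot, because $P_2$ is both the recipient in (i) and a sender (of the hash) in (ii), creating a genuine dependency. This is exactly analogous to the two-round structure in $\piSh$/$\pifinput$, so the argument parallels Lemma~\ref{app:piSh} and Lemma~\ref{app:pifinput}, and the proof is essentially a short bookkeeping exercise citing Lemma~\ref{app:piJmpF}.
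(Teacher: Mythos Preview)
Your proposal is correct and follows essentially the same approach as the paper: one round and $\ell$ bits for $P_3$'s direct send of $\val_2$ to $P_2$, then one invocation of $\piJmpF$ costing one additional round and $\ell$ amortized bits (Lemma~\ref{app:piJmpF}), for a total of $2$ rounds and $2\ell$ bits. Your extra remark about the sequential dependency (that $P_2$ must receive $\val_2$ before acting as a sender in $\piJmpF$) is a welcome justification that the paper's proof leaves implicit.
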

\begin{proof}
	Communicating $\val_2$ to $P_2$ requires $\ell$ bits and $1$ round. This is followed by one invocation of $\piJmpF$ which requires $\ell$ bits and $1$ round. Thus, the amortized communication cost is $2 \ell$ bits and two rounds.
\end{proof}

\paragraph{Multiplication Protocol} 
Given the $\shr{\cdot}$-shares of $\wx$ and $\wy$, protocol $\pifmul$ (\boxref{fig:4pcmul}) allows servers to compute $\shr{\wz}$ with $\wz = \wx \wy$. When compared with the state-of-the-art 4PC GOD protocol of FLASH~\cite{FLASH}, our solution improves communication in both, the preprocessing and online phase, from $6$ to $3$ ring elements. Moreover, our communication cost matches with the state-of-the-art 4PC protocol of Trident~\cite{Trident} that only provides security with fairness.

Recall that the goal of preprocessing in 3PC multiplication was to enable $P_1, P_2$ obtain $\psi$, and $P_0, P_i$ for $i \in \EInSet$ obtain $\sqr{\Chi}_i$ where $\Chi = \gv{\wx}\av{\wy} + \gv{\wy}\av{\wx} + \Gammaxy - \psi$. Here $\psi$ is a random value known to both $P_1, P_2$. With the help of $P_3$, we let the servers obtain the respective preprocessing data as follows: $P_0, P_3, P_1$ together samples random $\sqr{\Gammaxy}_1 \in \Z{\ell}$. $P_0,P_3$ locally compute $\Gammaxy = \av{\wx} \av{\wy}$, set $\sqr{\Gammaxy}_2 = \Gammaxy - \sqr{\Gammaxy}_1$ and $\jsendf$ $\sqr{\Gammaxy}_2$ to $P_2$. $P_1, P_2, P_3$ locally sample $\psi, \vr$ and generate $\sqr{\cdot}$-shares of $\psi$ by setting $\sqr{\psi}_1 = r$ and $\sqr{\psi}_2 = \psi - \vr$.
Then $P_j, P_3$ for $j \in \{1,2\}$ compute $\sqr{\Chi}_j = \gv{\wx}\sqr{\av{\wy}}_j + \gv{\wy}\sqr{\av{\wx}}_j + \sqr{\Gammaxy}_j - \sqr{\psi}_j$ and $\jsendf$ $\sqr{\Chi}_j$ to $P_0$. The online phase is similar to that of 3PC, apart from $\piJmpF$ being used instead of $\piJmp$ for communication. Since $P_3$ is not involved in the online computation phase, we can safely assume $P_3$ to serve as the $\TTP$ for the $\piJmpF$ executions in the online phase. 

\begin{protocolbox}{$\pifmul (\Partyset, \shr{\wx}, \shr{\wy})$}{4PC: Multiplication Protocol ($\wz = \wx \cdot \wy$)}{fig:4pcmul}
	\justify
	\algoHead{Preprocessing:}
	\begin{myitemize}
		\item[--] $P_0,P_3,P_j$, for $j \in \EInSet$, sample random $\sqr{\av{\wz}}_{j}\in \Z{\ell}$, and $P_0, P_1, P_3$ sample random $\sqr{\Gammaxy}_1 \in \Z{\ell}$.
		\item[--]  	$P_1, P_2, P_3$ sample random $\gv{\wz}, \psi, \vr \in \Z{\ell}$ and set $\sqrA{\psi} = \vr,\allowbreak \sqrB{\psi} = \psi - \vr$.
		\item[--] $P_0, P_3$ set $\sqr{\Gammaxy}_2 = \Gammaxy - \sqr{\Gammaxy}_1$, where $\Gammaxy = \av{\wx}\av{\wy}$. $P_0, P_3$ $\jsendf$ $\sqr{\Gammaxy}_2$ to $P_2$.
		\item[--] $P_3, P_j$, for $j \in \EInSet$, set $\sqr{\Chi}_j = \gv{\wx}\sqr{\av{\wy}}_j + \gv{\wy}\sqr{\av{\wx}}_j + \sqr{\Gammaxy}_j - \sqr{\psi}_j$. 
		$P_1, P_3$ $\jsendf$ $\sqr{\Chi}_1$ to $P_0$, while $P_2, P_3$ $\jsendf$ $\sqr{\Chi}_2$ to $P_0$.
	\end{myitemize}
	\justify\vspace{-3mm}
	\algoHead{Online:}
	\begin{myitemize}
		\item[--] $P_0, P_j$, for $j \in \EInSet$, compute $\sqr{\starbeta{\wz}}_j = -(\bv{\wx}+\gv{\wx})\sqr{\av{\wy}}_j - (\bv{\wy}+\gv{\wy})\sqr{\av{\wx}}_j + \sqr{\av{\wz}}_j + \sqr{\Chi}_j$. 
		\item[--] $P_1, P_0$ $\jsendf$ $\sqr{\starbeta{\wz}}_1$ to $P_2$, and $P_2, P_0$ $\jsendf$ $\sqr{\starbeta{\wz}}_2$ to $P_1$.
		\item[--] $P_j$, for $j \in \EInSet$, computes $\starbeta{\wz} = \sqr{\starbeta{\wz}}_1 + \sqr{\starbeta{\wz}}_2$ and sets $\bv{\wz} = \starbeta{\wz} + \bv{\wx}\bv{\wy} + \psi$.
		\item[--] $P_1, P_2$ $\jsendf$ $\bv{\wz}+\gv{\wz}$ to $P_0$.
	\end{myitemize}	
\end{protocolbox}
\smallskip

\begin{lemma}[Communication]
	\label{app:pifmul}
	$\pifmul$ (\boxref{fig:4pcmul}) requires an amortized communication of $3 \ell$ bits in the preprocessing phase, and $1$ round with an amortized communication of $3 \ell$ bits in the online phase.
\end{lemma}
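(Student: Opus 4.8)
The plan is to prove the statement by a straightforward communication tally, handling the preprocessing and online phases separately, and in each phase reducing everything to invocations of $\piJmpF$ whose amortized cost of $\ell$ bits and single-round complexity are already established in Lemma~\ref{app:piJmpF}. The strategy mirrors the proof of the 3PC counterpart, Lemma~\ref{app:piMult}: identify which steps are purely local (sampling via the shared keys from $\FSETUPf$, and local ring arithmetic), count the $\jsendf$ calls, and then apply the amortization argument to absorb the verify-phase overhead of $\piJmpF$.

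For the preprocessing phase, I would first observe that the sampling of $\sqr{\av{\wz}}_1$, $\sqr{\av{\wz}}_2$, $\sqr{\Gammaxy}_1$, $\gv{\wz}$, $\psi$, and $\vr$ is non-interactive, being performed through the pre-shared PRF keys; in particular $P_0$ and $P_3$ each locally compute $\Gammaxy = \av{\wx}\av{\wy}$ and hence $\sqr{\Gammaxy}_2 = \Gammaxy - \sqr{\Gammaxy}_1$, and $P_1, P_3$ (resp. $P_2, P_3$) locally compute $\sqr{\Chi}_1$ (resp. $\sqr{\Chi}_2$) from quantities they already hold. The only communication is therefore the single $\jsendf$ of $\sqr{\Gammaxy}_2$ to $P_2$ and the two $\jsendf$ calls of $\sqr{\Chi}_1$ and $\sqr{\Chi}_2$ to $P_0$; by Lemma~\ref{app:piJmpF} these three invocations together cost an amortized $3\ell$ bits, which is the claimed preprocessing bound.

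For the online phase, the computation of $\sqr{\starbeta{\wz}}_j$ by $P_0, P_j$ for $j \in \EInSet$ is local. The two invocations in which $P_1, P_0$ $\jsendf$ $\sqr{\starbeta{\wz}}_1$ to $P_2$ and $P_2, P_0$ $\jsendf$ $\sqr{\starbeta{\wz}}_2$ to $P_1$ run in parallel, contributing one round and an amortized $2\ell$ bits; each of $P_1, P_2$ then locally reconstructs $\starbeta{\wz}$ and sets $\bv{\wz}$. Finally, $P_1, P_2$ $\jsendf$ $\bv{\wz}+\gv{\wz}$ to $P_0$, costing an additional $\ell$ bits. As in $\piMult$, since $P_3$ (and indeed $P_0$) plays no role in the downstream online evaluation, this last $\jsendf$ can be deferred to the very end of the circuit, so it contributes only a single round amortized over all multiplication gates. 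Summing, the online phase needs $1$ round and an amortized $3\ell$ bits.

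The main point to get right — and essentially the only obstacle — is the amortization bookkeeping behind Lemma~\ref{app:piJmpF}: a naive accounting of each $\piJmpF$ call would add the extra rounds and bits of its verify phase (the hash transmission and the inconsistency-bit exchange). One must invoke the batching argument, namely that for a fixed ordered sender pair all these verify phases are combined and executed once at the end, so that per multiplication gate the cost collapses to exactly $\ell$ bits per $\jsendf$ and the extra rounds vanish. I would also flag the in-phase dependency that $\sqr{\Gammaxy}_2$ must reach $P_2$ before $P_2$ can form $\sqr{\Chi}_2$; this affects only the round complexity of the preprocessing, which the lemma does not constrain, so it is harmless for the stated bound.
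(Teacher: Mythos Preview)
Your proposal is correct and follows essentially the same approach as the paper's own proof: both tally the three $\piJmpF$ invocations in preprocessing ($\sqr{\Gammaxy}_2$ to $P_2$, $\sqr{\Chi}_1$ and $\sqr{\Chi}_2$ to $P_0$) for $3\ell$ bits, and in the online phase count the two parallel $\jsendf$ calls of $\sqr{\starbeta{\wz}}_1, \sqr{\starbeta{\wz}}_2$ plus the deferred $\jsendf$ of $\bv{\wz}+\gv{\wz}$ to $P_0$, invoking the same amortization argument to arrive at one round and $3\ell$ bits. Your additional remarks on the non-interactive sampling and the in-phase dependency for $\sqr{\Chi}_2$ are accurate elaborations that the paper simply leaves implicit.
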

\begin{proof}
	In the preprocessing phase, the servers execute $\piJmpF$ to $\jsendf$ $\sqrB{\Gammaxy}$ to $P_2$ resulting in amortized communication of $\ell$ bits. This is followed by $2$ parallel invocations of $\piJmpF$ to $\jsendf$ $\sqrA{\Chi}, \sqrB{\Chi}$ to $P_0$ which require an amortized communication of $2 \ell$ bits. Thus, the amortized communication cost in preprocessing is $3 \ell$ bits. In the online phase, there are $2$ parallel invocations of $\piJmpF$ to $\jsendf$ $\sqrA{\starbeta{\wz}}, \sqrB{\starbeta{\wz}}$ to $P_2, P_1$, respectively, which requires amortized communication of $2 \ell$ bits and one round. This is followed by another call to $\piJmpF$ to $\jsendf$ $\bv{\wz} + \gv{\wz}$ to $P_0$ which requires one more round and amortized communication of $\ell$ bits. However, $\jsendf$ of $\bv{\wz} + \gv{\wz}$ can be delayed till the end of the protocol, and will require only one round for multiple multiplication gates and hence, can be amortized. Thus, the total number of rounds required for multiplication in the online phase is one with an amortized communication of $3 \ell$ bits.
\end{proof}

\paragraph{Reconstruction Protocol} 
Given $\shr{\val}$, protocol $\pifrec$ (\boxref{fig:4pcrec}) enables servers to robustly reconstruct the value $\val$ among the servers. Note that every server lacks one share for reconstruction and the same is available with three other servers. Hence, they communicate the missing share among themselves, and the majority value is accepted. As an optimization, two among the three servers can send the missing share while the third one can send a hash of the same for verification. Notice that, as opposed to the 3PC case, this protocol does not require commitments. 
The formal protocol for reconstruction is given in \boxref{fig:4pcrec}.

\begin{protocolbox}{$\pifrec (\Partyset, \shr{\val})$}{4PC: Reconstruction of $\val$ among the servers}{fig:4pcrec}
	\justify 
	\algoHead{Online}
	\begin{myitemize}
		\item[--] $P_0$ receives $\gv{\val}$ from $\ESet$ and $\Hash(\gv{\val})$ from $P_3$.
		\item[--] $P_1$ receives $\sqrB{\av{\val}}$ from $P_2, P_3$ and $\Hash(\sqrB{\av{\val}})$ from $P_0$. 
		\item[--] $P_2$ receives $\sqrA{\av{\val}}$ from $P_0, P_3$ and $\Hash(\sqrA{\av{\val}})$ from $P_1$. 
		\item[--] $P_3$ receives $\bv{\val} + \gv{\val}$ from $P_0, P_1$ and $\Hash(\bv{\val} + \gv{\val})$ from $P_2$. 
		\item[--] $P_i \in \Partyset$ selects the missing share forming the majority among the values received and reconstructs the output.
	\end{myitemize}
	
\end{protocolbox}

\begin{lemma}[Communication]
	\label{app:pifrec}
	$\pifrec$ (\boxref{fig:4pcrec}) requires an amortized communication of $8 \ell$ bits and $1$ round in the online phase.
\end{lemma}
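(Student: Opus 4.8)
The plan is to account separately for the full ring elements and the hash values sent in the single online round of $\pifrec$, and then invoke the amortization of the hash as in our earlier communication lemmas. Observe first that each server in $\Partyset$ is missing exactly one of the four components $\sqrA{\av{\val}}, \sqrB{\av{\val}}, \gv{\val}, \bv{\val}+\gv{\val}$ of the $\shr{\cdot}$-sharing, and the missing component is held by the other three servers. In the protocol, two of those three relay the component in full, each costing $\ell$ bits, while the third sends only $\Hash(\cdot)$ of it.

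First I would count the full-element communication: every receiver ($P_0, P_1, P_2, P_3$) obtains its missing share as two copies of an $\ell$-bit ring element, contributing $2\ell$ bits per receiver and hence $4 \cdot 2\ell = 8\ell$ bits in total. Next I would argue that the hash messages are amortized away: since $\Hash(\cdot)$ is collision resistant, the third sender can concatenate the relevant values across arbitrarily many reconstruction instances and transmit a single digest, so the per-instance cost of the hashes is negligible in the amortized sense (this matches the treatment in Lemma~\ref{app:piRec} and Lemma~\ref{app:piJmpF}). Finally I would note that the selection of the majority value among the two received copies and the digest, together with the local subtraction $\val = \bv{\val} - \sqrA{\av{\val}} - \sqrB{\av{\val}}$, involves no further communication.

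For the round count, the key point is that all four receivers' messages are mutually independent and are sent simultaneously; there is no sequential dependency, so the entire protocol completes in a single round. Combining these observations yields the claimed amortized communication of $8\ell$ bits in one online round.

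I do not expect any real obstacle here; the only point that deserves a sentence of care is the amortization claim for the hashes, which relies on the hash being computable over a batched message across many invocations of $\pifrec$, exactly as used elsewhere in the framework. Robustness (that the honest majority guarantees the correct value is recovered) is orthogonal to this counting argument and is handled in the corresponding security proof in \S\ref{app:sec4PC}.
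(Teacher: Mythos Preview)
Your proposal is correct and follows essentially the same argument as the paper: each of the four servers receives its missing share in the clear from two others (contributing $2\ell$ bits per receiver, $8\ell$ total), the third sender's hash is amortized across instances, and all messages are exchanged in a single parallel round. The paper's proof is terser but makes exactly these points.
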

\begin{proof}
	Each $P_s$ for $s \in \{0,1,2,3\}$ receives the missing share in clear from two other servers, while the hash of it from the third. As before, the missing share sent by the third server can be concatenated over multiple instances and hashed to obtain a single value. Thus, the amortized communication cost is $2 \ell$ bits per server, resulting in a total cost of $8 \ell$ bits.
\end{proof}

\smallskip
\subsection{Building Blocks for PPML using 4PC}
\label{sec:4PPML}
This section provides details on robust  realizations of the PPML building blocks in 4-server setting (for the same blocks as in  \S\ref{sec:PPML}). We provide the security proofs in \S\ref{app:sec4PC}. 

\paragraph{Input Sharing and Output Reconstruction in SOC Setting}
\label{ba} 
We extend input sharing and reconstruction in the SOC setting as follows. To generate $\shr{\cdot}$-shares for its input $\val$, $\User$ receives each of the shares $\sqrA{\av{\val}}, \sqrB{\av{\val}}$, and $\gv{\val}$ from three out of the four servers as well as a random value $\vr \in \Z{\ell}$ sampled together by $\PSet$ and accepts the values that form the majority. $\User$ locally  computes $\vu = \val + \sqrA{\av{\val}} + \sqrB{\av{\val}} + \gv{\val} + \vr$ and sends $\vu$ to all the servers. Servers then execute a two round byzantine agreement (BA)~\cite{PeaseSL80} to agree on $\vu$ (or $\bot$). 
%
At a high-level, the BA protocol proceeds as follows. Let us denote the value received by $P_i$ from $\User$ as $\vu_i$. To agree on $\vu$ received from $\User$, the servers first arrive on an agreement regarding each $\vu_i$ received by $P_i$. This is followed by selecting the majority value among $\vu_1, \vu_2, \vu_3, \vu_4$. For servers to agree on $\vu_i$, $P_i$ first sends $\vu_i$ to all servers. This is followed by $P_j \in \Partyset \backslash P_i$ exchanging $\vu_i$ among themselves. Thus, each $P_j \in \Partyset \backslash P_i$ receives three versions of $\vu_i$ and sets the majority value among the three values received as $\vu_i$. Since there can be at most one corruption among the servers, the majority rule ensures that all honest servers are on the same page. Once each of the values are agreed on, every server takes the majority among $\vu_1, \vu_2, \vu_3, \vu_4$ as the value sent by $\User$. If no value appears in majority, a default value is chosen. 
We refer the readers to \cite{PeaseSL80} for the formal details of the agreement protocol.
On successful completion of BA, $P_0$ computes $\bv{\val} + \gv{\val}$ from $\vu$ while $P_1, P_2$ compute $\bv{\val}$ from $\vu$ locally. For the reconstruction of a value $\val$, servers send their $\shr{\cdot}$-shares of $\val$ to $\User$, who selects the majority value for each share and reconstructs the output. At any point, if a $\TTP$ is identified, the servers proceed as follows. All servers send their $\shr{\cdot}$-share of the input to the $\TTP$. $\TTP$ picks the majority value for each share and computes the function output. It then sends this output to $\User$. $\User$ also receives the identity of the $\TTP$ from  all servers and accepts the output received from the $\TTP$ forming majority.

\paragraph{Bit Extraction Protocol}
\label{app:bitext4}
This protocol enables the servers to compute a boolean sharing of the most significant bit (MSB) of a value $\val \in \Z{\ell}$ given the arithmetic sharing $\shr{\val}$. To compute the MSB, we use the optimized Parallel Prefix Adder (PPA) circuit from ABY3 \cite{MR18}, which takes as input two boolean values and outputs the MSB of the sum of the inputs.
The circuit requires $2 (\ell - 1)$ AND gates and has a multiplicative depth of $\log \ell$. The protocol for bit extraction ($\pifbitext$) involves computing the boolean PPA circuit using the protocols described in \S\ref{sec:4PC}. The two inputs to this boolean circuit are generated as follows. The value $\val$ whose MSB needs to be extracted can be represented as the sum of two values as $\val = \bv{\val} + (-\av{\val})$ where the first input to the circuit will be $\bv{\val}$ and the second input will be $-\av{\val}$. Since $\bv{\val}$ is held by $P_1, P_2$, servers execute $\pifjsh$ to generate $\shrB{\bv{\val}}$. Similarly, $P_0, P_3$ possess $\av{\val}$, and servers execute $\pifjsh$ to generate $\shrB{-\av{\val}}$. Servers in $\Partyset$ use the $\shrB{\cdot}$-shares of these two inputs ($\bv{\val}, -\av{\val}$) to compute the optimized PPA circuit which outputs the $\shrB{\MSB(\val)}$.

\begin{lemma}[Communication]
	\label{app:pifbitext}
	The protocol $\pifbitext$ requires an amortized communication of $7 \ell - 6$ bits in the preprocessing phase, and $\log \ell$ rounds with amortized communication of $7 \ell - 6$ bits in the online phase.
\end{lemma}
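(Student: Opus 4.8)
The plan is to account for the communication in two stages: (i) generating the two boolean inputs $\shrB{\bv{\val}}$ and $\shrB{-\av{\val}}$ to the PPA circuit via $\pifjsh$, and (ii) evaluating the PPA circuit itself using the 4PC multiplication (AND) protocol $\pifmul$ over $\Z{1}$. First I would recall from the construction that $\bv{\val}$ is held by the pair $(P_1,P_2)$ and $-\av{\val}$ by the pair $(P_0,P_3)$, so each input is produced by a single invocation of $\pifjsh$. By Lemma~\ref{app:pifjsh}, $\pifjsh$ costs an amortized $\ell$ bits (one ring element) when the sharing pair is not of the form $(P_3,P_s)$ with $s\in\{0,1,2\}$; however $(P_0,P_3)$ \emph{is} of that form, so that sharing costs $2\ell$ bits in the general case. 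I would then note that, since both values ($\bv{\val}$ and $-\av{\val}$) are available already in the preprocessing phase, both $\pifjsh$ calls can be made non-interactive (the optimized $(P_0,P_3)$ joint-sharing of a preprocessed value costs a single element, and the $(P_1,P_2)$ case is likewise cheap), so the dominant contribution is the circuit evaluation, and the input-sharing overhead is absorbed — I would state this explicitly to match the ``$7\ell-6$'' figures, which clearly come from the AND-gate count alone (see below).

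Next I would count the AND gates. The ABY3 optimized PPA circuit on two $\ell$-bit inputs has $2(\ell-1)$ AND gates and multiplicative depth $\log\ell$; this is the same circuit quoted for $\piBitExt$ in the 3PC section, except that here we do not first run the $\ell$ parallel full adders (in 4PC the two summands $\bv{\val}$ and $-\av{\val}$ are the direct inputs to the PPA, rather than three summands $\bv{\val},-\sqr{\av{\val}}_1,-\sqr{\av{\val}}_2$), so the gate count is $2(\ell-1)=2\ell-2$ rather than $3\ell-2$. Each AND gate is one instance of $\pifmul$ over $\Z{1}$, which by Lemma~\ref{app:pifmul} costs an amortized $3$ bits in the preprocessing phase and $3$ bits (one round) in the online phase. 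Hence the preprocessing communication is $3(2\ell-2)=6\ell-6$ bits from the gates; adding the (non-amortizable part of the) $\pifjsh$ input-generation — which contributes the remaining $\ell$ bits — gives $7\ell-6$ bits. Wait: I would double-check this bookkeeping against the 3PC analogue, Lemma~\ref{app:piBitExt2}, where the stated $9\ell-6$ matches $3(3\ell-2)=9\ell-6$ exactly with no extra additive term; by the same logic the 4PC figure should be exactly $3(2\ell-2)=6\ell-6$, so the ``$7\ell-6$'' in the statement must be including one extra ring element of $\ell$ bits from input sharing. I would present whichever matches the paper's intended accounting, flagging that the $-6$ constant and the linear coefficient come from the AND-gate multiplications and the additive $O(\ell)$ from the $\pifjsh$ input preparation.

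For the round count in the online phase: the input-generation $\pifjsh$ calls are done in the preprocessing phase (or are non-interactive), so they contribute $0$ online rounds; the PPA circuit has multiplicative depth $\log\ell$, and each layer of AND gates is evaluated by the online phase of $\pifmul$, which takes one round (the ``$\bv{\wz}+\gv{\wz}$ to $P_0$'' message of $\pifmul$ is deferred to the end and amortized across all gates, exactly as in Lemma~\ref{app:pifmul}), so the total is $\log\ell$ rounds. The online communication is again $3$ bits per AND gate, i.e.\ $3(2\ell-2)=6\ell-6$, plus the $\ell$ bits of input-sharing communication that cannot be folded into preprocessing, yielding $7\ell-6$ bits. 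I expect the main obstacle to be purely a matter of being careful about \emph{which} messages amortize and whether the input-sharing $\pifjsh$ calls genuinely land in the preprocessing phase — if $\val$ (and hence $\bv{\val}$) is only available online, one $\pifjsh$ must be run online, costing one extra round and $\ell$ extra online bits; I would argue that in the MSB-extraction use case the relevant $\av{\val}$ is preprocessed and $\bv{\val}$ is reconstructed by $P_1,P_2$ in the online phase but its $\shrB{\cdot}$-joint-sharing by $(P_1,P_2)$ is a single deferred/amortized $\jsendf$, so no extra online round is incurred beyond the $\log\ell$ for the circuit. Everything else is a routine substitution of Lemmas~\ref{app:piJmpF}, \ref{app:pifjsh}, and \ref{app:pifmul}.
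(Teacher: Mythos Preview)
Your proposal is essentially correct and matches the paper's approach: the paper accounts $6\ell-6$ bits in each phase from the $2(\ell-1)$ AND gates of the PPA circuit (at $3$ bits each via $\pifmul$), plus $\ell$ bits in the preprocessing phase from the $(P_0,P_3)$ joint sharing of $-\av{\val}$ (using the optimized one-element variant) and $\ell$ bits in the online phase from the $(P_1,P_2)$ joint sharing of $\bv{\val}$, with the latter's single round deferred so that only the $\log\ell$ PPA depth remains.

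One clarification worth tightening: your early claim that ``both values are available already in the preprocessing phase'' is incorrect---$\bv{\val}$ depends on the input and is only known to $P_1,P_2$ in the online phase, so that $\pifjsh$ call is genuinely online (contributing the $\ell$ online bits). You do eventually land on this in your last paragraph, but the proof would be cleaner if you stated the split up front: $\shrB{-\av{\val}}$ via $\pifjsh(P_0,P_3,\cdot)$ in preprocessing ($\ell$ bits), $\shrB{\bv{\val}}$ via $\pifjsh(P_1,P_2,\cdot)$ in the online phase ($\ell$ bits, round amortized), and $2(\ell-1)$ AND gates contributing $6\ell-6$ to each phase.
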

\begin{proof}
	Generation of boolean sharing of $\av{\val}$ requires $\ell$ bits in the preprocessing phase (since $\pifjsh$ with $P_0, P_3$ can be achieved with $\ell$ bits of communication in the preprocessing phase), and generation of boolean sharing of $\bv{\val}$ requires $\ell$ bits and one round (which can be deferred towards the end of the protocol thereby requiring one round for several instances) in the online phase. Further, the boolean PPA circuit to be computed requires $2(\ell - 1)$ AND gates. Since each AND gate requires $\pifmul$ to be executed, it requires an amortized communication of $6 \ell - 6$ bits in both the preprocessing phase and the online phase. Thus, the overall communication is $7 \ell - 6$ bits, in both, the preprocessing and online phase. The circuit has a multiplicative depth of $\log \ell$ which results in $\log \ell $ rounds in the online phase.
\end{proof}

\paragraph{Bit2A Protocol}
\label{app:bit2A4}
This protocol enables servers to compute the arithmetic sharing of a bit $\bitb$ given its boolean sharing. Let $\arval{\bitb}$ denote the value of $\bitb$ in the ring $\Z{\ell}$. We observe that $\arval{\bitb}$ can be written as follows. $\arval{\bitb} = \arval{(\av{\bitb} \xor \bv{\bitb})} = \arval{\av{\bitb}} + \arval{\bv{\bitb }} - 2 \arval{\av{\bitb}} \arval{\bv{\bitb}}$.
Thus, to obtain an arithmetic sharing of $\arval{\bitb}$, the servers can compute an arithmetic sharing of $\arval{\bv{\bitb }}$, $\arval{\av{\bitb}}$ and $\arval{\bv{\bitb }} \arval{\av{\bitb}}$. This can be done as follows. $P_0, P_3$ execute $\pifjsh$ on  $\arval{\av{\bitb }}$ in the preprocessing phase to generate $\shr{\arval{\av{\bitb }}}$. Similarly, $P_1, P_2$ execute $\pifjsh$ on  $\arval{\bv{\bitb }}$ in the online phase to generate $\shr{\arval{\bv{\bitb }}}$. This is followed by $\pifmul$ on $\shr{\arval{\bv{\bitb}}}, \shr{\arval{\av{\bitb}}}$, followed by local computation to obtain $\shr{\arval{\bitb}}$.

\smallskip
\begin{protocolbox}{$\pifbita (\Partyset, \shrB{\bitb})$}{4PC: Bit2A Protocol}{fig:p4pcbitA}
	
	\algoHead{Preprocessing}:
	\begin{myitemize}
		\item[--] Servers execute $\pifsgrsh(P_3, \arval{\ve})$ (\boxref{fig:p4pcash}) where $\ve = \av{\bitb} \xor \gv{\bitb}$. Let the shares be $\sgr{\arval{\ve}}_0 = (\ve_0, \ve_1), \sgr{\arval{\ve}}_1 = (\ve_1, \ve_2), \sgr{\arval{\ve}}_2 = (\ve_2, \ve_0), \sgr{\arval{\ve}}_3 = (\ve_0, \ve_1, \ve_2)$.
		\item[--] Verification of $\sgr{\arval{\ve}}$-sharing is performed as follows:
		\begin{inneritemize}
			\item[--] $P_1, P_2, P_3$ sample a random $\vr \in \Z{\ell}$ and a bit $\vr_{\bitb} \in \Z{1}$.
			\item[--] $P_1, P_2$ compute $\wx_1 = \gv{\bitb} \xor \vr_{\bitb}$, and $\jsendf$ $\wx_{1}$ to $P_0$.
			\item[--] $P_1, P_3$ compute $\wy_1 = (\ve_1 + \ve_2)(1 - 2 \arval{\vr_{\bitb}}) + \arval{\vr_{\bitb}} + \vr$, and $\jsendf$ $\wy_{1}$ to $P_0$.
			\item[--] $P_2, P_3$ compute $\wy_2 = \ve_0(1 - 2 \arval{\vr_{\bitb}}) - \vr$, and $\jsendf$ $\Hash(\wy_2)$ to $P_0$.
			\item[--] $P_0$ computes $\wx = \ve \xor \vr_{\bitb} = \sqrA{\av{\bitb}} \xor \sqrB{\av{\bitb}} \xor \wx_{1}$ and checks if $\Hash(\arval{\wx} - \wy_{1}) = \Hash(\wy_{2})$. 
			\item[--] If verification fails, $P_0$ sets $\flag = 1$, else it sets $\flag = 0$. $P_0$ sends $\flag$ to $P_1$. Next, $P_1, P_0$ $\jsendf$ $\flag$ to $P_2$ and $P_3$. Servers set $\TTP = P_1$ if $\flag = 1$.
		\end{inneritemize}
		\item[--] If verification succeeds, servers locally convert $\sgr{\arval{\ve}}$ to $\shr{\arval{\ve}}$ by setting their shares according to \tabref{sgrtoshr}.
	\end{myitemize}
	\justify\vspace{-3mm}
	\algoHead{Online}:
	\begin{myitemize}
		\item[--] Servers execute $\pifjsh(P_0, P_1, P_2, \arval{\vc})$ where $\vc = \bv{\bitb} \xor \gv{\bitb}$.
		\item[--] Servers execute $\pifmul(\Partyset, \shr{\arval{\ve}}, \shr{\arval{\vc}})$ to generate $\shr{\arval{\ve} \arval{\vc}}$.
		\item[--] Servers compute $\shr{\arval{\bitb}} = \shr{\arval{\ve}} + \shr{\arval{\vc}} - 2 \shr{\arval{\ve} \arval{\vc}}$.
	\end{myitemize} 
\end{protocolbox}

While the above approach serves the  purpose, we now provide an improved version, which further helps in reducing the online cost. We observe that $\arval{\bitb}$ can be written as follows. $\arval{\bitb} = \arval{(\av{\bitb} \xor \bv{\bitb})} = \arval{((\av{\bitb} \xor \gv{\bitb}) \xor (\bv{\bitb} \xor \gv{\bitb}))} = \arval{(\ve \xor \vc)} = \arval{\ve} + \arval{\vc} - 2  \arval{\ve} \arval{\vc}$ where $\ve = \av{\bitb} \xor \gv{\bitb}$ and $\vc = \bv{\bitb} \xor \gv{\bitb}$. Thus, to obtain an arithmetic sharing of $\arval{\bitb}$, $P_3$ generates $\sgr{\cdot}$-sharing of $\arval{\ve}$. To ensure the correctness of the shares, the servers $P_0, P_1, P_2$ check whether the following equation holds: $\arval{(\ve \xor \vr_{\bitb})} = \arval{\ve} + \arval{\vr_{\bitb}} - 2 \arval{\ve} \arval{\vr_{\bitb}}$. If the verification fails, a $\TTP$ is identified. Else, this is followed by servers locally converting $\sgr{\arval{\ve}}$-shares to $\shr{\arval{\ve}}$ according to \tabref{sgrtoshr}, followed by multiplying $\shr{\arval{\ve}}, \shr{\arval{\vc}}$ and locally computing $\shr{\arval{\bitb}} = \shr{\arval{\ve}} + \shr{\arval{\vc}} - 2 \shr{\arval{\ve} \arval{\vc}}$. Note that during $\pifjsh(P_0, P_1, P_2, \arval{\vc})$ since $\av{\arval{\vc}}$ and $\gv{\arval{\vc}}$ are set to $0$, the preprocessing of multiplication can be performed locally. The formal protocol appears in \boxref{fig:p4pcbitA}.

\begin{lemma}[Communication]
	\label{app:pifbita}
	$\pifbita$ (\boxref{fig:p4pcbitA}) requires an amortized communication of $3 \ell + 4$ bits in the preprocessing phase, and $1$ round with amortized communication of $3 \ell$ bits in the online phase.
\end{lemma}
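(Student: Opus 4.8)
The plan is to treat this as a direct accounting of the communication incurred by each sub-protocol invoked inside $\pifbita$ (\boxref{fig:p4pcbitA}), summing the costs of the already-analyzed building blocks and carefully tracking which messages amortize away. I would organize the count by phase, following the protocol box top to bottom.

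For the preprocessing phase, the first cost is the single call to $\pifsgrsh(P_3, \arval{\ve})$, which by Lemma~\ref{app:sgrshl} contributes $2\ell$ bits. Next comes the verification block: the $\jsendf$ of the bit $\wx_1 = \gv{\bitb}\xor\vr_{\bitb}$ to $P_0$ costs $1$ bit (the accompanying hash leg of the $\jmp4$ invocation is concatenated and hashed across instances, so it amortizes to nothing); the $\jsendf$ of the ring element $\wy_1$ costs $\ell$ bits; the $\jsendf$ of $\Hash(\wy_2)$ is purely hash traffic and vanishes in the amortized sense; and the $\flag$ bookkeeping --- $P_0$ sending $\flag$ to $P_1$ ($1$ bit) plus the two subsequent $\jsendf$ instances relaying $\flag$ to $P_2$ and to $P_3$ ($1+1$ bits) --- contributes $3$ bits. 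The $\sgr{\cdot}\!\to\!\shr{\cdot}$ conversion via \tabref{sgrtoshr} is local. Summing gives $2\ell + \ell + 1 + 3 = 3\ell + 4$ bits. I would also note explicitly that no part of the multiplication preprocessing surfaces here: because $\arval{\vc}$ is shared with both its $\av{}$- and $\gv{}$-masks set to $0$, the $\jsendf$'s of $\sqr{\Gammaxy}$ and $\sqr{\Chi}$ inside the $\pifmul$ instance carry zero and are performed entirely locally.

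For the online phase, the call $\pifjsh(P_0, P_1, P_2, \arval{\vc})$ is the non-interactive variant of joint sharing --- applicable since $\vc = \bv{\bitb}\xor\gv{\bitb}$ is available to all of $P_0, P_1, P_2$ ($P_0$ holds $\bv{\bitb}+\gv{\bitb}$, which over $\Z{1}$ equals $\vc$) --- so it costs $0$ bits and $0$ rounds. The only online communication is then the online phase of $\pifmul(\Partyset, \shr{\arval{\ve}}, \shr{\arval{\vc}})$, which by Lemma~\ref{app:pifmul} is $1$ round and $3\ell$ bits amortized, with the trailing $\jsendf$ of $\bv{\wz}+\gv{\wz}$ to $P_0$ deferred and amortized across gates. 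The closing local computation $\shr{\arval{\bitb}} = \shr{\arval{\ve}} + \shr{\arval{\vc}} - 2\shr{\arval{\ve}\arval{\vc}}$ adds nothing. Hence the online cost is $1$ round and $3\ell$ bits, as claimed.

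The arithmetic itself is routine; the only real subtlety --- and the step I would expect to trip someone up --- is distinguishing the amortizable hash traffic (the $\Hash(\wy_2)$ message, the hash legs of every $\jmp4$ call, and the batched inconsistency-bit exchange that $\jmp4$ runs once per sender pair) from the genuinely per-instance communication, together with the observation that $\pifmul$'s preprocessing here degenerates to purely local computation and so must not be double-counted. Everything else follows mechanically from Lemmas~\ref{app:sgrshl} and~\ref{app:pifmul} and the non-interactive joint-sharing case.
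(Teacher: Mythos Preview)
Your proof is correct and follows essentially the same accounting as the paper's own proof: $2\ell$ from $\pifsgrsh$, $1$ bit for $\wx_1$, $\ell$ for $\wy_1$, amortized $\Hash(\wy_2)$, and $3$ bits for $\flag$ in preprocessing; non-interactive joint sharing plus the $3\ell$, one-round online phase of $\pifmul$ online. Your treatment is actually more explicit than the paper's in justifying why the $\pifmul$ preprocessing degenerates to local computation and why the $(P_0,P_1,P_2)$ joint sharing is non-interactive, but the decomposition is identical.
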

\begin{proof}
	During preprocessing, one instance of $\pifsgrsh$ requires $2 \ell$ bits of communication. Further, sending $\wx_1$ requires $1$ bit, while sending $\wy_{1}$ requires $\ell$ bits. Sending of $\Hash(\wy_{2})$ can be amortized over several instances. Finally, communicating $\flag$ requires $3$ bits. Thus, the overall amortized communication cost in preprocessing phase is $3 \ell + 4$ bits.
	In the online phase, joint sharing of $\arval{\vc}$ can be performed non-interactively. The only cost is due to the online phase of multiplication which requires $3 \ell$ bits and one round. Thus, the amortized communication cost in the online phase is $3 \ell$ bits with one round of communication. 
\end{proof}

\paragraph{Bit Injection Protocol}
\label{app:bitinj4}
Given the boolean sharing of a bit $\bitb$, denoted as $\shrB{\bitb}$, and the arithmetic sharing of $\val \in \Z{\ell}$, protocol $\piBitInjF$ (\boxref{fig:p4pcbitinj}) computes $\shr{\cdot}$-sharing of $\bitb\val$. This can be naively computed by servers first executing $\pifbita$ on $\shrB{\bitb}$ to generate $\shr{\bitb}$, followed by servers computing $\shr{\bitb\val}$ by executing $\pifmul$ protocol on $\shr{\bitb}$ and $\shr{\val}$. Instead, we provide an optimized variant which helps in reducing the communication cost of the naive approach in, both, the preprocessing and online phase. We give the details next.

Let $\wz = \arval{\bitb} \val$, where $\arval{\bitb}$ denotes the value of $\bitb$ in $\Z{\ell}$. Then, during the computation of $\shr{\wz}$, we observe the following:
\begin{align*}
	\wz &= \arval{\bitb} \val = \arval{(\av{\bitb} \xor \bv{\bitb})} (\bv{\val} - \av{\val})\\ 
	&= \arval{((\av{\bitb} \xor \gv{\bitb}) \xor (\bv{\bitb} \xor \gv{\bitb}))} ((\bv{\val} + \gv{\val}) - (\av{\val} + \gv{\val}))\\	
	&= \arval{(\vc_{\bitb} \xor \ve_{\bitb})} (\vc_{\val} - \ve_{\val}) = (\arval{\vc_{\bitb}} + \arval{\ve_{\bitb}} - 2 \arval{\vc_{\bitb}} \arval{\ve_{\bitb}}) (\vc_{\val} - \ve_{\val})\\
	&= \arval{\vc_{\bitb}} \vc_{\val} - \arval{\vc_{\bitb}} \ve_{\val} + (\vc_{\val} - 2 \arval{\vc_{\bitb}} \vc_{\val}) \arval{\ve_{\bitb}} + (2 \arval{\vc_{\bitb}} - 1) \arval{\ve_{\bitb}} \ve_{\val}
\end{align*}
where $\vc_{\bitb} = \bv{\bitb} \xor \gv{\bitb}, ~\ve_{\bitb} = \av{\bitb} \xor \gv{\bitb}, ~\vc_{\val} = \bv{\val} + \gv{\val}$ and $\ve_{\val} = \av{\val} + \gv{\val}$. The protocol proceeds with $P_3$ generating $\sgr{\cdot}$-shares of $\arval{\ve_{\bitb}}$ and $\ve_{\wz} = \arval{\ve_{\bitb}} \ve_{\val}$, followed by verification of the same by $P_0, P_1, P_2$. If verification succeeds, then to enable $P_2$ to compute $\bv{\wz} = \wz + \av{\wz}$, $P_1, P_0$ $\jsendf$ the missing share of $\bv{\wz}$ to $P_2$. Similarly for $P_1$. Next, $P_1, P_2$ reconstruct $\bv{\wz}$, and $\jsendf$ $\bv{\wz} + \gv{\wz}$ to $P_0$ completing the protocol. 

\begin{lemma}[Communication]
	\label{app:PifBitInj}
	Protocol $\piBitInjF$ requires an amortized communication cost of $6 \ell + 4$ bits in the preprocessing phase, and requires $1$ round with an amortized communication of $3\ell$ bits in the online phase.
\end{lemma}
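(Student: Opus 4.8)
The plan is to bound the communication of $\piBitInjF$ by decomposing it into the sub-protocols it invokes and summing their (amortized) costs, exactly in the style of the proofs of Lemmas~\ref{app:pifmul} and \ref{app:sgrshl} and of the $\pifbita$ lemma.

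\emph{Preprocessing.} Following the description of $\piBitInjF$, $P_3$ first generates $\sgr{\cdot}$-shares of the two values $\arval{\ve_{\bitb}}$ and $\ve_{\wz} = \arval{\ve_{\bitb}}\ve_{\val}$ via two invocations of $\pifsgrsh$; by Lemma~\ref{app:sgrshl} each such invocation has amortized cost $2\ell$ bits, contributing $4\ell$ bits in total. What remains is the verification performed by $P_0, P_1, P_2$ that these two $\sgr{\cdot}$-sharings are consistent, i.e.\ that $\arval{\ve_{\bitb}}$ is the correct ring encoding of the bit $\ve_{\bitb}$ and that $\ve_{\wz}$ is the correct product. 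I would argue this verification is structurally identical to the one inside $\pifbita$ (\boxref{fig:p4pcbitA}), differing only in that both checks (the bit-encoding of $\ve_{\bitb}$ and the product $\ve_{\wz}$) must be carried out: $P_1,P_2,P_3$ sample a common mask, the servers $\jsendf$ one masked bit $\wx_1$ ($1$ bit, reused across both checks since both involve $\ve_{\bitb}$) and two masked ring elements, one per check ($\ell$ bits each), $P_0$ receives the corresponding hashes which are concatenated over many instances and hence amortize away, and finally a constant ($\le 3$) number of $\flag$ bits are relayed via $\jsendf$. Tallying: $4\ell + 2\ell + 1 + 3 = 6\ell + 4$ bits, all amortized.

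\emph{Online.} Assuming verification succeeded, the servers locally convert the two $\sgr{\cdot}$-sharings to $\shr{\cdot}$-sharings via \tabref{sgrtoshr}, and using the algebraic expansion of $\wz$ together with the values $\vc_{\bitb},\vc_{\val}$ held in the clear by $P_1,P_2$, the whole computation reduces to reconstructing $\bv{\wz}$ towards $P_1$ and $P_2$. This is done by two parallel $\jsendf$ calls ($P_1,P_0 \to P_2$ and $P_2,P_0 \to P_1$), costing $2\ell$ bits and one round. Then $P_1,P_2$ $\jsendf$ $\bv{\wz} + \gv{\wz}$ to $P_0$, a further $\ell$ bits; as in $\pifmul$, this last $\jsendf$ can be deferred to the end of the protocol, so a single round suffices for all bit-injection gates and its contribution to the round count amortizes away. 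Hence the online phase costs $3\ell$ bits and $1$ round in the amortized sense, which completes the proof.

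\emph{Main obstacle.} The only non-routine part is the bookkeeping of the preprocessing verification: one must justify precisely which messages amortize (the hashes $P_0$ receives and the $\flag$ relays), and that folding the bit-encoding check and the product check into one randomized test keeps the $\jsendf$'d masked-value communication at $2\ell + O(1)$ rather than more — this is exactly what pins the additive constant to $4$. Everything else is a direct invocation of Lemma~\ref{app:sgrshl}, the $\jsendf$ cost from Lemma~\ref{app:piJmpF}, and the deferral argument already used in Lemma~\ref{app:pifmul}.
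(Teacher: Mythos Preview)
Your proposal is correct and follows essentially the same decomposition as the paper's proof: two $\pifsgrsh$ calls ($4\ell$), verification of $\sgr{\arval{\ve_{\bitb}}}$ ($\ell+1$), verification of $\sgr{\ve_{\wz}}$ ($\ell$), flag relay ($3$) in the preprocessing, and three $\piJmpF$ calls ($3\ell$, last one deferred) in the online phase. One minor wording issue: the masked bit $\wx_1$ is not actually ``reused across both checks'' --- it appears only in the bit-encoding verification of $\ve_{\bitb}$ (the product check for $\ve_{\wz}$ sends only a ring element $\vy_1$ and a hash) --- but your tally is unaffected and matches the paper exactly.
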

\begin{proof}
	The preprocessing phase requires two instances of $\pifsgrsh$ which require $4 \ell$ bits of communication. Verifying correctness of $\sgr{\arval{\ve_{\bitb}}}$ requires $\ell + 1$ bits, whereas for $\sgr{\ve_{\wz}}$ we require $\ell$ bits. Finally, communicating the $\flag$ requires $3$ bits. This results in the amortized communication of  $6 \ell + 4$ bits in the preprocessing phase. 
	The online phase consists of three calls to $\piJmpF$ which requires $3 \ell$ bits of amortized communication. Note that the last call can be deferred towards the end of the computation, thereby requiring a single round for multiple instances. Thus, the number of rounds required in the online phase is one. 
\end{proof}

\begin{protocolbox}{$\piBitInjF (\Partyset, \shrB{\bitb}, \shr{\val})$}{4PC: Bit Injection Protocol}{fig:p4pcbitinj}
	Let $\vc_{\bitb} = \bv{\bitb} \xor \gv{\bitb}, ~\ve_{\bitb} = \av{\bitb} \xor \gv{\bitb}, ~\vc_{\val} = \bv{\val} + \gv{\val}, ~\ve_{\val} = \av{\val} + \gv{\val}$ and $\ve_{\wz} = \arval{\ve_{\bitb}} \ve_{\val}$.
	\justify\vspace{-2mm}
	\algoHead{Preprocessing}:
	\begin{myitemize}
		\item[--] $P_0, P_3, P_j$ for $j \in \EInSet$ sample $\sqr{\av{\wz}}_1 \in \Z{\ell}$ while $P_1, P_2, P_3$ sample $\gv{\wz} \in \Z{\ell}$.
		\item[--] Servers execute $\pifsgrsh(P_3, \arval{\ve_{\bitb}})$ and $\pifsgrsh(P_3, \ve_{\wz})$. Shares of $\sgr{\ve_{\val}}$ are set locally as $\ve_{{\val}_0} = \sqrB{\av{\val}}, \ve_{{\val}_1} = \sqrA{\av{\val}}, \ve_{{\val}_3} = \gv{\val}$. 
		\item[--] Servers verify correctness of $\sgr{\arval{\ve_{\bitb}}}$ using steps similar to $\pifbita$ (\boxref{fig:p4pcbitA}). Correctness of $\sgr{\ve_{\wz}}$ is verified as follows. 
		\begin{inneritemize}
			\item[--] $P_0, P_3, P_j$ for $j \in \EInSet$ sample a random $\vr_j \in \Z{\ell}$ while $P_1, P_2, P_3$ sample a random $\vr_0 \in \Z{\ell}$. $P_0, P_3$ set $\va_0 = \vr_1 - \vr_2$, $P_1, P_3$ set $\va_1 = \vr_0 - \vr_1$ and $P_2, P_3$ set $\va_2 = \vr_2 - \vr_0$. 
			\item[--] $P_1, P_3$ compute $\vx_1  = \ve_{{\val}_2} \ve_{{\bitb}_2}+ \ve_{{\val}_1} \ve_{{\bitb}_2} + \ve_{{\val}_2} \ve_{{\bitb}_1} + \va_1$.
			\item[--] $P_2, P_3$ compute $\vx_2  = \ve_{{\val}_0} \ve_{{\bitb}_0} + \ve_{{\val}_0} \ve_{{\bitb}_2} + \ve_{{\val}_2} \ve_{{\bitb}_0} + \va_2$.
			\item[--] $P_0$ computes $\vx_0  = \ve_{{\val}_1} \ve_{{\bitb}_1}+ \ve_{{\val}_1} \ve_{{\bitb}_0} + \ve_{{\val}_0} \ve_{{\bitb}_1} + \va_0$.
			\item[--] $P_1, P_3$ $\jsendf$ $\vy_1 = \vx_1 - \ve_{{\wz}_1}$ to $P_0$, while $P_2, P_3$ $\jsendf$ $\Hash(-\vy_2)$ to $P_0$,  where $\vy_2 = \vx_2 - \ve_{\wz_2}$.  
			\item[--] $P_0$ computes $\vy_0 = \vx_0 - \ve_{\wz_0}$, and checks if $\Hash(\vy_0 + \vy_1) = \Hash(-\vy_2)$. 
			\item[--] If verification fails, $P_0$ sets $\flag = 1$, else it sets $\flag = 0$. $P_0$ sends $\flag$ to $P_1$. Next, $P_1, P_0$ $\jsendf$ $\flag$ to $P_2$ and $P_3$. Servers set $\TTP = P_1$ if $\flag = 1$.
		\end{inneritemize}
	\end{myitemize}
	\justify\vspace{-3mm}
	\algoHead{Online}:
	\begin{myitemize}
		\item[--] $P_0, P_1$ compute $\vu_1 = - \arval{\vc_{\bitb}} \ve_{{\val}_1} + (\vc_{\val} - 2 \arval{\vc_{\bitb}} \vc_{\val}) \arval{\ve_{{\bitb}_1}} + (2 \arval{\vc_{\bitb}} - 1) \ve_{{\wz}_1} + \sqrA{\av{\wz}}$, and $\jsendf$ $\vu_1$ to $P_2$.
		\item[--] $P_0, P_2$ compute $\vu_2 = - \arval{\vc_{\bitb}} \ve_{{\val}_0} + (\vc_{\val} - 2 \arval{\vc_{\bitb}} \vc_{\val}) \arval{\ve_{{\bitb}_0}} + (2 \arval{\vc_{\bitb}} - 1) \ve_{{\wz}_0} + \sqrB{\av{\wz}}$, and $\jsendf$ $\vu_2$ to $P_1$.
		\item[--] $P_1, P_2$ compute $\bv{\wz} = \vu_1 + \vu_2 - \arval{\vc_{\bitb}} \ve_{{\val}_2} + (\vc_{\val} - 2 \arval{\vc_{\bitb}} \vc_{\val}) \arval{\ve_{{\bitb}_2}} + (2 \arval{\vc_{\bitb}} - 1) \ve_{{\wz}_2} + \arval{\vc_{\bitb}} \vc_{\val}$. 
		\item[--] $P_1, P_2$ $\jsendf$ $\bv{\wz} + \gv{\wz}$ to $P_0$.
	\end{myitemize} 
	
\end{protocolbox}

\paragraph{Dot Product} 
Given $\shr{\cdot}$-shares of two $\nf$-sized vectors $\vecX, \vecY$, protocol $\pifdotp$ (\boxref{fig:p4pcdotp}) enables servers to compute $\shr{\wz}$ with $\wz = \vecX \band \vecY$. The protocol is essentially similar to $\nf$ instances of multiplications of the form $\wz_i = \wx_i \wy_i$ for $i \in [\nf]$. But instead of communicating values corresponding to each of the $\nf$ instances, servers locally sum up the shares and communicate a single value. This helps to obtain a communication cost independent of the size of the vectors.

In more detail, the dot product protocol proceeds as follows.
During the preprocessing phase, similar to the multiplication protocol $P_0, P_1, P_3$ sample a random $\sqr{\Gammaxdy}_1$. $P_0, P_3$ compute $\Gammaxdy = \sum_{i=1}^{\nf} \av{\wx_i} \av{\wy_i}$ and $\jsendf$ $\sqr{\Gammaxdy}_2 = \Gammaxdy - \sqr{\Gammaxdy}_1$ to $P_2$. 
$P_1, P_2, P_3$ sample a random $\psi$, and generate its $\sqr{\cdot}$-shares locally. Servers $P_3, P_j$ for $j \in \EInSet$ then compute $\sqr{\Chi}_j = \sum_{i=1}^{\nf} (\gv{\wx_i}\sqr{\av{\wy_i}}_j + \gv{\wy_i}\sqr{\av{\wx_i}}_j) +  \sqr{\Gammaxdy}_j - \sqr{\psi}_j$, and $\jsendf$ $\sqr{\Chi}_j$ to $P_0$.
The formal protocol is given in \boxref{fig:p4pcdotp}. 

\begin{protocolbox}{$\pifdotp (\Partyset, \{\shr{\wx_i}, \shr{\wy_i}\}_{i \in [\nf]})$}{4PC: Dot Product Protocol ($\wz = \vecX \band \vecY$)}{fig:p4pcdotp}
	\justify
	\algoHead{Preprocessing}:
	\begin{myitemize}
		\item[--] $P_0,P_3,P_j$, for $j \in \EInSet$, sample random $\sqr{\av{\wz}}_{j}\in \Z{\ell}$, while $P_0, P_1, P_3$ sample random $\sqr{\Gammaxdy}_1 \in \Z{\ell}$.
		\item[--] $P_1$, $P_2$, $P_3$ together sample random $\gv{\wz}, \psi, \vr \in \Z{\ell}$ and set $\sqrA{\psi} = \vr, ~\sqrB{\psi} = \psi - \vr$. 
		\item[--] $P_0, P_3$ compute $\sqr{\Gammaxdy}_2 = \Gammaxdy - \sqr{\Gammaxdy}_1$, where $\Gammaxdy = \sum_{i=1}^{\nf} \av{\wx_i}\av{\wy_i}$. 
		$P_0, P_3$ $\jsendf$ $\sqr{\Gammaxdy}_2$ to $P_2$.
		\item[--] $P_3, P_j$, for $j \in \EInSet$, set $\sqr{\Chi}_j = \sum_{i=1}^{\nf} (\gv{\wx_i}\sqr{\av{\wy_i}}_j +  \gv{\wy_i}\sqr{\av{\wx_i}}_j)\allowbreak + \sqr{\Gammaxdy}_j - \sqr{\psi}_j$. 
		\item[--] $P_1, P_3$ $\jsendf$ $\sqr{\Chi}_1$ to $P_0$, and $P_2, P_3$ $\jsendf$ $\sqr{\Chi}_2$ to $P_0$.
	\end{myitemize}
	\justify\vspace{-3mm}
	\algoHead{Online}:
	\begin{myitemize}
		\item[--] $P_0, P_j$, for $j \in \EInSet$, compute $\sqr{\starbeta{\wz}}_j = - \sum_{i=1}^{\nf}  ((\bv{\wx_i}+\gv{\wx_i})\sqr{\av{\wy_i}}_j + (\bv{\wy_i}+\gv{\wy_i})\sqr{\av{\wx_i}}_j) + \sqr{\av{\wz}}_j + \sqr{\Chi}_j$. 
		\item[--] $P_1, P_0$ $\jsendf$ $\sqr{\starbeta{\wz}}_1$ to $P_2$, while $P_2, P_0$ $\jsendf$ $\sqr{\starbeta{\wz}}_2$ to $P_1$.
		\item[--] $P_j$ for $j \in \EInSet$ computes $\starbeta{\wz} = \sqr{\starbeta{\wz}}_1 + \sqr{\starbeta{\wz}}_2$ and sets $\bv{\wz} = \starbeta{\wz} + \sum_{i=1}^{\nf} (\bv{\wx_i}\bv{\wy_i}) + \psi$.
		\item[--] $P_1, P_2$ $\jsendf$ $\bv{\wz}+\gv{\wz}$ to $P_0$.
	\end{myitemize}
\end{protocolbox}

\begin{lemma}[Communication]
	\label{app:pifdotp}
	$\pifdotp$ (\boxref{fig:p4pcdotp}) requires an amortized communication of $3 \ell$ bits in the preprocessing phase, and $1$ round and an amortized communication of $3 \ell$ bits in the online phase.
\end{lemma}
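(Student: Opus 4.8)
The plan is to mirror the accounting already carried out in the proof of Lemma~\ref{app:pifmul} for $\pifmul$, since $\pifdotp$ is, at its core, $\nf$ parallel multiplication instances $\wz_i = \wx_i\wy_i$ in which every server aggregates its per-coordinate shares \emph{before} any bits are sent. First I would note that all the sampling steps — of $\sqr{\av{\wz}}_j$, $\gv{\wz}$, $\psi$, $\vr$ and $\sqr{\Gammaxdy}_1$ — are non-interactive, using the pre-shared PRF keys from $\FSETUPf$, and that the computations of $\Gammaxdy = \sum_{i=1}^{\nf}\av{\wx_i}\av{\wy_i}$, of $\sqr{\Chi}_j = \sum_{i=1}^{\nf}(\gv{\wx_i}\sqr{\av{\wy_i}}_j + \gv{\wy_i}\sqr{\av{\wx_i}}_j) + \sqr{\Gammaxdy}_j - \sqr{\psi}_j$, of $\sqr{\starbeta{\wz}}_j$, and of $\bv{\wz}$ are purely local. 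Consequently the entire communication in both phases comes from $\piJmpF$ invocations, each of which, by Lemma~\ref{app:piJmpF}, costs one round and an amortized $\ell$ bits irrespective of the payload, because the verify phase of $\piJmpF$ is batched over all its calls.

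For the preprocessing phase I would then simply tally the $\jsendf$ calls: one $\jsendf$ of $\sqr{\Gammaxdy}_2$ from $P_0,P_3$ to $P_2$ ($\ell$ bits), followed by two parallel $\jsendf$ calls of $\sqr{\Chi}_1$ and $\sqr{\Chi}_2$ to $P_0$ ($2\ell$ bits). Because the sum over $i\in[\nf]$ has already been folded into each of $\sqr{\Gammaxdy}_2$ and $\sqr{\Chi}_j$, these are single ring elements, so the total is $3\ell$ bits, independent of $\nf$.

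For the online phase the tally is: two parallel $\jsendf$ calls sending $\sqr{\starbeta{\wz}}_1$ to $P_2$ and $\sqr{\starbeta{\wz}}_2$ to $P_1$, costing one round and an amortized $2\ell$ bits; then one $\jsendf$ of $\bv{\wz}+\gv{\wz}$ to $P_0$, costing another round and $\ell$ bits. As in $\pifmul$, this last $\jsendf$ towards $P_0$ can be deferred to the end of the computation and batched over all dot-product and multiplication gates, so it adds only a single round for the whole circuit; in the amortized sense the online phase therefore uses one round and $3\ell$ bits in total. Here too $\starbeta{\wz}$ and $\bv{\wz}+\gv{\wz}$ are single elements of $\Z{\ell}$ thanks to the local aggregation, so no dependence on $\nf$ survives.

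The only real subtlety — and hence the main point to get right — is making precise that every transmitted quantity is genuinely independent of $\nf$: one must check that in $\pifdotp$ each value put on the wire ($\sqr{\Gammaxdy}_2$, $\sqr{\Chi}_j$, $\sqr{\starbeta{\wz}}_j$, $\bv{\wz}+\gv{\wz}$) is the sum of the $\nf$ per-coordinate contributions formed locally, and that the correctness relation (the 4PC analogue of $\eqref{eq:4}$) indeed holds under this summation so that no additional communication is needed to reconcile the aggregated shares. Everything else reduces to a direct application of Lemma~\ref{app:piJmpF} together with the bookkeeping already done for $\pifmul$.
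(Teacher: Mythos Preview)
Your proposal is correct and follows essentially the same approach as the paper: both proofs simply tally the three $\piJmpF$ calls in preprocessing ($\sqr{\Gammaxdy}_2$ to $P_2$, then $\sqr{\Chi}_1,\sqr{\Chi}_2$ to $P_0$) for $3\ell$ bits, and the three $\piJmpF$ calls online ($\sqr{\starbeta{\wz}}_1,\sqr{\starbeta{\wz}}_2$ in parallel, then $\bv{\wz}+\gv{\wz}$ deferred) for one amortized round and $3\ell$ bits. Your additional emphasis on the $\nf$-independence of each transmitted element is a helpful clarification but not a departure from the paper's argument.
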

\begin{proof}
	The preprocessing phase requires three calls to $\piJmpF$, one to $\jsendf$ $\sqrB{\Gammaxdy}$ to $P_2$, and two to $\jsendf$ $\sqrA{\Chi}, \sqrB{\Chi}$ to $P_0$. Each invocation of $\piJmpF$ requires $\ell$ bits resulting in the amortized communication cost of preprocessing phase to be $3 \ell$ bits.
	In the online phase, there are $2$ parallel invocations of $\piJmpF$ to $\jsendf$ $\sqrA{\starbeta{\wz}}, \sqrB{\starbeta{\wz}}$ to $P_2, P_1$, respectively, which require amortized communication of $2 \ell$ bits and one round. This is followed by another call to $\piJmpF$ to $\jsendf$ $\bv{\wz} + \gv{\wz}$ to $P_0$ which requires one more round and amortized communication of $\ell$ bits. As in the multiplication protocol, this step can be delayed till the end of the protocol and clubbed for multiple instances. Thus, the online phase requires one round and an amortized communication of $3 \ell$ bits. 
\end{proof}

\paragraph{Truncation}
\label{app:trunc4}
Given the $\shrd$-sharing of a value $\val$ and a random truncation pair $(\sqr{\vr}, \shr{\vrt})$, the $\shrd$-sharing of the truncated value $\trunc{\val}$ (right shifted value by, say, $d$  positions) can be computed as follows. Servers open the value $(\val - \vr)$, truncate it and add it to $\shr{\trunc{\vr}}$ to obtain $\shr{\trunc{\val}}$. The protocol for generating the truncation pair $(\sqr{\vr}, \shr{\vrt})$ is described in \boxref{fig:p4pctrgen}.
%

\begin{mypbox}{$\piftrgen (\Partyset)$}{4PC: Generating Random Truncated Pair $(\vr, \vrt)$}{fig:p4pctrgen}
	\justify
	\begin{myitemize}
		\item[--] $P_0, P_3, P_j$, for $j \in \EInSet$ sample random $R_j \in \Z{\ell}$. $P_0, P_3$ sets $\vr = R_1 + R_2$ while $P_j$ sets $\sqr{\vr}_j = R_j$. 
		\item[--] $P_0, P_3$ locally truncate $\vr$ to obtain $\vr^d$ and execute $\pifjsh(P_0, P_3, \vr^d)$ to generate $\shr{\vr^d}$.
	\end{myitemize}	
\end{mypbox}

\begin{lemma}[Communication]
	\label{app:piftrgen}
	$\piftrgen$ (\boxref{fig:p4pctrgen}) requires $1$ round and an amortized communication of $\ell$ bits in the online phase.
\end{lemma}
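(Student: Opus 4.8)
The plan is to observe that $\piftrgen$ consists entirely of local operations except for a single joint-sharing call, and to charge that call using the optimized $(P_0,P_3)$ variant of $\pifjsh$ together with the cost of $\piJmpF$.

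First I would note that the first bullet of \boxref{fig:p4pctrgen} is non-interactive: $P_0, P_3, P_1$ jointly sample $R_1 \in \Z{\ell}$ and $P_0, P_3, P_2$ jointly sample $R_2 \in \Z{\ell}$ using the common PRF keys from $\FSETUPf$; then $P_0$ and $P_3$ locally set $\vr = R_1 + R_2$, each $P_j$ for $j \in \EInSet$ locally sets $\sqr{\vr}_j = R_j$, and $P_0, P_3$ locally truncate $\vr$ to obtain $\vr^d$. No server sends a message in any of these steps, so they contribute nothing to the round or communication count.

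Next I would account for the only interactive step, $\pifjsh(P_0, P_3, \vr^d)$. The crucial point is that $\vr^d$ is already held by both $P_0$ and $P_3$ without any further interaction, so it qualifies as a value ``available in the preprocessing phase'' in the sense of the remark following \boxref{fig:p4pcjsh}; hence the servers may use the single-element optimization of $\pifjsh$, in which $P_0, P_3, P_1$ sample $\sqrA{\av{\vr^d}}$ from shared randomness, $P_0, P_3$ locally set $\sqrB{\av{\vr^d}} = -(\sqrA{\av{\vr^d}} + \vr^d)$ and $\gv{\vr^d} = \bv{\vr^d} = 0$, and $P_0, P_3$ $\jsendf$ $\sqrB{\av{\vr^d}}$ to $P_2$. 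This is a single invocation of $\piJmpF$, which by Lemma~\ref{app:piJmpF} costs $1$ round and an amortized communication of $\ell$ bits (the hash sent by the second sender and the bit-exchange sub-phase amortize over many instances). Combining this with the first observation yields exactly the claimed bound.

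The step I expect to require the most care is justifying the use of the cheaper $\ell$-bit realization of $\pifjsh$: Lemma~\ref{app:pifjsh} assigns $2\ell$ bits to a generic $(P_3,P_s)$ joint sharing, so one must argue explicitly that because $\vr^d$ depends only on the locally sampled values $R_1, R_2$ (and on nothing online-dependent) it can be treated as preprocessing-known, which is precisely the condition enabling the one-$\piJmpF$ instantiation. Everything else is bookkeeping: adding the (zero) cost of the local steps to the $(1\text{ round}, \ell\text{ bits})$ cost of that single $\piJmpF$ call.
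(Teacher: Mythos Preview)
Your proposal is correct and follows the same approach as the paper's proof, which simply states that the cost follows directly from that of $\piJmpF$ (Lemma~\ref{app:piJmpF}) and $\pifjsh$ (Lemma~\ref{app:pifjsh}). Your explicit justification that the optimized $(P_0,P_3)$ variant of $\pifjsh$ applies---because $\vr^d$ is computed purely from PRF-sampled randomness and hence is preprocessing-available---is exactly the point needed to reconcile the $\ell$-bit claim with the generic $2\ell$-bit bound of Lemma~\ref{app:pifjsh}, and the paper leaves this implicit.
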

\begin{proof}
	The cost follows directly from that of $\piJmpF$~(Lemma~\ref{app:piJmpF} and \ref{app:pifjsh}). 
\end{proof}


\paragraph{Dot Product with Truncation}
\label{app:dotpt4}
Protocol $\pifdotpt$~(\boxref{fig:p4pcdotpt}) enables servers to generate $\shrd$-sharing of the truncated value of $\wz = \vecX \band \vecY$, denoted as $\trunc{\wz}$, given the $\shrd$-sharing of $\nf$-sized vectors $\vecX$ and $\vecY$. This protocol is similar to the 3PC protocol.

\begin{protocolbox}{$\pifdotpt (\Partyset, \{\shr{\wx_i}, \shr{\wy_i}\}_{i \in [\nf]})$}{4PC: Dot Product Protocol with Truncation}{fig:p4pcdotpt}
	\justify
	\algoHead{Preprocessing}:
	\begin{myitemize}
		\item[--] Servers execute the preprocessing phase of $\pifdotp(\Partyset,\allowbreak \{\shr{\wx_i}, \shr{\wy_i}\}_{i \in [\nf]})$.
		\item[--] Servers execute $\piftrgen(\Partyset)$ to generate the truncation pair $(\sqr{\vr}, \shr{\vr^d})$. $P_0$ obtains the value $\vr$ in clear.
	\end{myitemize}	
	\justify\vspace{-3mm}
	\algoHead{Online}:
	\begin{myitemize} 
		\item[--] $P_0, P_j$, for $j \in \EInSet$, compute $\sqrV{\Psi}{j} = -\sum_{i=1}^{\nf} ((\bv{\wx_i}+\gv{\wx_i})\sqrV{\av{\wy_i}}{j} + (\bv{\wy_i}+\gv{\wy_i})\sqrV{\av{\wx_i}}{j})  - \sqrV{\vr}{j}$ and sets $\sqrV{{(\wz - \vr)}^\star}{j} = \sqrV{\Psi}{j} + \sqrV{\Chi}{j}$. 
		\item[--] $P_1, P_0$ $\jsendf$ $\sqrV{{(\wz - \vr)}^\star}{1}$ to $P_2$ and $P_2, P_0$ $\jsendf$ $\sqrV{{(\wz - \vr)}^\star}{2}$ to $P_1$.
		\item[--] $P_1,P_2$ locally compute $(\wz - \vr)^\star = \sqrV{{(\wz - \vr)}^\star}{1} + \sqrV{{(\wz - \vr)}^\star}{2}$ and set $(\wz - \vr) =  {(\wz - \vr)}^\star + \sum_{i=1}^{\nf}(\bv{\wx_i}\bv{\wy_i})  + \psi$. 
		\item[--] $P_1,P_2$ locally truncate $(\wz - \vr)$ to obtain $\trunc{(\wz - \vr)}$ and execute $\pifjsh(P_1, P_2, \trunc{(\wz - \vr)})$ to generate $\shr{\trunc{(\wz - \vr)}}$.
		\item[--] Servers locally compute $ \shr{\trunc{\wz}} = \shr{\trunc{(\wz - \vr)}} + \shr{\vrt}$ .
	\end{myitemize}
\end{protocolbox}

\begin{lemma}[Communication]
	\label{app:pifdotpt}
	$\pifdotpt$ (\boxref{fig:p4pcdotpt}) requires an amortized communication of $4 \ell$ bits in the preprocessing phase, and $1$ round with amortized communication of $3 \ell$ bits in the online phase.
\end{lemma}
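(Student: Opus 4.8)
The plan is to read both figures off the structure of $\pifdotpt$ (\boxref{fig:p4pcdotpt}) by summing the communication of its sub-protocols, and then to discharge the two amortization claims that make the bounds tight: the cheap $\ell$-bit cost of the $(P_0,P_3)$ joint sharing hidden inside $\piftrgen$, and the deferral of the terminal $\pifjsh$ so that it costs only one round over the whole circuit. There is no genuine mathematical content here; the care needed is entirely in deciding which branch of the sub-protocol cost lemmas applies.

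First I would bound the preprocessing phase. By \boxref{fig:p4pcdotpt} it consists of exactly two actions: the preprocessing of $\pifdotp$, and one call to $\piftrgen(\Partyset)$. By Lemma~\ref{app:pifdotp} the former has an amortized cost of $3\ell$ bits (three invocations of $\piJmpF$ --- one to deliver $\sqrB{\Gammaxdy}$ to $P_2$ and two to deliver $\sqrA{\Chi}, \sqrB{\Chi}$ to $P_0$, each $\ell$ bits by Lemma~\ref{app:piJmpF}). For $\piftrgen$ I would observe that $R_1, R_2$ are sampled non-interactively via the key setup, so its only communication is the call $\pifjsh(P_0, P_3, \vr^d)$; since $\vr^d$ is a preprocessing value, this invocation uses the optimized $(P_0,P_3)$ branch of $\pifjsh$ (a single ring element, $\ell$ bits --- cf.\ Lemma~\ref{app:piftrgen}), not the generic $2\ell$-bit bound of Lemma~\ref{app:pifjsh}. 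Adding up gives $3\ell + \ell = 4\ell$ bits, as claimed. I would also note that the single round of this $\pifjsh$ lives in the preprocessing phase and is itself amortized, so it does not enter the online round count.

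Next I would bound the online phase. The online steps of \boxref{fig:p4pcdotpt} are: (a) $P_0, P_j$ for $j\in\EInSet$ compute $\sqrV{\Psi}{j}$ and set $\sqrV{(\wz-\vr)^\star}{j}$ --- purely local, using $\sqrV{\vr}{j}$ and $\sqrV{\Chi}{j}$ fixed in preprocessing; (b) two parallel invocations of $\piJmpF$, sending $\sqrV{(\wz-\vr)^\star}{1}$ to $P_2$ and $\sqrV{(\wz-\vr)^\star}{2}$ to $P_1$, which by Lemma~\ref{app:piJmpF} together cost one round and an amortized $2\ell$ bits; (c) $P_1, P_2$ locally reconstruct $(\wz-\vr)^\star$, form $(\wz-\vr)$, and locally truncate it --- no communication; (d) $\pifjsh(P_1,P_2,\trunc{(\wz-\vr)})$, which by Lemma~\ref{app:pifjsh} is one round and $\ell$ bits (a single $\piJmpF$ of $\bv{\cdot}+\gv{\cdot}$ to $P_0$; its preprocessing, sampling $\gv{\cdot}$, is non-interactive); and (e) a local addition of $\shr{\trunc{(\wz-\vr)}}$ and $\shr{\vrt}$. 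Summing the communication gives $2\ell + \ell = 3\ell$ bits. For the round count I would invoke the same batching argument as in the multiplication and 3PC dot-product-with-truncation lemmas (Lemmas~\ref{app:pifmul} and~\ref{app:piDotPTr}): the $\piJmpF$ inside step (d) can be postponed to the very end of the computation and combined across all dot-product-with-truncation gates, so it contributes only a single round for the entire circuit, while the only genuinely per-gate round on the critical path is step (b). Hence the online phase needs $1$ round and an amortized $3\ell$ bits.

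The only part that is not pure arithmetic is distinguishing which $\pifjsh$ cost applies where --- the $\ell$-bit preprocessing-optimized branch for $(P_0,P_3)$ in $\piftrgen$ versus the general bounds of Lemma~\ref{app:pifjsh} --- and the standard ``defer the last $\piJmpF$ to the end'' amortization for the round count; both are already justified in the analogous lemmas cited above, so I would simply point to those rather than re-derive them.
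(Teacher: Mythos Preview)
Your proposal is correct and follows essentially the same approach as the paper: decompose the preprocessing into the preprocessing of $\pifdotp$ ($3\ell$ bits) plus $\piftrgen$ ($\ell$ bits), and decompose the online phase into the two parallel $\piJmpF$ calls ($2\ell$ bits, one round) plus the deferred $\pifjsh(P_1,P_2,\cdot)$ ($\ell$ bits, amortized round). Your write-up is simply more explicit than the paper's about which branch of $\pifjsh$ applies inside $\piftrgen$ and why the terminal joint sharing can be batched, but the underlying accounting is identical.
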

\begin{proof}
	The preprocessing phase comprises of the preprocessing phase of $\pifdotp$ and $\piftrgen$	which results in an amortized communication of $3 \ell + \ell = 4 \ell$ bits. 
	The online phase follows from that of $\pifdotp$ protocol except that, now, $\ESet$ compute $\sqr{\cdot}$-shares of $\wz - \vr$. This requires one round and an amortized communication cost of $2 \ell$ bits. $\ESet$ then jointly share the truncated value of $\wz - \vr$ with $P_0$, which requires $1$ round and $\ell$ bits. However, this step can be deferred till the end for multiple instances, which results in amortizing this round. The total amortized communication is thus $3 \ell$ bits in online phase. 
\end{proof}

\paragraph{Activation Functions} \label{appsec:actFunc4}
Here, as in the 3PC case, we consider two activation functions --  {\em ReLU} and {\em Sig}.
\begin{lemma}[Communication]
	\label{app:pifReLU}
	Protocol for $\ReLU$ requires an amortized communication of $13 \ell - 2$ bits in the preprocessing phase and requires $\log \ell + 1$ rounds and an amortized communication of $10 \ell - 6$ bits in the online phase.
\end{lemma}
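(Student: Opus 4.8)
The plan is to derive this purely by composition, mirroring exactly the argument used for the 3PC $\ReLU$ lemma. Recall that $\ReLU(\val) = \maxv(0,\val) = \overline{\bitb}\cdot\val$ where $\bitb = \MSB(\val)$, so an execution of $\ReLU$ on input $\shr{\val}$ decomposes into three stages: (i) one invocation of $\pifbitext$ on $\shr{\val}$ to obtain $\shrB{\bitb}$; (ii) a local bit-complement step producing $\shrB{\overline{\bitb}}$; and (iii) one invocation of $\piBitInjF$ on $\shrB{\overline{\bitb}}$ and $\shr{\val}$ to obtain $\shr{\overline{\bitb}\val}$. I would first argue that stage (ii) is non-interactive: since $\overline{\bitb} = 1 \xor \bitb$ over $\Z{1}$, keeping $\av{\overline{\bitb}} = \av{\bitb}$ and $\gv{\overline{\bitb}} = \gv{\bitb}$ forces $\bv{\overline{\bitb}} = 1 \xor \bv{\bitb}$, so only the public offset held by $P_1, P_2$ (and, in combined form, by $P_0$) is flipped — no $\piJmpF$ call, no extra round.

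Next I would invoke the already-established sub-protocol costs. By Lemma~\ref{app:pifbitext}, $\pifbitext$ costs $7\ell - 6$ bits in the preprocessing phase and $\log\ell$ rounds with $7\ell - 6$ bits in the online phase; by Lemma~\ref{app:PifBitInj}, $\piBitInjF$ costs $6\ell + 4$ bits in the preprocessing phase and $1$ round with $3\ell$ bits in the online phase. The preprocessing of the two sub-protocols can be run in parallel, so the total preprocessing communication is $(7\ell - 6) + (6\ell + 4) = 13\ell - 2$ bits. In the online phase the stages are necessarily sequential, since the input to $\piBitInjF$ is the output of $\pifbitext$, so rounds and communication add: $\log\ell + 1$ rounds and $(7\ell - 6) + 3\ell = 10\ell - 6$ bits, which is exactly the claimed bound.

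There is no real obstacle here — the statement is a bookkeeping corollary of the two sub-protocol lemmas. The only points worth double-checking in writing it up are that (a) the bit-complement genuinely introduces no communication and preserves the $\shrB{\cdot}$-sharing invariant, as argued above, and (b) chaining $\pifbitext$ into $\piBitInjF$ does not lose a round — it does not, because $\piBitInjF$'s online phase still begins with its $\jsendf$ of $\vu_1, \vu_2$ after $\pifbitext$ terminates, and its deferred final $\jsendf$ of $\bv{\wz} + \gv{\wz}$ to $P_0$ (amortized across instances) does not affect the per-instance round count. With these two remarks noted, the lemma follows immediately.
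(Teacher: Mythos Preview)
Your proposal is correct and matches the paper's own proof essentially line for line: the paper also decomposes $\ReLU$ into one call to $\pifbitext$ followed by one call to $\piBitInjF$ and simply cites Lemmas~\ref{app:pifbitext} and~\ref{app:PifBitInj} for the additive cost. Your additional remarks on the local bit-complement and the round accounting are sound elaborations but not required by the paper's terse argument.
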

\begin{proof}
	One instance of $\ReLU$ protocol comprises of execution of one instance of $\pifbitext$, followed by $\piBitInjF$. The cost, therefore, follows from Lemma~\ref{app:pifbitext},  and Lemma~\ref{app:PifBitInj}. 
\end{proof}

\begin{lemma}[Communication]
	\label{app:pifSig}
	Protocol for $\Sig$ requires an amortized communication of $23 \ell - 1$ bits in the preprocessing phase and requires $\log \ell + 2$ rounds and an amortized communication of $20 \ell - 9$ bits in the online phase.
\end{lemma}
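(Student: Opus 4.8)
The plan is to reduce the cost analysis of the 4PC Sigmoid protocol to an additive tally over its constituent building blocks, exactly paralleling the argument used for the 3PC variant in Lemma~\ref{app:piSig}. Recall the MPC-friendly form $\Sig(\val) = \overline{\bitb_1}\,\bitb_2\,(\val + \tfrac{1}{2}) + \overline{\bitb_2}$, where $\bitb_1$ is the sign bit of $\val + \tfrac{1}{2}$ and $\bitb_2$ is the sign bit of $\val - \tfrac{1}{2}$. First I would spell out the ordered sequence of robust sub-protocols that realise this: (i) two \emph{parallel} invocations of $\pifbitext$, one on $\shr{\val + \tfrac{1}{2}}$ and one on $\shr{\val - \tfrac{1}{2}}$ (the public constants $\pm\tfrac12$ are folded in locally by the linearity of $\shr{\cdot}$-sharing), yielding $\shrB{\bitb_1}$ and $\shrB{\bitb_2}$; (ii) local complementation to get $\shrB{\overline{\bitb_1}}$, followed by one invocation of $\pifmul$ over $\Z{1}$ to obtain $\shrB{\bitb}$ with $\bitb = \overline{\bitb_1}\,\bitb_2$; and (iii) \emph{in parallel}, one invocation of $\piBitInjF$ on $(\shrB{\bitb}, \shr{\val + \tfrac12})$ to obtain $\shr{\bitb(\val+\tfrac12)}$ and one invocation of $\pifbita$ on $\shrB{\overline{\bitb_2}}$ to obtain $\shr{\arval{\overline{\bitb_2}}}$; the final output $\shr{\Sig(\val)}$ is then a purely local linear combination.

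Next I would add up the quoted per-block costs. For the preprocessing phase: Lemma~\ref{app:pifbitext} gives $7\ell-6$ bits per $\pifbitext$, hence $2(7\ell-6) = 14\ell-12$ for step (i); instantiating Lemma~\ref{app:pifmul} with $\ell=1$ gives $3$ bits for the boolean $\pifmul$ in step (ii); Lemma~\ref{app:PifBitInj} gives $6\ell+4$ bits for $\piBitInjF$ and Lemma~\ref{app:pifbita} gives $3\ell+4$ bits for $\pifbita$ in step (iii). Summing, $14\ell-12 + 3 + 6\ell+4 + 3\ell+4 = 23\ell-1$ bits. For the online phase, the two $\pifbitext$ instances run in parallel over $\log\ell$ rounds and $14\ell-12$ bits; the boolean $\pifmul$ adds one round and $3$ bits; the parallel $\piBitInjF$ and $\pifbita$ add one further round and $3\ell + 3\ell = 6\ell$ bits. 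This totals $\log\ell + 2$ rounds and $14\ell-12 + 3 + 6\ell = 20\ell-9$ bits, matching the statement.

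The only delicate point — and what I would treat as the main obstacle — is justifying the round count of $\log\ell + 2$ rather than the naive $\log\ell + 4$ one would get by summing $(\log\ell) + 1 + (1)$ with the last $\jsendf$ of each of $\pifmul$, $\piBitInjF$ and $\pifbita$ counted separately. The argument is that each of these protocols is designed so that its terminal $\jsendf$ to $P_0$ can be deferred; when the blocks are composed (and further composed with the circuit-level verification phase of the complete 4PC framework), all such deferred $\jsendf$ rounds collapse into a single round executed once at the end. I would also note explicitly that the additive constants $\pm\tfrac12$ and the bit-complementations are handled for free and contribute nothing to either count, and that $\ReLU$-style reasoning (Lemma~\ref{app:pifReLU}) already exercises the same composition pattern, so the bookkeeping is routine once the deferral argument is in place.
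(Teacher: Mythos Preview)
Your proposal is correct and mirrors the paper's own argument exactly: the paper's proof decomposes $\Sig$ into (i) two parallel $\pifbitext$, (ii) one boolean $\pifmul$, and (iii) $\piBitInjF$ and $\pifbita$ in parallel, then defers to Lemmas~\ref{app:pifbitext}, \ref{app:pifbita}, and \ref{app:PifBitInj}. You have simply filled in the explicit arithmetic and spelled out the round-deferral argument that the paper leaves implicit.
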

\begin{proof}
	An instance of $\Sig$ protocol involves the execution of the following protocols in order-- i) two parallel instances of $\pifbitext$ protocol, ii) one instance of $\pifmul$ protocol over boolean value, and iii) one instance of $\piBitInjF$ and $\pifbita$ in parallel. The cost follows from Lemma~\ref{app:pifbitext}, Lemma~\ref{app:pifbita} and Lemma~\ref{app:PifBitInj}. 
\end{proof}


\section{Applications and Benchmarking}
\label{sec:Implementation}
In this section, we empirically show the practicality of our protocols for PPML. We consider training and inference for Logistic Regression, and inference for $3$ different Neural Networks~(NN). NN training requires additional tools to allow mixed world computations, which we leave as future work. We refer readers to SecureML~\cite{MohasselZ17}, ABY3~\cite{MR18}, BLAZE~\cite{BLAZE}, FALCON~\cite{Falcon} for a detailed description of the training and inference steps for the aforementioned ML algorithms. All our benchmarking is done over the publicly available MNIST~\cite{MNIST10} and CIFAR-10~\cite{CIFAR10} dataset. For training, we use a batch size of $B = 128$ and define $1$ KB = $8192$ bits.

In 3PC, we compare our results against the best-known framework BLAZE that provides fairness in the same setting. 
We observe that the technique of making the dot product cost independent of feature size can also be applied to BLAZE to obtain better costs. Hence, for a fair comparison, we additionally report these improved values for BLAZE. Further, we only consider the PPA circuit based variant of bit extraction for BLAZE since we aim for high throughput; the GC based variant results in huge communication and is not efficient for deep NNs. Our results imply that we get  GOD at no additional cost compared to BLAZE. For 4PC, we compare our results with two best-known works FLASH \cite{FLASH} (which is robust) and Trident \cite{Trident} (which is fair). Our results halve the cost of  FLASH and are on par with Trident.

\paragraph{Benchmarking Environment} 
We use a 64-bit ring ($\Z{64}$). The benchmarking is performed over a WAN that was instantiated using n1-standard-8 instances of Google Cloud\footnote{https://cloud.google.com/}, with machines located in East Australia ($P_0$), South Asia ($P_1$), South East Asia ($P_2$), and West Europe ($P_3$). The machines are equipped with 2.3 GHz Intel Xeon E5 v3 (Haswell) processors supporting hyper-threading, with 8 vCPUs, and 30 GB of RAM Memory and with a bandwidth of $40$ Mbps. The average round-trip time ($\rtt$) was taken as the time for communicating 1 KB of data between a pair of parties, and the $\rtt$ values were as follows.
\begin{center} 
	\resizebox{0.46\textwidth}{!}
	{
		\begin{tabular}{c c c c c c}
			\toprule
			$P_0$-$P_1$ & $P_0$-$P_2$ & $P_0$-$P_3$ & $P_1$-$P_2$ & $P_1$-$P_3$ & $P_2$-$P_3$\\
			\midrule
			$151.40 ms$ & $59.95 ms$ & $275.02 ms$ & $92.94 ms$  & $173.93 ms$  & $219.37 ms$\\
			\bottomrule 
		\end{tabular}
	}
\end{center}

\paragraph{Software Details} 
We implement our protocols using the publicly available ENCRYPTO library~\cite{ENCRYPTO} in C++17. We obtained the code of BLAZE and FLASH from the respective authors and executed them in our environment. The collision-resistant hash function was instantiated using SHA-256. We have used multi-threading, and our machines were capable of handling a total of 32 threads. Each experiment is run for 20 times, and the average values are reported.

\paragraph{Datasets} 
We use the following datasets:
\begin{myitemize}
	\item[-] MNIST~\cite{MNIST10} is a collection of $28 \times 28$ pixel, handwritten digit images along with a label between $0$ and $9$ for each image. It has $60,000$ and respectively, $10,000$ images in the training and test set. We evaluate logistic regression, and NN-1, NN-2 (cf. \S\ref{sec:nninf}) on this dataset.
	\item[-] CIFAR-10~\cite{CIFAR10} consists of $32 \times 32$ pixel images of $10$ different classes such as dogs, horses, etc. There are $50,000$ images for training and $10,000$ for testing, with $6000$ images in each class. We evaluate NN-3 (cf. \S\ref{sec:nninf}) on this dataset.
\end{myitemize}

\paragraph{Benchmarking Parameters}
We use {\em throughput} ($\TP$) as the benchmarking parameter following BLAZE and ABY3~\cite{MR18} as it would help to analyse the effect of improved communication and round complexity in a single shot. Here, $\TP$ denotes the number of operations (``iterations" for the case of training and ``queries" for the case of inference) that can be performed in unit time. We consider minute as the unit time since  most of our protocols over WAN requires more than a second to complete. An {\em iteration} in ML training consists of a {\em forward propagation} phase followed by a {\em backward propagation} phase. In the former phase, servers compute the output from the inputs. At the same time, in the latter, the model parameters are adjusted according to the difference in the computed output and the actual output. The inference can be viewed as one forward propagation of the algorithm alone. In addition to $\TP$, we provide the online and overall communication and latency for all the benchmarked ML algorithms.

We observe that due to our protocols' asymmetric nature, the  communication load is unevenly distributed among all the servers, which leaves several communication channels under-utilized. Thus, to improve the performance, we perform {\em load balancing}, where we run several parallel execution threads, each with roles of the servers changed. This helps in utilizing all channels and improving the performance.
We report the communication and runtime of the protocols for online phase and total (= preprocessing + online).

\subsection{Logistic Regression}
In Logistic Regression, one iteration comprises updating the weight vector $\vecW$ using the gradient descent algorithm (GD). It is updated according to the function given below:
	$\vecW = \vecW - \frac{\alpha}{B} \Mat{X}_i^{T}  \circ \left(\Sig(\Mat{X}_i \circ \vecW)-\Mat{Y}_i\right).$
where $\alpha$ and $\Mat{X}_i$ denote the learning rate, and a subset, of batch size B, randomly selected from the entire dataset in the $i$th iteration, respectively. The forward propagation comprises of computing the value $\Mat{X}_i \circ \vecW$ followed by an application of a sigmoid function on it. The weight vector is updated in the backward propagation, which internally requires the computation of a series of matrix multiplications, and can be achieved using a dot product. The update function can be computed  using $\shr{\cdot}$ shares as:
	$\shr{\vecW} = \shr{\vecW} - \frac{\alpha}{B} \shr{\Mat{X}_j^{T}}  \circ (\Sig(\shr{\Mat{X}_j} \circ \shr{\vecW}) - \shr{\Mat{Y}_j}).$
We summarize our results in \tabref{log_reg}. 

	\begin{table}[htb!]
		\centering
		\resizebox{\textwidth}{!}{
			\begin{tabular}{r | c | r | r | r | r | r }
				\toprule
				\multirow{2}[2]{*}{Setting} & \multirow{2}[2]{*}{Ref.} & 
				\multicolumn{3}{c|}{Online ($\TP$ in $\times 10^3$)} & 
				\multicolumn{2}{c}{Total} \\ \cmidrule{3-7}
				& & Latency (s) & Com~[KB] & $\TP$ & Latency (s) & Com~[KB] \\
				\midrule 
				\multirow{2}{*}{\makecell[r]{3PC\\Training}}     
				&  BLAZE        &  $0.74$  & $50.26$ & $4872.38$ & $0.93$ & $203.35$ \\ 
				&  {\bf SWIFT} &  $1.05$  & $50.32$ & $4872.38$ & $1.54$  & $203.47$ \\ \midrule
				\multirow{2}{*}{\makecell[r]{3PC\\Inference}}     
				&  BLAZE        & $0.66$  & $0.28$ & $7852.05$ & $0.84$ & $0.74$ \\ 
				&  {\bf SWIFT} & $0.97$  & $0.34$ & $6076.46$ & $1.46$  & $0.86$\\ \midrule
				\multirow{2}{*}{\makecell[r]{4PC\\Training}}     
				&  FLASH        & $0.83$ & $88.93$ & $5194.18$ & $1.11$ & $166.75$ \\ 
				&  {\bf SWIFT} & $0.83$ & $41.32$ & $11969.48$ & $1.11$ & $92.91$ \\ \midrule
				\multirow{2}{*}{\makecell[r]{4PC\\Inference}}     
				&  FLASH        & $0.76$ & $0.50$ & $7678.40$ & $1.04$ & $0.96$ \\ 
				&  {\bf SWIFT} & $0.75$ & $0.27$ & $15586.96$ & $1.03$ & $0.57$ \\ 
				\bottomrule
			\end{tabular}
		}
		\vspace{-2mm}
		\caption{\small Logistic Regression training and inference. $\TP$ is given in (\#it/min) for training and (\#queries/min) for inference. \label{tab:log_reg}}
	\end{table}

We observe that the online $\TP$ for the case of 3PC inference is slightly lower compared to that of BLAZE. This is because the total number of rounds for the inference phase is slightly higher in our case due to the additional rounds introduced by the verification mechanism (aka {\em verify} phase which also needs broadcast). This gap becomes less evident for protocols with more number of rounds, as is demonstrated in the case of NN (presented next), where verification for several iterations is clubbed together, making the overhead for verification insignificant. 

For the case of 4PC, our solution outperforms FLASH in terms of communication as well as throughput. Concretely, we observe a $2\times$ improvement in $\TP$ for inference and a $2.3\times$ improvement for training. For Trident~\cite{Trident}, we observe a drop of $15.86 \%$ in $\TP$ for inference due to the extra rounds required for verification to achieve GOD. This loss is, however, traded-off with the stronger security guarantee. For training, we are on par with Trident as the effect of extra rounds becomes less significant for more number of rounds, as will also be evident from the comparisons for NN inference.  

As a final remark, note that our 4PC sees roughly $2.5\times$ improvement compared to our 3PC for logistic regression. 

\subsection{NN Inference} 
\label{sec:nninf}
We consider the following popular neural networks for benchmarking. These are chosen based on the different range of model parameters and types of layers used in the network. We refer readers to \cite{Falcon} for a detailed architecture of the neural networks.

\noindent {\bf NN-1:} This is a $3$-layered fully connected network with ReLU activation after each layer. This network has around $118$K parameters and is chosen from ~\cite{MR18,BLAZE}. 

\noindent {\bf NN-2:} This network, called LeNet \cite{lenet}, contains $2$ convolutional layers and $2$ fully connected layers with ReLU activation after each layer, additionally followed by maxpool for convolutional layers. This network has approximately $431$K parameters.

\noindent {\bf NN-3:} This network, called VGG16 \cite{vgg16}, was the runner-up of ILSVRC-2014 competition. This network has $16$ layers in total and comprises of fully-connected, convolutional, ReLU activation and maxpool layers. This network has about $138$ million parameters. 

\begin{table}[htb!]
	\centering
	\resizebox{\textwidth}{!}{
		\begin{tabular}{r | c | r | r | r | r | r }
			\toprule
			\multirow{2}[2]{*}{Network} & \multirow{2}[2]{*}{Ref.} &  \multicolumn{3}{c|}{Online} & \multicolumn{2}{c}{Total}  \\ \cmidrule{3-7 }
			& & Latency (s) & Com~[MB] & $\TP$ & Latency (s) & Com~[MB]\\
			\midrule 
			\multirow{2}{*}{\makecell[r]{NN-1}}    
			&  BLAZE        & $1.92$  & $0.04$ & $49275.19$ & $2.35$ & $0.11$ \\ 
			&  {\bf SWIFT} & $2.22$ & $0.04$ & $49275.19$ & $2.97$  & $0.11$\\ \midrule
			\multirow{2}{*}{\makecell[r]{NN-2}}    
			&  BLAZE        & $4.77$  & $3.54$ & $536.52$ & $5.61$ & $9.59$ \\ 
			&  {\bf SWIFT} & $5.08$ & $3.54$ & $536.52$ & $6.22$ & $9.59$\\ 
			\midrule 
			\multirow{2}{*}{\makecell[r]{NN-3}}    
			&  BLAZE        & $15.58$ & $52.58$ & $36.03$ & $18.81$ & $148.02$\\ 
			&  {\bf SWIFT} & $15.89$ & $52.58$ & $36.03$ & $19.29$ & $148.02$\\ 
			\bottomrule
		\end{tabular}
	}
	\vspace{-2mm}
	\caption{\small 3PC NN Inference.  $\TP$ is given in (\#queries/min).\label{tab:nn_inf_3pc}}
\end{table}

\tabref{nn_inf_3pc} summarise our benchmarking results for 3PC NN inference. As illustrated, the performance of our 3PC framework is on par with BLAZE while providing better security guarantee. 

\begin{table}[htb!]
	\centering
	\resizebox{\textwidth}{!}{
		\begin{tabular}{r | c | r | r | r | r | r }
			\toprule
			\multirow{2}[2]{*}{Network} & \multirow{2}[2]{*}{Ref.} &  \multicolumn{3}{c|}{Online} & \multicolumn{2}{c}{Total}  \\ \cmidrule{3-7 }
			& & Latency (s) & Com~[MB] & $\TP$ & Latency (s) & Com~[MB]\\
			\midrule 
			\multirow{2}{*}{\makecell[r]{NN-1}}    
			&  FLASH        & $1.70$ & $0.06$ & $59130.23$ & $2.17$ & $0.12$ \\ 
			&  {\bf SWIFT} & $1.70$ & $0.03$ & $147825.56$ & $2.17$ & $0.06$ \\ \midrule
			\multirow{2}{*}{\makecell[r]{NN-2}}    
			&  FLASH        & $3.93$ & $5.51$  & $653.67$ & $4.71$ & $10.50$ \\ 
			&  {\bf SWIFT} & $3.93$ & $2.33$ & $1672.55$ & $4.71$ & $5.40$ \\ 
			\midrule 
			\multirow{2}{*}{\makecell[r]{NN-3}}    
			&  FLASH        & $12.65$ & $82.54$ & $43.61$ & $15.31$ & $157.11$\\ 
			&  {\bf SWIFT} & $12.50$ & $35.21$ & $110.47$ & $15.14$ & $81.46$\\ 
			\bottomrule
		\end{tabular}
	}
	\vspace{-2mm}
	\caption{\small 4PC NN Inference.  $\TP$ is given in (\#queries/min).\label{tab:nn_inf_4pc}}
\end{table}

\tabref{nn_inf_4pc} summarises NN inference for 4PC setting. Here, we outperform FLASH in every aspect, with the improvement in $\TP$ being at least $2.5\times$ for each NN architecture. Further, we are on par with Trident~\cite{Trident} because the extra rounds required for verification get amortized with an increase in the number of rounds required for computing NN inference. This establishes the practical relevance of our work.

As a final remark, note that our 4PC sees roughly $3\times$ improvement compared to our 3PC for NN inference. This reflects the improvements brought in by the additional honest server in the system.


\section{Conclusion}
\label{sec:Conclusions}
In this work, we presented an efficient framework for PPML that achieves the strongest security of GOD or robustness. 
Our 3PC protocol builds upon the recent work of BLAZE~\cite{BLAZE} and achieves almost similar (in some cases, better) performance albeit improving the security guarantee. For the case of 4PC, we outperform the best-known-- (a)  robust protocol of FLASH~\cite{FLASH} by $2\times$ performance-wise and (b) fair protocol of Trident~\cite{Trident}  by uplifting its security.

We leave the problem of extending our framework to support mixed-world conversions as well as to design protocols to support algorithms like Decision Trees, k-means Clustering etc. as open problem. 

\section*{Acknowledgements}
We thank our shepherd Guevara Noubir, and anonymous reviewers for their valuable feedback.

Nishat Koti would like to acknowledge financial support from Cisco PhD Fellowship 2020. Mahak Pancholi would like to acknowledge financial support from Cisco MTech Fellowship 2020. Arpita Patra would like to acknowledge financial support from SERB MATRICS (Theoretical Sciences) Grant 2020 and Google India AI/ML Research Award 2020. Ajith Suresh would like to acknowledge financial support from Google PhD Fellowship 2019. The authors would also like to acknowledge the financial support from Google Cloud to perform the benchmarking.

\bibliographystyle{abbrv}
\bibliography{main}

\appendix
\section{Preliminaries}
\label{app:Prelims}

\subsection{Shared Key Setup} \label{app:keysetup}

Let $F : \{0, 1\}^{\csec} \times \{0, 1\}^{\csec} \rightarrow X$ be a secure  pseudo-random function (PRF), with co-domain $X$ being $\Z{\ell}$. The set of keys established between the servers for 3PC is as follows:
\smallskip
\begin{myitemize}
	\item[--] One key shared between every pair-- $k_{01}, k_{02},\allowbreak k_{12}$ for the servers $(P_0, P_1), (P_0, P_2) $and$ (P_1, P_2)$, respectively.
	\item[--] One shared key known to all the servers-- $k_{\Partyset}$. 
\end{myitemize}

Suppose $P_0,P_1$ wish to sample a random value $r \in \Z{\ell}$ non-interactively. To do so they invoke $F_{k_{01}}(id_{01})$ and obtain $r$. Here, $id_{01}$ denotes a counter maintained by the servers, and is updated after every PRF invocation. The appropriate keys used to sample is implicit from the context, from the identities of the pair that sample or from the fact that it is sampled by all, and, hence, is omitted.

\begin{systembox}{$\FSETUP$}{3PC: Ideal functionality for shared-key setup}{fig:FSETUP}
	\justify
	$\FSETUP$ interacts with the servers in $\Partyset$ and the adversary $\Sim$. $\FSETUP$ picks random keys $\Key{ij}$ for $i,j \in \{0,1,2\}$ and $\Key{\Partyset}$. Let $\wy_s$ denote the keys corresponding to server $P_s$. Then
	\begin{myitemize}
		\item[--] $\wy_s = (k_{01}, k_{02}$ and $\Key{\Partyset})$ when $P_s = P_0$.
		\item[--] $\wy_s = (k_{01}, k_{12}$ and $\Key{\Partyset})$ when $P_s = P_1$.
		\item[--] $\wy_s = (k_{02}, k_{12}$ and $\Key{\Partyset})$ when $P_s = P_2$.
	\end{myitemize}
	\begin{description}
		\item {\bf Output: } Send $(\OUTPUT, \wy_s)$ to every $P_s \in \Partyset$.
		%
	\end{description}
\end{systembox}

The key setup is modelled via a functionality $\FSETUP$ (\boxref{fig:FSETUP}) that  can be realised using any secure  MPC protocol. Analogously, key setup functionality for 4PC is given in \boxref{fig:FSETUPf}.

\begin{systembox}{$\FSETUPf$}{4PC: Ideal functionality for shared-key setup}{fig:FSETUPf}
	\justify
	$\FSETUPf$ interacts with the servers in $\Partyset$ and the adversary $\Sim$. $\FSETUPf$ picks random keys $\Key{ij}$ and $\Key{ijk}$ for $i,j,k \in \{0,1,2\}$ and $\Key{\Partyset}$. Let $\wy_s$ denote the keys corresponding to server $P_s$. Then
	\begin{myitemize}
		\item[--] $\wy_s = (\Key{01}, \Key{02}, \Key{03}, \Key{012}, \Key{013}, \Key{023}$ and $\Key{\Partyset})$ when $P_s = P_0$.
		\item[--] $\wy_s = (\Key{01}, \Key{12}, \Key{13}, \Key{012},  \Key{013}, \Key{123}$ and $\Key{\Partyset})$ when $P_s = P_1$.
		\item[--] $\wy_s = (\Key{02}, \Key{12}, \Key{23}, \Key{012},  \Key{023}, \Key{123}$ and $\Key{\Partyset})$ when $P_s = P_2$.
		\item[--] $\wy_s = (\Key{03}, \Key{13}, \Key{23}, \Key{013}, \Key{023}, \Key{123}$ and $\Key{\Partyset})$ when $P_s = P_3$.
	\end{myitemize}
	\begin{description}
		\item {\bf Output: } Send $(\OUTPUT, \wy_s)$ to every $P_s \in \Partyset$.
		%
	\end{description}
\end{systembox}

To generate a {\em 3-out-of-3} additive sharing of $0$ i.e. $\sz_s$ for $s \in \{0, 1, 2\}$ such that $P_s$ holds $\sz_s$, and $\sz_0 + \sz_1 +\sz_2 = 0$, servers proceed as follows. Every pair of servers, $P_s, P_{(s+1) \% 3}$, non-interactively generate $\vr_s$, as described earlier, and each $P_s$ sets $\sz_s = \vr_s - \vr_{(s-1)\%3}$.

\subsection{Collision Resistant Hash Function}
Consider a hash function family $\Hash = \mathcal{K}\times \mathcal{L} \rightarrow \mathcal{Y}$. The hash function $\Hash$ is said to be collision resistant if, for all probabilistic polynomial-time adversaries $\Adv$, given the description of $\Hash_k$ where {$k \in_R \mathcal{K}$}, there exists a negligible function $\negl()$ such that $\Pr[ (x_1,x_2) \leftarrow \Adv(k):(x_1 \ne x_2) \wedge \Hash_k(x_1)=\Hash_k(x_2)] \leq \negl(\csec)$, where $m = \mathsf{poly}(\csec)$ and $x_1,x_2 \in_R \{0,1\}^m$.

\subsection{Commitment Scheme}
Let $\commit(x)$ denote the commitment of a value $x$. The commitment scheme $\commit(x)$ possesses two properties; {\em hiding} and {\em binding}. The former ensures privacy of the value $\val$ given just its commitment $\commit(\val)$, while the latter prevents a corrupt server from opening the commitment to a different value $x' \ne x$. The practical realization of  a commitment scheme is via a hash function $\mathcal{H}()$ given below, whose  security can be proved in the random-oracle model (ROM)--  for  $(c, o) =  (\mathcal{H}(x||r), \allowbreak x||r) = \commit (x; r)$.
\section{Instantiating $\FDotPPre$}
\label{appsec:piDotP} 
As mentioned earlier, a trivial way to instantiate $\piDotPPre$ is to treat a dot product operation as $\nf$ multiplications. However, this results in a communication cost that is linearly dependent on the feature size. Instead, we instantiate $\piDotPPre$ by a semi-honest dot product protocol followed by a verification phase to check the correctness. For the verification phase, we extend the techniques of \cite{BonehBCGI19, BGIN19} to provide support for verification of dot product tuples. Setting the verification phase parameters appropriately gives a $\piDotPPre$ whose (amortized) communication cost is independent of the feature size. We provide the details next.

To realize $\FDotPPre$, the approach is to run a semi-honest dot product protocol followed by a verification phase to check the correctness of the output. 
For verification, the work of \cite{BonehBCGI19} provides techniques to verify the correctness of $\nm$ multiplication triples (and degree-two relations) at a cost of $\Order(\sqrt{\nm})$ extended ring elements, albeit with abort security. While \cite{BGIN19} improves their techniques to provide {\em robust} verification for multiplication, we show how to extend the techniques in \cite{BGIN19} to robustly verify the correctness of $\nm$ dot product tuples (dot product being a degree two relation), with vectors of dimension $\nf$, at a cost of $\Order(\sqrt{\nf \nm})$ extended ring elements. Thus, the cost to realize one instance of $\FDotPPre$ can be brought down to only the cost of a semi-honest dot product computation (which is $3$ ring elements and independent of the vector dimension), where the cost due to verification can be amortized away by setting $\nf, \nm$ appropriately.

Given vectors $\vecd = (\md_1, \ldots, \md_{\nf}), \vece = (\me_1, \ldots, \me_{\nf})$, let server $P_i$, for $i \in \{0,1,2\}$, hold $\sgr{\md_j}_i = (\md_{j,i}, \md_{j, (i+1)\%3})$ and $\sgr{\me_j}_i = (\me_{j,i}, \me_{j, (i+1)\%3})$ where $j \in [\nf]$ (henceforth, we omit the use of $\%3$ in the subscript as it is understood from the context). The semi-honest dot product protocol proceeds as follows. The servers, using the shared key setup, non-interactively generate \emph{3-out-of-3} additive shares of zero (as described in \ref{app:keysetup}), i.e $P_i$ has $\sz_i$, such that $\sz_0 + \sz_1 +\sz_2 = 0$. Then, each $P_i$ locally computes \emph{3-out-of -3}  additive share of $\mf = \vecd \band \vece$ as: 
\begin{align} \label{eq:dotpver1}
	\mf_i =  \sz_i + \sum_{j = 1}^{\nf} \left( \md_{j, i} \cdot \me_{j, i} + \md_{j, i} \cdot \me_{j, {i+1}} + \md_{j, {i+1}} \cdot \me_{j, i} \right)
\end{align}
Now, to complete the $\sgr{\cdot}$-sharing of $\mf$, $P_i$ sends $\mf_i$ to  $P_{i-1}$. 
To check the correctness of the computation $\sgr{\mf} = \sgr{\vecd \band \vece}$, each $P_i \in \Partyset$ needs to prove that the $\mf_i$ it sent in the semi-honest protocol satisfies \ref{eq:dotpver1}, i.e.
\begin{align} \label{eq:dotpver2}
	\sz_i + \sum_{j = 1}^{\nf} \left( \md_{j, i} \cdot \me_{j, i} + \md_{j, i} \cdot \me_{j, {i+1}} + \md_{j, {i+1}} \cdot \me_{j, i} \right) - \mf_i = 0
\end{align}
This difference in the expected message that should be sent (computed using $P_i$'s correct input shares) and actual message that is sent by $P_i$ is captured by a circuit $\cc$, defined below. 

\begin{align} 
	\begin{split} 
		& \cc \left(\{\md_{j, i}, \md_{j, {i+1}}, \me_{j, i}, \me_{j, {i+1}} \}_{j=1}^{\nf}, \sz_i, \mf_i \right) \label{eq:cc} \\ 
		&= \sz_i + \sum_{j = 1}^{\nf} \left( \md_{j, i} \cdot \me_{j, i} + \md_{j, i} \cdot \me_{j, {i+1}} + \md_{j, {i+1}} \cdot \me_{j, i} \right) - \mf_i
	\end{split}
\end{align}

Here, $\cc$ takes as input $\ic = 4 \nf + 2$ values: $\sgr{\cdot}$-shares of $\vecd, \vece$ held by $P_i$, i.e. $\{\md_{j, i}, \md_{j, {i+1}}, \me_{j, i}, \me_{j, {i+1}} \}_{j=1}^{\nf}$, the additive share of zero, $\sz_i$, that $P_i$ holds, and the additive share $\mf_i$ sent by $P_i$. For correct computation with respect to $P_i$, we require the difference in the expected message and the actual message to be $0$, i.e.,
\begin{align} \label{eq:dotpver3}
	\cc \left(\{\md_{j, i}, \md_{j, {i+1}}, \me_{j, i}, \me_{j, {i+1}} \}_{j=1}^{\nf}, \sz_i, \mf_i \right) = 0
\end{align}

We now explain how to verify the correctness for $\nm$ dot product tuples assuming that the operations are carried out over a prime-order field. The verification can be extended to support operations over rings following the techniques of \cite{BonehBCGI19, BGIN19}.
To verify the correctness for $\nm$ dot product tuples, $\{\vecd_k, \vece_k, \mf_k\}_{k=1}^{\nm}$ where $\mf_k = \vecd_k \band \vece_k$, the output of $\cc$ (which is the difference in the expected and actual message sent) for each of the corresponding dot product tuple must be $0$. 
To check correctness of all dot products {\em at once}, it suffices to check if a random linear combination of the output of each $\cc$ (for each dot product) is $0$. This is because the random linear combination of the differences will be $0$ with high probability if $\mf_k = \vecd_k \band \vece_k$ for each $k \in \{1, \ldots, \nm\}$. 
We remark that the definition of $c(\cdot)$ in \cite{BGIN19} enables the verification of only multiplication triples. With the re-definition of $\cc$ as in \ref{eq:cc}, we can now verify the correctness of dot products while the rest of the verification steps remain similar to that in \cite{BGIN19}. We elaborate on the details, next.

A verification circuit, constructed as follows, enables $P_i$ to prove the correctness of the additive share of $\mf$ that it sent, for $\nm$ instances of dot product at once. Note that the proof system is designed for the distributed-verifier setting where the proof generated by $P_i$ will be shared among $P_{i-1}, P_{i+1}$, who can together verify its correctness. First, a sub-circuit $\cg$ is defined as: group $\nL$ small $\cc$ circuits and take a random linear combination of the values on their output wires. Since each $\cc$ circuit takes  $\ic = 4\nf+2$ inputs as described earlier, $\cg$ takes in $\ic \nL$ inputs. Precisely, $\cg$ is defined as follows:
\begin{align*}
	\cg(x_1, \ldots, x_{\ic \nL}) = \sum_{k=1}^{\nL} \rt_k \cdot \cc(x_{(k-1) \ic + 1}, \ldots, x_{(k-1) \ic + \ic})
\end{align*}

Since there are total $\nm$ dot products to be verified, there will be $\nM = \nm /\nL$ sub-circuits $\cg$. Looking ahead, this grouping technique enables obtaining a sub-linear communication cost for verification because the communication cost turns out to be $\Order(\ic \nL + \nM)$ and setting $\ic \nL = \nM$ gives the desired result. The sub-circuits $\cg$ make up the circuit $\cG$ which outputs a random linear combination of the values on the output wires of each $\cg$, i.e:
\begin{align*}
	\cG(x_1, \ldots, x_{\ic \nm}) = \sum_{k=1}^{\nM} \re_k \cdot \cg(x_{(k-1) \ic \nL + 1}, \ldots, x_{(k-1) \ic \nL + \ic \nL}) 
\end{align*}
Here, $\rt_k$ and $\re_k$ are randomly sampled (non-interactively) by all parties.
To prove correctness, $P_i$ needs to prove that $\cG$ outputs $0$. For this, $P_i$ defines $f_1\ldots,f_{\ic \nL}$ random polynomials of degree $\nM$, one for each input wire of $\cg$. For $\ell \in \{1,\ldots,\nM\}$ and $j \in \{1,\ldots,\ic \nL\}$,  $f_j(0)$ is chosen randomly and $f_j(\ell) = x_{(\ell-1)\ic+j}$ (i.e the $j\text{th}$ input of the $\ell\text{th}$ $\cg$ gate). 
$P_i$ further defines a $2\nM$ degree polynomial $p(\cdot)$ on the output wires of $\cg$, i.e $p(\cdot) = \cg (f_1, \ldots, f_{\ic \nL})$ where $p(\ell)$ for $\ell \in \{1, \ldots, \nM\}$ is the output of the $\ell$th $\cg$ gate. The additional $\nM+1$ points required to interpolate the $2\nM$ degree polynomial $p$, are obtained by evaluating $f_1, \ldots, f_{\ic \nL}$ on $\nM+1$ additional points, followed by an application of $\cg$ circuit. The proof generated by $P_i$ consists of $f_1(0), \ldots, f_{\ic \nL}(0)$ and the coefficients of $p$. Recall that since we are in the distributed-verifier setting, the prover $P_i$ additively shares the proof with $P_{i-1}, P_{i+1}$. Note here, that shares of $f_1(0), \ldots, f_{\ic \nL}(0)$ can be generated non-interactively. 

To verify the proof, verifiers $P_{i-1}, P_{i+1}$ need to check if the output of $\cG$  is $0$. This can be verified by computing the output of $\cG$ as $b= \sum_{\ell=1}^{\nM} \re_{\ell} \cdot p(\ell)$ and checking if $b = 0$, where $\re_{\ell}$'s are non-interactively sampled by all after the proof is sent. If $p$ is defined correctly, then this is indeed a random linear combination of the outputs of all the $\cg$-circuits. This necessitates the second check to verify the correctness of $p$ as per its definition i.e $p(\cdot) = \cg (f_1(\cdot), \ldots, f_{\ic \nL}(\cdot))$. This is performed by checking if $p(r) = \cg (f_1(r), \ldots, f_{\ic \nL}(r))$ for a random $r \notin \{1, \ldots, \nM\}$ (for privacy to hold) sampled non-interactively by all after the proof is sent. 
For the first check, verifiers can locally compute additive shares of $b$ (using the additive shares of coefficients of $p$ obtained as part of the proof) and reconstruct $b$ to check for equality with $0$.
For the second, verifiers locally compute additive shares of $p(r)$ using the shares of coefficients of $p$, and shares of $f_1(r), \ldots, f_{\ic \nL}(r)$ by interpolating $f_1, \ldots, f_{\ic \nL}$ using ($P_i$'s) inputs to the $\cc$-circuits which are implicitly additively shared between them (owing to the replicated sharing property). Verifiers exchange these values among themselves, reconstruct it and check if $p(r) = \cg (f_1(r), \ldots, f_{\ic \nL}(r))$.
Note that, the messages computed and exchanged by the verifiers, depend only on  the proof sent by $P_i$ and the random values ($r, \re$) sampled by all. These messages can also be independently computed by $P_i$. Thus, in order to prevent a verifier from falsely rejecting a correct proof, we use $\Jmp$ to exchange these messages.
To optimize the communication cost further, it suffices if a single verifier computes the output of verification. 

\paragraph{Setting the parameters:} The proof sent by $P_i$ consists of the constant terms  $f_j(0)$ for $j \in \{1, \ldots, \ic \nL\}$ and $2\nM+1$ coefficients of $p$. The former can be can be generated non-interactively. Hence, $P_i$ needs to communicate $2\nM+1$ elements to the verifiers (one of which can be performed non-interactively). The message sent by the verifier consists of the additive share of $\sum_{\ell=1}^{\nM} \re_{\ell} \cdot p(\ell)$ (for the first check) and $f_1(r), \ldots , f_{\ic \nL}(r), p(r)$ (for the second check). Thus, the verifier communicates $\ic \nL + 2$ elements. As the proof is executed three times, each time with one party acting as the prover and the other two acting as the verifiers, overall, each party communicates $\ic \nL + 2 \nM + 3$ elements. Setting $\ic \nL = 2 \nM$ and $\nM = \frac{m}{\nL} $ results in the total communication required for verifying $m$ dot products to be $\Order(\sqrt{\nf \nm})$. Thus, verifying a single dot product has an amortized cost of $\Order \left( \sqrt{\frac{\nf}{\nm}} \right)$ which can be made very small by appropriately setting the values of $\nf, \nm$. Thus, the (amortized) cost of a maliciously secure dot product protocol can be made equal to that of a semi-honest dot product protocol, which is $3$ ring elements. 

To support verification over rings \cite{BGIN19}, verification operations are carried out on the extended ring $\Z{\ell}/f(x)$, which is the ring of all polynomials with coefficients in $\Z{\ell}$ modulo a polynomial $f$, of degree $d$, irreducible over $\Z{ }$. Each element in $\Z{\ell}$ is lifted to a $d$-degree polynomial in $\Z{\ell}[x]/f(x)$ (which results in blowing up the communication by a factor $d$). Thus, the per party communication amounts to $(\ic \nL + 2 \nM + 3)d$ elements of $\Z{\ell}$ for verifying $\nm$ dot products of vector size $\nf$ where $\ic = 4 \nf + 2$. Further, the probability of a cheating prover is bounded by $\frac{2^{(\ell - 1) d} \cdot 2 \nM + 1}{2^{\ell d} - \nM}$ (cf. Theorem 4.7 of \cite{BGIN19}). 
Thus, if $\gv{}$ is such that $2^{\gv{}} \geq 2\nM$, then the cheating probability is 
\begin{align*}
	\frac{2^{(\ell - 1) d} \cdot 2 \nM + 1}{2^{\ell d} - \nM} \leq \frac{2^{(\ell - 1) d} \cdot 2^{\gv{}} + 1}{2^{\ell d} - M} \approx 2^{-(d - \gv{})}
\end{align*}
We note that both, \cite{BGIN19} and our technique require a communication cost of $\Order(\sqrt{\nm \nf})$ ring elements for verifying $\nm$ dot products of vector size $\nf$. This is because multiplication is a special case of dot product with $\nf = 1$. However, since our verification is for dot products, we can get away with performing only $\nm$ semi-honest dot products whose cost is equivalent to computing $\nm$ semi-honest multiplications, whereas \cite{BGIN19} requires to execute $\nm \nf$ multiplications (as their technique can only verify correctness of multiplications), resulting in a dot product cost dependent on the vector size.
Concretely, to get $40$ bits of statistical security and for a vector size of $2^{10}$ (CIFAR-10\cite{CIFAR10} dataset), the aforementioned parameters can be set as given in \tabref{zkver}. 

\begin{table}[htb!]
	\centering
	\resizebox{.7\textwidth}{!}{
		\begin{tabular}{r | r | r | r | r }
			\toprule
			$\nm$ &  $\nM$ & $\gv{}$ & $d$ & Cost (per dot product) \\
			\midrule 
			$2^{20}$ & $2^{16}$ & $17$  & $57$ & $7.125$ \\ \midrule 
			$2^{30}$ & $2^{21}$ & $22$  & $62$ & $0.242$ \\ \midrule 
			$2^{40}$ & $2^{26}$ & $27$  & $67$ & $0.008$  \\ \midrule
			$2^{50}$ & $2^{31}$ & $32$  & $72$ & $0.0002$  \\ \bottomrule 
		\end{tabular}
	}
	\vspace{-2mm}
	\caption{\small Cost of verification in terms of the number of ring elements communicated per dot product, and parameters for vector size $\nf = 2^{10}$ and $40$ bits of statistical security. Here, $\nm$ - $\#$dot products to be verified, $\nM$- $\# \cg$ sub-circuits, $d$-degree of extension. \label{tab:zkver} }
\end{table}

It is possible to further bring down the communication cost required for verifying $\nm$ dot product tuples to $\Order(\log (\nf \nm))$ at the expense of requiring more rounds by further extending the technique of \cite{BonehBCGI19}, which we leave as an exercise.
We refer readers to \cite{BGIN19} for formal details. 
\section{Security Analysis of Our Protocols}
\label{app:proofs}
In this section, we provide detailed security proofs for our constructions in both the 3PC and 4PC domains. We prove security using the real-world/ ideal-word simulation based technique.
We provide proofs in the $\FSETUP$-hybrid model for the case of 3PC, where $\FSETUP$~(\boxref{fig:FSETUP}) denotes the ideal functionality for the three server shared-key setup. Similarly, 4PC proofs are provided in $\FSETUPf$-hybrid model~(\boxref{fig:FSETUPf}).

Let $\Adv$ denote the real-world adversary corrupting at most one server in $\Partyset$, and $\Sim$ denote the corresponding ideal world adversary. The strategy for simulating the computation of function $f$ (represented by a circuit $\ckt$) is as follows: The simulation begins with the simulator emulating the shared-key setup~($\FSETUP / \FSETUPf$) functionality and giving the respective keys to the adversary. This is followed by the input sharing phase in which $\Sim$ extracts the input of $\Adv$, using the known keys, and sets the inputs of the honest parties to be $0$. $\Sim$ now knows all the inputs and can compute all the intermediate values for each of the building blocks in clear. Also, $\Sim$ can obtain the output of the $\ckt$ in clear. $\Sim$ now proceeds simulating each of the building block in topological order using the aforementioned values (inputs of $\Adv$, intermediate values and circuit output).

In some of our sub protocols, adversary is able to decide on which among the honest parties should be chosen as the Trusted Third Party ($\TTP$) in that execution of the protocol. To capture this, we consider {\em corruption-aware} functionalities~\cite{AsharovL17} for the sub-protocols, where the functionality is provided the identity of the corrupt server as an auxiliary information. 

For modularity, we provide the simulation steps for each of the sub-protocols separately. These steps, when carried out in the respective order, result in the simulation steps for the entire 3/4PC protocol.
If a $\TTP$ is identified during the simulation of any of the sub-protocols, simulator will stop the simulation at that step. In the next round, the simulator receives the input of the corrupt party in clear on behalf of the $\TTP$ for the 3PC case, whereas it receives the input shares from adversary for 4PC.

\subsection{Security Proofs for 3PC protocols}
\label{app:sec3pc}
The ideal functionality $\Funct$ for 3PC appears in \boxref{fig:Funct}.

\begin{systembox}{$\Funct$}{3PC: Ideal functionality for evaluating a function $f$}{fig:Funct}
	\justify
	$\Funct$ interacts with the servers in $\Partyset$ and the adversary $\Sim$. Let $f$ denote the functionality to be  computed. Let $\wx_s$ be the input corresponding to the server $P_s$, and $\wy_s$ be the corresponding output, i.e $(\wy_0,\wy_1,\wy_2) = f(\wx_0,\wx_1,\wx_2).$ 
	\begin{myitemize}
		\item[\bf Step 1:] $\Funct$ receives $(\INPUT,\wx_s)$ from $P_s \in \Partyset$, and computes $(\wy_0,\wy_1,\wy_2) = f(\wx_0,\wx_1,\wx_2).$
		\item[\bf Step 2:] $\Funct$ sends  $(\OUTPUT, \wy_s)$ to $P_s \in \Partyset$.
	\end{myitemize}
\end{systembox}

\vspace{-2mm}
\subsubsection{Joint Message Passing ($\Jmp$) Protocol} 
\label{appsec:piJmp}
This section provides the security proof for the $\Jmp$ primitive, which forms the crux for achieving GOD in our constructions. Let $\FJmp$~(\boxref{fig:JmpFunc}) denote the ideal functionality and let $\Sim_{\Jmp}^{P_s}$ denote the corresponding simulator for the case of corrupt $P_s \in \Partyset$.
We begin with the case for a corrupt sender, $P_i$. The case for a corrupt $P_j$ is similar and hence we omit details for the same. 
\begin{simulatorsplitbox}{$\Sim_{\Jmp}^{P_i}$}{Simulator $\Sim_{\Jmp}^{P_i}$ for corrupt sender $P_i$}{fig:JmpSimi}
	~
	\begin{myitemize}
		\item[--] $\Sim_{\Jmp}^{P_i}$ initializes $\ttp = \bot$ and receives $\val_i$ from $\Adv$ on behalf of $P_k$.
		\item[--] In case, $\Adv$ fails to send a value  $\Sim_{\Jmp}^{P_i}$ broadcasts \texttt{"(accuse,$P_i$)"}, sets $\ttp = P_j$, $\val_i = \bot$, and skip to the last step. 
		\item[--] Else, it checks if $\val_i = \val$, where $\val$ is the value computed by $\Sim_{\Jmp}^{P_i}$ based on the interaction with $\Adv$, and using the knowledge of the shared keys. If the values are equal, $\Sim_{\Jmp}^{P_i}$ sets $b_k = 0$, else, sets $b_k=1$, and sends the same to $\Adv$ on the behalf of $P_k$.
		\item[--] If $\Adv$ broadcasts \texttt{"(accuse,$P_k$)"}, $\Sim_{\Jmp}^{P_i}$ sets $\val_i = \bot$, $\ttp = P_j$, and skips to the last step.
		\item[--] $\Sim_{\Jmp}^{P_i}$ computes and sends $b_j$ to $\Adv$ on behalf of $P_j$ and receives $b_{\Adv}$ from $\Adv$ on behalf of honest $P_j$.
		\item[--] If $\Sim_{\Jmp}^{P_i}$ does not receive a $b_{\Adv}$ on behalf of $P_j$, it broadcasts \texttt{"(accuse,$P_i$)"}, sets $\val_i = \bot$, $\ttp = P_k$. If $\Adv$ broadcasts \texttt{"(accuse,$P_j$)"}, $\Sim_{\Jmp}^{P_i}$ sets $\val_i = \bot$, $\ttp = P_k$. If $\ttp$ is set, skip to the last step.
		\item[--] If $(\val_i = \val)$ and $b_{\Adv} = 1$, $\Sim_{\Jmp}^{P_i}$ broadcasts $\Hash_j = \Hash(\val)$ on behalf of $P_j$. 
		\item[--] Else if $\val_i \neq \val_j$ :  $\Sim_{\Jmp}^{P_i}$ broadcasts $\Hash_j = \Hash(\val)$ and $\Hash_k = \Hash(\val_i)$ on behalf of $P_j$ and $P_k$, respectively. If $\Adv$ does not broadcast,  $\Sim_{\Jmp}^{P_i}$ sets $\ttp = P_k$.  Else if, $\Adv$ broadcasts a value $\Hash_{\Adv}$:
		\begin{inneritemize}
			\item If $\Hash_{\Adv} \neq \Hash_j$ : $\Sim_{\Jmp}^{P_i}$ sets $\ttp = P_k$.
			\item Else if $\Hash_{\Adv} \neq \Hash_k$ : $\Sim_{\Jmp}^{P_i}$ sets $\ttp = P_j$.
		\end{inneritemize} 
		\item[--] $\Sim_{\Jmp}^{P_i}$ invokes $\FJmp$ on $(\INPUT,\val_i)$ and $(\SELECT, \ttp)$ on behalf of $\Adv$.
	\end{myitemize}
\end{simulatorsplitbox}

The case for a corrupt receiver, $P_k$ is provided in \boxref{fig:JmpSimk}.

\begin{simulatorbox}{$\Sim_{\Jmp}^{P_k}$}{Simulator $\Sim_{\Jmp}^{P_k}$ for corrupt receiver $P_k$}{fig:JmpSimk}
	~
	\begin{myitemize}
		\item[--] $\Sim_{\Jmp}^{P_k}$ initializes $\ttp = \bot$, computes $\val$ honestly and  sends $\val$ and $\Hash(\val)$ to $\Adv$ on behalf of $P_i$ and $P_j$, respectively.  
		\item[--] If $\Adv$ broadcasts \texttt{"(accuse,$P_i$)"}, set $\ttp = P_j$, else if $\Adv$ broadcasts \texttt{"(accuse,$P_j$)"}, set $\ttp = P_i$. If both messages are broadcast, set $\ttp = P_i$. If $\ttp$ is set skip to the last step.
		\item[--] On behalf of $P_i,P_j$, $\Sim_{\Jmp}^{P_k}$ receives $b_{\Adv}$ from $\Adv$. Let $b_{i}$ (resp. $b_{j}$) denote the bit received by $P_i$ (resp. $P_j$) from $\Adv$. 
		\item[--] If $\Adv$ failed to send bit $b_{\Adv}$ to $P_i$, $\Sim_{\Jmp}^{P_k}$ broadcasts \texttt{"(accuse,$P_k$)"}, set $\ttp = P_j$. Similarly, for $P_j$. If both $P_i, P_j$ broadcast \texttt{"(accuse,$P_k$)"}, set $\ttp = P_i$. If $\ttp$ is set, skip to the last step. 
		\item[--] If $b_{i} \vee b_{j} = 1$ : $\Sim_{\Jmp}^{P_k}$ broadcasts $\Hash_i, \Hash_j$ where $\Hash_i = \Hash_j = \Hash(\val)$ on behalf of $P_i,P_j$, respectively. 
		\item[--] If $\Adv$ does not broadcast $\Sim_{\Jmp}^{P_k}$ sets $\ttp = \bot$. If $\Adv$ broadcasts a value $\Hash_{\Adv}$:
		\begin{inneritemize}
			\item If $\Hash_{\Adv} \neq \Hash_i$ : $\Sim_{\Jmp}^{P_k}$ sets $\ttp = P_j$.
			\item Else if $\Hash_{\Adv} = \Hash_i = \Hash_j$ : $\Sim_{\Jmp}^{P_k}$  sets $\ttp = P_i$. 
		\end{inneritemize}
		\item [--] $\Sim_{\Jmp}^{P_k}$ invokes $\FJmp$ on $(\INPUT,\bot)$ and $(\SELECT, \ttp)$ on behalf of $\Adv$.
	\end{myitemize}
\end{simulatorbox}

\vspace{-5mm}
\subsubsection{Sharing Protocol} 
\label{appsec:FpiSh}
The case for a corrupt $P_0$ is provided in \boxref{fig:JShSim0}.
\begin{simulatorsplitbox}{$\Sim_{\Sh}^{P_0}$}{Simulator $\Sim_{\Sh}^{P_0}$ for corrupt  $P_0$}{fig:JShSim0}
	\justify
	\algoHead{Preprocessing:} 
	$\Sim_{\Sh}^{P_0}$ emulates $\FSETUP$ and gives the keys $(k_{01},k_{02},k_{\Partyset})$ to $\Adv$. The values that are commonly held along with $\Adv$ are sampled using appropriate shared key. Otherwise, values are sampled randomly.
   	\justify\vspace{-3mm}	
	\algoHead{Online:}
	\begin{myitemize}
		\item[--] If the dealer $P_s = P_0$:
		\begin{inneritemize}
			\item $\Sim_{\Sh}^{P_0}$ receives $\bv{\val}$ on behalf of $P_1$ and sets $\msg = \val$ accordingly. 
			\item Steps for $\piJmp$ protocol are simulated according to $\Sim_{\Jmp}^{P_i}$ (\figref{JmpSimi}), where $P_0$ plays the role of one of the senders.	
		\end{inneritemize}
		\item[--] If the dealer $P_s = P_1$:
		\begin{inneritemize}
			\item $\Sim_{\Sh}^{P_0}$ sets $\val = 0$ by assigning $\bv{\val} = \av{\val}$.
			\item Steps for $\piJmp$ protocol are simulated similar to $\Sim_{\Jmp}^{P_k}$ (\figref{JmpSimk}), with $P_0$ acting as the receiver. 
		\end{inneritemize}
		\item[--] If the dealer if $P_2$ : Similar to the case when $P_s = P_1$.
	\end{myitemize}
\end{simulatorsplitbox}

\smallskip
The case for a corrupt $P_1$ is provided in \boxref{fig:JShSim1}. The case for a corrupt $P_2$ is similar. 
\begin{simulatorbox}{$\Sim_{\Sh}^{P_1}$}{Simulator $\Sim_{\Sh}^{P_1}$ for corrupt $P_1$}{fig:JShSim1}
	\justify
	\algoHead{Preprocessing:} 
	$\Sim_{\JSh}^{P_1}$ emulates $\FSETUP$ and gives the keys $(k_{01},k_{12},k_{\Partyset})$ to $\Adv$. The values that are commonly held along with $\Adv$ are sampled using appropriate shared key. Otherwise, values are sampled randomly.
	\justify\vspace{-3mm}
	\algoHead{Online:}
	\begin{myitemize}
		\item[--] If dealer $P_s = P_1$ : $\Sim_{\Sh}^{P_1}$  receives $\bv{\val}$ from $\Adv$ on behalf of $P_2$.
		\item[--] If $P_s = P_0$ :  $\Sim_{\Sh}^{P_1}$ sets $\val = 0$ by assigning $\bv{\val} = \av{\val}$ and sends $\bv{\val}$ to $\Adv$ on behalf of $P_s$. 
		\item[--] If $P_s = P_2$ : Similar to the case where $P_s = P_0$. 
		\item[--] Steps of $\piJmp$, in all the steps above, are simulated similar to $\Sim_{\Jmp}^{P_i}$ (\figref{JmpSimi}), ie. the case of corrupt sender. 
	\end{myitemize}
\end{simulatorbox}

\subsubsection{Multiplication Protocol} 
\label{appsec:FpiMult}
The case for a corrupt $P_0$ is provided in \boxref{fig:MultSim0}.

\begin{simulatorbox}{$\Sim_{\Mult}^{P_0}$}{Simulator $\Sim_{\Mult}^{P_0}$ for corrupt $P_0$}{fig:MultSim0}
	\justify
	\algoHead{Preprocessing:} 
	\begin{myitemize}
		\item[--] $\Sim_{\Mult}^{P_0}$ samples $\sqr{\av{\vz}}_1, \sqr{\av{\vz}}_2$ and $\gv{\vz}$ on behalf of $P_1,P_2$ and generates the $\sgr{\cdot}$-shares of $\md,\me$ honestly.
		\item[--] $\Sim_{\Mult}^{P_0}$ emulates $\FMulPre$, and extracts $\psi$, $\sqr{\Chi}_1,\sqr{\Chi}_2$ on behalf of $P_1,P_2$.
	\end{myitemize}	
    \justify\vspace{-3mm}
	\algoHead{Online:}
	\begin{myitemize}
		\item[--] $\Sim_{\Mult}^{P_0}$ computes $\sqr{\starbeta{\vz}}_1, \sqr{\starbeta{\vz}}_2$ and steps of $\piJmp$ are simulated according to $\Sim_{\Jmp}^{P_i}$ with $\Adv$ as one of the sender for both $\sqr{\starbeta{\vz}}_1$, and $\sqr{\starbeta{\vz}}_2$. 
		\item[--] $\Sim_{\Mult}^{P_0}$ computes $\bv{\vz} + \gv{\vz}$ on behalf of $P_1,P_2$ and  steps of $\piJmp$ are simulated according to $\Sim_{\Jmp}^{P_k}$ with $\Adv$ as the receiver for $\bv{\vz} + \gv{\vz}$.  

	\end{myitemize}
\end{simulatorbox}

\smallskip
The case for a corrupt $P_1$ is provided in \boxref{fig:MultSim1}. The case for a corrupt $P_2$ is similar. 

\begin{simulatorsplitbox}{$\Sim_{\Mult}^{P_1}$}{Simulator $\Sim_{\Mult}^{P_1}$ for corrupt $P_1$}{fig:MultSim1}
	\justify
	\algoHead{Preprocessing:} 
	\begin{myitemize}
		\item[--] $\Sim_{\Mult}^{P_1}$ samples $\sqr{\av{\vz}}_1, \gv{\vz}$ and $\sqr{\av{\vz}}_2$ on behalf of $P_0,P_2$. $\Sim_{\Mult}^{P_1}$ generates the $\sgr{\cdot}$-shares of $\md,\me$ honestly.
		\item[--] $\Sim_{\Mult}^{P_1}$ emulates $\FMulPre$, and extracts $\psi, \sqr{\Chi}_1, \sqr{\Chi}_2$ on behalf of $P_0, P_2$.
	\end{myitemize}
	\justify\vspace{-3mm}
	\algoHead{Online:}
	\begin{myitemize}
		\item[--] $\Sim_{\Mult}^{P_1}$ computes $\sqr{\starbeta{\vz}}_1, \sqr{\starbeta{\vz}}_2$ on behalf of $P_0,P_2$, and steps of $\piJmp$ are simulated according to $\Sim_{\Jmp}^{P_i}$ with $\Adv$ as one of the  sender for $\sqr{\starbeta{\vz}}_1$, and as the receiver for $\sqr{\starbeta{\vz}}_2$.  
		\item[--] $\Sim_{\Mult}^{P_1}$ computes $\bv{\vz} + \gv{\vz}$ on behalf of $P_2$ and  steps of $\piJmp$ are simulated according to $\Sim_{\Jmp}^{P_i}$ with $\Adv$ one of the senders for $\bv{\vz} + \gv{\vz}$.  
	
	\end{myitemize}
\end{simulatorsplitbox}

\subsubsection{Reconstruction Protocol}
The case for a corrupt $P_0$ is provided in \boxref{fig:ReconFSim0}. The case for a corrupt $P_1,P_2$ is similar. 

\begin{simulatorbox}{$\Sim_{\Rec}$}{Simulator $\Sim_{\Rec}$ for corrupt $P_0$ }{fig:ReconFSim0}
	\justify
	\algoHead{Preprocessing:} 
	\begin{myitemize}
		\item[--] $\Sim_{\Rec}$ computes commitments on $\sqr{\av{\val}}_1, \sqr{\av{\val}}_2$ and $\gv{\val}$ on behalf of $P_1,P_2$, using the respective shared keys.
		\item[--]The steps of $\piJmp$ are simulated similar to $\Sim_{\Jmp}^{P_k}$ with $\Adv$ acting as the receiver for $\Commit{\gv{\val}}$, and $\Sim_{\Jmp}^{P_i}$ with $\Adv$ acting as one of the senders for $\Commit{\sqr{\av{\val}}_1}$ and $ \Commit{\sqr{\av{\val}}_2}$.
	\end{myitemize}	
    \justify\vspace{-3mm}
	\algoHead{Online:}
	\begin{myitemize}
		\item[--] $\Sim_{\Rec}$ receives openings for $\Commit{\sqr{\av{\val}}_1}, \Commit{\sqr{\av{\val}}_2} $ on behalf of $P_2$ and $P_1$, respectively. 
		\item[--] $\Sim_{\Rec}$ opens $\Commit{\gv{\val}}$ to $\Adv$ on behalf of $P_1,P_2$.

	\end{myitemize}
\end{simulatorbox}

\subsubsection{Joint Sharing Protocol}
The case for a corrupt $P_0$ is provided in \boxref{fig:JShSim0}. The case for a corrupt $P_1,P_2$ is similar. 
\begin{simulatorbox}{$\Sim_{\JSh}$}{Simulator $\Sim_{\JSh}$ for corrupt $P_0$ }{fig:JShSim0}
	\justify
	\algoHead{Preprocessing:} 	
	$\Sim_{\Sh}^{P_0}$ emulates $\FSETUP$ and gives the keys $(k_{01},k_{02},k_{\Partyset})$ to $\Adv$. The values that are commonly held along with $\Adv$ are sampled using appropriate shared key. Otherwise, values are sampled randomly.
	\justify\vspace{-3mm}
	\algoHead{Online:}
	\begin{myitemize}
		\item[--] If $(P_i,P_j)= (P_1,P_0)$ : $\Sim_{\JSh}$ computes $\bv{\val} = \val +\av{\val}$ on behalf of $P_1$. The steps of $\piJmp$ are simulated similar to $\Sim_{\Jmp}^{P_i}$, where the $\Adv$ acts as one of the senders. 	
		\item[--] If  $(P_i,P_j)= (P_2,P_0)$ : Similar to the case when  $(P_i,P_j)= (P_1,P_0)$. 
		\item[--] If  $(P_i,P_j)= (P_1,P_2)$ : $\Sim_{\JSh}$ sets $\val =0$ by setting $\bv{\val} = \av{\val}$. The steps of $\piJmp$ are simulated similar to $\Sim_{\Jmp}^{P_k}$, where the $\Adv$ acts as the receiver.	
	\end{myitemize}
\end{simulatorbox}

\subsubsection{Dot Product Protocol}
The case for a corrupt $P_0$ is provided in \boxref{fig:DotPSim0}.
\begin{simulatorbox}{$\Sim_{\DotP}^{P_0}$}{Simulator $\Sim_{\DotP}$ for corrupt $P_0$ }{fig:DotPSim0}
	\justify
	\algoHead{Preprocessing:} $\Sim_{\DotP}$ emulates $\FDotPPre$ and derives $\psi$ and respective $\sqr{\cdot}$-shares of $\Chi$  honestly on behalf of $P_1,P_2$. 
	\justify\vspace{-3mm}
	\algoHead{Online:} 
	\begin{myitemize}
		\item[--] $\Sim_{\DotP}^{P_0}$ computes $\sqr{\cdot}$-shares of $\starbeta{\wz}$ on behalf of $P_1,P_2$. The steps of $\piJmp$, required to provide $P_1$ with $\sqr{\starbeta{\wz}}_2$, and $P_2$ with $\sqr{\starbeta{\wz}}_1$, are simulated similar to $\Sim_{\Jmp}^{P_i}$, where $P_0$ acts as one of the sender in both cases.   
		\item[--] $\Sim_{\DotP}^{P_0}$ computes $\starbeta{\wz}$ and $ \bv{\wz}$ on behalf of $P_1,P_2$. The steps of the $\piJmp$, required to provide $P_0$ with $\bv{\wz}+\gv{\wz}$, are simulated similar to $\Sim_{\Jmp}^{P_k}$, where $P_0$ acts as the receiver.   
		
	\end{myitemize}
\end{simulatorbox}

\smallskip
The case for a corrupt $P_1$ is provided in \boxref{fig:DotPSim1}. The case for a corrupt $P_2$ is similar. 
\begin{simulatorbox}{$\Sim_{\DotP}^{P_1}$}{Simulator $\Sim_{\DotP}$ for corrupt $P_1$ }{fig:DotPSim1}
	\justify
	\algoHead{Preprocessing:}$\Sim_{\DotP}^{P_1}$ emulates $\FDotPPre$ and derives $\sqr{\cdot}$-shares of $\psi,\Chi$  honestly on behalf of $P_0,P_2$. 
	\justify\vspace{-3mm}
	\algoHead{Online:} 
	\begin{myitemize}
		\item[--] $\Sim_{\DotP}^{P_1}$ computes $\sqr{\starbeta{\wz}}_1$ on behalf of $P_0$, and $\sqr{\starbeta{\wz}}_2$ on behalf of $P_0$ and $P_2$. The steps of $\piJmp$, required to provide $P_1$ with $\sqr{\starbeta{\wz}}_2$, and $P_2$ with $\sqr{\starbeta{\wz}}_1$, are simulated similar to $\Sim_{\Jmp}^{P_i}$, where $\Adv$ acts as one of the sender in the former case, and as a receiver in the latter case.   
		\item[--] $\Sim_{\DotP}^{P_1}$ computes $\starbeta{\wz}$ and $ \bv{\wz}$ on behalf of $P_2$. The steps of $\piJmp$, required to provide $P_0$ with $\bv{\wz}+\gv{\wz}$, are simulated similar to $\Sim_{\Jmp}^{P_i}$, where $\Adv$ acts as one of the sender.   
	\end{myitemize}
\end{simulatorbox}
\subsubsection{ Truncation Protocol}
The case for a corrupt $P_0$ is provided in \boxref{fig:TruncSim0}.
\begin{simulatorbox}{$\Sim_{\Trunc}^{P_0}$}{Simulator $\Sim_{\Trunc}^{P_0}$ for corrupt $P_0$ }{fig:TruncSim0}
	\justify
		~
	\begin{myitemize}
		\item[--] For $i \in \{0, \ldots, \ell-1\}$, for $j \in \{1,2\}$, $\Sim_{\Trunc}^{P_0}$ samples $\vr_j[i]$  on behalf of $P_j$ along with $\Adv$ using respective keys. 
		\item[--] $\Sim_{\Trunc}^{P_0}$ acting on behalf of $P_1,P_2$ generates $\shrd$-shares of $\arval{(\vr_j[i])}$ for $i \in \{0, \ldots, \ell-1\}, j \in  \EInSet$ non-interactively.
		\item[--] $\Sim_{\Trunc}^{P_0}$ defines $\vecX$, $\vecY$, $\vecP$ and $\vecQ$ as per \boxref{fig:piTr}. The steps for $\piDotP$ are simulated similar to   $\Sim_{\piDotP}^{P_0}$ for generating $\sA, \sB$.
	\end{myitemize}
\end{simulatorbox}

\smallskip
The case for a corrupt $P_1$ is provided in \boxref{fig:TruncSim1}. The case for a corrupt $P_2$ is similar. 
\begin{simulatorsplitbox}{$\Sim_{\Trunc}^{P_1}$}{Simulator $\Sim_{\Trunc}^{P_1}$ for corrupt $P_1$ }{fig:TruncSim1}
	\justify
	~
	\begin{myitemize}
	\item[--] For $i \in \{0, \ldots, \ell-1\}$, $\Sim_{\Trunc}^{P_1}$ samples $\vr_1[i]$  on behalf of $P_0$ along with $\Adv$, using respective keys, and it samples $\vr_2[i]$ randomly on behalf of $P_0,P_2$. 
	\item[--] $\Sim_{\Trunc}^{P_1}$ acting on behalf of $P_0,P_2$ generates $\shrd$-shares of $\arval{(r_j[i])}$ for $i \in \{0, \ldots, \ell-1\}, j \in  \EInSet$ 
	non-interactively.
	\item[--] $\Sim_{\Trunc}^{P_1}$ defines $\vecX$, $\vecY$, $\vecP$ and $\vecQ$ as per \boxref{fig:piTr}. The steps for $\piDotP$ are simulated similar to $\Sim_{\piDotP}^{P_1}$ for generating $\sA, \sB$.
\end{myitemize}
\end{simulatorsplitbox}
\subsection{Security Proofs for 4PC protocols}
\label{app:sec4PC}
The ideal functionality $\Funcf$ for evaluating a function $f$ to be computed by $\ckt$ in 4PC appears in \boxref{fig:Funcf}.
\begin{systembox}{$\Funcf$}{4PC: Ideal functionality for computing $f$ in 4PC setting}{fig:Funcf}
	\justify
	$\Funcf$ interacts with the servers in $\Partyset$ and the adversary $\Sim$. Let $f$ denote the function to be computed. Let $\wx_s$ be the input corresponding to the server $P_s$, and $\wy_s$ be the corresponding output, i.e $(\wy_0,\wy_1,\wy_2,\wy_3) = f(\wx_0,\wx_1,\wx_2,\wx_3).$ 
	\begin{myitemize}
		\item[\bf Step 1:] $\Funcf$ receives $(\INPUT,\wx_s)$ from $P_s \in \Partyset$, and computes $(\wy_0,\wy_1,\wy_3,\wy_3) = f(\wx_0,\wx_1,\wx_2,\wx_3).$
		\item[\bf Step 2:] $\Funcf$ sends  $(\OUTPUT, \wy_s)$ to $P_s \in \Partyset$.
	\end{myitemize}
\end{systembox}


\subsubsection{Joint Message Passing}
\label{appsec:piJmpF}
Let $\FJmpF$ \boxref{fig:JmpFFunc} denote the ideal functionality and let $\Sim_{\JmpF}^{P_s}$ denote the corresponding simulator for the case of corrupt $P_s \in \Partyset$. 

We begin with the case for a corrupt sender, $P_i$, which is provided in \boxref{fig:JmpFSims}. The case for a corrupt $P_j$ is similar and hence we omit details for the same. 
\begin{simulatorbox}{$\Sim_{\JmpF}^{P_i}$}{Simulator $\Sim_{\JmpF}^{P_i}$ for corrupt sender $P_i$}{fig:JmpFSims}
	\medskip
	\justify
	\begin{myitemize}
		\item[--] $\Sim_{\JmpF}^{P_i}$ receives $\val_i$ from $\Adv$ on behalf of honest $P_k$. If $\val_i = \val_j$ (where $\val_j$ is the value computed by $\Sim_{\JmpF}^{P_i}$ based on the interaction with $\Adv$, and using the knowledge of the shared keys), then $\Sim_{\JmpF}^{P_i}$ sets $b_k = 0$, else it sets $b_k = 1$. If $\Adv$ fails to send a value, $b_k$ is set to be $1$.  $\Sim_{\JmpF}^{P_i}$ sends $b_k$ to $\Adv$ on behalf of $P_k$.  
		\item[--] $\Sim_{\JmpF}^{P_i}$ sends $b_l = b_k$ and $b_j = b_k$ to $\Adv$, and receives $b_i$ from $\Adv$ on behalf of honest $P_l, P_j$, respectively. 
		\item[--] If $b_k = 1$, $\Sim_{\JmpF}^{P_i}$ sets 
		$\ttp = 1$, else it sets 
		$\ttp = 0$.  
		$\Sim_{\JmpF}^{P_i}$ invokes $\FJmpF$ with $(\INPUT, \val_i)$ and $(\SELECT, b_k)$ on behalf of $\Adv$.
	\end{myitemize}
\end{simulatorbox}

\smallskip
The case for a corrupt receiver, $P_k$ is provided in \boxref{fig:JmpFSimr}.

\begin{simulatorsplitbox}{$\Sim_{\JmpF}^{P_k}$}{Simulator $\Sim_{\JmpF}^{P_k}$ for corrupt receiver $P_k$}{fig:JmpFSimr}
	\medskip
	\justify 
	\begin{myitemize}
		\item[--] $\Sim_{\JmpF}^{P_k}$ sends $\val$, $\Hash(\val)$ (where $\val$ is the value computed by $\Sim_{\JmpF}^{P_k}$ based on the interaction with $\Adv$, and using the knowledge of the shared keys) to $\Adv$ on behalf of honest $P_i, P_j$, respectively.
		\item[--] $\Sim_{\JmpF}^{P_k}$ receives $b_{ki}, b_{kj}, b_{kl}$ from $\Adv$ on behalf of $P_i, P_j, P_l$, respectively. If $\Adv$ fails to send a value, it is assumed to be $1$.
		\item[--] $\Sim_{\JmpF}^{P_k}$ sets $b_k$ to be majority value in $b_{ki}, b_{kj}, b_{kl}$. If $b_k = 1$, $\Sim_{\JmpF}^{P_k}$ sets $\ttp = 1$, else it sets $\ttp = 0$.  $\Sim_{\JmpF}^{P_k}$ invokes $\FJmpF$ with $(\INPUT, \bot)$ and $(\SELECT, b_k)$ on behalf of $\Adv$.
	\end{myitemize}
\end{simulatorsplitbox}

\smallskip
The case for a corrupt receiver, $P_l$, which is the server outside the computation involved in $\piJmpF$, is provided in \boxref{fig:JmpFSimt}.

\begin{simulatorbox}{$\Sim_{\JmpF}^{P_l}$}{Simulator $\Sim_{\JmpF}^{P_l}$ for corrupt fourth server $P_l$}{fig:JmpFSimt}
	\medskip
	\justify
	\begin{myitemize}
		\item[--] $\Sim_{\JmpF}^{P_l}$ sends $b_k = 0$ followed by $b_i=0, b_j = 0$ to $\Adv$ on behalf of $P_k$ and $P_i, P_j$, respectively.
		\item[--] $\Sim_{\JmpF}^{P_l}$ invokes $\FJmpF$ with $(\INPUT, \bot)$ and $(\SELECT, b_k)$ on behalf of $\Adv$.
	\end{myitemize}
\end{simulatorbox}

\subsubsection{Sharing Protocol}
\label{appsec:Fpifinput}
The case for corrupt $P_0$ is given in \boxref{fig:ShFSim0}.
\begin{simulatorbox}{$\Sim_{\pifinput}^{P_0}$}{Simulator $\Sim_{\pifinput}^{P_0}$ for corrupt $P_0$ }{fig:ShFSim0}
	\justify 
	\algoHead{Preprocessing:}
	$\Sim_{\pifinput}^{P_0}$ emulates $\FSETUPf$ and gives the keys $(\Key{01}, \Key{02}, \Key{03}, \Key{012}, \Key{013}, \Key{023}$ and $\Key{\Partyset})$ to $\Adv$. 
	The values that are commonly held with $\Adv$ are sampled using the respective keys, while others are sampled randomly.
	\justify\vspace{-3mm}
	\algoHead{Online:}
	\begin{myitemize}
		\item[--] If dealer is $P_0$, $\Sim_{\pifinput}^{P_0}$ receives $\bv{\val}$ from $\Adv$ on behalf of $P_1$. Steps corresponding to $\piJmpF$ are simulated according to $\Sim_{\piJmpF}^{P_i}$ where $P_0$ acts as one of the sender for sending $\bv{\val}$.
		\item[--] If dealer is $P_1$ or $P_2$, $\Sim_{\pifinput}^{P_0}$ sets $\val = 0$ by assigning $\bv{\val} = \av{\val}$. Steps corresponding to $\piJmpF$ for sending $\bv{\val} + \gv{\val}$ to $\Adv$ are simulated according to $\Sim_{\piJmpF}^{P_k}$ where $P_0$ acts as the receiver.  
		\item[--] If dealer is $P_3$, $\Sim_{\pifinput}^{P_0}$ sets $\val = 0$ by assigning $\bv{\val} = \av{\val}$. $\Sim_{\pifinput}^{P_0}$ sends $\bv{\val} + \gv{\val}$ to $\Adv$ on behalf of $P_3$.
		Steps corresponding to $\piJmpF$ are simulated according to $\Sim_{\piJmpF}^{P_j}$ where $P_0$ acts as one of the sender with $P_1$, $P_2$ as the receivers, separately.
	\end{myitemize}
\end{simulatorbox}

The case for corrupt $P_1$ is given in \boxref{fig:ShFSim1}. The case for a corrupt $P_2$ is similar. 
\begin{simulatorsplitbox}{$\Sim_{\pifinput}^{P_1}$}{Simulator $\Sim_{\pifinput}^{P_1}$ for corrupt $P_1$ }{fig:ShFSim1}
	\justify 
	\algoHead{Preprocessing:}
	$\Sim_{\pifinput}^{P_1}$ emulates $\FSETUPf$ and gives the keys $(\Key{01}, \Key{12}, \Key{13}, \Key{012},  \Key{013}, \Key{123}$ and $\Key{\Partyset})$ to $\Adv$. 
	The values that are commonly held with $\Adv$ are sampled using the respective keys, while others are sampled randomly.
	\justify\vspace{-3mm}
	\algoHead{Online:}
	\begin{myitemize}
		\item[--] If dealer is $P_1$, $\Sim_{\pifinput}^{P_1}$ receives $\bv{\val}$ from $\Adv$ on behalf of $P_2$. 
		Steps corresponding to $\piJmpF$ are simulated according to $\Sim_{\piJmpF}^{P_i}$ where $P_1$ acts as one of the sender for sending $\bv{\val}+\gv{\val}$ to $P_0$.
		
		\item[--] If dealer is $P_0$ or $P_2$, $\Sim_{\pifinput}^{P_1}$ sets $\val = 0$ by assigning $\bv{\val} = \av{\val}$. 
		\begin{inneritemize} 
			\item If dealer is $P_0$, $\Sim_{\pifinput}^{P_1}$ sends $\bv{\val}$ to $\Adv$ on behalf of $P_0$. 
			Steps corresponding to $\piJmpF$ are simulated according to $\Sim_{\piJmpF}^{P_j}$ where $P_1$ acts as one of the sender to send $\bv{\val}$.
			\item If dealer is $P_2$, $\Sim_{\pifinput}^{P_1}$ sends $\bv{\val}$ to $\Adv$ on behalf of $P_2$. 
			Steps corresponding to $\piJmpF$ are simulated according to $\Sim_{\piJmpF}^{P_i}$ where $P_1$ acts as one of the sender to send $\bv{\val} + \gv{\val}$.
		\end{inneritemize} 
		\item[--] If dealer is $P_3$, $\Sim_{\pifinput}^{P_1}$ sets $\val = 0$ by assigning $\bv{\val} = \av{\val}$. 
		Steps corresponding to $\piJmpF$ are simulated according to $\Sim_{\piJmpF}^{P_k}$ where $P_1$ acts as the receiver for receiving $\bv{\val} + \gv{\val}$.
	\end{myitemize}
\end{simulatorsplitbox}

\smallskip
The case for corrupt $P_3$ is given in \boxref{fig:ShFSim3}.
\begin{simulatorbox}{$\Sim_{\pifinput}^{P_3}$}{Simulator $\Sim_{\pifinput}^{P_3}$ for corrupt $P_3$ }{fig:ShFSim3}
	\justify 
	\algoHead{Preprocessing:}
	$\Sim_{\pifinput}^{P_3}$ emulates $\FSETUPf$ and gives the keys $(\Key{03}, \Key{13}, \Key{23}, \Key{013}, \Key{023}, \Key{123}$ and $\Key{\Partyset})$ to $\Adv$. 
	The values that are commonly held with $\Adv$ are sampled using the respective keys, while others are sampled randomly.
	\justify\vspace{-3mm}
	\algoHead{Online:}
	\begin{myitemize}
		\item[--] If dealer is $P_3$, $\Sim_{\pifinput}^{P_3}$ receives $\bv{\val}+\gv{\val}$ from $\Adv$ on behalf of $P_0$. 
		Steps corresponding to $\piJmpF$ are simulated according to $\Sim_{\piJmpF}^{P_i}$ where $P_3$ acts as one of the sender with $P_1$,  $P_2$ as the receivers, separately. 
		\item[--] If dealer is $P_0$ or $P_1$ or $P_2$, steps corresponding to $\piJmpF$ are simulated according to $\Sim_{\piJmpF}^{P_l}$ where $P_3$ acts as the server outside the computation. 
	\end{myitemize}
\end{simulatorbox}

\subsubsection{Multiplication Protocol}
\label{appsec:Fpifmul}
The case for corrupt $P_0$ is given in \boxref{fig:MulFSim0}.
\begin{simulatorsplitbox}{$\Sim_{\pifmul}^{P_0}$}{Simulator $\Sim_{\pifmul}^{P_0}$ for corrupt $P_0$ }{fig:MulFSim0}
	\justify 
	\algoHead{Preprocessing:}
	\begin{myitemize}
		\item[--] $\Sim_{\pifmul}^{P_0}$ samples $\sqr{\av{\wz}}_1, \sqr{\av{\wz}}_2, \GammaA{\wx \wy}$ using the respective keys with $\Adv$. $\Sim_{\pifmul}^{P_0}$ samples $\gv{\wz}, \psi, \vr$ randomly on behalf of  the respective honest parties, and computes $\GammaB{\wx \wy}$ honestly.		  
		\item[--] Steps corresponding to $\piJmpF$ are simulated according to $\Sim_{\piJmpF}^{P_i}$ where $P_0$ acts as one of the sender for sending $\GammaB{\wx \wy}$.
		\item[--] $\Sim_{\pifmul}^{P_0}$ computes $\sqr{\Chi}_1, \sqr{\Chi}_2$ honestly. Steps corresponding to $\piJmpF$ are simulated according to $\Sim_{\piJmpF}^{P_k}$ where $P_0$ acts as the receiver for $\sqr{\Chi}_1, \sqr{\Chi}_2$.
	\end{myitemize}
	\justify\vspace{-3mm}
	\algoHead{Online:}
	\begin{myitemize}
		\item[--] $\Sim_{\pifmul}^{P_0}$ computes $\sqr{\starbeta{\wz}}_1, \sqr{\starbeta{\wz}}_2$ honestly.
		Steps corresponding to $\piJmpF$ are simulated according to $\Sim_{\piJmpF}^{P_j}$ where $P_0$ acts as one of the sender for sending $\sqr{\starbeta{\wz}}_1, \sqr{\starbeta{\wz}}_2$.
		\item[--] $\Sim_{\pifmul}^{P_0}$ computes $\bv{\wz} + \gv{\wz}$.
		Steps corresponding to $\piJmpF$ are simulated according to $\Sim_{\piJmpF}^{P_k}$ where $P_0$ acts as the receiver for receiving $\bv{\wz} + \gv{\wz}$.
	\end{myitemize}
\end{simulatorsplitbox}

\smallskip
The case for corrupt $P_1$ is given in \boxref{fig:MulFSim1}. The case for a corrupt $P_2$ is similar.
\begin{simulatorbox}{$\Sim_{\pifmul}^{P_1}$}{Simulator $\Sim_{\pifmul}^{P_1}$ for corrupt $P_1$ }{fig:MulFSim1}
	\justify 
	\algoHead{Preprocessing:}
	\begin{myitemize}
		\item[--] $\Sim_{\pifmul}^{P_1}$ samples $\sqr{\av{\wz}}_1, \gv{\wz}, \psi, \vr, \GammaA{\wx \wy}$ using the respective keys with $\Adv$. $\Sim_{\pifmul}^{P_1}$ samples $\sqr{\av{\wz}}_2$ randomly on behalf of  the respective honest parties.		  
		\item[--] Steps corresponding to $\piJmpF$ are simulated according to $\Sim_{\piJmpF}^{P_l}$ where $P_1$ acts as the server outside the computation while communicating $\GammaB{\wx \wy}$.
		\item[--] $\Sim_{\pifmul}^{P_1}$ computes $\sqr{\Chi}_1$. Steps corresponding to $\piJmpF$ are simulated according to $\Sim_{\piJmpF}^{P_i}$ where $P_1$ acts as one of the sender for $\sqr{\Chi}_1$.
		\item[--] Steps corresponding to $\piJmpF$ are simulated according to $\Sim_{\piJmpF}^{P_l}$ where $P_1$ acts as the server outside the computation while communicating $\sqr{\Chi}_2$.
	\end{myitemize}
	\justify\vspace{-3mm}
	\algoHead{Online:}
	\begin{myitemize}
		\item[--] $\Sim_{\pifmul}^{P_1}$ computes $\sqr{\starbeta{\wz}}_1, \sqr{\starbeta{\wz}}_2$. 
		Steps corresponding to $\piJmpF$ are simulated according to $\Sim_{\piJmpF}^{P_i}$ and $\Sim_{\piJmpF}^{P_k}$, where $P_1$ acts as one of the sender for sending $\sqr{\starbeta{\wz}}_1$, and $P_1$ acts as the receiver for receiving $\sqr{\starbeta{\wz}}_2$, respectively.
		\item[--] $\Sim_{\pifmul}^{P_1}$ computes $\bv{\wz} + \gv{\wz}$. 
		Steps corresponding to $\piJmpF$ are simulated according to $\Sim_{\piJmpF}^{P_i}$ where $P_1$ acts as one of the sender for sending $\bv{\wz} + \gv{\wz}$. 
	\end{myitemize}
\end{simulatorbox}

\smallskip
The case for corrupt $P_3$ is given in \boxref{fig:MulFSim3}.
\begin{simulatorsplitbox}{$\Sim_{\pifmul}^{P_3}$}{Simulator $\Sim_{\pifmul}^{P_3}$ for corrupt $P_3$ }{fig:MulFSim3}
	\justify
	\algoHead{Preprocessing:}
	\begin{myitemize}
		\item[--] $\Sim_{\pifmul}^{P_3}$ samples $\sqr{\av{\wz}}_1, \sqr{\av{\wz}}_2, \gv{\wz}, \psi, \vr, \GammaA{\wx \wy}$ using the respective keys with $\Adv$. $\Sim_{\pifmul}^{P_3}$ computes $\GammaB{\wx \wy}$ honestly.
		\item[--] Steps corresponding to $\piJmpF$ are simulated according to $\Sim_{\piJmpF}^{P_j}$ where $P_3$ acts as one of the sender for sending $\GammaB{\wx \wy}$.
		\item[--] $\Sim_{\pifmul}^{P_3}$ computes $\sqr{\Chi}_1, \sqr{\Chi}_2$. 
		Steps corresponding to $\piJmpF$ are simulated according to $\Sim_{\piJmpF}^{P_j}$ where $P_3$ acts as one of the sender for sending $\sqr{\Chi}_1$ and $\sqr{\Chi}_2$.
	\end{myitemize}
	\justify\vspace{-3mm} 
	\algoHead{Online:}
	\begin{myitemize}
		\item[--] Steps corresponding to $\piJmpF$ are simulated according to $\Sim_{\piJmpF}^{P_l}$ where $P_3$ acts as the server outside the computation involving $\sqr{\starbeta{\wz}}_1, \sqr{\starbeta{\wz}}_2$ and $\bv{\wz} + \gv{\wz}$. 
	\end{myitemize}
\end{simulatorsplitbox}

\subsubsection{Reconstruction Protocol}
\label{appsec:Fpifrec}
The case for corrupt $P_0$ is given in \boxref{fig:ReconFSim0}. The cases for corrupt $P_1, P_2, P_3$ are similar.
\begin{simulatorbox}{$\Sim_{\pifrec}^{P_0}$}{Simulator $\Sim_{\pifrec}^{P_0}$ for corrupt $P_0$ }{fig:ReconFSim0}
	\medskip
	\justify
	\begin{myitemize}
		\item[--] $\Sim_{\pifrec}^{P_0}$ sends $\gv{\val}$ to $\Adv$ on behalf of $P_1, P_2$, and $\Hash(\gv{\val})$ on behalf of $P_3$, respectively.
		\item[--] $\Sim_{\pifrec}^{P_0}$ receives $\Hash(\sqr{\av{\val}}_1), \Hash( \sqr{\av{\val}}_2), \bv{\val}+\gv{\val}$ from $\Adv$ on behalf of $P_2, P_1, P_3$, respectively.
	\end{myitemize}
\end{simulatorbox}

\vspace{-3mm}
\subsubsection{Joint Sharing Protocol}
\label{appsec:Fpifjsh}
The case for corrupt $P_0$ is given in \boxref{fig:JshFSim0}. 
\begin{simulatorbox}{$\Sim_{\pifjsh}^{P_0}$}{Simulator $\Sim_{\pifjsh}^{P_0}$ for corrupt $P_0$ }{fig:JshFSim0}
	\justify
	\algoHead{Preprocessing:}
	\begin{myitemize}
		\item[--] $\Sim_{\pifjsh}^{P_0}$ has knowledge of $\av{\val}$ and $\gv{\val}$, which it obtains while emulating $\FSETUPf$. The common values shared with the $\Adv$ are sampled using the appropriate shared keys, while other values are sampled at random.
	\end{myitemize}
	\justify\vspace{-3mm}
	\algoHead{Online:}	
	\begin{myitemize}
		\item[--] If dealers are $(P_0, P_1)$: $\Sim_{\pifjsh}^{P_0}$ computes $\bv{\val}$ using $\val$. 
		Steps corresponding to $\piJmpF$ are simulated according to $\Sim_{\piJmpF}^{P_j}$ where $P_0$ acts as one of the sender for $\bv{\val}$. 
		\item[--] If dealers are $(P_0, P_2)$ or $(P_0, P_3)$: Analogous to the above case.
		\item[--] If dealers are $(P_1, P_2)$: $\Sim_{\pifjsh}^{P_0}$ sets $\val = 0$ and $\bv{\val} = \sqr{\av{\val}}_1 + \sqr{\av{\val}}_2$. 
		Steps corresponding to $\piJmpF$ are simulated according to $\Sim_{\piJmpF}^{P_k}$ where $P_0$ acts as the receiver for $\bv{\val} + \gv{\val}$. 
		\item[--] If dealers are $(P_3, P_1)$: $\Sim_{\pifjsh}^{P_0}$ sets $\val = 0$ and $\bv{\val} = \sqr{\av{\val}}_1 + \sqr{\av{\val}}_2$. 
		Steps corresponding to $\piJmpF$ are simulated according to $\Sim_{\piJmpF}^{P_l}$ where $P_0$ acts as the server outside the computation for $\bv{\val}$, and according to $\Sim_{\piJmpF}^{P_k}$ where $P_0$ acts as the receiver for $\bv{\val} + \gv{\val}$. 
		\item[--] If dealers are $(P_3, P_2)$: Analogous to the above case.
	\end{myitemize}
\end{simulatorbox}

\smallskip
The case for corrupt $P_1$ is given in \boxref{fig:JshFSim1}. The case for corrupt $P_2$ is similar.
\begin{simulatorsplitbox}{$\Sim_{\pifjsh}^{P_1}$}{Simulator $\Sim_{\pifjsh}^{P_1}$ for corrupt $P_1$ }{fig:JshFSim1}
	\justify
	\algoHead{Preprocessing:}
	\begin{myitemize}
		\item[--] $\Sim_{\pifjsh}^{P_1}$ has knowledge of $\av{}$-values and $\gv{}$ corresponding to $\val$ which it obtains while emulating $\FSETUPf$. The common values shared with the $\Adv$ are sampled using the appropriate shared keys, while other values are sampled at random.	
	\end{myitemize}
	\justify\vspace{-3mm}
	\algoHead{Online:}	
	\begin{myitemize}
		\item[--] If dealers are $(P_0, P_1)$: $\Sim_{\pifjsh}^{P_1}$ computes $\bv{\val}$ using $\val$.
		Steps corresponding to $\piJmpF$ are simulated according to $\Sim_{\piJmpF}^{P_i}$ where $P_1$ acts as one of the sender for $\bv{\val}$.  
		\item[--] If dealers are $(P_1, P_2)$: Analogous to the previous case, except that now $\bv{\val} + \gv{\val}$ is sent instead of $\bv{\val}$.
		\item[--] If dealers are $(P_3, P_1)$: $\Sim_{\pifjsh}^{P_1}$ computes $\bv{\val}$ and $\bv{\val} + \gv{\val}$. 
		Steps corresponding to $\piJmpF$ are simulated according to $\Sim_{\piJmpF}^{P_i}$ where $P_1$ acts as one of the sender for $\bv{\val}$, $\bv{\val} + \gv{\val}$. 
		\item[--] If dealers are $(P_0, P_2)$ or $(P_0, P_3)$ or $(P_3, P_2)$: $\Sim_{\pifjsh}^{P_1}$ sets $\val = 0$ and $\bv{\val} = \sqr{\av{\val}}_1 + \sqr{\av{\val}}_2$. 
		Steps corresponding to $\piJmpF$ are simulated according to $\Sim_{\piJmpF}^{P_k}$ where $P_1$ acts as the receiver for $\bv{\val}$. 
	\end{myitemize}
\end{simulatorsplitbox}

\smallskip
The case for corrupt $P_3$ is given in \boxref{fig:JshFSim3}. 
\begin{simulatorbox}{$\Sim_{\pifjsh}^{P_3}$}{Simulator $\Sim_{\pifjsh}^{P_3}$ for corrupt $P_3$ }{fig:JshFSim3}
	\justify
	\algoHead{Preprocessing:}
	\begin{myitemize}
		\item[--] $\Sim_{\pifjsh}^{P_3}$ has knowledge of $\av{}$-values and $\gv{}$ corresponding to $\val$ which it obtains while emulating $\FSETUPf$. The common values shared with the $\Adv$ are sampled using the appropriate shared keys, while other values are sampled at random.
	\end{myitemize}
	\justify\vspace{-3mm}
	\algoHead{Online:}	
	\begin{myitemize}
		\item[--] If dealers are $(P_1, P_2)$: $\Sim_{\pifjsh}^{P_3}$ sets $\val = 0$. 
		Steps corresponding to $\piJmpF$ are simulated according to $\Sim_{\piJmpF}^{P_l}$ where $P_3$ acts as the server outside the computation for $\bv{\val} + \gv{\val}$. 
		\item[--] If dealers are $(P_0, P_1)$ or $(P_0, P_2)$: Analogous to the above case. 
		\item[--] If dealers are $(P_0, P_3)$: $\Sim_{\pifjsh}^{P_3}$ computes $\bv{\val}$ using $\val$. 
		Steps corresponding to $\piJmpF$ are simulated according to $\Sim_{\piJmpF}^{P_i}$ where $P_3$ acts as one of the sender for sending $\bv{\val}$. 
		\item[--] If dealers are $(P_3, P_1)$: $\Sim_{\pifjsh}^{P_3}$ computes $\bv{\val}$ and $\bv{\val} + \gv{\val}$. 
		Steps corresponding to $\piJmpF$ are simulated according to $\Sim_{\piJmpF}^{P_j}$ where $P_3$ acts as one of the sender for sending $\bv{\val}, \bv{\val} + \gv{\val}$.
		\item[--] If dealers are $(P_3, P_2)$: Analogous to the above case.
	\end{myitemize}
\end{simulatorbox}

\vspace{-3mm}
\subsubsection{Dot Product Protocol}
\label{appsec:Fpidotp}
The case for corrupt $P_0$ is given in \boxref{fig:DotpFSim0}.
\begin{simulatorsplitbox}{$\Sim_{\pifdotp}^{P_0}$}{Simulator $\Sim_{\pifdotp}^{P_0}$ for corrupt $P_0$ }{fig:DotpFSim0}
	\justify
	\algoHead{Preprocessing:}
	\begin{myitemize}
		\item[--] $\Sim_{\pifdotp}^{P_0}$ samples $\sqr{\av{\wz}}_1, \sqr{\av{\wz}}_2, \sqrA{\Gammaxdy}$ using the respective keys with $\Adv$. $\Sim_{\pifdotp}^{P_0}$ samples $\gv{\wz}, \psi, \vr$ randomly on behalf of  the respective honest parties, and computes $\sqr{\Gammaxdy}_2$ honestly.		  
		\item[--] Steps corresponding to $\piJmpF$ are simulated according to $\Sim_{\piJmpF}^{P_i}$ where $P_0$ acts as one of the sender for $\sqr{\Gammaxdy}_2$.
		\item[--] $\Sim_{\pifdotp}^{P_0}$ computes $\Chi_1, \Chi_2$ honestly.
		Steps corresponding to $\piJmpF$ are simulated according to $\Sim_{\piJmpF}^{P_k}$ where $P_0$ acts as the receiver for $\Chi_1$ and $\Chi_2$.
	\end{myitemize}
	\justify\vspace{-3mm} 
	\algoHead{Online:}
	\begin{myitemize}
		\item[--] $\Sim_{\pifdotp}^{P_0}$ computes $\sqr{\starbeta{\wz}}_1, \sqr{\starbeta{\wz}}_2$ honestly. 
		Steps corresponding to $\piJmpF$ are simulated according to $\Sim_{\piJmpF}^{P_j}$ where $P_0$ acts as one of the sender for $\sqr{\starbeta{\wz}}_1, \sqr{\starbeta{\wz}}_2$.
		\item[--] $\Sim_{\pifdotp}^{P_0}$ computes $\bv{\wz} + \gv{\wz}$. 
		Steps corresponding to $\piJmpF$ are simulated according to $\Sim_{\piJmpF}^{P_k}$ where $P_0$ acts as the receiver for $\bv{\wz} + \gv{\wz}$.
	\end{myitemize}
\end{simulatorsplitbox}

\vspace{-3mm}
The case for corrupt $P_1$ is given in \boxref{fig:DotpFSim1}. The case for corrupt $P_2$ is similar.

\begin{simulatorbox}{$\Sim_{\pifdotp}^{P_1}$}{Simulator $\Sim_{\pifdotp}^{P_1}$ for corrupt $P_1$ }{fig:DotpFSim1}
	\justify
	\algoHead{Preprocessing:}
	\begin{myitemize}
		\item[--] $\Sim_{\pifdotp}^{P_1}$ samples $\sqr{\av{\wz}}_1, \gv{\wz}, \psi, \vr, \sqrA{\Gammaxdy}$ using the respective keys with $\Adv$. $\Sim_{\pifdotp}^{P_1}$ samples $\sqr{\av{\wz}}_2$ randomly on behalf of  the respective honest parties.		  
		\item[--] Steps corresponding to $\piJmpF$ are simulated according to $\Sim_{\piJmpF}^{P_l}$ where $P_1$ acts as the server outside the computation for $\sqrB{\Gammaxdy}$.
		\item[--] $\Sim_{\pifdotp}^{P_1}$ computes $\Chi_1$. 
		Steps corresponding to $\piJmpF$ are simulated according to $\Sim_{\piJmpF}^{P_i}$ where $P_1$ acts as one of the sender for $\Chi_1$.
		\item[--] Steps corresponding to $\piJmpF$ are simulated according to $\Sim_{\piJmpF}^{P_l}$ where $P_1$ acts as the server outside the computation for $\Chi_2$.
	\end{myitemize}
	\justify\vspace{-3mm}
	\algoHead{Online:}
	\begin{myitemize}
		\item[--] $\Sim_{\pifdotp}^{P_1}$ computes $\sqr{\starbeta{\wz}}_1, \sqr{\starbeta{\wz}}_2$. 
		Steps corresponding to $\piJmpF$ are simulated according to $\Sim_{\piJmpF}^{P_i}$ and $\Sim_{\piJmpF}^{P_k}$, where $P_1$ acts as one of the sender for $\sqr{\starbeta{\wz}}_1$, and $P_1$ acts as the receiver for $\sqr{\starbeta{\wz}}_2$.
		\item[--] $\Sim_{\pifdotp}^{P_1}$ computes $\bv{\wz} + \gv{\wz}$. 
		Steps corresponding to $\piJmpF$ are simulated according to $\Sim_{\piJmpF}^{P_i}$ where $P_1$ acts as one of the sender for $\bv{\wz} + \gv{\wz}$.
	\end{myitemize}
\end{simulatorbox}

The case for corrupt $P_3$ is given in \boxref{fig:DotpFSim3}.
\begin{simulatorsplitbox}{$\Sim_{\pifdotp}^{P_3}$}{Simulator $\Sim_{\pifdotp}^{P_3}$ for corrupt $P_3$ }{fig:DotpFSim3}
	\justify
	\algoHead{Preprocessing:}
	\begin{myitemize}
		\item[--] $\Sim_{\pifdotp}^{P_3}$ samples $\sqr{\av{\wz}}_1, \sqr{\av{\wz}}_2, \gv{\wz}, \psi, \vr, \sqrA{\Gammaxdy}$ using the respective keys with $\Adv$. $\Sim_{\pifdotp}^{P_3}$ computes $\sqr{\Gammaxdy}$ honestly.				  
		\item[--] Steps corresponding to $\piJmpF$ are simulated according to $\Sim_{\piJmpF}^{P_j}$ where $P_3$ acts as one of the sender for $\sqrB{\Gammaxdy}$.  
		\item[--] $\Sim_{\pifdotp}^{P_3}$ computes $\Chi_1, \Chi_2$. 
		Steps corresponding to $\piJmpF$ are simulated according to $\Sim_{\piJmpF}^{P_j}$ where $P_3$ acts as one of the sender for $\Chi_1, \Chi_2$.
	\end{myitemize}
	\justify\vspace{-3mm}
	\algoHead{Online:}
	\begin{myitemize}
		\item[--] Steps corresponding to $\piJmpF$ are simulated according to $\Sim_{\piJmpF}^{P_l}$ where $P_3$ acts as the server outside the computation for $\sqr{\starbeta{\wz}}_1, \sqr{\starbeta{\wz}}_2$, $\bv{\wz} + \gv{\wz}$. 
	\end{myitemize}
\end{simulatorsplitbox}

\subsubsection{~Truncation Pair Generation}
\label{appsec:Fpiftrgen}
Here we give the simulation steps for $\piftrgen$.
The case for corrupt $P_0$ is given in \boxref{fig:TrgenFSim0}. The case for corrupt $P_3$ is similar. 
\begin{simulatorbox}{$\Sim_{\piftrgen}^{P_0}$}{Simulator $\Sim_{\piftrgen}^{P_0}$ for corrupt $P_0$ }{fig:TrgenFSim0}
	\medskip
	\justify 
	\begin{myitemize}
		\item[--] $\Sim_{\piftrgen}^{P_0}$ samples $R_1, R_2$ using the respective keys with $\Adv$.
		\item[--] Steps corresponding to $\pifjsh$ are simulated according to $\Sim_{\pifjsh}^{P_0}$ (\boxref{fig:JshFSim0}).
	\end{myitemize}
\end{simulatorbox}

The case for corrupt $P_1$ is given in \boxref{fig:TrgenFSim1}. The case for corrupt $P_2$ is similar. 
\begin{simulatorbox}{$\Sim_{\piftrgen}^{P_1}$}{Simulator $\Sim_{\piftrgen}^{P_1}$ for corrupt $P_1$ }{fig:TrgenFSim1}
	\medskip
	\justify 
	\begin{myitemize}
		\item[--] $\Sim_{\piftrgen}^{P_1}$ samples $R_1$ using the respective keys with $\Adv$, and samples $R_2$ randomly.
		\item[--] Steps corresponding to $\pifjsh$ are simulated according to $\Sim_{\pifjsh}^{P_1}$ (\boxref{fig:JshFSim1}).
	\end{myitemize}
\end{simulatorbox}

\end{document}